\documentclass[reqno]{amsart}

\usepackage{amsmath, amsfonts, amssymb, amsthm, amscd}
\usepackage[mathscr]{eucal} 

\usepackage{graphicx}
\usepackage{xcolor}

\usepackage{float}
\usepackage{ytableau}
\usepackage{enumitem}

\usepackage{mathrsfs}

\usepackage{tikz}
\usetikzlibrary{arrows.meta, calc, matrix, positioning}
\tikzset{>=Stealth}

\usepackage[colorlinks=true, pdfstartview=FitV,
  linkcolor=blue, citecolor=blue, urlcolor=blue]{hyperref}

\usepackage{geometry}
\renewcommand{\arraystretch}{1.5}

\newtheorem{theorem}{Theorem}
\newtheorem{lemma}[theorem]{Lemma}
\newtheorem{conjecture}[theorem]{Conjecture}
\newtheorem{proposition}[theorem]{Proposition}
\newtheorem{corollary}[theorem]{Corollary}

\theoremstyle{definition}

\newtheorem{example}[theorem]{Example}
\newtheorem{remark}[theorem]{Remark}

\newcommand{\aaa}{\mathbf{a}}

\newcommand{\kk}{\mathbf{k}}

\newcommand{\bs}{\boldsymbol{\sigma}}
\newcommand{\br}{\boldsymbol{\rho}}
\newcommand{\bt}{\boldsymbol{\tau}}

\DeclareMathOperator{\wt}{wt}

\newcommand{\Z}{{\mathbb Z}}

\newcommand{\C}{{\mathbb C}}

\newcommand{\ok}{{\rm{\bf k}}}
\newcommand{\am}{{\rm\aaa}^{\!-} }
\newcommand{\ap}{{\rm\aaa}^{\!+} }
\newcommand{\FF}{{\mathcal{F}}}
\newcommand{\s}{{\mathbb{S}}}

\newcommand{\LL}{\mathcal{L}}
\newcommand{\BB}{\mathscr{B}}
\newcommand{\TT}{\mathscr{T}}

\newcommand{\KK}{\mathbb{K}}

\newcommand{\qbinom}[2]{\bgroup\renewcommand*{\arraystretch}{1}\begin{bmatrix} #1 \\ #2\end{bmatrix} \egroup}

\begin{document}

\title[Multispecies $t$-PushTASEP]
{Multispecies inhomogeneous $t$-PushTASEP
\\  with general capacity}

\author[Arvind Ayyer]{Arvind Ayyer}
\address{Arvind Ayyer, Department of Mathematics, Indian Institute of Science,
Bangalore 560012, India}
\email{arvind@iisc.ac.in}

\author[Atsuo Kuniba]{Atsuo Kuniba}
\address{Atsuo Kuniba, Graduate School of Arts and Sciences, University of Tokyo, Komaba, Tokyo, 153-8902, Japan}
\email{atsuo.s.kuniba@gmail.com}

\date{\today}

\begin{abstract}
We study an $n$-species $t$-PushTASEP, an integrable long-range stochastic process,
on a one-dimensional periodic lattice with inhomogeneities
$x_1,\ldots,x_L$ and arbitrary capacity $l$ at each lattice site.
The Markov matrix is identified with an alternating sum of commuting transfer
matrices over all fundamental representations of
$U_t(\widehat{sl}_{n+1})$.
Stationary probabilities are expressed in a matrix product form involving
a fusion of quantized corner transfer matrices for the strange five-vertex
model introduced by Okado, Scrimshaw, and the second author.
The resulting partition function, which serves as the normalization factor
of the stationary probabilities, is obtained from the $l=1$ case by a finite
plethystic substitution of length $l$.
\end{abstract}

\subjclass[2020]{60J27, 82B20, 82B23, 82B44, 81R50, 17B37}
\keywords{$t$-PushTASEP, multispecies, inhomogeneous, Yang--Baxter equation}

\maketitle

\section{Introduction}\label{sec:intro}

PushTASEP is a class of totally asymmetric simple exclusion processes (TASEPs) 
of interacting particles on one-dimensional lattice.
Its characteristic feature is a stochastic dynamics in which multiple particles
may move simultaneously over long distances by pushing one another according to
prescribed rules.
By now, various versions of PushTASEPs have been introduced and studied extensively;
see, for example,
\cite{ANP23,AK25,AM23,AMW24,BW22,CP13,P19} and  the references therein.

In this paper we study a PushTASEP formulated as a continuous-time Markov
process on a periodic lattice of length $L$, in which each local state is
given by a length-$l$ row-shaped semistandard tableau with entries in
$\{0,1,\ldots,n\}$, namely,
$\{(i_1,\ldots,i_l)\in \Z^l \mid 0 \le i_1 \le \cdots \le i_l \le n\}$.
Such a local state may be interpreted as an assembly of particles of species
$1,2,\ldots,n$, together with
$\delta_{i_1,0}+\cdots+\delta_{i_l,0}$ vacant slots, within the maximal
capacity $l$ at each site. (In the main text, however, we treat $0$ as a
particle species as well.)
The transition rates depend on the parameter $t$ and on inhomogeneity parameters
$x_1,\ldots,x_L$ attached to each site.
The model is Markovian, and hence physically meaningful, in the parameter region
$t\ge 0$ and $x_1,\ldots,x_L>0$.
We refer to this model, which has $\binom{n+l}{l}$ local states at each site,
as the $n$-species capacity-$l$ $t$-PushTASEP.

A key ingredient underlying our approach is the quantum $R$-matrix
$S^k_{\;\;l}(z)$ acting on $V^k\otimes V_l$ in an appropriate gauge,
where $V^k$ and $V_l$ denote the degree-$k$ antisymmetric tensor representation and
the degree-$l$ symmetric tensor representation of the quantum affine algebra
$U_t(\widehat{sl}_{n+1})$, respectively.
This coupling of antisymmetric and symmetric tensor representations
appears to be new in the context of integrable probability.
In principle, $S^k_{\;\;l}(z)$ can be obtained \cite{KRS81} either by $l$-fold symmetric fusion
of $S^k_{\;\;1}(z)$ or by $k$-fold antisymmetric fusion of $S^1_{\;\;l}(z)$.
In this paper, we adopt a more efficient construction based on three-dimensional integrability
explored in \cite[Chap.~11.3]{K22}.
It directly yields an explicit formula for $S^k_{\;\;l}(z)$ given in \eqref{R1}--\eqref{S2} 
in terms of the 3D $L$-operator \eqref{LLe}.

Based on the $R$-matrix, we construct the commuting family of transfer matrices 
$T^k(z|x_1,\ldots, x_L)$ of the solvable vertex model (cf.~\cite{Bax83}) 
with spectral parameter $z$ and inhomogeneities $x_1,\ldots, x_L$ for $0 \le k \le n+1$.
In the terminoogy of the quantum inverse scattering method \cite{STF80},
it has the auxiliary space $V^k$ and acts on the quantum space $V_l^{\otimes L}$.

Let $H_{n,l}(x_1,\ldots, x_L)$ be the Markov matrix of 
the $n$-species capacity-$l$ $t$-PushTASEP in \eqref{Hdef} 
whose transition rates are given by \eqref{wm2} and also described
combinatorially in Section \ref{ss:cd}.
Our first main result is the following formula (Theorem \ref{th:main1}):
\begin{align}\label{Hc0}
H_{n,l}(x_1,\ldots, x_L) = D^{-1}_{\bf m} 
\left. \sum_{k=0}^{n+1}(-1)^{k-1}
\frac{dT^k(z|x_1,\ldots, x_L)}{dz}\right|_{z=0} - 
\Bigl(\sum_{j=1}^L\frac{1}{x_j}\Bigr)\mathrm{Id}.
\end{align}
Here, $D_{\mathbf m}$ is a scalar factor defined in \eqref{Dm}, determined by 
the particle multiplicity ${\mathbf m}$.
The RHS is an alternating sum of the derivative of the transfer matrices whose auxiliary spaces range 
over all fundamental representations $V^0,\ldots,V^{n+1}$.
The identity \eqref{Hc0} extends a Baxter-type formula for quantum Hamiltonians
(cf.~\cite[eq.~(10.14.20)]{Bax83}) to an inhomogeneous stochastic setting,
generalizing our earlier result for the $l=1$ case \cite{AK25} to arbitrary capacity.

A notable aspect of \eqref{Hc0} is that neither the individual transfer matrix
$T^k(z|x_1,\ldots,x_L)$ nor its derivative at $z=0$ is stochastic in general:
their matrix elements need not be positive and do not satisfy probability conservation.
Nevertheless, the alternating sum in \eqref{Hc0} acts as an inclusion--exclusion mechanism,
retaining admissible particle motions with correct rates while cancelling forbidden channels.

The result \eqref{Hc0} reduces the problem of finding stationary states of the model
to that of constructing a joint eigenstate of the commuting transfer matrices.
Our second main result provides such a construction explicitly: 
the stationary probability of a configuration $(\bs_1,\ldots,\bs_L)$
is expressed in a matrix product form
\begin{align}\label{pA0}
\mathbb{P}(\bs_1,\ldots, \bs_L)
=\mathrm{Tr}\left(A_{\bs_1}(x_1)\cdots A_{\bs_L}(x_L)\right),
\end{align}
up to normalization.
Here the operator $A_{\mathbf i}(z)$ associated with a local state
${\mathbf i}=(i_1,\ldots,i_l)$ is defined as (cf.~\eqref{Abi}, \eqref{Asym})
\begin{align}\label{af}
A_{\mathbf i}(z)
=\sum A_{i'_1}(z)A_{i'_2}(tz)\cdots A_{i'_l}(t^{l-1}z),
\end{align}
where the sum runs over distinct permutations
$(i'_1,\ldots,i'_l)$ of $(i_1,\ldots,i_l)$.
For the basic case $l=1$, the operators $A_0(z),\ldots,A_n(z)$ in \eqref{ctm1} coincide,
up to a minor conventional change, with those introduced in \cite{KOS24}.
They are {\em  corner transfer matrices} (CTMs \cite[Chap.13]{Bax83}) 
of the strange five-vertex model, which are {\em quantized} in the sense that 
the ``Boltzmann weights" take values in the $t$-oscillator algebra \eqref{s5V}.
The construction of the CTMs for higher $l$ in \eqref{af}
may be viewed as a symmetric fusion at the level of matrix product operators.

Let 
$Z_{l,\mathbf m}(x_1,\ldots, x_L;t) = \sum_{(\bs_1,\ldots, \bs_L) 
\in \mathcal{S}({\bf m})} \mathbb{P}(\bs_1,\ldots, \bs_L)$
denote the normalization factor of the stationary probabilities
for capacity $l$, where $\mathcal{S}({\mathbf m})$ is defined in \eqref{Sm}.
This quantity defines a symmetric polynomial in $x_1,\ldots,x_L$
and is commonly referred to as the partition function.
From \eqref{pA0} and \eqref{af}, it admits a simple reduction to the $l=1$ case:
\begin{align}\label{zpr0}
Z_{l, \mathbf m}(x_1,\ldots,x_L;t)
=Z_{1,\mathbf m}\!\left(\frac{1-t^l}{1-t}x_1,\ldots,\frac{1-t^l}{1-t}x_L; t\right),
\end{align}
where the notation $(1-t^l)x_j/(1-t)$ represents the length-$l$ plethystic substitution
$x_j\mapsto x_j,tx_j,\ldots,t^{\,l-1}x_j$ for each $j$.
Consequently, the right-hand side involves $lL$ inhomogeneity parameters 
geometrically weighted as
$x_1,tx_1,\ldots,t^{\,l-1}x_1,\ldots,x_L,tx_L,\ldots,t^{\,l-1}x_L$,
which correspond to a system with $lL$ sites.

For $l=1$, the partition function
$Z_{1,{\mathbf m}}(x_1,\ldots,x_L;t)$
has essentially been identified with a Macdonald polynomial
at $q=1$ \cite{AMW24,CMW22}.
(See also \cite{CDW15} for an earlier result in the context of ASEP.)
More generally, symmetric fusion of $R$-matrices,
or of the associated vertex models,
has been observed to manifest itself as a plethystic substitution
at the level of partition functions \cite{GW20,M25}.
The relation \eqref{zpr0} gives an explicit realization of this phenomenon
for general capacity $l$.

Our model reduces to the one studied in \cite{AMW24} in the special case $l=1$.
For general capacity $l$, the inhomogeneous $n$-species $t$-PushTASEP 
has also been studied in \cite[Sec.~12.5]{BW22} and \cite[Sec.~6]{ANP23}.
Although the formulations adopted in those works appear rather different from the
one presented here, the underlying dynamics is expected to commute with the
Markov and transfer matrices constructed in this paper under periodic boundary
conditions, since both approaches are based on $R$-matrices for
$U_t(\widehat{sl}_{n+1})$ and act on the same state space.
The present paper focuses on the explicit realization of the Markov matrix as 
an alternating sum of transfer matrices and on the matrix product formula 
for the stationary probabilities, aspects that have not been addressed so far.
 
The outline of the paper is as follows.
In Section~\ref{sec:tpush}, we define the $n$-species capacity-$l$ $t$-PushTASEP
and provide a combinatorial description of its transition rates.
In Section~\ref{sec:s}, we present the $R$-matrix $S^k_{\;\;l}(z)$ based on the
three-dimensional approach \cite[Chap.11]{K22}, together with an appropriate gauge choice.
Its matrix elements are described in detail, which will be used in
Section~\ref{sec:HT}.
In Section~\ref{sec:T}, we give a standard construction of the commuting transfer
matrices $T^k(z|x_1,\ldots,x_L)$ from the $R$-matrix $S^k_{\;\;l}(z)$.
In Section~\ref{sec:HT}, we prove Theorem~\ref{th:main1}, which identifies the
Markov matrix introduced in Section~\ref{sec:tpush} with an alternating sum of
the derivatives of the transfer matrices evaluated at $z=0$.
The strategy of the proof parallels that of the $l=1$ case~\cite{AK25}.
In Section~\ref{sec:zf}, we introduce the quantized corner transfer matrices
$A_{\mathbf i}(z)$ and explain their properties, most notably the
Zamolodchikov--Faddeev algebra.
In Section~\ref{sec:mp}, we derive the matrix product formula for the stationary
probabilities, building on the results of the preceding sections.
We also present a few immediate consequences for the partition function.
Appendix~\ref{app:rs} recalls the quantum $R$-matrix $\s_{k,l}(z)$ on symmetric tensor
representations in a gauge adapted to the present paper.

\section{$n$-species $t$-PushTASEP with capacity $l$}\label{sec:tpush}

\subsection{Preliminary}

Given \( n \ge 1 \), define the following vector spaces and index sets:
\begin{align}
V^k &= \bigoplus_{{\bf i} \in \BB^k} \C v^{{\bf i}}, \quad
\BB^k = \{{\bf i} = (i_0,\ldots, i_n) \in \{0,1\}^{n+1} \mid |{\bf i}|=k\}
\quad (0 \le k \le n+1),
\label{vk} \\
V_l &= \bigoplus_{{\bf i} \in \BB_l} \C v_{{\bf i}}, \quad
\BB_l = \{{\bf i} = (i_0,\ldots, i_n) \in (\Z_{\ge 0})^{n+1} \mid |{\bf i}|=l\}
\quad (l \in \Z_{\ge 0}),
\label{vl} \\
|{\bf i}| &= i_0+\cdots + i_n, \quad
{\bf e}_j = (\delta_{j,0},\ldots, \delta_{j,n}) \in \BB^1 =\BB_1 \quad (0 \le j \le n).
\label{ej}
\end{align}
The label \({\bf i} = (i_0,\ldots, i_n)\) of a basis vector is referred to as the \emph{multiplicity representation}.

We further introduce the following sets of integer arrays:
\begin{align}
\TT^k &= \{(I_1,\ldots, I_k) \in \Z^k \mid 0 \le I_1 < \cdots < I_k \le n\} 
\qquad (0 \le k \le n+1),
\label{tk} \\
\TT_l &=  \{(I_1,\ldots, I_l) \in \Z^l \mid 0 \le I_1 \le \cdots \le I_l \le n\}
\qquad (l \in \Z_{\ge 0}).
\label{tl}
\end{align}

Elements of~$\TT^k$ (resp. $\TT_l$) 
are interpreted as semistandard tableaux of column shape 
with depth~$k$ (resp. row shape of width $l$), 
filled with entries from~$\{0,1,\ldots,n\}$.  
We identify $\TT^k$ with~$\BB^k$ (resp. $\TT_l$ with~$\BB_l$) through the bijection
\begin{equation}\label{iic}
i_\alpha = \delta_{\alpha, I_1} + \cdots + \delta_{\alpha, I_k}
\qquad (0 \le \alpha \le n).
\end{equation}

We refer to~\eqref{tk} and \eqref{tl} as the \emph{tableau representation} 
of the basis of~$V^k$ and $V_l$, respectively.
Both representations, the multiplicity representation and the tableau representation, will be used throughout this paper.
The tableau representation aligns naturally with a particle interpretation, 
whereas many formulas are more conveniently expressed in the multiplicity representation.

For two elements $\bs, \bs' \in \BB_l$, we define the relation
\begin{equation}\label{prel}
\bs \prec \bs' \overset{\rm def}{\Longleftrightarrow}
\bs-\bs' = {\bf e}_{r_1}-{\bf e}_{r_2} + \cdots + {\bf e}_{r_{\mu-1}}-{\bf e}_{r_\mu}
\; \text{for some} \; 0 \le r_1<\cdots < r_\mu  \le n\; \text{and} \; \mu\in  2\Z_{\ge 1}.
\end{equation}
The RHS represents the alternating sum $\sum_{i=1}^{\mu}(-1)^{i-1}{\bf e}_{r_i}$.
We write $\bs  \preceq \bs'$ if $\bs \prec \bs'$ or $\bs=\bs'$.
Note that $\bs\preceq \bs'$ and $\bs'\preceq \bs''$ do {\em not} imply $\bs\preceq \bs''$.

We use the notation $\theta(\text{true}) = 1$ and $\theta(\text{false}) = 0$ throughout.

\subsection{Definition of $n$-species capacity-$l$ inhomogeneous $t$-PushTASEP}\label{ss:model}

For $n, l \ge 1$, we introduce the $n$-species, capacity-$l$, inhomogeneous $t$-PushTASEP  
on a one-dimensional periodic lattice of length $L$.  
It is a continuous-time Markov process on the space $\mathbb{V} = V_l^{\otimes L}$,  
where $V_l$ is defined in \eqref{vl}.
We often write a basis vector  
$v_{\bs_1} \otimes \cdots \otimes v_{\bs_L}$ simply as  
$|\bs_1, \ldots, \bs_L\rangle$ or $|\vec{\bs}\rangle$,  
with the shorthand $\vec{\bs} = (\bs_1, \ldots, \bs_L)$
with $\bs_j=(\sigma_{j,0},\ldots, \sigma_{j,n}) \in \BB_l$.
A vector $v_{\bs} \in V_l$, or equivalently the label $\bs \in \BB_l$, is regarded as a \emph{local state}.

In the \emph{multiplicity representation}, a local state
$\bs = (\sigma_0, \ldots, \sigma_n) \in \BB_l$ means that the site contains
$\sigma_\alpha$ particles of species $\alpha$ for $0 \le \alpha \le n$.%
\footnote{We regard $0$ also as a particle, but follow the
conventional terminology and refer to the model as an $n$-species system.}
On the other hand, the \emph{tableau representation}  
$\bs = (I_1, \ldots, I_l) \in \TT_l$ lists the species of the $l$ particles present at the site.
The integer parameter $l$ is referred to as the \emph{capacity}, meaning the maximum number of particles  
that can occupy a single site.

Let $\mathbb{V}({\bf m}) \subset \mathbb{V}$ be the subspace 
specified by the {\em multiplicity} ${\bf m} = (m_0,\ldots, m_n)\in (\Z_{\ge 1})^{n+1}$ 
of the particles as follows:
\begin{align}
\mathbb{V}({\bf m})
&=\bigoplus_{ (\bs_1,\ldots, \bs_L) \in \mathcal{S}({\bf m})}
\mathbb{C}  |\bs_1,\ldots, \bs_L\rangle,
\label{Vm}
\\
\mathcal{S}({\bf m}) &= \{(\bs_1,\ldots, \bs_L) \mid 
\bs_i = (\sigma_{i,0},\ldots, \sigma_{i,n}) \in \BB_l,\;
\sigma_{1,\alpha}+\cdots + \sigma_{L,\alpha}=m_\alpha\, (0 \le  \alpha \le n)\}.
\label{Sm}
\end{align}
Note that $m_0 + \cdots + m_n = Ll$.
We set
\begin{align}
K_\alpha &=\begin{cases}
 l &  \; \text{for}\; \alpha=0,
 \\
 m_0+ \cdots + m_{\alpha-1}  &\; \text{for}\; 1 \le \alpha \le n,
 \end{cases}
 \qquad \KK_\alpha = m_0+ \cdots + m_{\alpha-1}  \; \; \,\text{for }\; 0 \le \alpha \le n,
\label{Ki}\\
D_{\bf m} &= \prod_{\alpha=0}^n(1-t^{K_\alpha}).\label{Dm}
\end{align}
We shall exclusively consider the case $m_0,\ldots, m_n\ge 1$  throughout the article, 
hence $K_0,\ldots, K_n \ge 1$, $\KK_0=0$, $\KK_1,\ldots, \KK_n \ge 1$ and $D_{\bf m} \neq 0$.

The $n$-species capacity-$l$ inhomogeneous $t$-PushTASEP is a stochastic process on each $\mathbb{V}({\bf m})$ 
governed  by the master equation 
\begin{align}
\frac{d}{ds}|\mathbb{P}(s)\rangle = H_{n,l}(x_1,\ldots, x_L)|\mathbb{P}(s)\rangle,
\end{align}
where the state vector is given by 
$|\mathbb{P}(s)\rangle  = \sum_{(\bs_1,\ldots, \bs_L) \in \mathcal{S}({\bf m})}\mathbb{P}(\bs_1,\ldots, \bs_L;s)
|\bs_1,\ldots, \bs_L\rangle$, and 
$\mathbb{P}(\bs_1,\ldots, \bs_L;s)$ denotes the probability that 
the configuration $(\bs_1,\ldots, \bs_L)$ occurs  at time $s$.

The Markov matrix 
$H_{n,l}=H_{n,l}(x_1,\ldots, x_L): \mathbb{V}({\bf m}) \rightarrow \mathbb{V}({\bf m})$ is given by 
\begin{subequations}
\begin{align}
H_{n,l}|\vec{\bs}\rangle &= 
\sum_{\substack{\vec{\bs}' \in \mathcal{S}({\bf m}) \\ \vec{\bs}' \neq \vec{\bs}}}
\sum_{o=1}^L \frac{1}{x_o}\prod_{0 \le h \le n}
w^{(o)}_{\vec{\bs}, \vec{\bs}'}(h)|\vec{\bs}'\rangle
+\Bigl(\sum_{o=1}^L\frac{C_{\vec{\bs}, o}(t)-1}{x_o}\Bigr) |\vec{\bs}\rangle,
\label{Hdef}\\
C_{\vec{\bs},o}(t) &= 
\prod_{0 \le h \le n}\frac{1-t^{\KK_h+\sigma_{o,h}}}{1-t^{K_h}},
\label{Co}
\end{align}
\end{subequations}
where $\vec{\bs} = (\bs_1,\ldots, \bs_L)$ and $\vec{\bs}' = (\bs'_1,\ldots, \bs'_L)$
with 
$\bs_j = (\sigma_{j,0},\ldots, \sigma_{j,n}), 
\bs'_j = (\sigma'_{j,0},\ldots, \sigma'_{j,n}) \in \BB_l$.
Each parameter $x_o>0$ associated with a lattice site $o \in \{1,\ldots, L\}$ 
represents the site-wise inhomogeneity of the system.
The factor $w^{(o)}_{\vec{\bs}, \vec{\bs}'}(h)$ is a rational function of $t$, 
and constitutes the main part of $H_{n,l}$. Its explicit definition will be given below.

Let $\vec{\bs} = (\bs_1,\ldots, \bs_L)$ and $\vec{\bs}' = (\bs'_1,\ldots, \bs'_L) \in \mathcal{S}({\bf m})$
with $\bs_j = (\sigma_{j,0}, \ldots, \sigma_{j,n}), \bs'_j = (\sigma'_{j,0}, \ldots, \sigma'_{j,n}) \in \BB_l$.
A necessary condition\footnote{Sufficiency requires an additional condition $\sigma'_{o,0} \ge 1$, as will be shown in Proposition~\ref{pr:zr}.}
for $w^{(o)}_{\vec{\bs}, \vec{\bs}'}(h) \neq 0$  for $\vec{\bs} \neq \vec{\bs}'$
is that $\vec{\bs}'$ is obtained from $\vec{\bs}$ by a sequence of push-out moves involving particles
$h_g, h_{g-1}, \ldots, h_1, h_0$ in this order, for some 
$0 \le h_0 < \cdots < h_g \le n$ and $1 \le g \le n$.%
\footnote{For simplicity, we refer to a “particle of species~$h$” simply as “particle~$h$”.
Species~$0$ is also regarded as a particle for the purpose of this description.}

\begin{table}[htbp]
\centering
\begin{tabular}{c|rcccl}
particles & $h_g\;$ & $h_{g-1}$ & $\cdots$ & $h_1$ & $h_0$ \\
\hline
departure site & $o = p(h_g)$ & $p(h_{g-1})$ & $\cdots$ & $p(h_1)$ & $p(h_0)$ \\
arrival site   & $p'(h_g)$ & $p'(h_{g-1})$ & $\cdots$ & $p'(h_1)$ & $p'(h_0) = o$
\end{tabular}
\caption{List of moving particles. 
The relations $p'(h_\alpha) = p(h_{\alpha-1})$ $(1 \le \alpha \le g)$ and $p'(0) = p(h_g) = o$ are assumed.}
\label{tab1}
\end{table}

Table~\ref{tab1} illustrates a process in which a particle of species $h_g$
departs from site~$o$ and arrives at site $p'(h_g)=p(h_{g-1})$, thereby pushing out
a smaller particle of species $h_{g-1}$.
The particle $h_{g-1}$ in turn moves to $p'(h_{g-1})=p(h_{g-2})$, pushing out
$h_{g-2}$, and this cascading motion continues until a particle $h_0$ finally
reaches and {\em refills} the original site~$o$.
This refill is necessary because we regard $0$ as a particle species as well,
and each site is required to accommodate exactly $l$ particles.
The departure and arrival sites must satisfy the relations
\begin{align}
p'(h_\alpha) &= p(h_{\alpha-1}) \quad (1 \le \alpha \le g), \quad
p'(h_0) = p(h_g) = o, \\
\sigma_{p(h_\alpha), h_\alpha} &\ge 1, \quad
p(h_\alpha) \ne p'(h_\alpha) \quad (0 \le \alpha \le g). \label{spp}
\end{align}
These conditions do not exclude the possibility $p(h_\alpha) = p'(h_\beta)$ for $\alpha > \beta \ge 0$.
In particular, a single site may experience multiple (up to $l$) push-out events during a process.
See Figure \ref{fig:1}.

 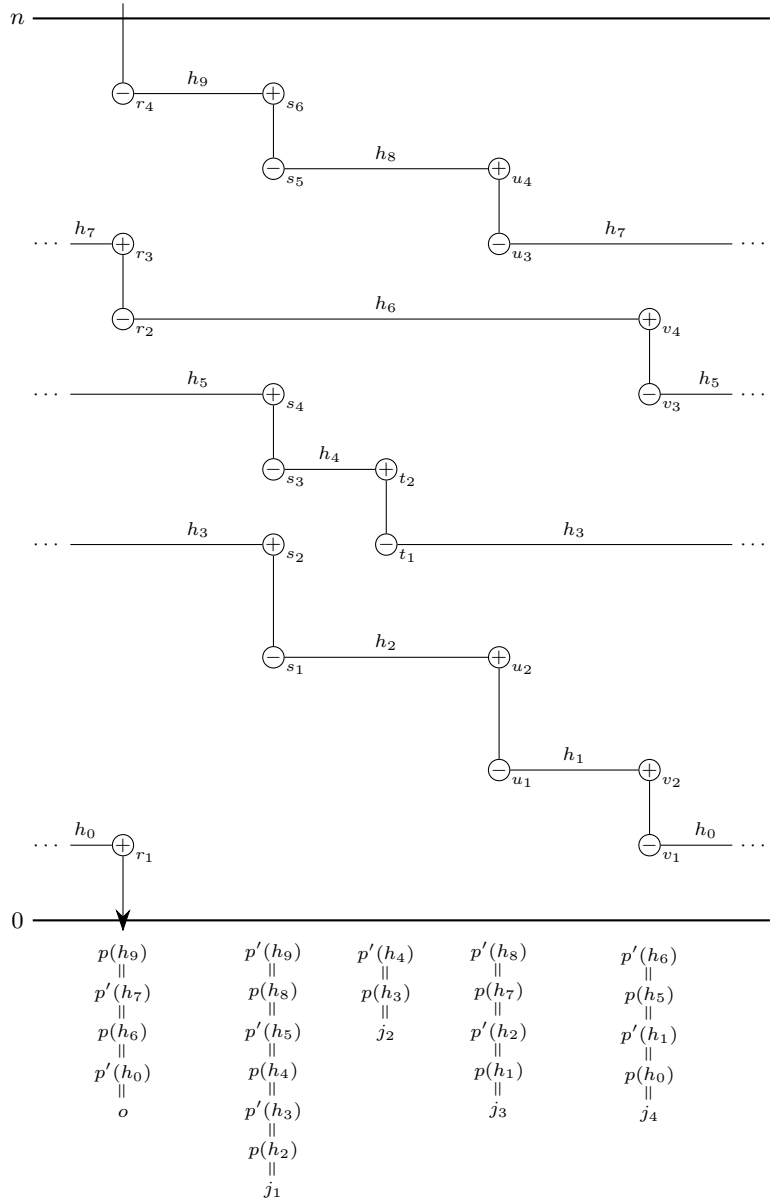
\begin{figure}[H]
  \centering
  \begin{tikzpicture}[>=stealth, font=\small]

    \node (v1)  [draw, circle, inner sep=0pt] at (0,10) {\scriptsize$-$};
    \node [anchor=base west, xshift=-1.8pt, yshift=-2.3pt] at (v1.south east) {\tiny$r_4$};

    \node (v2)  [draw, circle, inner sep=0pt] at (2,10) {\scriptsize$+$};
    \node [anchor=base west, xshift=-1.8pt, yshift=-2.3pt] at (v2.south east) {\tiny$s_6$};

    \node (v3)  [draw, circle, inner sep=0pt] at (2,9) {\scriptsize$-$};
    \node [anchor=base west, xshift=-1.8pt, yshift=-2.3pt] at (v3.south east) {\tiny$s_5$};

    \node (v4)  [draw, circle, inner sep=0pt] at (5,9) {\scriptsize$+$};
    \node [anchor=base west, xshift=-1.8pt, yshift=-2.3pt] at (v4.south east) {\tiny$u_4$};

    \node (v5)  [draw, circle, inner sep=0pt] at (5,8) {\scriptsize$-$};
    \node [anchor=base west, xshift=-1.8pt, yshift=-2.3pt] at (v5.south east) {\tiny$u_3$};

    \node (v6)  [draw, circle, inner sep=0pt] at (0,8) {\scriptsize$+$};
    \node [anchor=base west, xshift=-1.8pt, yshift=-2.3pt] at (v6.south east) {\tiny$r_3$};

    \node (v7)  [draw, circle, inner sep=0pt] at (0,7) {\scriptsize$-$};
    \node [anchor=base west, xshift=-1.8pt, yshift=-2.3pt] at (v7.south east) {\tiny$r_2$};

    \node (v8)  [draw, circle, inner sep=0pt] at (7,7) {\scriptsize$+$};
    \node [anchor=base west, xshift=-1.8pt, yshift=-2.3pt] at (v8.south east) {\tiny$v_4$};

    \node (v9)  [draw, circle, inner sep=0pt] at (7,6) {\scriptsize$-$};
    \node [anchor=base west, xshift=-1.8pt, yshift=-2.3pt] at (v9.south east) {\tiny$v_3$};

    \node (v10) [draw, circle, inner sep=0pt] at (2,6) {\scriptsize$+$};
    \node [anchor=base west, xshift=-1.8pt, yshift=-2.3pt] at (v10.south east) {\tiny$s_4$};

    \node (v11) [draw, circle, inner sep=0pt] at (2,5) {\scriptsize$-$};
    \node [anchor=base west, xshift=-1.8pt, yshift=-2.3pt] at (v11.south east) {\tiny$s_3$};

    \node (v12) [draw, circle, inner sep=0pt] at (3.5,5) {\scriptsize$+$};
    \node [anchor=base west, xshift=-1.8pt, yshift=-2.3pt] at (v12.south east) {\tiny$t_2$};

    \node (v13) [draw, circle, inner sep=0pt] at (3.5,4) {\scriptsize$-$};
    \node [anchor=base west, xshift=-1.8pt, yshift=-2.3pt] at (v13.south east) {\tiny$t_1$};

    \node (v14) [draw, circle, inner sep=0pt] at (2,4) {\scriptsize$+$};
    \node [anchor=base west, xshift=-1.8pt, yshift=-2.3pt] at (v14.south east) {\tiny$s_2$};

    \node (v15) [draw, circle, inner sep=0pt] at (2,2.5) {\scriptsize$-$};
    \node [anchor=base west, xshift=-1.8pt, yshift=-2.3pt] at (v15.south east) {\tiny$s_1$};

    \node (v16) [draw, circle, inner sep=0pt] at (5,2.5) {\scriptsize$+$};
    \node [anchor=base west, xshift=-1.8pt, yshift=-2.3pt] at (v16.south east) {\tiny$u_2$};

    \node (v17) [draw, circle, inner sep=0pt] at (5,1) {\scriptsize$-$};
    \node [anchor=base west, xshift=-1.8pt, yshift=-2.3pt] at (v17.south east) {\tiny$u_1$};

    \node (v18) [draw, circle, inner sep=0pt] at (7,1) {\scriptsize$+$};
    \node [anchor=base west, xshift=-1.8pt, yshift=-2.3pt] at (v18.south east) {\tiny$v_2$};

    \node (v19) [draw, circle, inner sep=0pt] at (7,0) {\scriptsize$-$};
    \node [anchor=base west, xshift=-1.8pt, yshift=-2.3pt] at (v19.south east) {\tiny$v_1$};

    \node (v20) [draw, circle, inner sep=0pt] at (0,0) {\scriptsize$+$};
    \node [anchor=base west, xshift=-1.8pt, yshift=-2.3pt] at (v20.south east) {\tiny$r_1$};

    \draw (0,11.2)--(v1);
    \draw (v1) -- (v2) -- (v3) -- (v4) -- (v5);
    \draw (v6) -- (v7) -- (v8) -- (v9);
    \draw (v10) -- (v11) -- (v12) -- (v13);
    \draw (v14) -- (v15) -- (v16) -- (v17) -- (v18) -- (v19);
\draw[-{Stealth[length=3mm, width=2mm]}] (v20)--(0,-1.15);
    
\node at ($(v1)!0.5!(v2)+(0,0.2)$) {\scriptsize$h_9$};
\node at ($(v3)!0.5!(v4)+(0,0.2)$) {\scriptsize$h_8$};
\node at (-0.5,8.2) {\scriptsize$h_7$};
\node at ($(v5)!0.5!(8.1,8)+(0,0.2)$) {\scriptsize$h_7$};
\node at ($(v7)!0.5!(v8)+(0,0.2)$) {\scriptsize$h_6$};
\node at (1,6.2) {\scriptsize$h_5$};
\node at ($(v9)!0.5!(8.6,6)+(0,0.2)$) {\scriptsize$h_5$};
\node at ($(v11)!0.5!(v12)+(0,0.2)$) {\scriptsize$h_4$};
\node at (1,4.2) {\scriptsize$h_3$};
\node at ($(v13)!0.5!(8.5,4)+(0,0.2)$) {\scriptsize$h_3$};
\node at ($(v15)!0.5!(v16)+(0,0.2)$) {\scriptsize$h_2$};
\node at ($(v17)!0.5!(v18)+(0,0.2)$) {\scriptsize$h_1$};
\node at (-0.5,0.2) {\scriptsize$h_0$};
\node at ($(v19)!0.5!(8.5,0)+(0,0.2)$) {\scriptsize$h_0$};

    \draw (v5) -- (8.1,8);
    \draw (v9) -- (8.1,6);
    \draw (v13) -- (8.1,4);
    \draw (v19) -- (8.1,0);

    \node at (8.4,8)  {\scriptsize$\cdots$};
    \node at (8.4,6)  {\scriptsize$\cdots$};
    \node at (8.4,4)  {\scriptsize$\cdots$};
    \node at (8.4,0)  {\scriptsize$\cdots$};

    \draw (v6)  -- (-0.7,8);
    \draw (v10) -- (-0.7,6);
    \draw (v14) -- (-0.7,4);
    \draw (v20) -- (-0.7,0);

    \node at (-1.0,8) {\scriptsize$\cdots$};
    \node at (-1.0,6) {\scriptsize$\cdots$};
    \node at (-1.0,4) {\scriptsize$\cdots$};
    \node at (-1.0,0) {\scriptsize$\cdots$};

\draw[thick] (-1.2,11) -- (8.6,11);
\node at (-1.4,11) {$n$};

\draw[thick] (-1.2,-1) -- (8.6,-1);
\node at (-1.4,-1) {$0$};

\node at (0,-1.-1.5) 
  {\scriptsize$
    \renewcommand{\arraystretch}{0.9}
    \begin{array}{c}
      p(h_9) \\[-0.2pt]
      \rotatebox{90}{=} \\[-0.7pt]
      p'(h_7) \\[-0.2pt]
      \rotatebox{90}{=} \\[-0.7pt]
      p(h_6) \\[-0.2pt]
      \rotatebox{90}{=} \\[-0.7pt]
      p'(h_0) \\[-0.2pt]
      \rotatebox{90}{=} \\[-0.7pt]
      o
    \end{array}
  $};
  
\node at (2,-1.-2) 
  {\scriptsize$
    \renewcommand{\arraystretch}{0.9}
    \begin{array}{c}
      p'(h_9) \\[-0.2pt]
      \rotatebox{90}{=} \\[-0.7pt]
      p(h_8) \\[-0.2pt]
      \rotatebox{90}{=} \\[-0.7pt]
      p'(h_5) \\[-0.2pt]
      \rotatebox{90}{=} \\[-0.7pt]
      p(h_4) \\[-0.2pt]
      \rotatebox{90}{=} \\[-0.7pt]
      p'(h_3) \\[-0.2pt]
      \rotatebox{90}{=} \\[-0.7pt]
      p(h_2)  \\[-0.2pt]
      \rotatebox{90}{=} \\[-0.7pt]
      j_1
    \end{array}
  $};
  
\node at (3.5,-1.97) 
  {\scriptsize$
    \renewcommand{\arraystretch}{0.9}
    \begin{array}{c}
      p'(h_4) \\[-0.2pt]
      \rotatebox{90}{=} \\[-0.7pt]
p(h_3)\\[-0.2pt]
      \rotatebox{90}{=} \\[-0.7pt]
      j_2
    \end{array}
  $};
  
  \node at (5,-2.47) 
  {\scriptsize$
    \renewcommand{\arraystretch}{0.9}
    \begin{array}{c}
      p'(h_8) \\[-0.2pt]
      \rotatebox{90}{=} \\[-0.7pt]
      p(h_7) \\[-0.2pt]
      \rotatebox{90}{=} \\[-0.7pt]
      p'(h_2) \\[-0.2pt]
      \rotatebox{90}{=} \\[-0.7pt]
p(h_1)\\[-0.2pt]
      \rotatebox{90}{=} \\[-0.7pt]
      j_3
      \end{array}
  $};
  
    \node at (7,-2.52) 
  {\scriptsize$
    \renewcommand{\arraystretch}{0.9}
    \begin{array}{c}
      p'(h_6) \\[-0.2pt]
      \rotatebox{90}{=} \\[-0.7pt]
      p(h_5) \\[-0.2pt]
      \rotatebox{90}{=} \\[-0.7pt]
      p'(h_1) \\[-0.2pt]
\rotatebox{90}{=}\\[-0.7pt]
      p(h_0)\\[-0.2pt]
      \rotatebox{90}{=} \\[-0.7pt]
      j_4
      \end{array}
  $};
  
  \end{tikzpicture}
  \caption{Schematic plot of a transition 
$\vec{\bs} \rightarrow \vec{\bs}'$ in Table~\ref{tab1} 
for the case 
$0 \le h_0 < \cdots < h_{g=9} \le n$.
The vertical axis ranges from $0$ to $n$, corresponding to the components of arrays in $\BB_l$ defined in~\eqref{vl}, 
while the lattice sites $1,\ldots, L$ are aligned along the horizontal axis with periodic boundary conditions.
At each point $(j,h) \in \{1,\ldots, L\} \times [0,n]$, 
the symbol $\oplus_h$, $\ominus_h$, or blank is placed according to 
$\sigma'_{j,h} - \sigma_{j,h} = 1$, $-1$, or $0$, respectively, 
where 
$\vec{\bs} = (\bs_1,\ldots, \bs_L)$ with $\bs_j = (\sigma_{j,0},\ldots, \sigma_{j,n}) \in \BB_l$, and 
$\vec{\bs}' = (\bs'_1,\ldots, \bs'_L)$ with $\bs'_j = (\sigma'_{j,0},\ldots, \sigma'_{j,n}) \in \BB_l$ 
in the multiplicity representation.
Only the sites where the local state changes are depicted and labeled by $o, j_1, \ldots, j_4$; 
all other sites are omitted in accordance with the notion of a {\em reduced diagram} 
introduced in Section~\ref{ss:rd}.
For instance, 
$\bs'_{o} - \bs_{o} = {\bf e}_{r_1} - {\bf e}_{r_2} + {\bf e}_{r_3} - {\bf e}_{r_4}$ 
for some $0 \le r_1 < \cdots < r_4 \le n$, 
consistent with~\eqref{rr1}, and 
$\bs'_{j_2} - \bs_{j_2} = -{\bf e}_{t_1} + {\bf e}_{t_2}$ 
for some $0 \le t_1 < t_2 \le n$, as in~\eqref{rr2}.
A segment of the form 
$\ominus_p \overset{h_q}{\rule[0.5ex]{2em}{0.15pt}} \oplus_{p'}$ 
is understood to assume $h_q = p = p'$,
and represents the movement of a particle of species $h_q$ 
from the site with $\ominus_p$ to that with $\oplus_{p'}$,
under the periodic boundary condition.
In column~$o$, a vertical line entering the diagram from the top and an arrow exiting downward are added, 
so that it may be viewed as a path descending through the system, possibly wrapping around it.  
This viewpoint will be useful for the combinatorial description of the transition rates in Section~\ref{ss:cd}.
}
  \label{fig:1}
\end{figure}

Let $\{\bar{h}_1, \ldots, \bar{h}_{n-g}\} = \{0, \ldots, n\} \setminus \{h_0, \ldots, h_g\}$; that is,
\begin{align} \label{hhbar}
\{0, \ldots, n\} = \{h_0, \ldots, h_g\} \sqcup \{\bar{h}_1, \ldots, \bar{h}_{n-g}\}.
\end{align}
All particles of species $\bar{h}_1, \ldots, \bar{h}_{n-g}$ remain within their original sites.

\medskip

The above condition $w^{(o)}_{\vec{\bs}, \vec{\bs}'}(h) \ne 0$ can be equivalently stated as follows:
\begin{enumerate}[label=(\roman*)]

\item $o$ is the unique site such that $\bs'_o \prec \bs_o$ in the sense of \eqref{prel}.
For all other sites $j \ne o$, one has $\bs_j \preceq \bs'_j$.

\item Let $\bs_j = (\sigma_{j,0}, \ldots, \sigma_{j,n})$ and $\bs'_j = (\sigma'_{j,0}, \ldots, \sigma'_{j,n})$.
\begin{enumerate}[label=(\roman{enumi}-\roman*)]

\item For each $h \in \{h_0, \ldots, h_g\}$, there exists exactly one site $p(h)$ such that
$\sigma'_{p(h), h} = \sigma_{p(h), h} - 1$, and exactly one site $p'(h)$ such that
$\sigma'_{p'(h), h} = \sigma_{p'(h), h} + 1$.

\item For each $h \in \{\bar{h}_1, \ldots, \bar{h}_{n-g}\}$, one has
$\sigma_{j,h} = \sigma'_{j,h}$ for all $1 \le j \le L$.

\end{enumerate}
\end{enumerate}

We define the factor $w^{(o)}_{\vec{\bs}, \vec{\bs}'}(h)$ in \eqref{Hdef} as follows:
\begin{align}
w^{(o)}_{\vec{\bs}, \vec{\bs}'}(h) &= 
\begin{cases}
\displaystyle 
\frac{(1 - t^{\sigma_{p(h),h}})\, t^{\ell_h}}{1 - t^{K_h}} & \text{if } h \in \{h_0, \ldots, h_g\}, \\[0.7em]
\displaystyle 
\frac{1 - t^{\KK_h + \Phi_h}}{1 - t^{K_h}} & \text{if } h \in \{\bar{h}_1, \ldots, \bar{h}_{n-g}\}.
\end{cases}
\label{wm1}
\end{align}

Here, $\ell_h$ is defined by
\begin{equation}\label{lh}
\begin{split}
\ell_h &= \text{number of particles in $\vec{\bs}$ with species in $[0,h)$ in } \\
&\quad  \; \text{the cyclic (clockwise) interval $[p(h), p'(h))$ including $p(h)$ but excluding $p'(h)$}.
\end{split}
\end{equation}
For $l = 1$, this reduces to the number of particles with species in $[0,h)$ 
within the cyclic (clockwise) interval $(p(h), p'(h))$, 
excluding both endpoints, in agreement with $\ell_h$ in \cite[eq.(2.7)]{AK25}.

The quantity $\Phi_h = \Phi_h(\vec{\bs}, \vec{\bs}')$ is given by
\begin{align}\label{wdef}
\Phi_h = \varphi'_h(\bs_o, \bs'_o) 
+ \sum_{\substack{1 \le j \le L \\ j \neq o}} \varphi_h(\bs_j, \bs'_j).
\end{align}
To define $\varphi'_h(\bs_o, \bs'_o)$, 
we use the assumption $\bs'_o \prec \bs_o$ in (i) and recall the definition \eqref{prel}. Then:
\begin{align}
\varphi'_h(\bs_o, \bs'_o) = 
\begin{cases}
\sigma_{o,h} & \text{if } h \in [0, r_1) \sqcup (r_2, r_3) \sqcup \cdots \sqcup (r_{\mu-2}, r_{\mu-1}) \sqcup (r_\mu, n], \\
0 & \text{otherwise},
\end{cases}
\label{phip}
\end{align}
where the numbers $0 \le r_1 < r_2 < \cdots < r_\mu \le n$ (with $\mu\in  2\Z_{\ge 1}$) are defined by
\begin{align}\label{rr1}
\bs'_o - \bs_o = {\bf e}_{r_1} - {\bf e}_{r_2} + \cdots + {\bf e}_{r_{\mu - 1}} - {\bf e}_{r_\mu}.
\end{align}

The function $\varphi_h(\bs_j, \bs'_j)$ should be 
defined for $j \ne o$ and $\bs_j \preceq \bs'_j$ in the light of the condition in (i).
We set
\begin{align}
\varphi_h(\bs_j, \bs'_j) = 
\begin{cases}
\sigma_{j,h} & \text{if } \bs_j \prec \bs'_j \text{ and } 
h \in (r_1, r_2) \sqcup (r_3, r_4) \sqcup \cdots \sqcup (r_{\mu - 1}, r_\mu), \\
0 & \text{otherwise},
\end{cases}
\label{phi}
\end{align}
where, for $\bs_j \prec \bs'_j$, the sequence $0 \le r_1 < r_2 < \cdots < r_\mu \le n$ (with $\mu\in  2\Z_{\ge 1}$) 
is defined via
\begin{align}\label{rr2}
\bs_j - \bs'_j = {\bf e}_{r_1} - {\bf e}_{r_2} + \cdots + {\bf e}_{r_{\mu - 1}} - {\bf e}_{r_\mu}.
\end{align}

Observe that 
$\varphi_0(\bs_j, \bs'_j) = 0$ for all $j \ne o$, since
$0 \not\in (r_1, r_2) \sqcup (r_3, r_4) \sqcup \cdots \sqcup (r_{\mu - 1}, r_\mu)$
in \eqref{phi} for any sequence $0 \le r_1 < \cdots < r_\mu \le n$.
Consequently,  when $h=0$, the definition\eqref{wdef} simplifies to 
\begin{align}\label{P0}
\Phi_0 = \varphi'_0(\bs_o, \bs'_o).
\end{align}

From \eqref{wm1},  the main factor in the transition rate in \eqref{Hdef} becomes
\begin{align}
\prod_{h = 0}^n w^{(o)}_{\vec{\bs}, \vec{\bs}'}(h) =
\prod_{h \in \{h_0, \ldots, h_g\}}
\frac{(1 - t^{\sigma_{p(h),h}})\, t^{\ell_h}}{1 - t^{K_h}} 
\prod_{h \in \{\bar{h}_1, \ldots, \bar{h}_{n-g}\}}
\frac{1 - t^{\KK_h + \Phi_h}}{1 - t^{K_h}}.
\label{wm2}
\end{align}

\begin{proposition}\label{pr:zr}
In the process $\vec{\bs} \rightarrow \vec{\bs}'$ described in Table \ref{tab1},
the product \eqref{wm2} with generic $t$ vanishes if and only if the following condition is satisfied:
\begin{align}\label{so}
\sigma'_{o,0} = 0.
\end{align}
\end{proposition}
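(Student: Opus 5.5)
The plan is to examine the product \eqref{wm2} one factor at a time and determine which factors can vanish for generic $t$. Every denominator $1-t^{K_h}$ is nonzero because $K_h\ge 1$. For a moving species $h\in\{h_0,\ldots,h_g\}$, the factor $t^{\ell_h}$ is nonzero. The factor $1-t^{\sigma_{p(h),h}}$ is also nonzero, since \eqref{spp} gives $\sigma_{p(h),h}\ge 1$. Hence the product vanishes if and only if $1-t^{\KK_h+\Phi_h}=0$ for some non-moving species $h\in\{\bar h_1,\ldots,\bar h_{n-g}\}$. For generic $t$ this is equivalent to $\KK_h+\Phi_h=0$.

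Next, restrict which $h$ can achieve this. By \eqref{phip} and \eqref{phi}, $\Phi_h$ is a sum of nonnegative occupation numbers, so $\Phi_h\ge 0$. For $h\ge 1$ we have $\KK_h=m_0+\cdots+m_{h-1}\ge 1$, because all $m_\alpha\ge 1$. Therefore only $h=0$ can give a vanishing factor. So the product vanishes if and only if $0\notin\{h_0,\ldots,h_g\}$ and $\Phi_0=0$. Since $\KK_0=0$, we may use \eqref{P0}, which gives $\Phi_0=\varphi'_0(\bs_o,\bs'_o)$.

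Now compare this with \eqref{so}. If species $0$ does not move, then by (ii-ii) we have $\sigma'_{o,0}=\sigma_{o,0}$. Thus ${\bf e}_0$ does not appear in \eqref{rr1}, so $r_1>0$ and $0\in[0,r_1)$. By \eqref{phip} this gives $\Phi_0=\sigma_{o,0}=\sigma'_{o,0}$, so in this case vanishing is equivalent to $\sigma'_{o,0}=0$.

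If instead species $0$ moves, then $h_0=0$ because $h_0$ is the smallest moving species. In this case the product does not vanish, and we must show \eqref{so} fails. By Table~\ref{tab1}, $p'(h_0)=o$ and $p(h_0)\neq o$, and by (ii-i) species $0$ changes only at these two sites. Hence $\sigma'_{o,0}=\sigma_{o,0}+1\ge 1$, so \eqref{so} fails. Combining the two cases proves the equivalence.

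The only delicate point is bookkeeping: we must check that $\sigma'_{o,0}=\sigma_{o,0}$ exactly when $0$ is non-moving, so that $r_1\neq 0$. This follows directly from conditions (ii-i) and (ii-ii).
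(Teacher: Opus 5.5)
Your proof is correct and follows the paper's argument. Both reduce to the single factor $1-t^{\Phi_0}$ for stationary species $0$ and then use $\Phi_0=\varphi'_0(\bs_o,\bs'_o)=\sigma'_{o,0}$. The only difference is the converse: you use the refill rule $p'(h_0)=o$ of Table~\ref{tab1} to get $\sigma'_{o,0}=\sigma_{o,0}+1\ge 1$ whenever species $0$ moves, whereas the paper assumes $\sigma'_{o,0}=0$ and rules out any motion of species $0$ via \eqref{rr1} and the condition $\bs_\beta\prec\bs'_\beta$ at sites $\beta\neq o$; your version is slightly more direct.
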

\begin{proof}
Note that $\sigma_{p(h),h}\ge 1$ by \eqref{spp}, $K_0,\ldots, K_n \ge 1$ by  \eqref{Ki},
and $\Phi_0,\ldots, \Phi_n \ge 0$ by \eqref{wdef}--\eqref{phi}.
Therefore, the only factor in \eqref{wm2}  that can vanish is 
$(1-t^{\Phi_0})$, which occurs when $0 \in \{\bar{h}_1, \ldots, \bar{h}_{n-g}\}$.

Suppose $0 \in \{\bar{h}_1, \ldots, \bar{h}_{n-g}\}$, i.e., particles of species $0$ do not move.
This implies $r_1 \ge 1$ in \eqref{rr1}, hence $0 \in [0, r_1)$ in \eqref{phip}.
Therefore, we obtain
$\Phi_0 \overset{\eqref{P0}}{=} \varphi'_0(\bs_o, \bs'_o) = \sigma_{o,0} = \sigma'_{o,0}$.
Thus, if \eqref{wm2} vanishes, it must be that $\sigma'_{o,0} = 0$, proving the ``only if" part.

Conversely, suppose \eqref{so} holds. Then $\sigma_{o,0} = 0$ must also hold, since otherwise 
$\sigma_{o,0} \ge 1$ would contradict \eqref{rr1}.
The fact that $\sigma'_{o,0} = \sigma_{o,0} = 0$ implies that species $0$ particles either do not move at all, 
or , if they do, their motion must be confined to sites other than $o$.
Can there exist such a particle of species $0$ that moves from a site $\alpha$ to 
a different site $\beta$ within $\{1, 2, \ldots, L\} \setminus \{o\}$?
The answer is no, since such a move would imply $\sigma_{\beta,0} < \sigma'_{\beta,0}$, which contradicts the inequality 
$\bs_\beta \prec \bs'_\beta$ in condition (i) of the assumption (see \eqref{prel}).
It follows that species $0$ particles do not move, so $0 \in \{\bar{h}_1, \ldots, \bar{h}_{n-g}\}$.
Thus, \eqref{wm2} contains the vanishing factor
$(1 - t^{\Phi_0}) = (1 - t^{\varphi'_0(\bs_o, \bs'_o)}) = (1 - t^{\sigma_{o,0}}) = 0$, 
which completes the proof.
\end{proof}

For a pair of distinct configurations $\vec{\bs} \neq \vec{\bs}' \in \mathcal{S}({\bf m})$,
we say that $\vec{\bs}'$ is {\em admissible} to $\vec{\bs}$
if the conditions in Table 1, or equivalently (i) and (ii), are satisfied, and in addition, 
$\sigma'_{o,0} \ge 1$ holds.
In view of Proposition \ref{pr:zr},  the expansion of 
$H_{n,l}| \vec{\bs}\rangle$ contains a term $|\vec{\bs}'\rangle$ with nonzero coefficient
if and only if $\vec{\bs}'$ is admissible to $\vec{\bs}$.

\begin{remark}\label{re:l=1}
Let us consider the special case $l = 1$ of \eqref{wm2}.
As for the site $o$, the relation \eqref{rr1} holds for 
$\mu = 2$, and $r_1$ and  $r_2$ are the final and the initial local states at site $o$, respectively.
Since $\sigma_{o,h} = 0$ for $h \in [0,r_1) \sqcup (r_2, n]$,  we have $\varphi'_h(\bs_o, \bs'_o) = 0$.
The relation \eqref{rr2} holds for 
$\mu = 2$, and $r_1$ and $r_2$ are the initial and the final local states at site $j (\neq o)$, respectively.
Since $\sigma_{j,h} = 0$ for $h \in (r_1, r_2)$,  we have $\varphi_h(\bs_j, \bs'_j) = 0$ for $j \neq o$.
It follows that 
\begin{align}\label{Ph0}
\Phi_0 = \cdots = \Phi_n = 0 \; \,\text{for $l = 1$}.
\end{align}

Suppose that $0 \in \{h_0, \ldots, h_g\}$.
Then, from the assumption $0 \le h_0 < \cdots < h_g \le n$, it follows that $h_0 = 0$.
Hence, the second product in \eqref{wm2} becomes 1 due to \eqref{Ph0}.
We also have $\sigma_{p(h), h} = 1$ by \eqref{spp}, $K_0 = 1$ from \eqref{Ki}, and $\ell_0 = 0$ from \eqref{lh}.
Thus, \eqref{wm2} reduces to
\begin{align}
\prod_{h = 0}^n w^{(o)}_{\vec{\bs}, \vec{\bs}'}(h) =
\prod_{h \in \{h_1, \ldots, h_g\}} \frac{(1 - t)\, t^{\ell_h}}{1 - t^{K_h}},
\label{wm3}
\end{align}
where $1 \le h_1, \ldots, h_g \le n$ denote the species of the moved particles.
The formula \eqref{wm3} coincides with the known expression given in
\cite[eq.~(2.2)]{AMW24} and \cite[eq.~(2.7)]{AK25}.

On the other hand, suppose that $0 \notin \{h_0, \ldots, h_g\}$, so that
$0 \in \{\bar{h}_1, \ldots, \bar{h}_{n - g}\}$.
In this case, the second product in \eqref{wm2} contains the factor
$\frac{1 - t^{\Phi_0}}{1 - t^{K_0}}$,
which vanishes due to \eqref{Ph0}.\footnote{This fact can, of course, also be deduced from Proposition \ref{pr:zr}.}
This result is consistent with the constraint implied by the dynamics defined in \cite{AMW24, AK25}.
\end{remark}

\begin{example}\label{ex:wpro}
For $n=4, l=3, L=3$, consider the following states in multiplicity and tableau representations \eqref{tl} 
in $\mathbb{V}({\bf m})$ with ${\bf m}=(2,3,1,2,1)$:
\begin{align*}
|\vec{\bs}\rangle &=|(1,1,1,0,0),(0,2,0,0,1), (1,0,0,2,0)\rangle = |012,114,033\rangle,
\\ 
|\vec{\bs}'\rangle &=|(0,2,0,1,0), (1,1,1,0,0), (1,0,0,1,1)\rangle = |113,012,034\rangle,
\\ 
|\vec{\bs}''\rangle &=|(0,2,1,0,0), (1,1,0,0,1), (1,0,0,2,0)\rangle = |112,014,033\rangle,
\\ 
|\vec{\bs}'''\rangle &=|(1,2,0,0,0), (0,1,1,0,1), (1,0,0,2,0)\rangle = |011,124,033\rangle.
\end{align*}
They correspond to $(K_0,K_1,K_2,K_3,K_4)=(3,2,5,6,8)$.
The Markov matrix acts as  
\begin{equation}
\begin{split}
&H_{4,3}(x_1,x_2,x_3) |\vec{\bs}\rangle 
\\
&= \frac{t^6(1-t)^3(1-t^2)}{x_2(1-t^3)(1-t^5)(1-t^6)(1-t^8)}|\vec{\bs}'\rangle 
+ \frac{t(1-t)(1-t^9)}{x_2(1-t^3)(1-t^8)}|\vec{\bs}''\rangle + 
\frac{t^3(1-t)^2}{x_1(1-t^3)(1-t^5)}|\vec{\bs}'''\rangle \cdots,
\end{split}
\end{equation}
which corresponds to $o=2,2$ and $1$ in \eqref{Hdef}, respectively.
Since $\sigma'_{2,0}=\sigma''_{2,0}=\sigma'''_{1,0}=1$, 
the configurations $\vec{\bs}',  \vec{\bs}'', \vec{\bs}'''$ are all admissible to $\vec{\bs}$
in the sense defined after Proposition \ref{pr:zr}.

Let us derive the above expansion from \eqref{Hdef}, i.e.,
\begin{equation}
\begin{split}
H_{4,3}(x_1,x_2,x_3) |\vec{\bs}\rangle &
= \frac{1}{x_2} \prod_{h=0}^4 w^{(2)}_{\vec{\bs}, \vec{\bs}'} (h)|\vec{\bs}'\rangle 
+ \frac{1}{x_2} \prod_{h=0}^4 w^{(2)}_{\vec{\bs}, \vec{\bs}''} (h)  |\vec{\bs}'' \rangle 
+ \frac{1}{x_1} \prod_{h=0}^4 w^{(1)}_{\vec{\bs}, \vec{\bs}'''} (h)  |\vec{\bs}'''\rangle 
+ \cdots.
\end{split}
\end{equation}

For the first term $|\vec{\bs}'\rangle$, we have $o=2$, $g=4$ and 
\begin{align*}
(h_0,h_1,h_2,h_3,h_4)&=(0,1,2,3,4),\quad (p(h_0), p(h_1), p(h_2), p(h_3), p(h_4)) =(1,2,1,3,2),
\\
(\ell_0,\ell_1,\ell_2, \ell_3, \ell_4) &= (0,1,2,1,2).
\end{align*}
Thus \eqref{wm2} is calculated as
\begin{equation}
\begin{split}
\prod_{h=0}^4
w^{(2)}_{\vec{\bs}, \vec{\bs}'}(h)
&= 
\frac{(1-t^{\sigma_{1,0}})t^{\ell_0}}{1-t^{K_0}}
\frac{(1-t^{\sigma_{2,1}})t^{\ell_1}}{1-t^{K_1}}
\frac{(1-t^{\sigma_{1,2}})t^{\ell_2}}{1-t^{K_2}}
\frac{(1-t^{\sigma_{3,3}})t^{\ell_3}}{1-t^{K_3}}
\frac{(1-t^{\sigma_{2,4}})t^{\ell_4}}{1-t^{K_4}}
\\
&=\frac{1-t}{1-t^3}
\frac{(1-t^2)t}{1-t^2}
\frac{(1-t)t^2}{1-t^5}
\frac{(1-t^2)t}{1-t^6}
\frac{(1-t)t^2}{1-t^8}
=\frac{t^6(1-t)^3(1-t^2)}{(1-t^3)(1-t^5)(1-t^6)(1-t^8)}.
\end{split}
\end{equation}

For the second term $|\vec{\bs}''\rangle$, we have $o=2$, $g=1$ and 
\begin{align*}
(h_0, h_1) &= (0,1), \quad (p(h_0), p(h_1)) = (1,2), \quad (\ell_0, \ell_1) = (0,1), 
\quad 
(\Phi_2, \Phi_3, \Phi_4)=(0,0,1).
\end{align*}
In particular, 
$\Phi_4 = \varphi'_4(\bs_2, \bs''_2) + \varphi_4(\bs_1, \bs''_1) + \varphi_4(\bs_3, \bs''_3)=1$ is evaluated as 
\begin{align*}
\varphi_4(\bs_1, \bs''_1) =0 : \;&\bs_1-\bs''_1 = {\bf e}_{r_1} - {\bf e}_{r_2},\quad(r_1,r_2)=(0,1),\quad 
4 \not\in (0,1) \; \, \text{in \eqref{phi}},
\\
\varphi'_4(\bs_2, \bs''_2) =1: \; &\bs''_2-\bs_2 = {\bf e}_{r_1} - {\bf e}_{r_2},\quad (r_1,r_2)=(0,1), \quad
4 \in [0,0) \sqcup (1,4], \quad \sigma_{2,4}=1\; \, \text{in \eqref{phip}},
\\
\varphi_4(\bs_3, \bs''_3)=0: \; &\bs_3=\bs''_3.
\end{align*}
Thus \eqref{wm2} is calculated as
\begin{equation}
\begin{split}
\prod_{h=0}^4
w^{(2)}_{\vec{\bs}, \vec{\bs}''}(h)
&= 
\frac{(1-t^{\sigma_{1,0}})t^{\ell_0}}{1-t^{K_0}}
\frac{(1-t^{\sigma_{2,1}})t^{\ell_1}}{1-t^{K_1}}
\frac{1-t^{\KK_2+\Phi_2}}{1-t^{K_2}}
\frac{1-t^{\KK_3+\Phi_3}}{1-t^{K_3}}
\frac{1-t^{\KK_4+\Phi_4}}{1-t^{K_4}}
\\
&= 
\frac{1-t}{1-t^3}\frac{(1-t^2)t}{1-t^2} \frac{1-t^5}{1-t^5}\frac{1-t^6}{1-t^6}\frac{1-t^9}{1-t^8}
= \frac{t(1-t)(1-t^9)}{(1-t^3)(1-t^8)}.
\end{split}
\end{equation}

For the third term $|\vec{\bs}'''\rangle$, we have $o=1$, $g=1$ and 
\begin{align*}
(h_0,h_1)=(1,2), \quad
(p(h_0), p(h_1)) = (2,1), \quad
(\ell_1,\ell_2) =(1,2), \quad
(\Phi_0, \Phi_3, \Phi_4)=(1,0,0).
\end{align*}
In particular, $\Phi_0=\varphi'_0(\bs_1, \bs'''_1)=1$ originates in 
\begin{align*}
\bs'''_1-\bs_1 = {\bf e}_1-{\bf e}_2,\quad (r_1, r_2)=(1,2), \quad 
0 \in [0,1) \sqcup (2,4], \quad
\sigma_{1,0}=1.
\end{align*}
Thus \eqref{wm2} is evaluated as 
\begin{equation}
\begin{split}
\prod_{h=0}^4
w^{(1)}_{\vec{\bs}, \vec{\bs}'''}(h)
&= 
\frac{(1-t^{\sigma_{2,1}})t^{\ell_1}}{1-t^{K_1}}
\frac{(1-t^{\sigma_{1,2}})t^{\ell_2}}{1-t^{K_2}}
\frac{1-t^{\Phi_0}}{1-t^{K_0}}
\frac{1-t^{\KK_3+\Phi_3}}{1-t^{K_3}}
\frac{1-t^{\KK_4+\Phi_4}}{1-t^{K_4}}
\\
&= 
\frac{(1-t^2)t}{1-t^2}\frac{(1-t)t^2}{1-t^5} \frac{1-t}{1-t^3}
= \frac{t^3(1-t)^2}{(1-t^3)(1-t^5)}.
\end{split}
\end{equation}

\end{example}

\subsection{Combinatorial description of the transition rates}\label{ss:cd}

Let $\vec{\bs}$ be a configuration of the $n$-species capacity-$l$ $t$-PushTASEP with particle 
multiplicity given by 
${\bf m} = (m_0, \dots, m_n)$. 
As shown in Figure \ref{fig:1}, we can think of 
$\vec{\bs}=(\bs_1,\ldots, \bs_L)$ with $\bs_j=(\sigma_{j,0},\ldots, \sigma_{j,n})$ 
as a matrix with $n+1$ rows indexed by
$i=0, 1,\dots, n$ from bottom to top and the $L$ columns indexed by $j=1,2,\ldots, L$ from the left to the right,
where $\sigma_{j,i}$ at the position $(i,j)$ denotes the number of particles of species $i$ at site $j$. 

As an example, let $n=4, l=3, L=3, {\bf m} = (2, 3, 1, 2, 1)$. Then a configuration 
$\vec{\bs}$ (or a corresponding state denoted by $|\vec{\bs}\rangle$)  is expressed in a matrix form as
\begin{equation}
\label{smat}
|\vec{\bs}\rangle = |(1,1,1,0,0),(0,2,0,0,1), (1,0,0,2,0)\rangle = 
|012,114,033\rangle = 
\renewcommand{\arraystretch}{0.85}
\begin{pmatrix}
\,0 & 1 & 0 \,\\
\,0 & 0 & 2 \,\\
\,1 & 0 & 0 \,\\
\,1 & 2 & 0 \,\\
\, 1 & 0 & 1\,
\end{pmatrix}
\renewcommand{\arraystretch}{1}.
\end{equation}

Recall from Figure \ref{fig:1} that transitions can be interpreted as paths 
on $\vec{\bs}$ starting from the top row and leaving from the bottom row, 
where we impose periodic boundary conditions horizontally. 
It will be convenient for us here to interpret $\vec{\bs}$ instead as an infinite periodic array of height $n+1$, where
$\sigma_{j+L,i} = \sigma_{j,i}$ for all $0 \leq i \leq n, j \in \mathbb{N}$.
For the above example, we will write 
\[
|\vec{\bs}\rangle = 
\renewcommand{\arraystretch}{0.85}
\begin{pmatrix}
\,0 & 1 & 0 & 0 & 1 & 0 & 0 & 1 & 0 & \dots \\
\,0 & 0 & 2 & 0 & 0 & 2 & 0 & 0 & 2 & \dots \\
\,1 & 0 & 0 & 1 & 0 & 0 & 1 & 0 & 0 & \dots \\
\,1 & 2 & 0 & 1 & 2 & 0 & 1 & 2 & 0 & \dots \\
\,1 & 0 & 1 & 1 & 0 & 1 & 1 & 0 & 1 & \dots 
\end{pmatrix}
\renewcommand{\arraystretch}{1}.
\]
We now consider down-right paths $P$ (drawn in blue) that satisfy the following conditions:

\begin{equation}\label{brule}
\renewcommand{\arraystretch}{1.06}
\begin{array}{l}
\text{$\bullet$ they start and end at the same column modulo $L$ overlaid on $\vec{\bs}$,} \\
\text{$\bullet$ they can turn rightward\footnotemark\ at $(i,j)$ only if $\sigma_{j,i} \ge 1$,} \\
\text{$\bullet$ they can turn downward at $(i,j)$ only if $\sigma_{j,i} \le l-1$,} \\
\text{$\bullet$ they can move horizontally in a single row for at most $L-1$ steps.}
\end{array}
\end{equation}
\footnotetext{``Rightward'' means to the right \emph{in the figure},
not from the viewpoint of the traveler along the path.}

These conditions imply that the set of such paths is finite.
They are in one-to-one correspondence with the possible transitions 
$\vec{\bs} \rightarrow \vec{\bs}'$ by the rule explained in Figure \ref{fig:1}.

We will enclose the entries strictly below this path by a 
red Young diagram $Y$ in French notation.
For example, 
\begin{equation}
\label{bpa}
\begin{tikzpicture}[>=stealth, thick, every node/.style={minimum size=1cm}]
  \matrix[matrix of math nodes, left delimiter={(}, right delimiter={)}, row sep=0.2cm, column sep=0.2cm, nodes={anchor=center}] (m) {
	0 & 1 & 0 & 0 & 1 & 0 & 0 & 1 & 0 & \dots \\
	0 & 0 & 2 & 0 & 0 & 2 & 0 & 0 & 2 & \dots \\
	1 & 0 & 0 & 1 & 0 & 0 & 1 & 0 & 0 & \dots \\
	1 & 2 & 0 & 1 & 2 & 0 & 1 & 2 & 0 & \dots \\
	1 & 0 & 1 & 1 & 0 & 1 & 1 & 0 & 1 & \dots \\
  };

  \draw[blue, ultra thick, ->]
    ([yshift=5pt]m-1-1.north) -- 
    (m-3-1.center) --             
    (m-3-2.center) --             
    (m-4-2.center) --             
    (m-4-4.center) --             
	([yshift=-5pt]m-5-4.south);    
	
  \draw[red, thick, fill=red!20, opacity=0.5]
    ([xshift=-5pt,yshift=5pt]m-4-1.north west) --  
    ([xshift=5pt,yshift=5pt]m-4-1.north east) --   
    ([xshift=5pt,yshift=5pt]m-5-1.north east) --   
    ([xshift=5pt,yshift=5pt]m-5-3.north east) --   
    ([xshift=5pt,yshift=-5pt]m-5-3.south east) --  
    ([xshift=-5pt,yshift=-5pt]m-5-1.south west) -- 
    cycle;

\end{tikzpicture}
\end{equation}
In case $P$ ends up vertically below where it starts, the Young diagram is empty, i.e., $Y = \emptyset$. 
We assign weights to paths as follows. Locally, the path at a given entry $\sigma_{j,i} = a$ 
can look like one of the four configurations shown in Figure \ref{fig:ab}, 
and to each configuration, the weight of that entry is shown. 

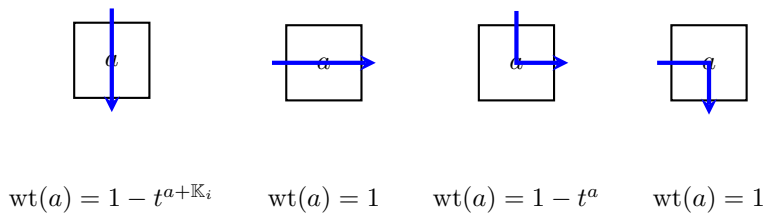
\begin{figure}[h!]
\begin{tikzpicture}[>=stealth, thick]

  \node[draw, minimum size=1cm, inner sep=0pt] (A) at (0,0) {$a$};

  \node[below=of A] (B) {$\wt(a) = 1 - t^{a + \KK_i}$};

  \draw[blue, ultra thick, ->]
    ([yshift=5pt]A.north) -- 
    ([yshift=-5pt]A.south);

\end{tikzpicture}
\quad
\begin{tikzpicture}[>=stealth, thick]

  \node[draw, minimum size=1cm, inner sep=0pt] (A) at (0,0) {$a$};

  \node[below=of A] (B) {$\wt(a) = 1$};

  \draw[blue, ultra thick, ->]
    ([xshift=-5pt]A.west) -- 
    (A.center) -- 
    ([xshift=5pt]A.east);

\end{tikzpicture}
\quad
\begin{tikzpicture}[>=stealth, thick]

  \node[draw, minimum size=1cm, inner sep=0pt] (A) at (0,0) {$a$};

  \node[below=of A] (B) {$\wt(a) = 1 - t^{a}$};

  \draw[blue, ultra thick, ->]
    ([yshift=5pt]A.north) -- 
    (A.center) -- 
    ([xshift=5pt]A.east);

\end{tikzpicture}
\quad
\begin{tikzpicture}[>=stealth, thick]

  \node[draw, minimum size=1cm, inner sep=0pt] (A) at (0,0) {$a$};

  \node[below=of A] (B) {$\wt(a) = 1$};

  \draw[blue, ultra thick, ->]
    ([xshift=-5pt]A.west) -- 
    (A.center) -- 
    ([yshift=-5pt]A.south);

\end{tikzpicture}
\caption{Weights of local configurations of blue paths at the cell containing $a \in [0,l]$.
In the leftmost case, the index 
$i \in [0,n]$ is the row to which the cell belongs, and $\KK_i$ is defined in \eqref{Ki}.
In the third and the fourth cases, $a=0$ and $a=l$ are forbidden, respectively by \eqref{brule}.}
\label{fig:ab}
\end{figure}

Notice that each row of the matrix will only contribute one nontrivial weight.
The weight of the path $P$ is then given by
\begin{equation}\label{wY}
\wt(P) = t^{\sum Y} \prod_{a \in P} \wt(a),
\end{equation}
where $\sum Y$ means the sum of all entries in the Young diagram $Y$ associated to $P$. 
For example, the configuration in \eqref{bpa} has 
$(\KK_0,\KK_1,\KK_2,\KK_3,\KK_4)=(0,2,5,6,8)$ and its weight is 
$t^3 (1-t^8) (1-t^6) (1-t) (1-t^2) (1-t)$.

Suppose that the transition $\vec{\bs} \rightarrow \vec{\bs}'$ corresponds to a path $P$ 
in the sense illustrated in Figure~\ref{fig:1}.  
Then the power $\sum_{h \in \{h_0,\ldots, h_g\}} \ell_h$ of $t$ in 
\eqref{wm2}, \eqref{lh}  is equal to $\sum Y$.
The above definition of $\mathrm{wt}(P)$ provides a restatement of the transition rate \eqref{wm2} as
\begin{align}\label{rwm}
\prod_{h=0}^n w^{(o)}_{\vec{\bs}, \vec{\bs}'}(h)
= \frac{\mathrm{wt}(P)}{D_{\mathbf{m}}},
\end{align}
where $o$ specifies the column from which the path $P$ enters the diagram from the top,
and $D_{\mathbf{m}}$ is defined in~\eqref{Dm}.

In the special case $\vec{\bs}' = \vec{\bs}$, 
the path $P$ runs straight down the diagram, as mentioned earlier, 
yielding the weight 
$\prod_{h=0}^n (1 - t^{\KK_h + \sigma_{o,h}})$, 
which coincides with the numerator of~\eqref{Co}.  
Therefore, the formula~\eqref{Hdef} can be expressed more compactly as
\begin{align}
H_{n,l}|\vec{\bs}\rangle 
&= \sum_{\vec{\bs}' \in \mathcal{S}({\bf m})}
\sum_{o=1}^L \frac{1}{x_o}
\prod_{0 \le h \le n}
w^{(o)}_{\vec{\bs}, \vec{\bs}'}(h)
|\vec{\bs}'\rangle
- \Bigl(\sum_{o=1}^L \frac{1}{x_o}\Bigr)
|\vec{\bs}\rangle,
\label{Hdef2}
\\[2mm]
&= \sum_{o=1}^L \frac{1}{x_o}
\left(
\frac{1}{D_{\bf m}} \sum_P{}^{(o)} \,\mathrm{wt}(P)\,|\vec{\bs}'\rangle
- |\vec{\bs}\rangle
\right),
\label{Hdef3}
\end{align}
where $\sum^{(o)}_P$ denotes the sum running over all paths entering the diagram at column~$o$.

\begin{remark}\label{re:t0}

Our model is well defined at $t=0$.
Indeed, from \eqref{wY}, the blue paths are restricted to those satisfying $\sum Y=0$.
Taking the rules \eqref{brule} into account, the $t=0$ dynamics is expressed as
\begin{align}
\left.H_{n,l}\right|_{t=0}|\vec{\bs}\rangle
&= \sum_{\substack{1 \le j \le L\\ \sigma_{j,0}=0}}
\frac{1}{x_j} \Bigl(|\vec{\bs}(j)\rangle -|\vec{\bs}\rangle\Bigr).
\end{align}
Here $\vec{\bs}(j)$ is the unique state determined from $\vec{\bs}=(\bs_1,\ldots, \bs_L)$
with $\bs_m=(\sigma_{m,0},\ldots, \sigma_{m,n}) \in \BB_l$ \eqref{vl}, 
and the site index $j$ satisfying $\sigma_{j,0}=0$.
It corresponds to the unique blue path that enters column $j$ vertically 
from the top and exits column $j+L$ vertically from the bottom.
In each column, the path proceeds downward until it reaches the lowest cell inscribed with a positive integer, 
where it turns to the right.
If all cells below the entry point in that column are inscribed with $0$, the path simply continues horizontally.
At the final step, the path enters column $j+L$ from the left at the bottom cell inscribed with $\sigma_{j,0}=0$, 
and exits from that position after turning downward.

Interpreting $\sigma_{m,i}$ as the number of particles of species $i\in [0,n]$ at site $m$, the transition
$\vec{\bs} \rightarrow \vec{\bs}(j)$ associated with the above blue path can be described as follows.
A particle of the smallest available species (necessarily $\ge 1$ under the assumption $\sigma_{j,0}=0$) 
is activated at site $j$.
It moves cyclically to the nearest site on its right that contains particles of smaller species,
where it activates and bumps out a smallest such particle, which then continues the same procedure.
This sequential process continues until the motion returns to site $j$ after one full wrap around the lattice,
where the last activated particle, necessarily of species $0$, refills that site.

The $t=0$ dynamics is not irreducible in general.
For example, for $n=l=2$ and $L=3$, there is an invariant subspace spanned by
$|12,02,11\rangle$, $|02,12,11\rangle$, and $|12,12,01\rangle$
(in tableau representation)
within $\mathbb{V}({\mathbf m})$ with ${\mathbf m}=(1,3,2)$
since the $2$'s are untouched ($\dim \mathbb{V}({\mathbf m})=15$).
A general result on irreducibility will be given in Proposition~\ref{pr:ir}.
\end{remark}
 
\subsection{Probability conservation}\label{ss:pc}

Here we show, by a direct manipulation, that the Markov matrix $H_{n,l}$
indeed conserves the total probability.
An alternative proof based on the integrability of the underlying vertex model
will be given around~\eqref{pc}.

Set
\begin{equation}
\wt_j(\vec{\bs}) = \sum_P{}^{(j)} \wt(P).
\end{equation}
In view of \eqref{Hdef3}, the following result ensures the total probability conservation for the Markov matrix \eqref{Hdef3}.
\begin{theorem}
\label{th:pc}
Let $\vec{\bs}$ be a configuration of the $n$-species capacity-$l$ 
$t$-PushTASEP with particle multipicity given by ${\bf m} = (m_0, \dots, m_n)$. Then
$\wt_j(\vec{\bs}) = D_{\bf m}$ for all $1 \le j \le L$.
\end{theorem}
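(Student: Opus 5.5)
We plan to prove Theorem~\ref{th:pc} by induction on the number of species $n$, removing the bottom row of the array in each step. The base case is a configuration with a single species, i.e.\ one row in which every entry equals $l$. There \eqref{brule} forbids turning downward at any cell with $a=l$, so the only admissible path runs straight down column $j$. Its weight is $1-t^{l+\KK_0}=1-t^{l}=1-t^{K_0}=D_{\bf m}$, as required.

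For the inductive step we compare $\vec{\bs}$ with the configuration $\vec{\bs}^\flat$ obtained by merging species $0$ and $1$ into a single species. Concretely, rows $0$ and $1$ are replaced by one row with entries $\sigma_{c,0}+\sigma_{c,1}$ and multiplicity $m_0+m_1$. This lowers $n$ by one, and the quantities $\KK_h$ for the rows above are unchanged, since they depend only on $m_0+m_1$. Moreover $D_{\bf m}=(1-t^{K_1})\,D_{{\bf m}^\flat}$ with $K_1=m_0$. It therefore suffices to prove the following local identity. Fix the column $c$ (in the infinite periodic array) at which the path enters row $1$ from above, and the target exit column $j$ modulo $L$. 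Then
\begin{equation*}
\sum_{\text{paths through rows } 1,0}\bigl(\text{weight}\bigr)=(1-t^{m_0})\cdot\bigl(\text{weight of the corresponding path through the merged row}\bigr).
\end{equation*}
Here the weight of a path through rows $1,0$ means the product of the cell weights in Figure~\ref{fig:ab} together with the part of $t^{\sum Y}$ contributed by cells in row $0$. The upper rows and the rest of $t^{\sum Y}$ are then untouched: the cells of $Y$ in rows $\geq 1$ that lie below horizontal segments in higher rows are counted by the merged entries with the same total. Summing this identity over the upper part of the path and applying the induction hypothesis to $\vec{\bs}^\flat$ gives $\wt_j(\vec{\bs})=(1-t^{m_0})D_{{\bf m}^\flat}=D_{\bf m}$.

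To prove the local identity, we enumerate the two-row paths. Write $a_d=\sigma_{d,1}$, $b_d=\sigma_{d,0}$, and $s_d=a_d+b_d$. If the merged path goes straight down at $c=j$, it has weight $1-t^{s_j}$. The two-row paths in this case either go straight down in both rows, with weight $(1-t^{a_j+m_0})(1-t^{b_j})$, or turn right in row $1$ at $j$ and come back down at $j+L$ through row $0$. In the second case the horizontal run picks up $t^{\sum_d b_d}$ over one period, which produces the telescoping geometric sums. If the merged path turns right at $c$ and down at $j'$, the two-row paths are split according to whether the turn happens in row $1$ or in row $0$ at each end, and where the intermediate descent from row $1$ to row $0$ occurs. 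The sum over the intermediate descent column $d$ should telescope via
\begin{equation*}
\sum_{d}t^{b_c+\cdots+b_{d-1}}\bigl(1-t^{b_d}\bigr)=1-t^{b_c+\cdots+b_{d_{\max}}},
\end{equation*}
and this is where the factor $1-t^{m_0}$ should emerge, because $b$ summed over a full period equals $m_0$.

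The main obstacle is exactly this two-row telescoping. The delicate points are the boundary constraints in \eqref{brule}: a right turn requires $a\ge1$, a down turn requires $a\le l-1$, and a horizontal run has at most $L-1$ steps. They interact with the merged row's constraints ($s\ge1$, $s\le l-1$), and cases such as $a_d=0<s_d$ must contribute terms that cancel exactly. I would first verify the identity by hand for $L=1,2$ to pin down the sign and case bookkeeping. After that I would organize the general sum as a $2\times2$ transfer computation per column, so that the telescoping becomes a product of column factors whose total over one period collapses to $1-t^{m_0}$.
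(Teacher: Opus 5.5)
Your merging route is sound and genuinely different from the paper's.

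\textbf{Comparison with the paper.} The paper peels off the \emph{top} row $n$. The path either goes straight down through $\tau_{1,n}$ or runs right and descends at some column $k\le L$, and rows $0,\ldots,n-1$ are handled by induction. Those rows no longer have constant column sums, so the paper first generalizes to Lemma~\ref{le:pc}, where only the first column must sum to $l$. It then applies that lemma at each descent column with $l^{(k)}$ in place of $l$. Merging the two \emph{bottom} species keeps you among genuine capacity-$l$ configurations, so the theorem itself is the induction hypothesis. The price is a local two-row computation. Both routes close with the same identity
\begin{equation*}
(1-t^{a+M})(1-t^{b})+(1-t^{a})(t^{b}-t^{M})=(1-t^{M})(1-t^{a+b}),
\end{equation*}
with $(a,b,M)=(\tau_{1,n},l',m_0+\cdots+m_{n-1})$ in the paper and $(a_c,b_c,m_0)$ for you. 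As a bonus, divide by $D_{\bf m}=(1-t^{m_0})D_{{\bf m}^\flat}$: your local identity then says the rates of all transitions with a given merged image add up to the merged rate. So you in effect also show that merging species $0$ and $1$ lumps the process onto the $(n-1)$-species one.

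\textbf{Finishing the two-row step.} It is simpler than you fear. In row $0$ the factor at the descent column $d$ is $1-t^{b_d}$ whether the path then goes straight down or turns right, and there is no $Y$-contribution. The merged row always contributes $1-t^{s_c}$.

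Let $j'\in[c,c+L-1]$ be the merged exit. Each $d\in[c,c+L-1]$ has exactly one exit $e\equiv j$ in $[d,d+L-1]$: $e=j'$ for $d\le j'$, and $e=j'+L$ for $d>j'$. You must include the latter, since they make the sum run over a full period:
\begin{equation*}
(1-t^{a_c})\sum_{d=c+1}^{c+L-1}t^{b_c+\cdots+b_{d-1}}(1-t^{b_d})=(1-t^{a_c})(t^{b_c}-t^{m_0}).
\end{equation*}
Adding the $d=c$ term $(1-t^{a_c+m_0})(1-t^{b_c})$ gives $(1-t^{m_0})(1-t^{s_c})$ for every exit residue. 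No transfer matrix is needed.

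\textbf{The identity as stated needs a small repair.} As written, for all $c$ and $j$, it is false because of the rule ``turn downward only at entries $\le l-1$''. Suppose $j'\ne c$, $s_{j'}=l$ and $b_{j'}<l$. Then the merged segment is forbidden, while all two-row segments of nonzero weight survive. The left side is therefore $(1-t^{m_0})(1-t^{s_c})\ne0$ when $s_c\ge1$.

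This is harmless, because for configurations whose columns all sum to $l$ the rule is redundant.
\begin{itemize}
\item In a row $i\ge1$, a down-turn at an entry $l$ is followed by a column of zeros. There a right turn has weight $0$, and the straight-down weight in row $0$ is $1-t^0=0$.
\item In row $0$, the exit column $\equiv j$ would hold all its $l$ particles in row $0$. This forces the path down column $j$ and then at least $L$ steps along row $0$, which the step rule forbids.
\end{itemize}
Drop the rule for both $\vec{\bs}$ and $\vec{\bs}^\flat$, and your local identity holds exactly. In row $1$ the rule is enforced anyway by the factor $1-t^{b_d}$, since $a_d=l$ forces $b_d=0$.
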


The proof of this result will follow from a slightly more general lemma, which we now state and prove. 

Let $n, L$ be as above. Let $\vec{\bt} = (\tau_{j,i})$ be a periodic array of nonnegative integers, 
where $0 \leq i \leq n$ and $j \in \mathbb{N}$ and the periodicity condition is as before, 
namely $\tau_{j + L,i} = \tau_{j,i}$. 
In addition, \textit{define} $m_i = \sum_{j = 1}^L \tau_{j,i}$ for $0 \leq i \leq n-1$,
${\bf m} = (m_0, \dots, m_n)$ and $l = \sum_{i = 0}^n \tau_{1,i}$.
(The undefined $m_n$ can be arbitrary since $D_{\mathbf m}$ appearing 
in Lemma \ref{le:pc} below does not depend on it.  See \eqref{Ki}--\eqref{Dm}.)
Notice that these conditions mean that there is no constraint on the entries
$\tau_{2,n}, \dots, \tau_{L,n}$. 
We then define the set of paths overlaid on $\vec{\bt}$ and define $\wt_j(\vec{\bt})$ exactly as above.

\begin{lemma}
\label{le:pc}
Let $\vec{\bt}$ be an integer matrix described above. Then
$\wt_1(\vec{\bt}) = D_{\bf m}$. 
\end{lemma}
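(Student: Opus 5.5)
I would prove Lemma~\ref{le:pc} by induction on the number of rows $n+1$, peeling off the bottom row $i=0$. Note first that every path counted in $\wt_1(\vec{\bt})$ is determined by its behaviour in each row: it enters row $i$ vertically at some column $c_i$ (with $c_n=1$), travels right for $d_i\in[0,L-1]$ steps, and leaves row $i$ downward at column $c_{i-1}=c_i+d_i$. The closing condition is $c_{-1}\equiv 1 \pmod L$. Each row contributes one factor: $1-t^{\tau+\KK_i}$ if $d_i=0$, and $1-t^{\tau_{c_i,i}}$ if $d_i>0$. The factor $t^{\sum Y}$ then splits as a product over rows of $t$ raised to the sum of the entries lying strictly below row $i$ in the columns traversed.

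The cleanest route is the opposite induction, peeling off the top row. Fix the behaviour in rows $n,n-1,\dots,1$; then only the choice in row $0$ remains, and it is forced by the closing condition. So I would instead set up a statement that allows an arbitrary entry column. Define $\wt_j^{(k)}$ for the truncated array consisting of rows $0,\dots,k$, with paths entering at column $j$ in row $k$ and exiting at column $1$ modulo $L$. Here the ``$l$'' must be replaced by the column sum of the exit column. The relevant normalisation is $\prod_{i=0}^{k}(1-t^{K_i})$, where $K_0$ equals the column-$1$ sum restricted to rows $\le k$ and $K_i=m_0+\dots+m_{i-1}$ for $i\ge1$. This explains why the lemma leaves $m_n$ and $\tau_{2,n},\dots,\tau_{L,n}$ free: the top row only enters through the entries it actually touches.

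The key step is a one-row telescoping identity. Fix the column $c$ where the path enters row $k$, and write $S_j$ for the sum of the entries in rows $<k$ at column $j$. Summing over $d\in[0,L-1]$, and using the induction hypothesis to evaluate the remaining rows as a function of the new column, I expect the row-$k$ contribution to be
\[
(1-t^{\tau_{c,k}+\KK_k})\,X_c+\sum_{d\ge1}(1-t^{\tau_{c,k}})\,t^{S_c+\dots+S_{c+d-1}}X_{c+d}.
\]
Here $X_j$ denotes the lower-row sum. To make this telescope, I would strengthen the hypothesis to an explicit formula for $X_j$ that depends on $j$ through a geometric factor of the form $t^{S_1+\dots+S_{j-1}}$ with cyclic indexing, so that the sum over $d$ becomes a geometric-type telescoping sum in which consecutive terms $t^{A}-t^{A+\tau}$ cancel. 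Going around the full cycle produces $t^{\KK_k}$, since the sum of the $S_j$ over one period is exactly $m_0+\dots+m_{k-1}=\KK_k$. This total is what combines with the $1-t^{\tau+\KK_k}$ term to yield the factor $1-t^{K_k}$. The case $k=0$, a single row, is the base: there the path must go straight down at column $1$ (or wrap, which is forbidden by the $L-1$ bound), giving $1-t^{\tau_{1,0}}=1-t^{K_0}$.

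The main obstacle is guessing the correct strengthened induction statement, namely the precise dependence of the truncated weight on the entry column $j$, together with the bookkeeping of the Young-diagram exponent. That exponent must be attributed row by row and must respect periodicity and the cyclic-interval conventions in~\eqref{lh}. I would first work out $n=1$ by hand with general $L$ to pin down the formula, and then verify the telescoping cancellation in general. Theorem~\ref{th:pc} then follows by cyclically shifting the columns so that $j$ becomes $1$.
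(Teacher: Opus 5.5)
Your architecture (peel off the top row, sum over the run length $d$, telescope the rectangle factor $t^{S_c+\cdots+S_{c+d-1}}$, use $\sum_{j=1}^{L}S_j=\KK_k$) is the paper's. However, the step you flag as the main obstacle is left open, and both guesses you make for it are false, so the induction does not close as written. First, the normalisation for paths entering at column $j$ is governed by the column sum of the \emph{entry} column, not the exit column. Already for a single row and $j\neq 1$, the only path turns right at $(j,0)$ and runs to column $L+1$, so its weight is $1-t^{\tau_{j,0}}$; your base case only treats $j=1$. Second, $X_j$ carries no factor $t^{S_1+\cdots+S_{j-1}}$. The geometric factor is already supplied by the rectangle in $Y$, and what $X_j$ must supply is $1-t^{S_j}$. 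With $X_j=(1-t^{S_j})\prod_{i=1}^{k-1}(1-t^{\KK_i})$, each term $t^{S_c+\cdots+S_{c+d-1}}(1-t^{S_{c+d}})$ is a difference of consecutive powers, and the sum over $1\le d\le L-1$ is $t^{S_c}-t^{\KK_k}$. After removing the common factor $\prod_{i=1}^{k-1}(1-t^{\KK_i})$, the row-$k$ sum becomes
\[
(1-t^{\tau_{c,k}+\KK_k})(1-t^{S_c})+(1-t^{\tau_{c,k}})(t^{S_c}-t^{\KK_k})=(1-t^{\tau_{c,k}+S_c})(1-t^{\KK_k}).
\]
This reproduces the same form one row up, with the column-$c$ sum over rows $0,\dots,k$, and gives $D_{\bf m}$ at $c=1$, $k=n$.

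The missing idea is that the total weight of sub-paths in rows $0,\dots,k-1$ entering at column $j$ does not depend on the residue of the exit column, so no strengthening is needed. Fix the column $c_0$ at which the path enters row $0$. There is then exactly one way to finish at any prescribed residue. That completion contributes $1-t^{\tau_{c_0,0}}$ whether it goes straight down or runs right, and the run in row $0$ adds nothing to $Y$. Hence the sum you need, with exit at column $1$, equals $\wt_j$ of the truncated array. That is just the lemma itself for one row fewer, after a cyclic shift making $j$ the first column; the shift replaces $l$ by $S_j$ and leaves $m_0,\dots,m_{k-2}$ unchanged. This is what the paper's proof uses, implicitly, when it applies its induction hypothesis ``to all paths starting at position $(k,n-1)$'' and obtains the factor $1-t^{l^{(k)}}$.
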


\begin{proof}
We will perform induction on $n$. For $n = 0$, $\vec{\bt}$ can be written as
$\vec{\bt} = (l, \tau_{2,0}, \dots \tau_{L,0}, l, \tau_{2,0}, \dots \tau_{L,0}, \dots)$. 
Then there is a single path $P$ starting above column $1$ 
and that is forced to leave below column $1$ as well. 
By  Figure \ref{fig:ab}, 
$\wt(P) = 1 - t^l = D_{(m_0)}$, proving the result in this case.

Now suppose the result holds for all configurations with rows $0,\ldots, n-1$, 
and we want to prove the result for a configuration $\vec{\bt}$ with rows indexed  by $0, \dots, n$.
Consider a path $P$ entering above the entry $\tau_{1,n}$. 
There are two possibilities: either it goes straight down, or it turns right at the point.

Suppose first that it goes straight down.
Then, by the induction hypothesis, the weight of any such path is $1 - t^{\tau_{1,n} + K_n}$ times $D_{\bf m'}$, where
${\bf m'} = (m_0, \dots, m_{n-1})$. To be precise, letting $l' = l - \tau_{1,n}$,
the total weight of all such paths is  
\begin{equation}
\label{case1}
(1 - t^{\tau_{1,n} + m_0 + \cdots + m_{n-1}}) (1 - t^{l'}) \prod_{i = 0}^{n-2} (1 - t^{m_0 + \cdots + m_i}).
\end{equation}
Note that we have used the fact that the sum over all such entries in the first column is $l'$.

We observe that if $\tau_{1,n} = 0$, then the path has to necessarily go down at position $(1,n)$. 
In that case, \eqref{case1} is the total weight of all such paths and it is equal to $D_{\bf m}$ since $l' = l$.

Now, suppose the path turns right at position $(1,n)$, where we are assuming
$\tau_{1,n} \geq 1$. 
The weight of that cell is $1 - t^{\tau_{1,n}}$.
By the fourth property of the path, it must turn downward at position $(k,n)$ for some $k \leq L$. 
Therefore, the Young diagram $Y$ necessarily contains the rectangle bounded by the points 
$(1,0), (1,n-1), (k-1,0)$ and $(k-1, n-1)$.
We now apply the induction hypothesis to all paths starting at position $(k,n-1)$. 
Let $l^{(k)} = \tau_{k,0} + \cdots + \tau_{k,n-1}$. 
Notice that the sum of all the entries in the rectangle is $\sum_{j=1}^{k-1} l^{(j)}$.
Then the sum over all such paths gives
\begin{equation}
\label{case2k}
t^{\sum_{j=1}^{k-1} l^{(j)}} (1 - t^{\tau_{1,n}}) (1 - t^{l^{(k)}}) 
\prod_{i = 0}^{n-2} (1 - t^{m_0 + \cdots + m_i}) = 
(1 - t^{\tau_{1,n}}) \prod_{i = 0}^{n-2} (1 - t^{m_0 + \cdots + m_i})
\left(t^{\sum_{j=1}^{k-1} l^{(j)}} - t^{\sum_{j=1}^{k} l^{(j)}} \right).
\end{equation}
We now sum \eqref{case2k} over $k$ ranging from $2$ to $L$, noticing that the last factor gives a telescoping sum to get
\begin{equation}
\label{case2}
(1 - t^{\tau_{1,n}}) \prod_{i = 0}^{n-2} (1 - t^{m_0 + \cdots + m_i})
\left(t^{l^{(1)}} - t^{\sum_{j=1}^{L} l^{(j)}} \right)
= 
(1 - t^{\tau_{1,n}}) \prod_{i = 0}^{n-2} (1 - t^{m_0 + \cdots + m_i})
\left(t^{l'} - t^{m_0 + \cdots + m_{n-1}} \right),
\end{equation}
where the last power of $t$ is the sum over all entries of $\vec{\bt}$ in rows $0$ through $n-1$, 
and we have used $l^{(1)} = l'$.

To complete the proof, we add \eqref{case1} and \eqref{case2} to get
\[
\prod_{i = 0}^{n-2} (1 - t^{m_0 + \cdots + m_i})
\left((1 - t^{\tau_{1,n} + m_0 + \cdots + m_{n-1}}) (1 - t^{l'})
+ (1 - t^{\tau_{1,n}}) (t^{l'} - t^{m_0 + \cdots + m_{n-1}}) \right),
\]
and the sum inside the right parenthesis gives exactly
$(1 - t^{\tau_{1,n} + l'})(1 - t^{m_0 + \cdots + m_{n-1}})$,
which is what we need.
\end{proof}

\begin{proof}[Proof of  Theorem \ref{th:pc}]
The conditions in Lemma \ref{le:pc} work for all $\vec{\bs} \in \mathcal{S}({\bf m})$ and thus, the result goes through.
\end{proof}

\section{$R$-matrix $S(z)$ on (antisymmetric $\otimes$ symmetric)  tensor representation}\label{sec:s}

\subsection{3D construction of the quantum $R$ matrix $R(z)$}

For $a,b,i,j \in \{0,1\}$ and $m \in \Z_{\ge 0}$, define $\LL^{a,b,m'}_{i,j,m}$ by  
\begin{equation}\label{LLe}
\begin{split}
&\LL^{0,0,m'}_{0,0,m} = \LL^{1,1,m'}_{1,1,m} = \delta^{m'}_m,
\quad
\LL^{1,0,m'}_{1,0,m} = (-q)^m\delta^{m'}_m, 
\quad \LL^{0,1,m'}_{0,1,m} = q(-q)^m\delta^{m'}_m,
\\
&\LL^{1,0,m'}_{0,1,m} = \delta^{m'}_{m+1},
\quad
\LL^{0,1,m'}_{1,0,m} = (1-q^{2m})\delta^{m'}_{m-1},
\end{split}
\end{equation}
and the ``weight conservation" property
\begin{align}\label{wc}
\LL^{a,b,m'}_{i,j ,m} =0 \; \, \text{unless} \, \; (a+b, b+m')=(i+j,j+m).
\end{align} 

For $0 \le k \le n+1$ and $l \ge 1$, we introduce the linear map 
$R(z) =R^{k}_{\;\,l}(z,q) \in \mathrm{End}(V^k \otimes V_l)$ via its matrix elements 
in the multiplicity representation as follows:
\begin{subequations}
\begin{align}
R(z)(v^{\bf i} \otimes v_{\bf j}) &=\sum_{{\bf a}\in \BB^k, {\bf b} \in \BB_l}
R(z)^{{\bf a}, {\bf b}}_{{\bf i}, {\bf j}}\,v^{\bf a} \otimes v_{\bf b}
\qquad ({\bf i} \in \BB^k, {\bf j} \in \BB_l),
\label{R1}\\
R(z)^{{\bf a}, {\bf b}}_{{\bf i}, {\bf j}} & = 
\sum_{\alpha_0,\ldots, \alpha_n =0,1}
z^{\alpha_0}
\LL^{\alpha_0,a_n, b_n}_{\alpha_n, i_n,  j_n}
\LL^{\alpha_n,a_{n-1},b_{n-1}}_{\alpha_{n-1}, i_{n-1},j_{n-1}}
 \cdots \LL^{\alpha_1, a_0, b_0}_{\alpha_0, i_0,  j_0}.
\label{R2}
\end{align}
\end{subequations}
The vectors $v^{\bf i}$ and $v_{\bf j}$ have been introduced in \eqref{vk} and \eqref{vl}.
This is so-called trace reduction of the 3D $L$-operator 
over the first component (cf. \cite[eq.(11.44)]{K22}), 
which is known to yield the quantum $R$ matrix on $V^k \otimes V_l$ regarded as 
the tensor product of the degree-$k$ antisymmetric tensor representation and the 
degree-$l$ symmetric tensor representation of $U_{q^2}(\widehat{sl}_{n+1})$ \cite[Th.11.5]{K22}.
As we demonstrate shortly,  
this formulation allows one to compute matrix elements far more efficiently 
than via the conventional fusion procedure.

\begin{example}\label{ex:r11}
Consider the simplest case $k=l=1$.
Let $i, j \in \{0,\ldots, n\}$ and assume $i \neq j$. 
Then the nonzero elements \eqref{R2} are given by
\begin{align*}
R(z)^{{\bf e}_i, {\bf e}_i}_{{\bf e}_i, {\bf e}_i}= z - q^2,
\quad
R(z)^{{\bf e}_i, {\bf e}_j}_{{\bf e}_i, {\bf e}_j} = q(1-z),
\quad
R(z)^{{\bf e}_i, {\bf e}_j}_{{\bf e}_j, {\bf e}_i} = z^{\theta(i<j)}(1-q^2).
\end{align*}
\end{example}

Using $R(z) = R^{k}_{\;\,l}(z,q)$ defined in \eqref{R1}--\eqref{R2}, we introduce 
$S(z) = S^{k}_{\;\,l}(z,t) \in \mathrm{End}(V^k \otimes V_l)$ via its matrix elements 
in the multiplicity representation as follows:
\begin{subequations}
\begin{align}
S(z)(v^{\bf i} \otimes v_{\bf j}) &=\sum_{{\bf a}\in \BB^k, {\bf b} \in \BB_l}
S(z)^{{\bf a}, {\bf b}}_{{\bf i}, {\bf j}}\,v^{\bf a} \otimes v_{\bf b}
\qquad ({\bf i} \in \BB^k, {\bf j} \in \BB_l),
\label{S1}\\
S(z)^{{\bf a}, {\bf b}}_{{\bf i}, {\bf j}} & = \left.
(-1)^{k-1}(-q)^{k(l-1)}(-q)^{\eta^{{\bf a}, {\bf b}}_{{\bf i}, {\bf j}}}z 
R\bigl((-1)^{l-1}q^{k-l}z^{-1},q)^{{\bf a}, {\bf b}}_{{\bf i}, {\bf j}}\right|_{q  \rightarrow t^{1/2}},
\label{S2}
\end{align}
\end{subequations}
where the power $\eta^{{\bf a}, {\bf b}}_{{\bf i}, {\bf j}}$ is given by 
\begin{align}
\eta^{{\bf a}, {\bf b}}_{{\bf i}, {\bf j}}&= \sum_{0 \le r < s \le n}(b_r a_s-i_r j_s).
\label{eta}
\end{align}

From \eqref{wc} and \eqref{R2}, it is evident that $S(z)$ satisfies the weight conservation property:
\begin{align}\label{wcs}
S(z)^{{\bf a}, {\bf b}}_{{\bf i}, {\bf j}} = 0 \quad \text{unless} \quad {\bf a} + {\bf b} = {\bf i} + {\bf j},
\end{align}
where the condition is understood as an equality in $\Z^{n+1}$ in the multiplicity representation.
Moreover, $S(z)$ obeys the same Yang--Baxter equations as those for $R$-matrices acting 
on tensor products of symmetric or antisymmetric tensor representations, 
by the same reasoning as in the proof of \cite[Prop.~4]{KMMO16}.

\begin{example}\label{ex:s11}
For comparison with Example \ref{ex:r11}, we also consider the case  $k=l=1$.
Then, for $0 \le i \neq j \le n$,  the nonzero elements of $S(z)=S^1_{\;\; 1}(z)$ in \eqref{S2} are given by
\begin{align*}
S(z)^{{\bf e}_i, {\bf e}_i}_{{\bf e}_i, {\bf e}_i} = 1-tz,
\quad
S(z)^{{\bf e}_i, {\bf e}_j}_{{\bf e}_i, {\bf e}_j} = t^{\theta(i>j)}(1-z),
\quad
S(z)^{{\bf e}_i, {\bf e}_j}_{{\bf e}_j, {\bf e}_i} = z^{\theta(i>j)}(1-t).
\end{align*}
The element $S(z)^{{\bf e}_a, {\bf e}_b}_{{\bf e}_i, {\bf e}_j}$ here 
coincides with $S(z)^{ab}_{ij}$ in~\cite[Eq.~(3.6)]{AK25} 
and with $(1 - tz)\,R(z)^{n-a,n-b}_{n-i,n-j}$ in~\cite[Eq.~(16)]{KOS24}.
\end{example}

From \eqref{R2} and \eqref{S2}, $S(z)=S^k_{\;\, l}(z)$ is a first-order polynomial in $z$ for general $k$ and $l$.
Thus, its structure is fully determined by specifying the values at $z=0$ and the derivative at $z=0$:
\begin{align}\label{Sd}
S(z)^{{\bf a}, {\bf b}}_{{\bf i}, {\bf j}}= S(0)^{{\bf a}, {\bf b}}_{{\bf i}, {\bf j}} 
+ z \dot{S}(0)^{{\bf a}, {\bf b}}_{{\bf i}, {\bf j}},
\qquad 
\dot{S}(0)^{{\bf a}, {\bf b}}_{{\bf i}, {\bf j}}
= \left.\frac{d}{dz}S(z)^{{\bf a}, {\bf b}}_{{\bf i}, {\bf j}}\right|_{z=0}.
\end{align}
They are explicitly given in  Propositions \ref{pr:s0} below.
In what follows, we use the notation 
\begin{align*}
 i_\Omega = \sum_{u \in \Omega}i_u, 
\quad 
(ja)_\Omega = \sum_{u \in \Omega}j_ua_u,
\end{align*} 
for intervals of the form  $\Omega=(r,s), [r,s), (s,r], [r,s]$, and similarly for 
$a_\Omega,  j_\Omega, (ij)_\Omega$, etc.

\begin{proposition}\label{pr:s0}
Let $0 \le k \le n$ and $l \ge 1$, and suppose that ${\bf a}, {\bf i} \in \BB^k$ and ${\bf b}, {\bf j} \in \BB_l$
are multiplicity representations satisfying ${\bf a} + {\bf b} = {\bf i} + {\bf j}$.
Then,  $S(0)^{{\bf a}, {\bf b}}_{{\bf i}, {\bf j}}$ is given by
\begin{align}
S(0)^{{\bf a}, {\bf b}}_{{\bf i}, {\bf j}} &=\theta({\bf j} \preceq {\bf b}) 
(-1)^\alpha t^\beta(1-t^{j_{r_1}})(1-t^{j_{r_3}})\cdots (1-t^{j_{r_{\mu-1}}}),
\label{Sz0}
\\
\alpha &= i_{(r_1,r_2)} + i_{(r_3,r_4)}+ \cdots + i_{(r_{\mu-1},r_\mu)},
\label{alp0}\\
\beta &= \sum_{0 \le r < s \le n}j_r a_s
+ (ja)_{(r_1,r_2)} + (ja)_{(r_3,r_4)}+ \cdots + (ja)_{(r_{u-1},r_u)},
\label{bet0}
\end{align}
where the product in \eqref{Sz0} range over $(1-t^{j_{r_{\mathrm odd}}})$, and 
the numbers $0 \le r_1 < \cdots < r_\mu \le n$ with even  $\mu$ are determined by 
\begin{align}\label{bj0}
{\bf j}-{\bf b} = {\bf e}_{r_1}-{\bf e}_{r_2}+\cdots + {\bf e}_{r_{\mu-1}}-{\bf e}_{r_\mu}
\end{align}
according to (\ref{prel}). 
In particular, the case ${\bf j}={\bf b}$ corresponds to $\mu=0$,  hence 
$\alpha=0$ and $\beta = \sum_{0 \le r < s \le n}j_r a_s$.

Similarly, the derivative at $z=0$ is given by 
\begin{align}
\dot{S}(0)^{{\bf a}, {\bf b}}_{{\bf i}, {\bf j}}
&=\theta({\bf b} \preceq {\bf j}) 
(-1)^\alpha t^\beta(1-t^{j_{r_2}})(1-t^{j_{r_4}})\cdots (1-t^{j_{r_{\mu}}}),
\label{Sz1}
\\
\alpha &= 1+ i_{[r_1,r_2)} + i_{[r_3,r_4)}+ \cdots + i_{[r_{\mu-1},r_\mu)},
\label{alp1}\\
\beta &= \sum_{0 \le r < s \le n}j_r a_s
+ (ja)_{[0,r_1)} + (ja)_{(r_2,r_3)} + \cdots + (ja)_{(r_{\mu-2},r_{\mu-1})} + (ja)_{(r_\mu, n]},
\label{bet1}
\end{align}
where the product in \eqref{Sz1} range over $(1-t^{j_{r_{\mathrm even}}})$, 
and the numbers $0 \le r_1 < \cdots < r_\mu \le n$ with even $\mu$ are determined by 
\begin{align}\label{bj1}
{\bf b}-{\bf j} = {\bf e}_{r_1}-{\bf e}_{r_2}+\cdots + {\bf e}_{r_{\mu-1}}-{\bf e}_{r_\mu}
\end{align}
according to (\ref{prel}).
In particular, the case ${\bf j}={\bf b}$ corresponds to $\mu=0$,  hence 
$\alpha=l $ and $\beta = \sum_{0 \le r \le s \le n}j_r a_s$.
\end{proposition}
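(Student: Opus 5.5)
The plan is to compute directly from the trace-reduction formula \eqref{R2}, read as a transfer-matrix product over the colour index $u=n,n-1,\ldots,0$. The auxiliary bits $\alpha_0,\alpha_1,\ldots,\alpha_n$ then form a cyclic sequence, with $\alpha_{u+1}$ entering the factor at index $u$ and $\alpha_u$ leaving it (indices mod $n+1$).

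By \eqref{wc}, the factor $\LL^{\alpha_{u+1},a_u,b_u}_{\alpha_u,i_u,j_u}$ forces $\alpha_{u+1}+a_u=\alpha_u+i_u$ and $b_u-j_u=i_u-a_u=\alpha_u-\alpha_{u+1}$. Hence $b_u-j_u\in\{0,\pm1\}$, and given the boundary data the whole sequence $(\alpha_u)$ is determined once $\alpha_0$ is fixed. Since ${\bf a}+{\bf b}={\bf i}+{\bf j}$ and $\sum_u(\alpha_u-\alpha_{u+1})=0$, the nonzero positions of ${\bf b}-{\bf j}$ must alternate in sign. So ${\bf b}-{\bf j}$ (or ${\bf j}-{\bf b}$) is exactly an alternating vector of the form \eqref{prel}.

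After the substitution $z\mapsto(-1)^{l-1}q^{k-l}z^{-1}$ and multiplication by $z$, the term carrying $z^{\alpha_0}$ of $R$ becomes proportional to $z^{1-\alpha_0}$. Thus $S(0)$ collects $\alpha_0=1$ and $\dot S(0)$ collects $\alpha_0=0$. For each choice of $\alpha_0$ I would determine when the cyclic sequence is consistent:
\begin{itemize}
\item With $\alpha_0=1$ and reading $u$ upward from $0$, $\alpha$ drops to $0$ across each $+$ step of ${\bf j}-{\bf b}$ at $r_{\rm odd}$ and returns to $1$ at $r_{\rm even}$. This yields $\theta({\bf j}\preceq{\bf b})$.
\item With $\alpha_0=0$ the roles are reversed, yielding $\theta({\bf b}\preceq{\bf j})$.
\end{itemize}
This fixes the support claims and shows that at most one $\alpha$-configuration contributes in each case.

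Next I would evaluate the product of $\LL$'s along that unique configuration using \eqref{LLe}:
\begin{itemize}
\item The factors $(1-q^{2m})$ arise exactly where $(\alpha_{u+1},i_u)\to(\alpha_u,a_u)$ is $(0,1)\to(1,0)$, with $m=j_u$. These are the positions $r_{\rm odd}$ for $S(0)$ and $r_{\rm even}$ for $\dot S(0)$, and they give the products $(1-t^{j_r})$.
\item The diagonal factors $(-q)^{j_u}$ and $q(-q)^{j_u}$ occur on the intervals where $\alpha\neq i_u$. For $S(0)$ these are $(r_1,r_2),(r_3,r_4),\ldots$, and for $\dot S(0)$ they are the complementary intervals $[0,r_1),(r_2,r_3),\ldots,(r_\mu,n]$.
\end{itemize}
Collecting powers of $q$ and signs, I would combine them with the prefactor $(-1)^{k-1}(-q)^{k(l-1)}(-q)^{\eta}$, the factor $((-1)^{l-1}q^{k-l})^{\alpha_0}$ from the spectral substitution, and the gauge \eqref{eta}.

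The main obstacle will be this bookkeeping of exponents and signs. The goal is to show the total $q$-power is even, equal to $2\beta$, so that $q\to t^{1/2}$ gives $t^\beta$, and that the sign equals $(-1)^\alpha$ with $\alpha$ as in \eqref{alp0}, \eqref{alp1}. I would rewrite $\eta$ using $b_r=j_r+\epsilon_r$ and $a_s=i_s-\epsilon_s$, so that $\sum_{r<s}(b_ra_s-i_rj_s)$ becomes $\sum_{r<s}(j_ra_s-i_rj_s)$ plus terms linear in the alternating vector $\epsilon$. These linear terms should cancel against the interval contributions $j_{\Omega}$, leaving the stated $(ja)_\Omega$ terms plus the baseline $\sum_{r<s}j_ra_s$.

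Parity simplifications use $a_u,i_u\in\{0,1\}$, $|{\bf j}|=l$, $|{\bf a}|=k$. As a check I would verify the $k=l=1$ case against Example~\ref{ex:s11} and the stated degenerate case ${\bf j}={\bf b}$, where $\alpha=l$ and the sum becomes $r\le s$.
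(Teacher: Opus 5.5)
Your route is the paper's: $S(0)$ collects the $\alpha_0=1$ terms of \eqref{R2} and $\dot S(0)$ the $\alpha_0=0$ terms. Then \eqref{wc} forces a unique alternating profile $(\alpha_u)$, which is evaluated with \eqref{LLe} and combined with the prefactor and $\eta$. However, the proposal stops exactly where the paper's proof does its work, and some of the local data you set up for that bookkeeping is wrong.

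(i) From \eqref{wc}, $b_u-j_u=i_u-a_u=\alpha_{u+1}-\alpha_u$. Your $\alpha_u-\alpha_{u+1}$ contradicts your own first relation and would interchange the two supports; your bullets follow the correct sign.

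(ii) The weight $1-q^{2m}$ is $\LL^{0,1,m-1}_{1,0,m}$, i.e.\ $(\alpha_{u+1},a_u)=(0,1)$ over $(\alpha_u,i_u)=(1,0)$. So $a_u=1$, $i_u=0$ at those turning points, and $a_u=0$, $i_u=1$ at the others, where the weight is $\LL^{1,0,m+1}_{0,1,m}=1$. Your assignment $i_u=1$, $a_u=0$ violates \eqref{wc}, and these values feed into $\eta$.

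(iii) Nontrivial diagonal weights occur wherever $\alpha\neq i_u$, hence on \emph{all} intervals, not only on $(r_1,r_2),(r_3,r_4),\ldots$. For $S(0)$ one gets
$\prod_{u\in U_0}(-q)^{j_u(1-i_u)}\prod_{u\in U_1}q^{i_u}(-q)^{j_ui_u}$, with $U_1=(r_1,r_2)\sqcup(r_3,r_4)\sqcup\cdots$ and $U_0$ the complementary open intervals. The restriction to $(ja)_{U_1}$ in \eqref{bet0} appears only after combining with $\eta$.

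The missing idea is where the second copy of $\sum_{r<s}j_ra_s$ in the exponent $2\beta$ comes from; ``linear terms cancel against $j_\Omega$'' does not supply it. After your substitution, $\eta$ gives one copy plus $-\sum_{r<s}i_rj_s$. The paper rewrites $\sum_{r<s}i_rj_s=\sum_s j_s(k-i_s-i_{(s,n]})$ using $|{\bf i}|=k$. It then uses the profile identity $a_{(s,n]}-i_{(s,n]}=\alpha_{s+1}-\alpha_0$ (cf.\ \eqref{L70}) to turn $\sum_s j_s i_{(s,n]}$ into $\sum_{r<s}j_ra_s$ up to a term $\pm j_{U_1^+}$, where $U_1^+=[r_1,r_2)\sqcup[r_3,r_4)\sqcup\cdots$. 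Only then do the remaining pieces cancel against one another, using $|{\bf i}|=k$ and $|{\bf j}|=l$: the constants, the $\epsilon$-linear term together with the pure powers $q^{i_u}$ from $\LL^{0,1,m}_{0,1,m}=q(-q)^m$, and the $j$-interval sums. For $\dot S(0)$ the total exponent is $2(\sum_{r<s}j_ra_s+(ij)_{U_0^+})$. The sign is the parity of this even exponent, corrected by the factors not of the form $(-q)^{\#}$, namely $(-1)^{k-1}$ and $q^{i_u}$; this gives $k-1+i_{U_0^+}\equiv 1+i_{U_1^+}$.

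Finally, your planned check of ${\bf j}={\bf b}$ will give $\alpha=1$, not $\alpha=l$. Both \eqref{alp1} with $\mu=0$ and \eqref{sdiag} give sign $-1$, so ``$\alpha=l$'' in the statement is a misprint, visible for even $l$.
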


\begin{proof}
Let $R(z)^{{\bf a}, {\bf b}}_{{\bf i}, {\bf j}} = R(0)^{{\bf a}, {\bf b}}_{{\bf i}, {\bf j}} 
+z \dot{R}(0)^{{\bf a}, {\bf b}}_{{\bf i}, {\bf j}}$ be the expansion of 
\eqref{R2} with respect to the spectral parameter $z$ around $z=0$.
Then, from the definition \eqref{S2}, we have
\begin{align}
S(z)^{{\bf a}, {\bf b}}_{{\bf i}, {\bf j}}&=\left.(S_0+ z S_1)\right|_{q\rightarrow t^{1/2}},
\\
S_0 &:=(-1)^{k-l}(-q)^{k(l-1)}(-q)^{\eta^{{\bf a}, {\bf b}}_{{\bf i}, {\bf j}}}
q^{k-l}\dot{R}(0)^{{\bf a}, {\bf b}}_{{\bf i}, {\bf j}},
\\
S_1 &:=(-1)^{k-1}(-q)^{k(l-1)}(-q)^{\eta^{{\bf a}, {\bf b}}_{{\bf i}, {\bf j}}}R(0)^{{\bf a}, {\bf b}}_{{\bf i}, {\bf j}}.
\label{SS1}
\end{align}
Thus, it suffices to show that
$S_0|_{q\rightarrow t^{1/2}}= \eqref{Sz0}$ and 
$S_1|_{q\rightarrow t^{1/2}}= \eqref{Sz1}$. 
We present a derivation of the latter, as the former can be shown quite similarly.

\medskip
First, we calculate $R(0)^{{\bf a}, {\bf b}}_{{\bf i}, {\bf j}}$.
From \eqref{R2},  nonzero contributions come from $\alpha=(\alpha_0, \ldots, \alpha_n) \in \{0,1\}^{n+1}$ with 
$\alpha_0=0$.
Such $\alpha$ are parametrized by $0 \le r_1 < \cdots < r_\mu \le n$ with even $\mu$ as
\begin{align}\label{alu}
\alpha_u=\begin{cases}
0  & \; \text{for }\; u \in [0,r_1] \sqcup (r_2,r_3] \sqcup \cdots \sqcup (r_{\mu-2},  r_{\mu-1}] \sqcup (r_\mu, n],
\\
1 & \; \text{for }\; u \in (r_1, r_2] \sqcup (r_3, r_4] \sqcup \cdots \sqcup (r_{\mu-1}, r_\mu].
\end{cases}
\end{align}
The contribution to \eqref{R2} from the $\alpha$ in \eqref{alu} is
\begin{align}
\prod_{\nu: \mathrm{odd}}\LL^{1,a_{r_\nu},b_{r_\nu}}_{0,i_{r_\nu}, j_{r_\nu}}
\prod_{\nu: \mathrm{even}}\LL^{0,a_{r_\nu},b_{r_\nu}}_{1,i_{r_\nu}, j_{r_\nu}}
\prod_{u \in U_0}\LL^{0,a_u,b_u}_{0,i_u, j_u}
\prod_{u \in U_1}\LL^{1,a_u,b_u}_{1,i_u, j_u}.
\label{L41}
\end{align}
Here and in what follows, we use the index sets $U_0, U_1, U^+_0$ and $U^+_1$ defined by 
\begin{align*}
U_0 &= [0,r_1)\sqcup (r_2, r_3)\sqcup \cdots \sqcup (r_{\mu-2},r_{\mu-1}) \sqcup (r_\mu,n],
\\
U^+_0 &= [0,r_1)\sqcup [r_2, r_3)\sqcup \cdots \sqcup [r_{\mu-2},r_{\mu-1}) \sqcup [r_\mu,n],
\\
U_1 &= (r_1, r_2) \sqcup (r_3,r_4) \sqcup \cdots \sqcup (r_{\mu-1},r_\mu),
\\
U^+_1 &= [r_1, r_2) \sqcup [r_3,r_4) \sqcup \cdots \sqcup [r_{\mu-1},r_\mu).
\end{align*}
The relation
\begin{align}\label{uAll}
U^+_0 \sqcup U^+_1 = [0,n]
\end{align}
will be used, for example, in the identity $i_{U^+_0}+i_{U^+_1}=k$, which is due to 
${\bf i} = (i_0,\ldots, i_n) \in \BB^k$ \eqref{vk}.

From the conservation property \eqref{wc}, which also implies $\LL^{\alpha', a,b}_{\alpha, i,j}=0$ unless 
$b-j = \alpha'-\alpha$, we see that \eqref{L41} is nonvanishing only if 
${\bf b}-{\bf j} = {\bf e}_{r_1}-{\bf e}_{r_2}+\cdots + {\bf e}_{r_{\mu-1}}-{\bf e}_{r_\mu}$.
This verifies the condition \eqref{bj1}.
The conservation property \eqref{wc} further implies the following:
\begin{align}
a_u &= i_u,\; b_u= j_u \;\, \mathrm{for }\;\,  u \not\in \{r_1,\ldots, r_\mu\},\quad
a_{r_\nu}=\begin{cases} 1 & \nu: \;\text{even},\\ 0 & \nu: \; \text{odd}, \end{cases}
\quad
i_{r_\nu}=\begin{cases} 0 & \nu: \;\text{even},\\ 1 & \nu: \; \text{odd}, \end{cases}
\nonumber
\\
a_{(u,n]} &= i_{(u,n]} + \theta(u\in U^+_1).
\label{L70}
\end{align}
We utilize these facts and notations in the following calculation extensively.
Now the expression \eqref{L41} becomes
\begin{align}
&\theta({\bf b} \preceq {\bf j})
\prod_{\nu: \mathrm{odd}}\LL^{1,0,j_{r_\nu}+1}_{0,1, j_{r_\nu}}
\prod_{\nu: \mathrm{even}}\LL^{0,1,j_{r_\nu}-1}_{1,0, j_{r_\nu}}
\prod_{u \in U_0}\LL^{0,i_u,j_u}_{0,i_u, j_u}
\prod_{u \in U_1}\LL^{1,i_u,j_u}_{1,i_u, j_u}
\nonumber\\
&\overset{\eqref{LLe}}{=}
 \theta({\bf b} \preceq {\bf j})
 \prod_{\nu: \mathrm{even}}(1-q^{2j_{r_\nu}})
 \prod_{u \in U_0}q^{i_u}(-q)^{j_ui_u}
 \prod_{u \in U_1}(-q)^{j_u(1-i_u)}
 \nonumber\\
&=
 \theta({\bf b} \preceq {\bf j})
 \prod_{\nu: \mathrm{even}}(1-q^{2j_{r_\nu}})
 \prod_{u \in U^+_0}q^{i_u}(-q)^{j_ui_u}
 \prod_{u \in U^+_1}(-q)^{j_u(1-i_u)}
\label{L415}\\
 &=
  (-1)^{(ij)_{[0,n]}+j_{U^+_1}} q^{(ij)_{U^+_0}-(ij)_{U^+_1}+i_{U^+_0}+j_{U^+_1}}
 \theta({\bf b} \preceq {\bf j})
 \prod_{\nu: \mathrm{even}}(1-q^{2j_{r_\nu}}).
 \label{L42}
\end{align}
Note that the factor 
$\theta({\bf b} \preceq {\bf j})
 \prod_{\nu: \mathrm{even}}(1-q^{2j_{r_\nu}})$ reproduces a part of the expression \eqref{Sz1} upon the 
 substitution $q \rightarrow t^{1/2}$.
 Hence, in what follows, we concentrate on the remaining part, 
 namely the sign factor and the power of $t$.

As for $\eta^{{\bf a}, {\bf b}}_{{\bf i}, {\bf j}}$ in \eqref{SS1}, 
the definition \eqref{eta} leads to 
\begin{align}
\eta^{{\bf a}, {\bf b}}_{{\bf i}, {\bf j}}&=
\sum_{\nu: \mathrm{odd}}(j_{r_\nu}\!+1)a_{(r_{\nu},n]}+
\sum_{\nu: \mathrm{even}}(j_{r_\nu}\!-1)a_{(r_{\nu},n]}
-\sum_{1 \le \nu \le \mu}  i_{[0, r_\nu)}j_{r_\nu}
+ \sum_{u \in [0,n]\setminus\{r_1,\ldots, r_\mu\}}j_u(a_{(u,n]}-i_{[0,u)})
\nonumber\\
&=a_{(r_1,r_2]}+a_{(r_3,r_4]}+\cdots + a_{(r_{\mu-1},r_\mu]}+\sum_{u \in [0,n]}j_u(a_{(u,n]}-i_{[0,u)})
\nonumber\\
&=k-(a_{[0, r_1)}+a_{(r_2,r_3)}+\cdots + a_{(r_\mu,n]})+\sum_{u \in [0,n]}j_u(a_{(u,n]}-k+i_u+i_{(u,n]})
\nonumber\\
&\overset{\eqref{L70}}= k - i_{U_0}-kl+(ij)_{[0,n]}+\sum_{u \in [0,n]}j_u(2a_{(u,n]}-\theta(u \in U^+_1))
\nonumber\\
&=k(1-l) - i_{U^+_0}+(ij)_{[0,n]}-j_{U^+_1}+2\sum_{0 \le r < s \le n}j_r a_s.
\label{L43}
\end{align}
From \eqref{L42} and \eqref{L43},  
the power of $q$ in $S_1$ in \eqref{SS1} is given by 
\begin{align*}
&k(l-1)+ (ij)_{U^+_0}-(ij)_{U^+_1}+i_{U^+_0}+j_{U^+_1}
+k(1-l) - i_{U^+_0}+(ij)_{[0,n]}-j_{U^+_1}+2\sum_{0 \le r < s \le n}j_r a_s
\nonumber\\
&=2(\sum_{0 \le r < s \le n}j_r a_s+(ij)_{U^+_0})=2(\sum_{0 \le r < s \le n}j_r a_s+(ja)_{U_0}).
\end{align*}
This is equal to $2\beta$ for $\beta$ in \eqref{bet1}.
To derive the formula \eqref{alp1} for the sign factor, 
note that the sign is contained via the combination $(-q)$ almost everywhere,
with the exceptions $(-1)^{k-1}$ in \eqref{SS1} and $q^{i_u}$ in \eqref{L415}.
Thus the total contribution to the power of $(-1)$ is equal to the sum of the power of $q$ just evaluated, 
which is even, and the additional contributions $k-1$  and $i_{U^+_0}$ modulo 2.
Thus, it is equal to $k-1+i_{U^+_0}\equiv 1+i_{U^+_1}$ due to a comment made after \eqref{uAll}, which agrees with \eqref{alp1}.
\end{proof}

According to Proposition \ref{pr:s0},  the diagonal elements are given  by 
\begin{align}\label{sdiag}
S^{k}_{\;\,l}(z)^{{\bf a}, {\bf b}}_{{\bf a}, {\bf b}}= 
t^{\sum_{0 \le r < s \le n}b_r a_s}\bigl(1-t^{\sum_{0 \le r \le n} a_rb_r}z\bigr)
\qquad ({\bf a} \in \BB^k, {\bf b} \in \BB_l).
\end{align}

\section{Transfer matrix $T^k(z)$}\label{sec:T}

\subsection{Definition}\label{ss:Tdef}
Recall that $\mathbb{V}=V_l^{\otimes L}$, where $V_l$ is defined in \eqref{vk}. 
Define the transfer matrix 
$T^k(z) = T^k(z|x_1,\ldots, x_L) : \mathbb{V} \longrightarrow  \mathbb{V}$ on the length $L$ periodic lattice by 
\begin{align}
T^k(z)= \mathrm{Tr}_{V^k}\left(S_{0,L} \Bigl( \frac{z}{x_L} \Bigr)
\cdots S_{0,1} \Bigl( \frac{z}{x_1} \Bigr)\right)
\qquad  (0 \le k \le n+1),
\end{align}
where  the index 0  denotes the auxiliary space $V^k$ over which the trace is taken.
The factor $S_{0,j}(z/x_j)$ is the matrix $S^{k}_{\;\,l}(z/x_j)$ defined by 
\eqref{S1}, \eqref{Sd} and Proposition \ref{pr:s0}. 
It acts on $V^k \otimes (\text{$j$ th component of $\mathbb{V}$ from the left})$.
Explicitly, one has
\begin{subequations}
\begin{align}
T^k(z) |\bs_1,\ldots, \bs_L\rangle  &= \sum_{\bs'_1,\ldots, \bs'_L \in \BB_l}
T^k(z)^{\bs'_1,\ldots, \bs'_L}_{\bs_1,\ldots, \bs_L}
|\bs'_1, \ldots, \bs'_L\rangle,
\label{tkdef}\\
T^k(z)^{\bs'_1,\ldots, \bs'_L}_{\bs_1,\ldots, \bs_L}
&= \sum_{{\bf a}_1,\ldots, {\bf a}_L \in \BB^k}
S^{k}_{\;\,l} \Bigl( \frac{z}{x_1} \Bigr)^{{\bf a}_2, \bs'_1}_{{\bf a}_1, \bs_1}
S^{k}_{\;\,l} \Bigl( \frac{z}{x_2} \Bigr)^{{\bf a}_3, \bs'_2}_{{\bf a}_2, \bs_2} 
\cdots 
S^{k}_{\;\,l} \Bigl( \frac{z}{x_L} \Bigr)^{{\bf a}_1, \bs'_L}_{{\bf a}_L, \bs_L}.
\label{tke}
\end{align}
\end{subequations}
We will also use the basis $\langle \vec{\bs}| = \langle \bs_1,\ldots, \bs_L|$
of the dual space of $\mathbb{V}({\mathbf m})$, defined by
$\langle \vec{\bs}|\vec{\bs}'\rangle = \theta(\vec{\bs}=\vec{\bs}')$.
The transfer matrices $T^k(z)$ act on these vectors from the right as
\begin{equation}\label{tkr}
\langle \bs_1, \ldots, \bs_L|T^k(z) = 
 \sum_{\bs'_1,\ldots, \bs'_L \in \BB_l}
 \langle\bs'_1, \ldots, \bs'_L|
T^k(z)^{\bs_1,\ldots, \bs_L}_{\bs'_1,\ldots, \bs'_L},
\end{equation}
so that $(\langle \vec{\bs}'|T^k(z))|\vec{\bs}\rangle = 
\langle \vec{\bs}'|(T^k(z)|\vec{\bs}\rangle)$ holds.
The element \eqref{tke} as $\langle \vec{\bs}'|T^k(z)| \vec{\bs}\rangle$ is depicted as Figure \ref{fig:tk}.

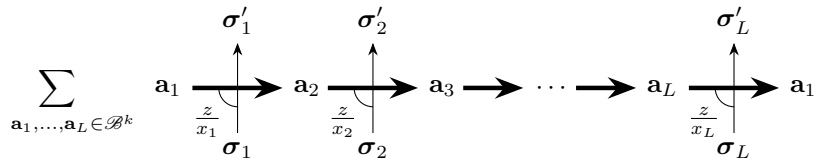
\begin{figure}[H]
\centering
\begin{tikzpicture}

\node at (-1.8,-0.18){$\displaystyle{\sum_{\phantom{AA}{\bf a}_1,\ldots, {\bf a}_L \in \BB^k}}$};
  \draw[->] (0.6,-0.6) node[below]{$\bs_1$}-- (0.6,0.6) node[above]{$\bs'_1$};
  \draw[->,line width=0.6mm] (0,0) node[left]{${\bf a}_1$}-- (1.2,0) node[right]{${\bf a}_2$};
  \draw(0.6,-0.25) arc[start angle=-90, end angle=-180, radius=0.25cm];
  \node at (0.2,-0.45) {$\frac{z}{x_1}$};
\begin{scope}[shift={(1.8,0)}]
  \draw[->] (0.6,-0.6) node[below]{$\bs_2$}-- (0.6,0.6) node[above]{$\bs'_2$};
  \draw[->,line width=0.6mm] (0,0) -- (1.2,0) node[right]{${\bf a}_3$};
  \draw(0.6,-0.25) arc[start angle=-90, end angle=-180, radius=0.25cm];
  \node at (0.2,-0.45) {$\frac{z}{x_2}$};
   \end{scope}

  \draw[->,line width=0.6mm] (3.6,0) -- (4.4,0);
  \node at (4.8,0) {$\cdots$};
  \draw[->,line width=0.6mm] (5.1,0) -- (5.9,0);
   
   \begin{scope}[shift={(6.6,0)}]
     \draw[->] (0.6,-0.6) node[below]{$\bs_L$}-- (0.6,0.6) node[above]{$\bs'_L$};
  \draw[->,line width=0.6mm] (0,0) node[left]{${\bf a}_L$}-- (1.2,0) node[right]{${\bf a}_1$};
  \draw(0.6,-0.25) arc[start angle=-90, end angle=-180, radius=0.25cm];
  \node at (0.2,-0.45) {$\frac{z}{x_L}$};
     \end{scope}
\end{tikzpicture}
\caption{Diagram representation of the matrix element 
$\langle \vec{\bs}'|T^k(z)| \vec{\bs}\rangle$.
Each vertex here gives a diagrammatic representation of 
$S^{k}_{\;\,l} \left( \frac{z}{x_j} \right)^{{\mathbf a}_{j+1}, \bs'_j}_{{\mathbf a}_j, \bs_j}$ 
in \eqref{tke}, where thick and thin arrows represent $V^k$ and $V_l$, respectively.}
\label{fig:tk}
\end{figure}

The parameter $z$ is referred to as the \emph{spectral parameter},
while $x_1,\ldots, x_L$ represent 
the inhomogeneity associated with the vertices.
Following the terminology in the box-ball systems \cite{IKT12},
we refer to ${\bf a}_1, \ldots, {\bf a}_L \in \BB^k$ as {\em carriers} with capacity $k$.

\subsection{Basic properties}

From the Yang-Baxter relation of the matrices $S^{k}_{\;\,l}(z)$, 
the commutativity of the transfer matrices holds:
\begin{align}\label{tcom}
[T^k(z|x_1,\ldots, x_L), T^{k'}(z'|x_1,\ldots, x_L)]=0
\qquad (0 \le k, k' \le n+1).
\end{align}
It is essential that the inhomogeneities $x_1,\ldots, x_L$ in the two transfer matrices are chosen identically.
From the weight conservation property of $S^{k}_{\;\,l}(z)$ and the periodic boundary condition,
$T^k(z)$ preserves each sector $\mathbb{V}({\bf m})$ in \eqref{Vm}.

Let us examine the diagonal elements of $T^k(z)$ for general $k \in \{0,\ldots, n+1\}$.
When $\vec{\bs}' = \vec{\bs}$, 
all the carries ${\bf a}_j$ in Figure \ref{fig:tk} become identical due to the weight conservation.
Set $\vec{\bs}=(\bs_1,\ldots, \bs_L)$ with $\bs_j=(\sigma_{j,0},\ldots, \sigma_{j,n}) \in \BB_l$.
Then, from \eqref{sdiag},  we have
\begin{align}\label{sts}
\langle \vec{\bs}|T^k(z)| \vec{\bs}\rangle
= \sum_{{\bf a} \in \BB^k}\prod_{j=1}^L
S^{k}_{\;\,l}\Bigl( \frac{z}{x_j} \Bigr)^{{\bf a}, \bs_j}_{{\bf a}, \bs_j}
= \sum_{{\bf a} \in \BB^k}\prod_{j=1}^L
 t^{\sum_{0 \le r < s \le n} \sigma_{j,r}a_s}
\left( 1-t^{\sum_{0 \le r \le n}a_r\sigma_{j,r}} \frac{z}{x_j} \right).
\end{align}
In the special cases $k = 0$, $k = n+1$, and $z = 0$, the transfer matrix
$T^k(z)$ becomes diagonal.  
\begin{align}
T^0(z)
 &= \prod_{j=1}^L\Bigl(1 - \frac{z}{x_j}\Bigr)\,\mathrm{Id},
\label{t0f}
\\
T^{n+1}(z)
 &= t^{\KK_1+\cdots+\KK_n}
    \prod_{j=1}^L\Bigl(1 - \frac{t^lz}{x_j}\Bigr)\,\mathrm{Id},
\label{tnf}
\\
T^k(0)
&= e_k(t^{\KK_0}, t^{\KK_1}, \ldots, t^{\KK_n})\mathrm{Id},
\label{tk0}
\end{align}
where $\KK_i$ is determined by \eqref{Ki} from the subspace
$\mathbb{V}({\mathbf m})$ \eqref{Vm} under consideration.
The formulas \eqref{t0f} and \eqref{tnf} are obtained by restricting the
sum in \eqref{sts} to
${\mathbf a} = (0,\ldots,0)$ and ${\mathbf a} = (1,\ldots,1)$,
respectively.  
Note that $T^k(0)$ is diagonal since it is not only weight preserving but
also satisfies the condition
$\bs_j \preceq \bs'_j$ for all sites $j$ owing to \eqref{Sz0}.
The result \eqref{tk0} then follows from \eqref{sdiag}.

The functions $e_0,\ldots,e_{n+1}$ denote the elementary symmetric
polynomials in $(n+1)$ variables, given by
\begin{align}\label{ekb}
e_k(w_0,\ldots,w_n)
 = \sum_{\mathbf a \in \BB^k} w_0^{a_0}\cdots w_n^{a_n},
\end{align}
and characterized by the generating relation
\begin{align}\label{ekg}
(1 + \zeta w_0)\cdots(1 + \zeta w_n)
 = \sum_{k=0}^{n+1}\zeta^k\,e_k(w_0,\ldots,w_n).
\end{align}
We set $e_k(w_0,\ldots,w_n)=0$ for $k<0$ or $k>n+1$.
The formulas \eqref{t0f} and \eqref{tnf} reduce to
\cite[eqs.\,(4.5), (4.6)]{AK25} when $l=1$.

We write  the derivative of the transfer matrices 
simply as $\dot{T}^k(z) = \frac{dT^k(z)}{dz}$.
It is not diagonal in general, but the calculation of the diagonal elements is easy by using \eqref{sts}.
The results read
\begin{align}
\langle \vec{\bs}|\dot{T}^k(0)| \vec{\bs}\rangle
&= -\sum_{j=1}^L \frac{1}{x_j}e_k(t^{\KK_0+ \sigma_{j,0}}, t^{\KK_1+\sigma_{j,1}},\ldots, t^{\KK_n+\sigma_{j,n}}).
\label{tdk0}
\end{align}

\begin{example}\label{ex:Tk}
We set $n=2, l=2, L=3$, and consider the transfer matrices acting on the sector 
$\mathbb{V}({\bf m})$ with ${\bf m}= (3,2,1)$.
Then, from \eqref{Ki}, we have $(\KK_0,\KK_1,\KK_2)=(0,3,5)$.
We adopt the tableau representation in \eqref{tl} for the local states from $\BB_2$ as 
$00=(2,0,0)$, $01=(1,1,0)$, $02=(1,0,1)$, $11=(0,2,0)$ and $12=(0,1,1)$. 
The action of the transfer matrices is given by:
\begin{align*}
T^0(z)|12,00,01\rangle 
&=\frac{(-z+x_1)(-z+x_2)(-z+x_3)}{x_1x_2x_3}|12,00,01\rangle,
\\
T^1(z)|12,00,01\rangle 
&= \frac{(1 - t)^2 t^3 z (-z + x_2)}{x_1 x_2} \, |01, 00, 12\rangle
  + \frac{(1 - t)^2 t^2 z (-z + x_2)}{x_1 x_2} \, |02, 00, 11\rangle \\
&  + \frac{(1 - t)^2 t^4 z (-z + x_2)}{x_1 x_2} \, |11, 00, 02\rangle 
 + \frac{(1 - t^2)(1 - t) z (-t z + x_3)}{x_1 x_3} \, |02, 01, 01\rangle \\
&  + \frac{(1 - t^2)(1 - t) t z (-t z + x_3)}{x_1 x_3} \, |01, 02, 01\rangle 
 + \frac{(1 - t^2)(1 - t)^2 t^2 z^2}{x_1 x_3} \, |11, 02, 00\rangle \\
&+ \frac{(1 - t^2)(1 - t) t z (-t z + x_1)}{x_1 x_3} \, |12, 01, 00\rangle 
 + \mathcal{D}_1(z)|12,00,01\rangle,
\\
T^2(z)|12,00,01\rangle 
&= -\frac{(1 - t)^2 (1 - t^2) t^4 z}{x_1} \, |01, 01, 02\rangle
  + \frac{(1 - t)^2 t^7 z (-z + x_2)}{x_1 x_2} \, |01, 00, 12\rangle \\
&  + \frac{(1 - t)^2 t^8 z (-z + x_2)}{x_1 x_2} \, |02, 00, 11\rangle
+ \frac{(1 - t)^2 t^4 z (-t^2 z + x_2)}{x_1 x_2} \, |11, 00, 02\rangle \\
 & + \frac{(1 - t^2)(1 - t) t^6 z (-t^2 z + x_1)}{x_1 x_3} \, |12, 01, 00\rangle
+  \frac{(1 - t^2)(1 - t) t^4 z (-t^2 z + x_3)}{x_1 x_3} \, |01, 02, 01\rangle \\
& + \frac{(1 - t^2)(1 - t)t^6 z (-t z + x_3)}{x_1 x_3} \, |02, 01, 01\rangle 
 + \mathcal{D}_2(z)|12,00,01\rangle,
 \\
T^3(z)|12,00,01\rangle 
&=\frac{t^8(-t^2z+x_1)(-t^2z+x_2)(-t^2z+x_3)}{x_1x_2x_3}|12,00,01\rangle.
\end{align*}
The coefficient \eqref{sts} of the diagonal term generated by $T^k(z)$ is denoted by 
$\mathcal{D}_k(z)$.
In particular, the expressions for $\mathcal{D}_0(z)$ and $\mathcal{D}_3(z)$ admit compact forms owing to 
\eqref{t0f} and \eqref{tnf}, respectively.
\begin{align*}
\dot{T}^0(0)|12,00,01\rangle 
&= -\bigl(\frac{1}{x_1}+\frac{1}{x_2}+\frac{1}{x_3}\bigr) |12,00,01\rangle,
\\
\dot{T}^1(0)|12,00,01\rangle 
&= \frac{(1 - t)^2 t^3}{x_1} \, |01, 00, 12\rangle
  + \frac{(1 - t)(1 - t^2) t}{x_1} \, |01, 02, 01\rangle
  + \frac{(1 - t)^2 t^2}{x_1} \, |02, 00, 11\rangle \\
 & + \frac{(1 - t)(1 - t^2)}{x_1} \, |02, 01, 01\rangle 
 + \frac{(1 - t)^2 t^4}{x_1} \, |11, 00, 02\rangle
  + \frac{(1 - t)(1 - t^2) t}{x_3} \, |12, 01, 00\rangle \\
&  +\dot{\mathcal D}_1(0)|12,00,01\rangle,
\\
 \dot{T}^2(0)|12,00,01\rangle 
&= \frac{(1 - t)^2 t^7}{x_1} \, |01, 00, 12\rangle
  - \frac{(1 - t)^2 (1 - t^2) t^4}{x_1} \, |01, 01, 02\rangle
  + \frac{(1 - t)(1 - t^2) t^4}{x_1} \, |01, 02, 01\rangle \\
&  + \frac{(1 - t)^2 t^8}{x_1} \, |02, 00, 11\rangle  
+ \frac{(1 - t)(1 - t^2) t^6}{x_1} \, |02, 01, 01\rangle
  + \frac{(1 - t)^2 t^4}{x_1} \, |11, 00, 02\rangle \\
&  + \frac{(1 - t)(1 - t^2) t^6}{x_3} \, |12, 01, 00\rangle
  +\dot{\mathcal D}_2(0)|12,00,01\rangle,
 \\
 \dot{T}^3(0)|12,00,01\rangle 
&= -\bigl(\frac{1}{x_1}+\frac{1}{x_2}+\frac{1}{x_3}\bigr)t^8|12,00,01\rangle.
\end{align*}
Note that the coefficients of the off-diagonal terms in $\dot{T}^2(0)$ 
are neither all positive nor all negative in the range $0 <t < 1$ and 
$\forall x_j>0$.

According to \eqref{tdk0}, 
the functions $\dot{\mathcal D}_k(0) = \left.\frac{d\mathcal{D}_k(z)}{dz}\right|_{z=0}$ $(0 \le k \le 3)$ are given as
\begin{equation}\label{eeex}
\begin{split}
&\langle 12,00,01|\dot{T}^k(0)| 12,00,01\rangle
\\
&=  -\frac{1}{x_1}e_k(t^{\sigma_{1,0}}, t^{\KK_1+\sigma_{1,1}}, t^{\KK_2+\sigma_{1,2}})
- \frac{1}{x_2}e_k(t^{\sigma_{2,0}}, t^{\KK_1+\sigma_{2,1}}, t^{\KK_2+\sigma_{2,2}})
- \frac{1}{x_3}e_k(t^{\sigma_{3,0}}, t^{\KK_1+\sigma_{3,1}}, t^{\KK_2+\sigma_{3,2}})
\\
&=- \frac{1}{x_1}e_k(1, t^{\KK_1+1}, t^{\KK_2+1})
-\frac{1}{x_2}e_k(t^2, t^{\KK_1}, t^{\KK_2})
-\frac{1}{x_3}e_k(t, t^{\KK_1+1}, t^{\KK_2})
\\
&= -\frac{1}{x_1}e_k(1,t^4,t^6)-\frac{1}{x_2}e_k(t^2,t^3,t^5)-\frac{1}{x_3}e_k(t,t^4,t^5),
\end{split}
\end{equation}
where the multiplicity representations 
$\bs_1= (0,1,1), \bs_2=(2,0,0)$ and $\bs_3=(1,1,0)$ are used.
\end{example}

\section{Markov matrix from transfer matrices}\label{sec:HT}

Based on the transfer matrices in Section \ref{sec:T},
we introduce a linear operator
$\mathcal{H}_{n,l}=\mathcal{H}_{n,l}(x_1,\ldots, x_L)$ on $\mathbb{V}({\bf m})$
depending on the parameters 
$x_1, \ldots, x_L$ as an alternating sum
\begin{align}\label{Hc}
\mathcal{H}_{n,l} = D^{-1}_{\bf m} \sum_{k=0}^{n+1}(-1)^{k-1}\dot{T}^k(0) - 
\Bigl(\sum_{j=1}^L\frac{1}{x_j}\Bigr)\mathrm{Id},
\end{align}
where $D_{\bf m}$ is defined by \eqref{Dm}.
The first main result in this paper is the following:

\begin{theorem}\label{th:main1}
The Markov matrix $H_{n,l}$ of the $n$-species capacity-$l$ $t$-PushTASEP 
in \eqref{Hdef}-\eqref{Co} is identified with 
$\mathcal{H}_{n,l}$ in \eqref{Hc}.
Namely, the following equality holds in each sector $\mathbb{V}({\bf m})$:
\begin{align}\label{main1}
H_{n,l}(x_1,\ldots, x_L) = \mathcal{H}_{n,l}(x_1,\ldots, x_L) .
\end{align}
\end{theorem}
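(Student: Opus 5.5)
We compare matrix elements $\langle \vec{\bs}'|\cdot|\vec{\bs}\rangle$ on both sides of \eqref{main1}, separately for the diagonal and off-diagonal cases.

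\emph{Diagonal elements.} By \eqref{tdk0} and the generating relation \eqref{ekg} with $\zeta=-1$,
\begin{align*}
\sum_{k=0}^{n+1}(-1)^{k-1}\langle\vec{\bs}|\dot T^k(0)|\vec{\bs}\rangle
=\sum_{j=1}^L\frac{1}{x_j}\prod_{h=0}^n\bigl(1-t^{\KK_h+\sigma_{j,h}}\bigr).
\end{align*}
Dividing by $D_{\bf m}$ gives $\sum_j C_{\vec{\bs},j}(t)/x_j$. For this we use $\KK_0=0$, and we use $K_0=l$ together with $\KK_h=K_h$ for $h\ge1$, which makes the denominators agree with $D_{\bf m}$. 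This matches the diagonal of \eqref{Hdef}.

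\emph{Off-diagonal elements.}
\begin{itemize}
\item Since $S^k_{\;\,l}(z)$ is linear in $z$, the element $\dot T^k(0)$ is a sum over sites $o$. At $o$ we take the factor $x_o^{-1}\dot S(0)$, and at every other site $j\neq o$ the factor $S(0)$.
\item Proposition \ref{pr:s0} forces $\bs'_o\preceq\bs_o$ at site $o$ and $\bs_j\preceq\bs'_j$ at the sites $j\ne o$. This reproduces condition (i) automatically.
\item The carriers ${\bf a}_j\in\BB^k$ are then determined by weight conservation \eqref{wcs} step by step around the ring, starting from ${\bf a}_o$. Thus for fixed $o$ each term is a sum over the initial carrier ${\bf a}={\bf a}_{o+1}$ (or equivalently ${\bf a}_o$) of a product of explicit monomials and factors $(1-t^{\cdot})$.
\item The alternating sum over $k$ then becomes a signed sum over all subsets ${\bf a}\subset\{0,\dots,n\}$, of arbitrary size.
\end{itemize}

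\emph{Factorization per species.} I would reorganize this signed sum species by species. Each species $h$ plays one of three roles: it carries a moving particle, with $h\in\{h_0,\dots,h_g\}$; it is static; or it is unoccupied in the carrier.
\begin{itemize}
\item Choosing $a_h\in\{0,1\}$ for a static species $h$ contributes two terms, with relative sign $-1$ (from $\alpha$ in \eqref{alp0}, \eqref{alp1}) and relative power $t^{\KK_h+\Phi_h}$ (from $\beta$ in \eqref{bet0}, \eqref{bet1}). Summing them produces $1-t^{\KK_h+\Phi_h}$, the numerator of $w(h)$ for $h=\bar h_i$.
\item For a moving species, the carrier occupancy along the ring is forced. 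Its contributions are the factor $1-t^{\sigma_{p(h),h}}$ (the $(1-t^{j_{r}})$ factors) and the power $t^{\ell_h}$ from the $\sum_{r<s}j_ra_s$ term accumulated over the sites where the carrier contains $h$.
\end{itemize}

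\emph{Main obstacle.} The hard part is this bookkeeping: showing that the exponents $\beta$ summed over all sites split additively by species into $\ell_h$ and $\KK_h+\Phi_h$, and that the signs $\alpha$ combine correctly. I would first try to simplify it in one of two ways:
\begin{itemize}
\item rewrite $\sum_{r<s}\sigma_{j,r}a_s$, summed over $j$ around the ring, in terms of the carrier occupation times;
\item alternatively, match the combinatorial path weights \eqref{rwm}, reading the carrier as the set of rows in which the blue path is currently traveling horizontally.
\end{itemize}
The signed sum over carriers that are unoccupied by moving species should collapse to the product over $\bar h$ by the binomial identity $\prod_h(1-y_h)=\sum_{A}(-1)^{|A|}\prod_{h\in A}y_h$.

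\emph{Forbidden transitions.} Configurations that are not admissible are handled the same way. Either the signed sum produces the vanishing factor $1-t^{\Phi_0}$, in agreement with Proposition \ref{pr:zr}, or condition (ii) fails, in which case Proposition \ref{pr:s0} forces some factor to be zero, or opposite-sign contributions cancel in pairs. Completing this cancellation argument would establish \eqref{main1}, paralleling the $l=1$ argument of \cite{AK25}.
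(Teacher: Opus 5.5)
Your diagonal computation and your overall architecture match the paper's. The derivative places $\dot S(0)$ at one site $o$. The letters of the moving species in the carrier are forced; this is the paper's minimal carrier, whose capacity is the depth $d$. The only remaining freedom is a subset $F$ of static species, adjoined uniformly at all sites. Then \eqref{ekg} turns the alternating sum into $\prod_{\bar h}(1-t^{\KK_{\bar h}+\Phi_{\bar h}})$.

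The gap is that the two facts which constitute the proof are asserted, not proved, and you flag them yourself as the main obstacle:
(a) the minimal carrier contributes $(-1)^{d+1}\prod_h(1-t^{\sigma_{p(h),h}})t^{\ell_h}$ (Lemma~\ref{le:mc});
(b) adjoining a static letter $f$ multiplies this by exactly $t^{\KK_f+\Phi_f}$, with no sign (Lemma~\ref{le:nmm}).

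The missing idea is geometric. View the transition as the single down-right path of Figure~\ref{fig:1}: carrier letters are horizontal segments, and the nonzero entries of $\bs_j-\bs'_j$ are endpoints of vertical segments. Such a path cannot cross itself, so no letter of the minimal carrier lies strictly inside a vertical segment. This has three consequences.
\begin{itemize}
\item At every $j\neq o$ it kills $\alpha$ of \eqref{alp0} and the extra terms $(ja)_{(r_1,r_2)}+\cdots$ of \eqref{bet0}; at $o$ it kills the extra terms of \eqref{bet1}.
\item It places all $d$ incoming letters at $o$ in $[r_1,r_2)\sqcup[r_3,r_4)\sqcup\cdots$, so $\alpha=1+d$ at $o$.
\item What survives of $\beta$ is $\sum_{r<s}\sigma_{j,r}a_{j+1,s}$. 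Summed over all $L$ sites (the suppressed ones via \eqref{sdiag}), this gives $\sum_h\ell_h$, because $a_{j+1,h}=1$ exactly for $j\in[p(h),p'(h))$.
\end{itemize}

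The one concrete sign claim you make is wrong: adjoining a static $f$ does not change the sign through $\alpha$.
\begin{itemize}
\item By \eqref{alp1}, it flips $\alpha$ at $o$ iff $f\in(r_1,r_2)\sqcup(r_3,r_4)\sqcup\cdots$, i.e. iff the path does not cross height $f$ in column $o$.
\item By \eqref{alp0}, it flips $\alpha$ at $j\neq o$ iff the path crosses height $f$ in column $j$.
\end{itemize}
The path crosses height $f$ exactly once, so these flips come in pairs. The relative $-1$ comes solely from $(-1)^{k-1}$ under $k\mapsto k+1$. Your signed sum already contains that factor, so your attribution would produce $1+t^{\KK_f+\Phi_f}$. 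You also need the sign $(-1)^{d+1}$ from (a) to cancel $(-1)^{d-1}$ at $k=d$, and you never address it. The power in (b) is then immediate: the first term of $\beta$ adds $\sum_{j=1}^L\sigma_{j,[0,f)}=\KK_f$, and the remaining terms of \eqref{bet1}, \eqref{bet0} are exactly \eqref{phip}, \eqref{phi}.

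Finally, your trichotomy for forbidden transitions is off. Once the local conditions of Proposition~\ref{pr:s0} and weight conservation hold, condition (ii) cannot fail. Linking the alternating $\pm$ marks at each height yields one descending path with no closed loops, so each species moves at most once (Lemma~\ref{le:rst}). Hence \eqref{HHj2} holds uniformly, and the unwanted transitions vanish only through the factor $1-t^{\Phi_0}=0$ of Proposition~\ref{pr:zr}.
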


\begin{example}\label{ex:TH}
Consider the same case $n=l=2$, $L=3$ as Example \ref{ex:Tk}.
Then the Markov matrix \eqref{Hdef}-\eqref{Co} gives 
\begin{equation}
\begin{split}\label{h22}
H_{2,2}|12,00,01\rangle 
&= \frac{(1 - t)^2 t^3 (1 - t^4)}{(1 - t^2)(1 - t^3)(1 - t^5)\, x_1} \, |01, 00, 12\rangle
+ \frac{(1 - t)^2 t^4}{(1 - t^3)(1 - t^5)\, x_1} \, |01, 01, 02\rangle
\\
&+ \frac{(1 - t) t}{(1 - t^5)\, x_1} \, |01, 02, 01\rangle  
+ \frac{(1 - t)^2 t^2 (1 - t^6)}{(1 - t^2)(1 - t^3)(1 - t^5)\, x_1} \, |02, 00, 11\rangle
\\
&+ \frac{(1 - t)(1 - t^6)}{(1 - t^3)(1 - t^5)\, x_1} \, |02, 01, 01\rangle
+ \frac{(1 - t) t}{(1 - t^3)\, x_3} \, |12,01, 00\rangle
\\
&-\left(\frac{1}{x_1}+\frac{t(1-t)}{x_3(1-t^3)}\right)|12,00,01\rangle.
\end{split}
\end{equation}
On the other hand, from Example \ref{ex:Tk}, we get
\begin{equation}\label{td4}
\begin{split}
&(-\dot{T}^0(0)+\dot{T}^1(0)-\dot{T}^2(0)+\dot{T}^3(0)) |12,00,01\rangle
\\
&=
 \frac{(1 - t)^2 t^3 (1 - t^4)}{x_1} \, |01, 00, 12\rangle
+ \frac{(1 - t)^2 t^4 (1 - t^2)}{x_1} \, |01, 01, 02\rangle 
\\
&+ \frac{(1 - t) t (1 - t^2)(1 - t^3)}{x_1} \, |01, 02, 01\rangle 
 + \frac{(1 - t)^2 t^2 (1 - t^6)}{x_1} \, |02, 00, 11\rangle
 \\
&+ \frac{(1 - t)(1 - t^2)(1 - t^6)}{x_1} \, |02, 01, 01\rangle
+ \frac{(1 - t) t (1 - t^2)(1 - t^5)}{x_3} \, |12, 01, 00\rangle 
\\
& + \left( \frac{(1 - t^2)(1 - t^3)(1 - t^5)}{x_2}
+ \frac{(1 - t)(1 - t^4)(1 - t^5)}{x_3} \right) \, |12, 00, 01\rangle.
\end{split}
\end{equation}
In particular, the coefficient of the diagonal term in the last line of \eqref{td4} 
has been obtained by taking the alternating sum of \eqref{eeex}:
\begin{align*}
\sum_{k=0}^3(-1)^{k-1}\left(
-\frac{1}{x_1}e_k(1,t^4,t^6)-\frac{1}{x_2}e_k(t^2,t^3,t^5)-\frac{1}{x_3}e_k(t,t^4,t^5)
\right)
\end{align*} 
by means of \eqref{ekg}.
Using \eqref{td4} and  \( D_{\mathbf{m}} = (1 - t^2)(1 - t^3)(1 - t^5) \), one can verify that 
\(\mathcal{H}_{2,3} |12,00,01\rangle\) in \eqref{Hc} reproduces 
\(H_{2,3} |12,00,01\rangle\) as given in \eqref{h22}.
\end{example}

The rest of this section is devoted to the proof of Theorem \ref{th:main1}.

\subsection{Diagonal elements}\label{ss:de}
We first prove \eqref{main1} for the diagonal elements, i.e.,
\begin{align}
\langle \vec{\bs}|H_{n,l}|\vec{\bs}\rangle
=
\langle \vec{\bs}|\mathcal{H}_{n,l}|\vec{\bs}\rangle.
\end{align}
From \eqref{tdk0}, \eqref{ekg} and \eqref{Dm}, the RHS is evaluated as
\begin{align}
\langle \vec{\bs}|\mathcal{H}_{n,l}|\vec{\bs}\rangle
&=D_{\bf m}^{-1}\sum_{k=0}^{n+1}(-1)^{k-1}
\langle \vec{\bs}|\dot{T}^k(0)|\vec{\bs}\rangle-\sum_{j=1}^L\frac{1}{x_j}
\nonumber\\
&=D_{\bf m}^{-1}\sum_{k=0}^{n+1}(-1)^k
\sum_{j=1}^L\frac{1}{x_j}
e_k(t^{\sigma_{j,0}}, t^{\KK_1+\sigma_{j,1}},\ldots, t^{\KK_n+\sigma_{j,n}})
-\sum_{j=1}^L\frac{1}{x_j}
\nonumber\\
&=
\sum_{j=1}^L\frac{1}{x_j}
\prod_{h=0}^{n}\frac{1-t^{\KK_h + \sigma_{j,h}}}{1-t^{K_h}}
-\sum_{j=1}^L\frac{1}{x_j}
= \sum_{j=1}^L\frac{C_{\vec{\bs},j}(t)-1}{x_j},
\nonumber
\end{align}
where $C_{\vec{\bs},j}(t)$ is defined in \eqref{Co}.
This coincides with $\langle \vec{\bs}|H_{n,l}|\vec{\bs}\rangle$,
which is the coefficient of $|\vec{\bs}\rangle$ in the second term of \eqref{Hdef}.

\subsection{Reduced diagram and its depth}\label{ss:rd}

From now on, we assume $\vec{\bs}'\neq \vec{\bs}$
and concentrate on the off-diagonal elements 
$\langle\vec{\bs}'|H_{n,l}| \vec{\bs} \rangle$ and 
$\langle\vec{\bs}'|\mathcal{H}_{n,l}|\vec{\bs}\rangle$.
The former is given, from \eqref{Hdef}, as
\begin{subequations}
\begin{align}
\langle \vec{\bs}'|H_{n,l}| \vec{\bs}\rangle
&= \sum_{o=1}^L 
\langle \vec{\bs}'|H_{n,l}| \vec{\bs}\rangle_o,
\label{Hx}
\\
\langle \vec{\bs}'|H_{n,l}| \vec{\bs}\rangle_o
&= \frac{1}{x_o} \prod_{0 \le h \le n }
w^{(o)}_{\vec{\bs}, \vec{\bs}'}(h),
\label{Hj}
\end{align}
\end{subequations}
where the factor $w^{(o)}_{\vec{\bs}, \vec{\bs}'}(h)$ 
has been defined in \eqref{wm1}.
On the other hand $\langle\vec{\bs}'|\mathcal{H}_{n,l}|\vec{\bs}\rangle$ is given,
from \eqref{tke} and \eqref{Hc}, as
\begin{subequations}
\begin{align}
\langle\vec{\bs}'|\mathcal{H}_{n,l}|\vec{\bs}\rangle
&=  D_{\bf m}^{-1}\sum_{k=0}^{n+1}(-1)^{k-1}
\sum_{o=1}^L\langle\vec{\bs}'|\dot{T}^k(0)|\vec{\bs}\rangle_o,
\label{Hct}
\\
\langle\vec{\bs}'|\dot{T}^k(0)|\vec{\bs}\rangle_o
&= \frac{1}{x_o} \sum_{{\bf a}_1,\ldots, {\bf a}_L \in \BB^k}
S^{k}_{\;\,l}(0)^{{\bf a}_2, \bs'_1}_{{\bf a}_1, \bs_1}
\cdots 
\dot{S}^{k}_{\;\,l}(0)^{{\bf a}_{o+1}, \bs'_o}_{{\bf a}_o, \bs_o} 
\cdots 
S^{k}_{\;\,l}(0)^{{\bf a}_1, \bs'_L}_{{\bf a}_L,  \bs_L}.
\label{Hcj}
\end{align}
\end{subequations}
where $\dot{S}(z)= \frac{dS(z)}{dz}$.
Thus the equality 
$\langle\vec{\bs}'|H_{n,l}|\vec{\bs}\rangle
=\langle\vec{\bs}'|\mathcal{H}_{n,l}|\vec{\bs}\rangle$
for any $\vec{\bs} \neq \vec{\bs}' \in S({\bf m})$
follows once we show
\begin{equation}\label{HHj}
D_{\bf m}^{-1}\sum_{k=0}^{n+1}(-1)^{k-1}
x_o \langle \vec{\bs}'|\dot{T}^k(0)| \vec{\bs}\rangle_o
= \prod_{0 \le h \le n}w^{(o)}_{\vec{\bs}, \vec{\bs}'}(h).
\end{equation}
From \eqref{Dm},  \eqref{wm2} and \eqref{Hcj}, this is equivalent to 
\begin{equation}\label{HHj2}
\begin{split}
&\sum_{k=0}^{n+1}(-1)^{k-1}
\sum_{{\bf a}_1,\ldots, {\bf a}_L \in \BB^k}
S^{k}_{\;\,l}(0)^{{\bf a}_2, \bs'_1}_{{\bf a}_1, \bs_1}
\cdots 
\dot{S}^{k}_{\;\,l}(0)^{{\bf a}_{o+1}, \bs'_o}_{{\bf a}_o, \bs_o} 
\cdots 
S^{k}_{\;\,l}(0)^{{\bf a}_1, \bs'_L}_{{\bf a}_L,  \bs_L}
\\
&= 
\prod_{h \in \{h_0, \ldots, h_g\}}
(1 - t^{\sigma_{p(h),h}})\, t^{\ell_h}
\prod_{h \in \{\bar{h}_1, \ldots, \bar{h}_{n-g}\}}
(1 - t^{\KK_h + \Phi_h})
\qquad (\vec{\bs} \neq \vec{\bs}').
\end{split}
\end{equation}
This relation achieves  two simplifications from the original problem.
There is no summation over the sites $o=1,\ldots, L$, and the dependence on 
$x_1,\ldots, x_L$ is eliminated, leaving it dependent only on the parameter $t$.
Elements of $S^{k}_{\;\,l}(0)$ and $\dot{S}^{k}_{\;\,l}(0)$ consisting of \eqref{HHj2} have been 
obtained in Proposition \ref{pr:s0}.

\medskip
Let us depict $x_o \langle \vec{\bs}'|\dot{T}^k(0)| \vec{\bs}\rangle_o$ as in Figure \ref{fig:tk},
omitting the spectral parameters $z/x_i$ since they are set to zero.
All the vertical arrows from $\bs_i$ to $\bs'_i$ with $\bs_i=\bs'_i$,  
corresponding to local diagonal transitions, are suppressed. 
Moreover, for simplicity, we apply a cyclic shift such that the site $o$ appears in the leftmost position,
and attach the symbol $\circ$ to it to indicate that $\dot{S}(0)$ is used there, 
in contrast to $S(0)$ elsewhere.
We refer to such a diagram as {\em reduced diagram}.
See \eqref{red}, where ${\bf a}_i \in \BB^k$, $\bs_i\neq \br_i \in \BB_l$ for 
$0 \le i \le g$ with some $1 \le g <L$.\footnote{$\bs_0,\ldots, \bs_g$ should be understood as a
relabeling of the local states $\bs_{j_0},\ldots \bs_{j_g}$ that undergo nondiagonal transitions 
with respect to the original site indices.
In this context, the leftmost site 0 in \eqref{red} corresponds to the site $o$.}

\begin{equation}
\begin{tikzpicture}

\node at (-1.8,-0.18){$\displaystyle{\sum_{\phantom{AA}{\bf a}_0,\ldots, {\bf a}_g \in \BB^k}}$};
  \draw[->] (0.6,-0.6) node[below]{$\bs_0$}-- (0.6,0.6) node[above]{$\br_0$};
  \draw (0.59, 0) circle[radius=0.1cm];
  \draw[->,line width=0.6mm] (0,0) node[left]{${\bf a}_0$}-- (1.2,0) node[right]{${\bf a}_1$};
\begin{scope}[shift={(1.8,0)}]
  \draw[->] (0.6,-0.6) node[below]{$\bs_1$}-- (0.6,0.6) node[above]{$\br_1$};
  \draw[->,line width=0.6mm] (0,0) -- (1.2,0) node[right]{${\bf a}_2$};
   \end{scope}

  \draw[->,line width=0.6mm] (3.6,0) -- (4.4,0);
  \node at (4.8,0) {$\cdots$};
  \draw[->,line width=0.6mm] (5.1,0) -- (5.9,0);
   
   \begin{scope}[shift={(6.6,0)}]
     \draw[->] (0.6,-0.6) node[below]{$\bs_g$}-- (0.6,0.6) node[above]{$\br_g$};
  \draw[->,line width=0.6mm] (0,0) node[left]{${\bf a}_g$}-- (1.2,0) node[right]{${\bf a}_0$};
     \end{scope}
\end{tikzpicture}
\label{red}
\end{equation}
The diagram should be understood as representing the sum in \eqref{Hcj}, where the $L-g-1$ vertical arrows 
corresponding to the diagonal transitions are suppressed, but their associated vertex weights 
should still be accounted for.
Since the carriers ${\bf a}_i$'s remain unchanged when crossing the suppressed vertical arrows,
the sum reduces to those over ${\bf a}_0, \ldots, {\bf a}_g \in \BB^k$, where 
${\bf a}_{i+1} = {\bf a}_i + \bs_i-\br_i$ $(i \mod g+1)$ in multiplicity representation. 

\begin{lemma}\label{le:rs}
$\langle \vec{\bs}'|\dot{T}^k(0)| \vec{\bs}\rangle_o = 0$
unless the reduced diagram \eqref{red} for it satisfies the conditions:
\begin{subequations}
\begin{align}
&\br_0+ \cdots +\br_g = \bs_0 + \cdots + \bs_g,
\label{rscon1}\\
&\br_0 \prec \bs_0, \; \bs_1 \prec \br_1,\ldots, \bs_g \prec \br_g,
\label{rscon2}
\end{align}
where \eqref{rscon1} is an equality of the arrays in the multiplicity representation,
and $\prec$ is defined in \eqref{prel}.
\end{subequations}
\end{lemma}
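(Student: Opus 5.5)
The plan is to read both conditions off the support of the local vertex weights in Proposition~\ref{pr:s0}, together with weight conservation \eqref{wcs}. In \eqref{Hcj}, the matrix element $\langle \vec{\bs}'|\dot{T}^k(0)|\vec{\bs}\rangle_o$ is a sum over carriers ${\bf a}_1,\ldots,{\bf a}_L \in \BB^k$ of products of local weights. The weight at site $o$ is $\dot{S}^{k}_{\;\,l}(0)$, and the weight at every other site is $S^{k}_{\;\,l}(0)$. If the element is nonzero, then at least one term in this sum is nonzero, so it suffices to show that any nonvanishing product forces \eqref{rscon1} and \eqref{rscon2}.

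First, by \eqref{wcs}, each nonzero factor satisfies ${\bf a}_{j+1}+\bs'_j = {\bf a}_j+\bs_j$. At a suppressed site, where $\bs'_j=\bs_j$, the carrier is therefore unchanged. This justifies passing to the reduced diagram \eqref{red}, with ${\bf a}_{i+1}={\bf a}_i+\bs_i-\br_i$ taken cyclically modulo $g+1$. Summing these relations around the cycle makes the carrier contributions telescope to zero, which gives $\sum_i(\bs_i-\br_i)=0$. This is exactly \eqref{rscon1}.

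Second, for the non-marked sites $i=1,\ldots,g$, the weight $S(0)^{{\bf a}_{i+1},\br_i}_{{\bf a}_i,\bs_i}$ contains the factor $\theta(\bs_i\preceq\br_i)$ by \eqref{Sz0}. Since $\bs_i\neq\br_i$ in the reduced diagram, this gives $\bs_i\prec\br_i$. For the marked site $0$, the weight $\dot{S}(0)^{{\bf a}_1,\br_0}_{{\bf a}_0,\bs_0}$ contains $\theta(\br_0\preceq\bs_0)$ by \eqref{Sz1}. Here one must also rule out $\br_0=\bs_0$, i.e.\ site $o$ making a diagonal local transition. If $\br_0=\bs_0$, then all the nontrivial sites satisfy $\bs_i\prec\br_i$ while \eqref{rscon1} gives $\sum_{i\ge1}(\br_i-\bs_i)=0$.

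The step I expect to be the main obstacle is showing that this is impossible unless every $\br_i=\bs_i$, which would contradict $\vec{\bs}'\neq\vec{\bs}$. I would use a monotone functional. Whenever $\bs\prec\br$, write $\bs-\br={\bf e}_{r_1}-{\bf e}_{r_2}+\cdots-{\bf e}_{r_\mu}$ with $r_1<\cdots<r_\mu$. Then $\sum_u u(\br-\bs)_u=\sum_{\nu\,\mathrm{odd}}(r_{\nu+1}-r_\nu)>0$. Summing this over $i\ge 1$ gives a strictly positive quantity, but \eqref{rscon1} forces the total to be $0$, a contradiction. Hence $\br_0\prec\bs_0$, which completes \eqref{rscon2}. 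If site $o$ is among the suppressed sites in the original labeling, the same argument applies directly: it again yields a contradiction, so the matrix element vanishes, consistent with the lemma's convention that site $0$ carries the mark $\circ$.
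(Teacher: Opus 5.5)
Your proof is correct and follows the paper's route: \eqref{rscon1} comes from weight conservation \eqref{wcs} summed around the periodic carrier cycle, and \eqref{rscon2} comes from the factors $\theta(\bs_i\preceq\br_i)$ in \eqref{Sz0} and $\theta(\br_0\preceq\bs_0)$ in \eqref{Sz1}. Your extra step ruling out $\br_0=\bs_0$ with the functional $\sum_u u\,v_u$ is a worthwhile addition: the paper's one-line proof only gives $\br_0\preceq\bs_0$ and handles this case through the reduced-diagram convention $\bs_0\neq\br_0$, and the same fact underlies its unproved remark that $T^k(0)$ is diagonal.
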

\begin{proof}
From weight conservation, \eqref{red} vanishes unless the condition \eqref{rscon1} holds.
The condition \eqref{rscon2} follows rom Proposition \ref{pr:s0}.
\end{proof}

\begin{lemma}\label{le:rst}
Let $\vec{\bs} \rightarrow \vec{\bs}'$ be a transition of states in $(\BB_l)^L$.
Then the following two conditions are equivalent:

\begin{itemize}
  \item[(a)] The transition fits the scheme described 
  in Table~\ref{tab1};  equivalently, it satisfies conditions~(i) and~(ii) stated between \eqref{hhbar} and \eqref{wm1}.
  
  \item[(b)] The associated reduced diagram 
  \eqref{red}\footnote{The indices $j$ of $\bs_j$ in \eqref{red} do not necessarily match those 
in $\vec{\bs}=(\bs_1,\ldots, \bs_L)$, 
since the sites undergoing diagonal transitions are suppressed in the reduced diagram.} satisfies conditions~\eqref{rscon1} and~\eqref{rscon2}.
\end{itemize}
\end{lemma}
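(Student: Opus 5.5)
The plan is to prove both implications directly, in the equivalent form (i)--(ii) of the scheme of Table~\ref{tab1}. Write $\bs_j,\br_j$ for the local states at the sites of the reduced diagram, with site $0$ corresponding to $o$.

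\emph{(a)$\Rightarrow$(b).} Condition \eqref{rscon1} is immediate, since every species is conserved: each moving particle $h_\alpha$ leaves $p(h_\alpha)$ and arrives at $p'(h_\alpha)$, and all other particles stay put. Condition \eqref{rscon2} is exactly condition (i). At site $o$ we have $\bs'_o\prec\bs_o$, and at every other site with a nontrivial change we have $\bs_j\prec\bs'_j$. Sites with $\bs_j=\bs'_j$ are precisely the suppressed ones.

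\emph{(b)$\Rightarrow$(a).} This is the substantial direction. From \eqref{rscon2}, each difference $\br_j-\bs_j$ is an alternating sum of unit vectors ${\bf e}_{r}$ with distinct indices. Hence for every species $h$ and every site $j$ we have $|\sigma'_{j,h}-\sigma_{j,h}|\le 1$. Combined with \eqref{rscon1}, for each species $h$ the sites where $h$ increases and the sites where it decreases are equal in number. Next I would use the sign pattern of \eqref{prel} to bound these numbers by one. At a site $j\neq o$, the relation $\bs_j\prec\br_j$ means that the $\ominus$ and $\oplus$ symbols alternate as $h$ increases, beginning with $\ominus$ at the lowest species $r_1$. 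At site $o$ the pattern is reversed, beginning with $\oplus$. I would then argue by a counting or ``height'' argument over the species, as follows:
\begin{itemize}
\item sweep the species from the top $n$ downward;
\item define, for each threshold $h$, the quantity $N_h=\sum_j\sum_{u\ge h}(\sigma'_{j,u}-\sigma_{j,u})$;
\item note that $N_h=0$ by \eqref{rscon1};
\item note that the local partial sums $\sum_{u\ge h}(\sigma'_{j,u}-\sigma_{j,u})$ lie in $\{0,1\}$ for $j\neq o$ and in $\{0,-1\}$ for $j=o$, by the alternating structure.
\end{itemize}
Thus, for each $h$, the number of sites $j\neq o$ with local partial sum $1$ equals $1$ if site $o$ has partial sum $-1$ at $h$, and $0$ otherwise. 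This forces, for each species, at most one $\ominus$ and one $\oplus$, which gives (ii-i) and (ii-ii). Defining $h_0<\cdots<h_g$ as the species that actually move, the chain structure $p'(h_\alpha)=p(h_{\alpha-1})$ then follows. Reading the alternating pattern at each site pairs an arriving $h_\alpha$ with a departing next-lower moving species; at site $o$, the top departing species is $h_g$ and the bottom arriving species is $h_0$. The condition $p\neq p'$ and $\sigma_{p(h),h}\ge1$ in \eqref{spp} are automatic from the definitions. Finally, the uniqueness of $o$ as the only site with $\bs'_o\prec\bs_o$ is built into the reduced diagram, since only site $0$ carries the $\circ$.

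\emph{Main obstacle.} The hard step is the argument that each species moves at most once, and that the pairings glue into a single cascade from $h_g$ down to $h_0$ rather than several disjoint cycles. I would handle this with the partial-sum function $N_h$ above. As $h$ decreases, the partial sum at $o$ toggles between $-1$ and $0$. While it is $-1$, exactly one other site has partial sum $1$, and this site is the carrier of the currently ``in-flight'' particle, which identifies $p'(h_\alpha)=p(h_{\alpha-1})$. Species at which the partial sum at $o$ is $0$ have no in-flight particle and hence belong to $\{\bar h_i\}$. One subtlety needs checking: whether the partial sum at $o$ may return to $0$ in the middle of the cascade. When it does, the corresponding species lies in an interval $(r_{2i},r_{2i+1})$ of \eqref{phip} and does not move, which is consistent with Table~\ref{tab1}. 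This is allowed because $p(h_\alpha)=p'(h_\beta)$ is permitted.
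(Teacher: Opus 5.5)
Your proof is correct. For (b)$\Rightarrow$(a) it is organized differently from the paper's.

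\textbf{The paper's route.} The paper sweeps upward. It takes the lowest index $r_1$ of $\br_o-\bs_o$, which is the refill by $h_0$. Using \eqref{rscon1}, it finds the unique site $j$ with $\bs_j-\br_j={\bf e}_{r_1}-\cdots$, and notes that every other nontrivial site has its lowest index strictly above $r_1$. It then says only that ``repeating this argument'' yields the scheme of Table~\ref{tab1}.

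\textbf{Your route.} You sweep downward with the prefix sums $P_j(h)=\sum_{u\ge h}(\sigma'_{j,u}-\sigma_{j,u})$. A single invariant does all the work: $P_j(h)\in\{0,1\}$ for $j\ne o$, $P_o(h)\in\{0,-1\}$, and $\sum_j P_j(h)=0$. From it you get, at once:
\begin{itemize}
\item at most one $\ominus$ and one $\oplus$ per species;
\item the departure of $h_g$ and the arrival of $h_0$ both at $o$;
\item each departure of $h_{\alpha-1}$ either at the unique site with $P_j=1$ just above it, or at $o$ when all $P_j$ vanish; in both cases this site is $p'(h_\alpha)$, which gives the chain relation;
\item a uniform treatment of the intermediate returns to $o$, as in Figure~\ref{fig:1} where $p'(h_7)=p(h_6)=o$.
\end{itemize}

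\textbf{Comparison.} The paper's version is shorter, but its iteration is only implicit. After the $r_1$-move is peeled off, the residual local differences are no longer of the form $\prec$, so one must still say precisely what is being repeated. Your potential function makes that induction explicit.

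\textbf{One small correction.} The uniqueness of $o$ in condition (i) is not ``built in'' by the $\circ$. It holds because $\bs_j\prec\br_j$ forces the lowest nonzero entry of $\br_j-\bs_j$ to be $-1$, which rules out $\br_j\prec\bs_j$. Suppressed sites have $\bs_j=\br_j$, so they cannot satisfy $\br_j\prec\bs_j$ either.
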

\begin{proof}
The implication (a) $\Rightarrow$ (b) is straightforward by reference to Figure~\ref{fig:1}.
Now assume (b). Let $\br_o - \bs_o = {\bf e}_{r_1} - \cdots$ in accordance with \eqref{prel} or \eqref{rr1}, 
where the ellipsis $\cdots$ denotes an alternating sum of some ${\bf e}_s$ with $s > r_1$. 
The same convention applies in the sequel.
Then by condition~\eqref{rscon1}, 
there must exist a pair $\bs_j \prec \br_j$ among those in~\eqref{rscon2} 
such that $\br_j = {\bf e}_{r_1} - \cdots$, 
and all other pairs satisfy $\bs_{j'} - \br_{j'} = {\bf e}_{r'} - \cdots$ with $r' > r_1$.
Repeating this argument shows that the transition described in~\eqref{red} fits the scheme in Table~\ref{tab1}.
\end{proof}

Suppose the reduced diagram \eqref{red} satisfies \eqref{rscon1} and \eqref{rscon2}.
To ensure the weight conservation at every vertex, the capacity $k$ 
of the carriers must be at least a certain value.
We define the minimum possible capacity as the {\em depth} $d$ of the reduced diagram or the transition 
$\vec{\bs} \rightarrow \vec{\bs}'$.
Clearly, the depth is unaffected by the diagonal part of the transition 
which is suppressed in the reduced diagram.
We refer to the carriers whose capacity equals the depth as {\em minimal carriers}. 
They are uniquely determined from $\vec{\bs}$ and $\vec{\bs}'$.
See Example \ref{ex:d} below.

\begin{example}\label{ex:d}
Examples of reduced diagrams and minimal carries for $n=4, l=2$ and depth $d=1,2,3$.
We adopt the tableau representation for elements from $\BB_2$ as
$00=(2,0,0,0,0)$, $01=(1,1,0,0,0)$, $02=(1,0,1,0,0)$, $03=(1,0,0,1,0)$, $04=(1,0,0,0,1)$, 
$24=(0,0,1,0,1)$, etc.
All the diagrams satisfy the conditions \eqref{rscon1} and \eqref{rscon2}
with the unique choice of carriers exhibited also in the (column strict) tableau representation.
The examples (1u), (2u), (3u) are {\em unwanted}, whereas (1w), (2w), (3w) are {\em wanted}
in the sense that the condition \eqref{so} is satisfied or not, respectively.
\begin{align*}
&\begin{tikzpicture}[scale=0.7]
\node at (1.5,1.8){(1u)\, $d=1$};
  \draw[->] (0.6,-0.6) node[below]{$24$}-- (0.6,0.6) node[above]{$14$};
  \draw (0.59, 0) circle[radius=0.1cm];
  \node at   (-0.3,0) {$\scriptstyle{1}$}; 
    \draw[->,line width=0.6mm] (0,0) node[left]{}-- (1.2,0) node[right]{};
\begin{scope}[shift={(1.8,0)}]
   \draw[->] (0.6,-0.6) node[below]{$13$}-- (0.6,0.6) node[above]{$23$};
   \node at   (-0.3,0) {$\scriptstyle{2}$}; 
  \draw[->,line width=0.6mm] (0,0) -- (1.2,0) node[right]{};
\node at   (1.5,0) {$\scriptstyle{1}$}; 
\end{scope}
\end{tikzpicture}
\qquad
\begin{tikzpicture}[scale=0.7]
\node at (1.5,1.8){(1w)\, $d=1$};
  \draw[->] (0.6,-0.6) node[below]{$24$}-- (0.6,0.6) node[above]{$02$};
  \draw (0.59, 0) circle[radius=0.1cm];
  \node at   (-0.3,0) {$\scriptstyle{0}$}; 
    \draw[->,line width=0.6mm] (0,0) node[left]{}-- (1.2,0) node[right]{};
\begin{scope}[shift={(1.8,0)}]
  \draw[->] (0.6,-0.6) node[below]{$00$}-- (0.6,0.6) node[above]{$04$};
 \node at   (-0.3,0) {$\scriptstyle{4}$}; 
  \draw[->,line width=0.6mm] (0,0) -- (1.2,0) node[right]{};
  \node at   (1.5,0) {$\scriptstyle{0}$}; 
 \end{scope}
\end{tikzpicture}
\qquad
\begin{tikzpicture}[scale=0.7]
\node at (2.5,1.8){(2u)\, $d=2$};
  \draw[->] (0.6,-0.6) node[below]{$24$}-- (0.6,0.6) node[above]{$12$};
  \draw (0.59, 0) circle[radius=0.1cm];
  \node at   (-0.3,0.22) {$\scriptstyle{1}$}; \node at   (-0.3,-0.22) {$\scriptstyle{3}$};
  \draw[->,line width=0.6mm] (0,0) node[left]{}-- (1.2,0) node[right]{};
\begin{scope}[shift={(1.8,0)}]
  \draw[->] (0.6,-0.6) node[below]{$02$}-- (0.6,0.6) node[above]{$03$};
   \node at   (-0.3,0.22) {$\scriptstyle{3}$}; \node at   (-0.3,-0.22) {$\scriptstyle{4}$};
  \draw[->,line width=0.6mm] (0,0) -- (1.2,0) node[right]{};
 \end{scope}
\begin{scope}[shift={(3.6,0)}]
  \draw[->] (0.6,-0.6) node[below]{$13$}-- (0.6,0.6) node[above]{$24$};
   \node at   (-0.3,0.22) {$\scriptstyle{2}$}; \node at   (-0.3,-0.22) {$\scriptstyle{4}$};
  \draw[->,line width=0.6mm] (0,0) -- (1.2,0) node[right]{};
     \node at   (1.5,0.2) {$\scriptstyle{1}$}; \node at   (1.5,-0.2) {$\scriptstyle{3}$};
 \end{scope}
\end{tikzpicture}
\qquad
\begin{tikzpicture}[scale=0.7]
\node at (2.5,1.8){(2w)\, $d=2$};
  \draw[->] (0.6,-0.6) node[below]{$24$}-- (0.6,0.6) node[above]{$02$};
  \draw (0.59, 0) circle[radius=0.1cm];
  \node at   (-0.3,0.22) {$\scriptstyle{0}$}; \node at   (-0.3,-0.22) {$\scriptstyle{1}$};
  \draw[->,line width=0.6mm] (0,0) node[left]{}-- (1.2,0) node[right]{};
\begin{scope}[shift={(1.8,0)}]
  \draw[->] (0.6,-0.6) node[below]{$02$}-- (0.6,0.6) node[above]{$12$};
   \node at   (-0.3,0.22) {$\scriptstyle{1}$}; \node at   (-0.3,-0.22) {$\scriptstyle{4}$};
  \draw[->,line width=0.6mm] (0,0) -- (1.2,0) node[right]{};
 \end{scope}
\begin{scope}[shift={(3.6,0)}]
  \draw[->] (0.6,-0.6) node[below]{$13$}-- (0.6,0.6) node[above]{$34$};
   \node at   (-0.3,0.22) {$\scriptstyle{0}$}; \node at   (-0.3,-0.22) {$\scriptstyle{4}$};
  \draw[->,line width=0.6mm] (0,0) -- (1.2,0) node[right]{};
     \node at   (1.5,0.2) {$\scriptstyle{0}$}; \node at   (1.5,-0.2) {$\scriptstyle{1}$};
 \end{scope}
\end{tikzpicture}
\\
&\qquad\qquad\qquad 
\begin{tikzpicture}[scale=0.7]
\node at (3.3,1.8){(3u)\, $d=3$};
  \draw[->] (0.6,-0.6) node[below]{$24$}-- (0.6,0.6) node[above]{$12$};
  \draw (0.59, 0) circle[radius=0.1cm];
\node at   (-0.34,0.43) {$\scriptstyle{1}$}; \node at   (-0.34,-0.42) {$\scriptstyle{3}$};
  \draw[->,line width=0.6mm] (-0.04,0) node[left]{$\scriptstyle 2$}-- (1.2,0) node[right]{$\scriptstyle 3$};
\begin{scope}[shift={(1.8,0)}]
  \draw[->] (0.6,-0.6) node[below]{$14$}-- (0.6,0.6) node[above]{$24$};
  \node at   (-0.28,0.43) {$\scriptstyle{2}$}; \node at   (-0.28,-0.42) {$\scriptstyle{4}$};
  \draw[->,line width=0.6mm] (0,0) -- (1.2,0) node[right]{$\scriptstyle 3$};
 \end{scope}
\begin{scope}[shift={(3.6,0)}]
  \draw[->] (0.6,-0.6) node[below]{$24$}-- (0.6,0.6) node[above]{$34$};
  \node at   (-0.28,0.43) {$\scriptstyle{1}$}; \node at   (-0.28,-0.42) {$\scriptstyle{4}$};
  \draw[->,line width=0.6mm] (0,0) -- (1.2,0) node[right]{$\scriptstyle 2$};
 \end{scope}
\begin{scope}[shift={(5.4,0)}]
  \draw[->] (0.6,-0.6) node[below]{$13$}-- (0.6,0.6) node[above]{$14$};
  \node at   (-0.28,0.43) {$\scriptstyle{1}$}; \node at   (-0.27,-0.42) {$\scriptstyle{4}$};
  \draw[->,line width=0.6mm] (0,0) -- (1.2,0) node[right]{$\scriptstyle 2$};
  \node at   (1.54,0.43) {$\scriptstyle{1}$}; \node at   (1.54,-0.42) {$\scriptstyle{3}$};
 \end{scope}
\end{tikzpicture}
\qquad
\begin{tikzpicture}[scale=0.7]
\node at (3.3,1.8){(3w)\, $d=3$};
  \draw[->] (0.6,-0.6) node[below]{$24$}-- (0.6,0.6) node[above]{$02$};
  \draw (0.59, 0) circle[radius=0.1cm];
\node at   (-0.34,0.43) {$\scriptstyle{0}$}; \node at   (-0.34,-0.42) {$\scriptstyle{3}$};
  \draw[->,line width=0.6mm] (-0.04,0) node[left]{$\scriptstyle 2$}-- (1.2,0) node[right]{$\scriptstyle 3$};
\begin{scope}[shift={(1.8,0)}]
  \draw[->] (0.6,-0.6) node[below]{$00$}-- (0.6,0.6) node[above]{$02$};
  \node at   (-0.28,0.43) {$\scriptstyle{2}$}; \node at   (-0.28,-0.42) {$\scriptstyle{4}$};
  \draw[->,line width=0.6mm] (0,0) -- (1.2,0) node[right]{$\scriptstyle 3$};
 \end{scope}
\begin{scope}[shift={(3.6,0)}]
  \draw[->] (0.6,-0.6) node[below]{$02$}-- (0.6,0.6) node[above]{$03$};
  \node at   (-0.28,0.43) {$\scriptstyle{0}$}; \node at   (-0.28,-0.42) {$\scriptstyle{4}$};
  \draw[->,line width=0.6mm] (0,0) -- (1.2,0) node[right]{$\scriptstyle 2$};
 \end{scope}
\begin{scope}[shift={(5.4,0)}]
  \draw[->] (0.6,-0.6) node[below]{$13$}-- (0.6,0.6) node[above]{$14$};
  \node at   (-0.28,0.43) {$\scriptstyle{0}$}; \node at   (-0.27,-0.42) {$\scriptstyle{4}$};
  \draw[->,line width=0.6mm] (0,0) -- (1.2,0) node[right]{$\scriptstyle 2$};
  \node at   (1.54,0.43) {$\scriptstyle{0}$}; \node at   (1.54,-0.42) {$\scriptstyle{3}$};
 \end{scope}
\end{tikzpicture}
\end{align*}
\end{example}

\begin{remark}\label{re:md}
Diagrams such as Figure~\ref{fig:1} depict transitions of states corresponding to 
the reduced diagram~\eqref{red} associated with minimal carriers.
Figure~\ref{fig:1} represents the case where ${\bf a}_0, \ldots, {\bf a}_{g=4} \in \BB^{d}$, 
with depth $d=4$ being the number of horizontal lines intersecting every vertical slice 
except at the positions $j = o, j_1, \ldots, j_4$.
The minimal carriers are specified by the heights $h_q$ 
of the horizontal lines in the multiplicity representation~\eqref{vk} as
\begin{equation*}
\begin{split}
&{\bf a}_0 = (\theta(i \in \{h_0, h_3, h_5, h_7\}))^n_{i=0}, \quad
{\bf a}_1 = (\theta(i \in\{ h_3, h_5, h_6, h_9\}))^n_{i=0}, \\
&{\bf a}_2 = (\theta(i \in\{ h_2, h_4, h_6, h_8\}))^n_{i=0}, \quad
{\bf a}_3 = (\theta(i \in\{ h_2, h_3, h_6, h_8\}))^n_{i=0},\\
&{\bf a}_4 = (\theta(i \in\{ h_1, h_3, h_6, h_7\}))^n_{i=0}.
\end{split}
\end{equation*}
As Example \ref{ex:d} demonstrates, we have $d \le g$ in general.
\end{remark}

\subsection{Contribution from minimum carrier}\label{ss:cmm}

For simplicity, 
we write $S^{k}_{\;\,l}(z)$ in \eqref{Hcj} as $S(z)$ in the rest of this section.

\begin{lemma}\label{le:mc}
Let $\vec{\bs} \rightarrow \vec{\bs}'$ be a transition whose 
reduced diagram \eqref{red} satisfies conditions \eqref{rscon1} and \eqref{rscon2}. 
Let $d$ denote its depth, 
and let ${\bf a}_1, \ldots, {\bf a}_L \in \BB^d$ be the minimum carrier.\footnote{The indices of 
$\bs_i$ and ${\bf a}_i$ do not necessarily match those in \eqref{red}, 
since the latter is a reduced diagram omitting the 
sites undergoing diagonal transitions.}
Let $0 \le h_0 < \cdots < h_g \le n$ be the species of the moved particles, 
in accordance with the scheme of Table~\ref{tab1}, as guaranteed by Lemma~\ref{le:rst}.
Then the following equality holds:
\begin{align}\label{mc1}
S(0)^{{\bf a}_2, \bs'_1}_{{\bf a}_1, \bs_1}
\cdots 
\dot{S}(0)^{{\bf a}_{o+1}, \bs'_o}_{{\bf a}_o, \bs_o} 
\cdots 
S(0)^{{\bf a}_1, \bs'_L}_{{\bf a}_L,  \bs_L}
=(-1)^{d+1}\prod_{h \in \{h_0, \ldots, h_g\}}\!\!(1 - t^{\sigma_{p(h),h}})\, t^{\ell_h}.
\end{align}
\end{lemma}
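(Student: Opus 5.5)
We will compute the left-hand side of \eqref{mc1} vertex by vertex using the explicit formulas of Proposition~\ref{pr:s0}, and then collect the sign, the $t$-power, and the $(1-t^{\cdot})$ factors separately. At sites $j$ with $\bs_j=\bs'_j$ the carrier passes unchanged, and \eqref{Sz0} with $\mu=0$ gives the weight $t^{\sum_{r<s}\sigma_{j,r}a_s}$. At the non-diagonal sites $j\neq o$ we have $\bs_j\prec\bs'_j$, and \eqref{Sz0} applies with the $r_\nu$ of \eqref{rr2}. At $o$ we have $\bs'_o\prec\bs_o$, and \eqref{Sz1} applies with the $r_\nu$ of \eqref{rr1}. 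Since the carriers are minimal, their entries are read off from the horizontal lines in Figure~\ref{fig:1}, as in Remark~\ref{re:md}. A carrier component $a_h=1$ at site $j$ means that a particle of species $h$ is in transit over $j$.

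\emph{The $(1-t)$ factors.} For $j\neq o$ the odd-indexed $r_\nu$ in \eqref{rr2} are exactly the species $h$ with $p(h)=j$, so the product $\prod(1-t^{j_{r_{\rm odd}}})$ in \eqref{Sz0} gives $\prod_{p(h)=j}(1-t^{\sigma_{j,h}})$. At $o$ the even-indexed $r_\nu$ in \eqref{rr1} are the species leaving $o$, and \eqref{Sz1} gives $\prod_{p(h)=o}(1-t^{\sigma_{o,h}})$. Together these produce $\prod_{h\in\{h_0,\ldots,h_g\}}(1-t^{\sigma_{p(h),h}})$.

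\emph{The sign.} At a site $j\neq o$ the sign is $(-1)^{i_{(r_1,r_2)}+\cdots}$, where ${\bf i}={\bf a}_j$ is the incoming carrier. This counts the in-transit lines whose heights lie strictly inside the intervals $(r_{2m-1},r_{2m})$. At $o$ the sign is $(-1)^{1+i_{[r_1,r_2)}+\cdots}$. I expect that minimality forces these interior counts to vanish at every site $j\neq o$. The argument is that a carrier line at height $h$ strictly between $r_{2m-1}$ and $r_{2m}$ would be pushed at that site, contradicting the definition of the depth. At $o$ the count $i_{[r_1,r_2)}+\cdots$ should then equal $d$ minus the number of lines entering $o$ on ``outside'' intervals. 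Combined with $|{\bf a}_o|=d$ and the structure of \eqref{rr1}, this should give a total sign of $(-1)^{1+d}$. To check this, I would verify it on Example~\ref{ex:d} and on the $l=1$ case treated in \cite{AK25}, and then argue in general by counting the lines through $o$.

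\emph{The $t$-power, the main obstacle.} The total exponent is
\[
\sum_{j}\sum_{r<s}\sigma_{j,r}(a_j)_s+\sum_{j\neq o}(\sigma a)_{U_1(j)}+(\sigma a)_{U_0(o)},
\]
where $a_j$ is the incoming carrier at site $j$, and $U_0,U_1$ are the index sets from the proof of Proposition~\ref{pr:s0}, built from the $r_\nu$ of the respective site. By minimality the correction terms $(\sigma a)_{U}$ should mostly vanish or merge with the first sum. The first sum counts, for each in-transit particle $h$ and each site $j$ it crosses, the particles of species $<h$ at $j$. Summed over the sites in $[p(h),p'(h))$, this is exactly $\ell_h$ from \eqref{lh}. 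The delicate points are three.
\begin{enumerate}[label=(\alph*)]
\item The endpoint convention: the interval in \eqref{lh} includes $p(h)$ and excludes $p'(h)$. We must check at each endpoint whether $a_h$ counts as the incoming or the outgoing carrier, and use the correction term to adjust.
\item Whether $\sigma$ or $\sigma'$ is the right local state at the endpoints; here the correction terms should exactly compensate.
\item The extra inclusive diagonal piece at $o$ for lines passing straight through; compare $\beta$ in \eqref{bet1} with $\mu=0$.
\end{enumerate}
I would handle these by a local case analysis at one vertex, covering the four local pictures of Figure~\ref{fig:ab}: straight through, turn, and so on. Then I would sum telescopically along each line $h_q$, so that the contributions assemble into $t^{\sum Y}=t^{\sum_h\ell_h}$, consistent with the path description around \eqref{wY}.
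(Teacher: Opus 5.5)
Your decomposition (evaluate each vertex by Proposition~\ref{pr:s0}, then treat the $\prod(1-t^{\#})$ part, the sign and the power of $t$ separately) is the paper's, and your $\prod(1-t^{\#})$ step is correct. However, the sign and the $t$-power are only announced, not proved. The structural fact that settles both is missing. It is this: the minimal carrier consists only of lines at heights $h_0,\ldots,h_g$, while every interval in \eqref{alp0}--\eqref{bet1} that must be ``empty'' is one of $(h_{q-1},h_q)$, $[0,h_0)$, $(h_g,n]$, and so contains no carrier height. To see this, note that at a site $j\ne o$ each arrival of $h_q$ is paired with a departure of $h_{q-1}$, since $p'(h_q)=p(h_{q-1})$. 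The condition $p(h)\ne p'(h)$ then forces the sorted $r_\nu$ of \eqref{rr2} to alternate as $(r_{2m-1},r_{2m})=(h_{q-1},h_q)$. At $o$ one gets $r_1=h_0$, $r_\mu=h_g$ and $(r_{2m},r_{2m+1})=(h_{q-1},h_q)$. Hence the sign exponent $i_{(r_1,r_2)}+\cdots$ at $j\ne o$ vanishes for this reason, not because such a line ``would be pushed, contradicting the depth''. At $o$, every line entering from the left, whether it arrives at some $r_{2m-1}$ or passes through, lies in $[r_1,r_2)\sqcup[r_3,r_4)\sqcup\cdots$. So the count there is exactly $|{\bf a}_o|=d$, which gives $(-1)^{d+1}$. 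This is precisely the count you leave open and propose to test on examples.

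For the power of $t$ you have misread Proposition~\ref{pr:s0}. The carrier in $\sum_{r<s}j_ra_s$ is the superscript ${\bf a}$, i.e.\ the \emph{outgoing} carrier ${\bf a}_{j+1}$, not the incoming one. Compare Example~\ref{ex:s11}: for $i<j$, $S^1_{\;\,1}(0)^{{\bf e}_i,{\bf e}_j}_{{\bf e}_j,{\bf e}_i}=1-t$, while your reading gives $t(1-t)$. With the incoming carrier your first sum runs over $j\in(p(h),p'(h)]$ instead of $[p(h),p'(h))$. The repair you plan in (a)--(c) cannot work. By the fact above, the correction terms $(\sigma a)_{U_1}$ at $j\ne o$ and $(\sigma a)_{U_0}$ at $o$ vanish identically: departing heights sit at open endpoints, and passing heights avoid the intervals. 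So they compensate nothing, and already for $l=1$ you would be off by a factor $t^{g}$. With the outgoing carrier no local case analysis is needed. It contains exactly the lines departing from or passing over the site. So the first terms summed over all sites, diagonal ones included via \eqref{sdiag}, give $\sum_h\sum_{j\in[p(h),p'(h))}\sigma_{j,[0,h)}=\sum_h\ell_h$, with the initial $\sigma$ as in \eqref{lh}. Your point (c) is moot as well, since $\bs'_o\prec\bs_o$ forces $\mu\ge 2$ at $o$.
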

\begin{proof}

From Remark~\ref{re:md}, the transition can be depicted by a diagram as in Figure~\ref{fig:1}.
The LHS is evaluated via Proposition~\ref{pr:s0}, and it evidently consists of three parts: 
$(-1)^A t^B \prod (1 - t^\#)$.
We illustrate the correspondence of each part using the example in Figure~\ref{fig:1}.
The general case proceeds analogously.
The lattice sites relevant to the reduced diagram in Figure~\ref{fig:1} are 
$0, j_1, \ldots, j_4$.

\smallskip
\noindent
$\bullet$ \textbf{The} $\prod (1 - t^\#)$ \textbf{part.}
From \eqref{Sz1} with \eqref{rr1}, or from \eqref{Sz0} with \eqref{rr2}, the following contributions arise:
\begin{equation}
\begin{split}
\dot{S}(0)^{{\bf a}_{o+1}, \bs'_o}_{{\bf a}_o, \bs_o}: 
&\quad (1 - t^{\sigma_{o,r_2}})(1 - t^{\sigma_{o,r_4}})
= (1 - t^{\sigma_{o,h_6}})(1 - t^{\sigma_{o,h_9}})
= (1 - t^{\sigma_{p(h_6), h_6}})(1 - t^{\sigma_{p(h_9), h_9}}),
\\
S(0)^{{\bf a}_{j_1+1}, \bs'_{j_1}}_{{\bf a}_{j_1}, \bs_{j_1}}: 
&\quad (1 - t^{\sigma_{j_1,s_1}})(1 - t^{\sigma_{j_1,s_3}})(1 - t^{\sigma_{j_1,s_5}})
= (1 - t^{\sigma_{p(h_2), h_2}})(1 - t^{\sigma_{p(h_4), h_4}})(1 - t^{\sigma_{p(h_8), h_8}}),
\\
S(0)^{{\bf a}_{j_2+1}, \bs'_{j_2}}_{{\bf a}_{j_2}, \bs_{j_2}}: 
&\quad (1 - t^{\sigma_{j_2,t_1}})
= (1 - t^{\sigma_{p(h_3), h_3}}),
\\
S(0)^{{\bf a}_{j_3+1}, \bs'_{j_3}}_{{\bf a}_{j_3}, \bs_{j_3}}: 
&\quad (1 - t^{\sigma_{j_3,u_1}})(1 - t^{\sigma_{j_3,u_3}})
= (1 - t^{\sigma_{p(h_1), h_1}})(1 - t^{\sigma_{p(h_7), h_7}}),
\\
S(0)^{{\bf a}_{j_4+1}, \bs'_{j_4}}_{{\bf a}_{j_4}, \bs_{j_4}}: 
&\quad (1 - t^{\sigma_{j_4,v_1}})(1 - t^{\sigma_{j_4,v_3}})
= (1 - t^{\sigma_{p(h_0), h_0}})(1 - t^{\sigma_{p(h_5), h_5}}).
\end{split}
\end{equation}
The product of all these factors matches the RHS 
$\prod_{h \in \{h_0, \ldots, h_9\}} (1 - t^{\sigma_{p(h), h}})$.
Note that the sites suppressed in the reduced diagram do not contribute to this, 
since the associated factor is the $z = 0$ specialization of \eqref{sdiag}.

\smallskip
\noindent
$\bullet$ \textbf{The sign factor} $(-1)^A$.
From \eqref{alp0} and \eqref{alp1}, the following contributions to $A$ arise:
\begin{align}
\dot{S}(0)^{{\bf a}_{o+1}, \bs'_o}_{{\bf a}_o, \bs_o}: 
&\quad  1 + a_{o, [r_1, r_2)} + a_{o, [r_3, r_4)},
\label{dpl} \\
S(0)^{{\bf a}_{j_1+1}, \bs'_{j_1}}_{{\bf a}_{j_1}, \bs_{j_1}}: 
&\quad a_{j_1, (s_1, s_2)} + a_{j_1, (s_3, s_4)} + a_{j_1, (s_5, s_6)}, \nonumber \\
S(0)^{{\bf a}_{j_2+1}, \bs'_{j_2}}_{{\bf a}_{j_2}, \bs_{j_2}}: 
&\quad a_{j_2, (t_1, t_2)}, \nonumber \\
S(0)^{{\bf a}_{j_3+1}, \bs'_{j_3}}_{{\bf a}_{j_3}, \bs_{j_3}}: 
&\quad a_{j_3, (u_1, u_2)} + a_{j_3, (u_3, u_4)}, \nonumber \\
S(0)^{{\bf a}_{j_4+1}, \bs'_{j_4}}_{{\bf a}_{j_4}, \bs_{j_4}}: 
&\quad a_{j_4, (v_1, v_2)} + a_{j_4, (v_3, v_4)}. \nonumber
\end{align}
Here, for example, 
$a_{o, [r_1, r_2)} = \sum_{r_1 \le \kappa < r_2} a_{o,\kappa}$ for ${\bf a}_o = (a_{o,0}, \ldots, a_{o,n}) \in \BB^d$.
The sum $a_{o, [r_1, r_2)} + a_{o, [r_3, r_4)}$ in \eqref{dpl} 
counts the total number of horizontal lines 
entering the site $o$ from the left\footnote{This originates the alignment of four indices 
of $\dot{S}(0)$ in  its diagram representation mentioned in Figure \ref{fig:tk}. } in Figure~\ref{fig:1}.
It equals $4$ in the figure and more generally equals $d$, in accordance with Remark~\ref{re:md}.
Thus, \eqref{dpl} already yields the sign factor $(-1)^{d + 1}$.

All other quantities in the above are zero, 
since there are no horizontal lines arriving at sites $j_1, \ldots, j_4$ at heights not equal to some $h_q$.
As before, the sites suppressed in the reduced diagram do not contribute to the sign factor, 
since this corresponds to the $z = 0$ case of \eqref{sdiag}.

\smallskip
\noindent
$\bullet$ \textbf{The power} $t^B$.
Consider $\dot{S}(0)^{{\bf a}_{o+1}, \bs'_o}_{{\bf a}_o, \bs_o}$ for example. 
From \eqref{bet1}, nonzero contributions to the power of $t$ arise only from the first term 
$\sum_{0 \le r<s \le n} \sigma_{o,r} a_{o,s}$ since 
$a_{o,s}=\theta(s \in \{h_3,h_5,h_6,h_9\})$ leads to 
$(\sigma_o a_o)_{[0,r_1)} + (\sigma_o a_o)_{(r_2,r_3)} + (\sigma_o a_o)_{(r_4,n]} =0$. 
(${\bf a}_o$ here is ${\bf a}_1$ in Remark \ref{re:md}.)
Similarly, nonzero contributions to the power of \( t \) from 
\( S(0)^{{\bf a}_{j+1}, \bs'_j}_{{\bf a}_j, \bs_j} \) also come only 
from the first term in \eqref{bet0}. Thus, in our working example 
of Figure~\ref{fig:1}, the following contributions to the exponent 
\( B \) arise from the nondiagonal local transitions:
\begin{equation}\label{pow1}
\begin{split}
\dot{S}(0)^{{\bf a}_{o+1}, \bs'_o}_{{\bf a}_o, \bs_o}: 
&\; \bigl(\sum_{0 \le r < h_3}+\sum_{0 \le r < h_5}+\sum_{0 \le r < h_6}
+\sum_{0 \le r < h_9}\bigr) \sigma_{o,r}
= \sum_{h \in \{h_6,h_9\}}\sigma_{p(h),[0,h)}
+\sigma_{o,[0,h_3)} + \sigma_{o,[0,h_5)},
\\
S(0)^{{\bf a}_{j_1+1}, \bs'_{j_1}}_{{\bf a}_{j_1}, \bs_{j_1}}: 
&\; \bigl(\sum_{0 \le r < h_2}+\sum_{0 \le r < h_4}+\sum_{0 \le r < h_6}
+\sum_{0 \le r < h_8}\bigr) \sigma_{j_1,r}
= \sum_{h \in \{h_2,h_4, h_8\}}\sigma_{p(h),[0,h)}
+\sigma_{j_1,[0,h_6)},
\\
S(0)^{{\bf a}_{j_2+1}, \bs'_{j_2}}_{{\bf a}_{j_2}, \bs_{j_2}}: 
&\; \bigl(\sum_{0 \le r < h_2}+\sum_{0 \le r < h_3}+\sum_{0 \le r < h_6}
+\sum_{0 \le r < h_8}\bigr) \sigma_{j_2,r}
= \sigma_{p(h_3),[0,h_3)}
+\sigma_{j_2, [0,h_2)}+\sigma_{j_2, [0,h_6)}+\sigma_{j_2, [0,h_8)},
\\
S(0)^{{\bf a}_{j_3+1}, \bs'_{j_3}}_{{\bf a}_{j_3}, \bs_{j_3}}: 
&\; \bigl(\sum_{0 \le r < h_1}+\sum_{0 \le r < h_3}+\sum_{0 \le r < h_6}
+\sum_{0 \le r < h_7}\bigr) \sigma_{j_3,r}
= \sum_{h \in \{h_1,h_7\}}\sigma_{p(h),[0,h)}
+\sigma_{j_3,[0,h_3)}+\sigma_{j_3,[0,h_6)},
\\
S(0)^{{\bf a}_{j_4+1}, \bs'_{j_4}}_{{\bf a}_{j_4}, \bs_{j_4}}: 
&\; \bigl(\sum_{0 \le r < h_0}+\sum_{0 \le r < h_3}+\sum_{0 \le r < h_5}
+\sum_{0 \le r < h_7}\bigr) \sigma_{j_4,r}
= \sum_{h \in \{h_0,h_5\}}\sigma_{p(h),[0,h)}
+\sigma_{j_4,[0,h_3)}+\sigma_{j_4, [0,h_7)}.
\end{split}
\end{equation}
In addition to these, the terms 
$S(0)^{{\bf a}_{j+1}, \bs'_j}_{{\bf a}_j, \bs_j}$ 
with $j \in \{1,\ldots, L\} \setminus \{o, j_1, \ldots, j_4\}$ in the LHS of \eqref{mc1}
also contribute to the exponent $B$.
They correspond to the diagonal  local transitions 
$\bs'_j = \bs_j$, ${\bf a}_{j+1} = {\bf a}_j$, which yield
$\sum_{0 \le r < s \le n}\sigma_{j,r}a_{j,s}$ by \eqref{sdiag}.
In Figure~\ref{fig:1}, the diagonal contributions come from sites 
$j$ with $a_{j,h}=1$ for some $h \in \{h_0,\ldots,h_9\}$.
They correspond to the horizontal line segments $j \in (p(h), p'(h))$ 
excluding $o, j_1, \ldots, j_4$ already counted in \eqref{pow1}.
Taking into account overlaps such as $(p(h_8), p'(h_8)) \ni j_2$, 
the total contribution from the diagonal parts is given by
\begin{equation}\label{pow2}
\begin{split}
&\sum_{h \in \{h_0,\ldots, h_9\}}
\sum_{j \in (p(h), p'(h))} \sigma_{j, [0,h)} 
-\sigma_{o, [0,h_3)} -\sigma_{o, [0,h_5)} 
-\sigma_{j_1, [0,h_6)}  
\\
&\quad-\sigma_{j_2, [0,h_2)}  -\sigma_{j_2, [0,h_6)}  -\sigma_{j_2, [0,h_8)}
-\sigma_{j_3, [0,h_3)}-\sigma_{j_3, [0,h_6)} 
-\sigma_{j_4, [0,h_3)} -\sigma_{j_4, [0,h_7)}.  
\end{split}
\end{equation}
Summing \eqref{pow1} and \eqref{pow2}, we obtain 
\begin{align}
\sum_{h \in \{h_0,\ldots, h_9\}} 
\sum_{j \in [p(h), p'(h))} \sigma_{j, [0,h)} = \sum_{h \in \{h_0,\ldots, h_9\}} \ell_h,
\end{align}
where $\ell_h$ is defined in \eqref{lh}.
This result coincides with the power of $t$ in the RHS of \eqref{mc1}, completing the proof.
\end{proof}

\subsection{Contribution from non-minimum carriers}

Let us evaluate the  summand in  \eqref{HHj2} corresponding to the non-minimum carriers
${\bf a}_1,\ldots, {\bf a}_L \in \BB^k$ with $k>d$.
The following argument is a natural generalization of {\em Step 2} in \cite[Sec.5.3]{AK25}. 
In general, non-minimum carriers are obtained by supplementing  
common letters that are not contained in the minimum carries in tableau representation.
Recall that  in the plot as  Figure \ref{fig:1}, the minimum carries are represented as the 
horizontal line segments 
whose height signifies the letter from $[0,n]$ contained in their tableau representations.
(This is explained in Remark \ref{re:md}.)
The supplemented letters can be depicted 
as extra horizontal lines that do not overlap the existing horizontal line segments.
An example of them is shown in Figure \ref{fig:ff}.

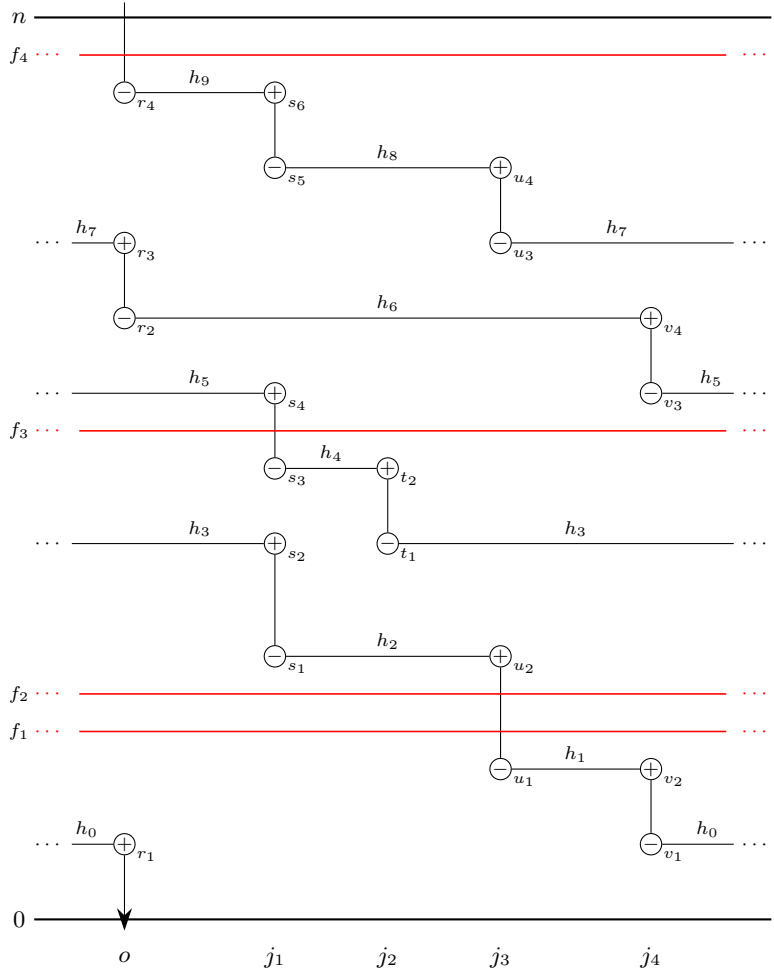
\begin{figure}[H]
  \centering
\begin{tikzpicture}[>=stealth, font=\small]
  \node (v1)  [draw, circle, inner sep=0pt] at (0,10) {\scriptsize$-$};
  \node [anchor=base west, xshift=-1.8pt, yshift=-2.3pt] at (v1.south east) {\tiny$r_4$};

  \node (v2)  [draw, circle, inner sep=0pt] at (2,10) {\scriptsize$+$};
  \node [anchor=base west, xshift=-1.8pt, yshift=-2.3pt] at (v2.south east) {\tiny$s_6$};

  \node (v3)  [draw, circle, inner sep=0pt] at (2,9) {\scriptsize$-$};
  \node [anchor=base west, xshift=-1.8pt, yshift=-2.3pt] at (v3.south east) {\tiny$s_5$};

  \node (v4)  [draw, circle, inner sep=0pt] at (5,9) {\scriptsize$+$};
  \node [anchor=base west, xshift=-1.8pt, yshift=-2.3pt] at (v4.south east) {\tiny$u_4$};

  \node (v5)  [draw, circle, inner sep=0pt] at (5,8) {\scriptsize$-$};
  \node [anchor=base west, xshift=-1.8pt, yshift=-2.3pt] at (v5.south east) {\tiny$u_3$};

  \node (v6)  [draw, circle, inner sep=0pt] at (0,8) {\scriptsize$+$};
  \node [anchor=base west, xshift=-1.8pt, yshift=-2.3pt] at (v6.south east) {\tiny$r_3$};

  \node (v7)  [draw, circle, inner sep=0pt] at (0,7) {\scriptsize$-$};
  \node [anchor=base west, xshift=-1.8pt, yshift=-2.3pt] at (v7.south east) {\tiny$r_2$};

  \node (v8)  [draw, circle, inner sep=0pt] at (7,7) {\scriptsize$+$};
  \node [anchor=base west, xshift=-1.8pt, yshift=-2.3pt] at (v8.south east) {\tiny$v_4$};

  \node (v9)  [draw, circle, inner sep=0pt] at (7,6) {\scriptsize$-$};
  \node [anchor=base west, xshift=-1.8pt, yshift=-2.3pt] at (v9.south east) {\tiny$v_3$};

  \node (v10) [draw, circle, inner sep=0pt] at (2,6) {\scriptsize$+$};
  \node [anchor=base west, xshift=-1.8pt, yshift=-2.3pt] at (v10.south east) {\tiny$s_4$};

  \node (v11) [draw, circle, inner sep=0pt] at (2,5) {\scriptsize$-$};
  \node [anchor=base west, xshift=-1.8pt, yshift=-2.3pt] at (v11.south east) {\tiny$s_3$};

  \node (v12) [draw, circle, inner sep=0pt] at (3.5,5) {\scriptsize$+$};
  \node [anchor=base west, xshift=-1.8pt, yshift=-2.3pt] at (v12.south east) {\tiny$t_2$};

  \node (v13) [draw, circle, inner sep=0pt] at (3.5,4) {\scriptsize$-$};
  \node [anchor=base west, xshift=-1.8pt, yshift=-2.3pt] at (v13.south east) {\tiny$t_1$};

  \node (v14) [draw, circle, inner sep=0pt] at (2,4) {\scriptsize$+$};
  \node [anchor=base west, xshift=-1.8pt, yshift=-2.3pt] at (v14.south east) {\tiny$s_2$};

  \node (v15) [draw, circle, inner sep=0pt] at (2,2.5) {\scriptsize$-$};
  \node [anchor=base west, xshift=-1.8pt, yshift=-2.3pt] at (v15.south east) {\tiny$s_1$};

  \node (v16) [draw, circle, inner sep=0pt] at (5,2.5) {\scriptsize$+$};
  \node [anchor=base west, xshift=-1.8pt, yshift=-2.3pt] at (v16.south east) {\tiny$u_2$};

  \node (v17) [draw, circle, inner sep=0pt] at (5,1) {\scriptsize$-$};
  \node [anchor=base west, xshift=-1.8pt, yshift=-2.3pt] at (v17.south east) {\tiny$u_1$};

  \node (v18) [draw, circle, inner sep=0pt] at (7,1) {\scriptsize$+$};
  \node [anchor=base west, xshift=-1.8pt, yshift=-2.3pt] at (v18.south east) {\tiny$v_2$};

  \node (v19) [draw, circle, inner sep=0pt] at (7,0) {\scriptsize$-$};
  \node [anchor=base west, xshift=-1.8pt, yshift=-2.3pt] at (v19.south east) {\tiny$v_1$};

  \node (v20) [draw, circle, inner sep=0pt] at (0,0) {\scriptsize$+$};
  \node [anchor=base west, xshift=-1.8pt, yshift=-2.3pt] at (v20.south east) {\tiny$r_1$};


    \draw (0,11.2)--(v1);
    \draw (v1) -- (v2) -- (v3) -- (v4) -- (v5);
    \draw (v6) -- (v7) -- (v8) -- (v9);
    \draw (v10) -- (v11) -- (v12) -- (v13);
    \draw (v14) -- (v15) -- (v16) -- (v17) -- (v18) -- (v19);
\draw[-{Stealth[length=3mm, width=2mm]}] (v20)--(0,-1.15);

  \node at ($(v1)!0.5!(v2)+(0,0.2)$) {\scriptsize$h_9$};
  \node at ($(v3)!0.5!(v4)+(0,0.2)$) {\scriptsize$h_8$};
  \node at (-0.5,8.2) {\scriptsize$h_7$};
  \node at ($(v5)!0.5!(8.1,8)+(0,0.2)$) {\scriptsize$h_7$};
  \node at ($(v7)!0.5!(v8)+(0,0.2)$) {\scriptsize$h_6$};
  \node at (1,6.2) {\scriptsize$h_5$};
  \node at ($(v9)!0.5!(8.6,6)+(0,0.2)$) {\scriptsize$h_5$};
  \node at ($(v11)!0.5!(v12)+(0,0.2)$) {\scriptsize$h_4$};
  \node at (1,4.2) {\scriptsize$h_3$};
  \node at ($(v13)!0.5!(8.5,4)+(0,0.2)$) {\scriptsize$h_3$};
  \node at ($(v15)!0.5!(v16)+(0,0.2)$) {\scriptsize$h_2$};
  \node at ($(v17)!0.5!(v18)+(0,0.2)$) {\scriptsize$h_1$};
  \node at (-0.5,0.2) {\scriptsize$h_0$};
  \node at ($(v19)!0.5!(8.5,0)+(0,0.2)$) {\scriptsize$h_0$};

  \draw (v5) -- (8.1,8);
  \draw (v9) -- (8.1,6);
  \draw (v13) -- (8.1,4);
  \draw (v19) -- (8.1,0);
  \draw (v6)  -- (-0.7,8);
  \draw (v10) -- (-0.7,6);
  \draw (v14) -- (-0.7,4);
  \draw (v20) -- (-0.7,0);

  \node at (8.4,8)  {\scriptsize$\cdots$};
  \node at (8.4,6)  {\scriptsize$\cdots$};
  \node at (8.4,4)  {\scriptsize$\cdots$};
  \node at (8.4,0)  {\scriptsize$\cdots$};

  \node at (-1.0,8) {\scriptsize$\cdots$};
  \node at (-1.0,6) {\scriptsize$\cdots$};
  \node at (-1.0,4) {\scriptsize$\cdots$};
  \node at (-1.0,0) {\scriptsize$\cdots$};

  \draw[red, semithick] (-0.6,1.5) -- (8.0,1.5);
  \node[red] at (-1.0,1.5) {\scriptsize$\cdots$};
  \node[red] at (8.4,1.5)  {\scriptsize$\cdots$};
  \node at (-1.4,1.5) {\scriptsize$f_1$};

  \draw[red, semithick] (-0.6,2.0) -- (8.0,2.0);
  \node[red] at (-1.0,2.0) {\scriptsize$\cdots$};
  \node[red] at (8.4,2.0)  {\scriptsize$\cdots$};
  \node at (-1.4,2.0) {\scriptsize$f_2$};

  \draw[red, semithick] (-0.6,5.5) -- (8.0,5.5);
  \node[red] at (-1.0,5.5) {\scriptsize$\cdots$};
  \node[red] at (8.4,5.5)  {\scriptsize$\cdots$};
  \node at (-1.4,5.5) {\scriptsize$f_3$};

  \draw[red, semithick] (-0.6,10.5) -- (8.0,10.5);
  \node[red] at (-1.0,10.5) {\scriptsize$\cdots$};
  \node[red] at (8.4,10.5)  {\scriptsize$\cdots$};
  \node at (-1.4,10.5) {\scriptsize$f_4$};

  \draw[thick] (-1.2,11) -- (8.6,11);
  \node at (-1.4,11) {$n$};

  \draw[thick] (-1.2,-1) -- (8.6,-1);
  \node at (-1.4,-1) {$0$};

  \node at (0, -1.5)  {$o$};
  \node at (2, -1.5)  {$j_1$};
  \node at (3.5, -1.5) {$j_2$};
  \node at (5, -1.5)   {$j_3$};
  \node at (7, -1.5)   {$j_4$};

\end{tikzpicture}
  \caption{ An example of non-minimum carriers inducing the same transition as Table~\ref{tab1}.
  Horizontal black line segments specify the minimum carriers as explained in Remark \ref{re:md}.
  The heights $f_1, f_2, f_3, f_4 \in [0,n]\setminus \{h_0,\ldots, h_9\}$
   of the extra red lines correspond to the supplemented letters to them.}
  \label{fig:ff}
\end{figure}

Let $\vec{\bs} \rightarrow \vec{\bs}'$ 
be the process described in Table \ref{tab1},
where particles of species $h_0, \ldots, h_g$ are moved.
Let ${\bf a}'_1,\ldots, {\bf a}'_L \in \BB^k$ with $k>d$
be the non-minimum carriers obtained by supplementing distinct numbers 
$f_1,\ldots, f_{k-d} \in \{\bar{h}_1,\ldots, \bar{h}_{n-g}\}= 
[0,n]\setminus \{h_0,\ldots, h_g\}$
to the minimum carriers ${\bf a}_1,\ldots, {\bf a}_L \in \BB^d$.
See \eqref{hhbar}.
Set ${\bf a}_j = (a_{j,0},\ldots, a_{j,n})$.
This construction is stated as 
\begin{align}\label{ap}
a_{j,f_1}=\cdots = a_{j,f_{k-d}}=0,
\qquad
{\bf a}'_j = {\bf a}_j + {\bf e}_{f_1}+ \cdots + {\bf e}_{f_{k-d}}
\end{align}
for all sites $j \in \{1,\ldots, L\}$, where ${\bf e}_j$ is defined in \eqref{ej}.

\begin{lemma}\label{le:nmm}
Under the definition \eqref{ap}, 
the product in the LHS of \eqref{mc1} for the 
non-minimum carries ${\bf a}'_1,\ldots, {\bf a}'_L \in \BB^k$  is given by
\begin{align}\label{nmc}
S(0)^{{\bf a}'_2, \bs'_1}_{{\bf a}'_1, \bs_1}
\cdots 
\dot{S}(0)^{{\bf a}'_{o+1}, \bs'_o}_{{\bf a}'_o, \bs_o} 
\cdots 
S(0)^{{\bf a}'_1, \bs'_L}_{{\bf a}'_L,  \bs_L}
= (-1)^{d+1}\prod_{h \in \{h_0, \ldots, h_g\}}\!\!(1 - t^{\sigma_{p(h),h}})\, t^{\ell_h}
\prod_{f\in \{f_1,\ldots, f_{k-d}\}}
t^{\KK_f + \Phi_f},
\end{align}
where $\Phi_f=\Phi_f(\vec{\bs}, \vec{\bs}')$ is defined in 
\eqref{wdef}--\eqref{rr2}.
\end{lemma}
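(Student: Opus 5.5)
The plan is to compare, factor by factor, the product for the supplemented carriers ${\bf a}'_j$ with the product for the minimal carriers ${\bf a}_j$. Lemma~\ref{le:mc} already evaluates the latter as $(-1)^{d+1}\prod_{h}(1-t^{\sigma_{p(h),h}})t^{\ell_h}$, so it suffices to show that the ratio of the two products equals $\prod_{f}t^{\KK_f+\Phi_f}$. Each matrix element is read off from Proposition~\ref{pr:s0} (for $\dot S(0)$ at site $o$ and for $S(0)$ at sites $j\neq o$ with $\bs_j\prec\bs'_j$) or from \eqref{sdiag} at $z=0$ (for the diagonal sites). The admissibility data are unchanged under \eqref{ap}: the numbers $r_1<\cdots<r_\mu$ in \eqref{bj0}, \eqref{bj1} depend only on $\bs_j,\bs'_j$, and the condition $\theta(\cdot\preceq\cdot)$ is unaffected. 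The factors $\prod(1-t^{j_{r_\nu}})$ depend only on ${\bf j}=\bs_j$ and the $r$'s, so they also coincide. Hence only the sign exponent $\alpha$ and the $t$-exponent $\beta$ can change.

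\textbf{Sign.} At site $o$, \eqref{alp1} adds $\sum_f\theta(f\in U^+_1)$. Since $f$ is not a moved species, $f\notin\{r_1,\ldots,r_\mu\}$, and $f\in(r_{2i-1},r_{2i})$ would place $f$ strictly inside a segment where the column-$o$ path passes between the heights $r_{2i-1}$ and $r_{2i}$. I will argue via the picture of Figure~\ref{fig:ff} that the extra red lines cross column $o$ entering from the left, and the count $a_{[r_1,r_2)}+\cdots$ in \eqref{dpl} counts lines entering from the left. The parity then changes by an even amount, or equivalently the red lines avoid $U_1^+$ because they must not overlap black segments. The same reasoning applies at sites $j\ne o$ with \eqref{alp0}. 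I expect this check to show that the sign is unchanged; it is the first thing to verify carefully on Example~\ref{ex:d}.

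\textbf{Power of $t$.} This is the main computation. Sum the increments of $\beta$ over all $L$ sites. The common first term $\sum_{r<s}j_ra_s$ in \eqref{bet0}, \eqref{bet1} and \eqref{sdiag} increases at every site $j$ by $\sum_f\sigma_{j,[0,f)}$. Summed over $j$, this gives $\sum_f\sum_{u<f}m_u=\sum_f\KK_f$, which produces the $t^{\KK_f}$ factor. The remaining increments come from the interval terms $(ja)_{\Omega}$. At site $o$, \eqref{bet1} adds $\sigma_{o,f}$ precisely when $f\in U_0=[0,r_1)\sqcup(r_2,r_3)\sqcup\cdots\sqcup(r_\mu,n]$, which is $\varphi'_f(\bs_o,\bs'_o)$ in \eqref{phip}. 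At $j\neq o$ with $\bs_j\prec\bs'_j$, \eqref{bet0} adds $\sigma_{j,f}$ when $f\in(r_1,r_2)\sqcup\cdots$, which is $\varphi_f(\bs_j,\bs'_j)$ in \eqref{phi}. Diagonal sites add nothing beyond the first term. Summing these increments gives $\Phi_f$ by \eqref{wdef}, and hence the factor $t^{\Phi_f}$.

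The main obstacle is the sign bookkeeping: I must show that the $\alpha$-increments $\sum_f\theta(f\in U_1^+)$ (or $U_1$) summed over the non-diagonal sites are even. I would try to prove this by noting that the red line at height $f$ runs through all sites. It lies in $U_1$ at a site exactly when it sits between an odd-indexed and an even-indexed $r$. Following the descending path of Figure~\ref{fig:1} around one full wrap, such crossings should occur in pairs, or else vanish identically because $\sigma_{j,f}$-type constraints force $f\notin U_1$. If this parity claim fails, I would instead recompute directly from \eqref{L42}--\eqref{L43}, using the fact that the overall sign is determined by the horizontal lines entering site $o$.
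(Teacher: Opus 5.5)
Your route is the paper's. The factors $\prod(1-t^{\#})$ and the conditions $\theta(\cdot\preceq\cdot)$ are unaffected by \eqref{ap}. Your $t$-exponent bookkeeping is also the paper's: the first terms of \eqref{bet0}, \eqref{bet1}, \eqref{sdiag} sum to $\sum_j\sigma_{j,[0,f)}=\KK_f$, and the interval terms give $\varphi'_f$ at $o$ and $\varphi_f$ at $j\neq o$. What is missing is the sign, which you leave as an expectation, and two of the mechanisms you suggest are false.

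A red line is only forbidden to overlap \emph{horizontal} black segments; nothing forces $f\notin U_1$ at site $o$. Take (2w) of Example~\ref{ex:d}. The moved species are $0,1,4$, so $f\in\{2,3\}$, and both lie in $U_1=(r_1,r_2)=(0,4)$ at site $o$. Each supplemented letter therefore enters the count in \eqref{dpl} and flips the sign of the $\dot S(0)$ factor. The compensating flip comes from the third site, where $\bs-\bs'={\bf e}_1-{\bf e}_4$ and $f\in(1,4)$; the second site contributes nothing. So the crossings at $j\neq o$ alone have odd parity here, and evenness holds only after site $o$ is included.

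The missing step is a short conservation argument. Here $U_1=(r_1,r_2)\sqcup\cdots\sqcup(r_{\mu-1},r_\mu)$ is built from the $r_\nu$ of site $o$ in \eqref{rr1}. The $r_\nu$ at every site are moved species, so $f\notin\{r_\nu\}$. Then for every $j\neq o$ (diagonal sites included, where both sides vanish), with the $r_\nu$ of site $j$ from \eqref{rr2},
\[
\theta\bigl(f\in(r_1,r_2)\sqcup\cdots\sqcup(r_{\mu-1},r_\mu)\bigr)=\sum_{u\le f}(\sigma_{j,u}-\sigma'_{j,u}),
\]
while at site $o$ one has $\theta(f\in U_1)=\sum_{u\le f}(\sigma'_{o,u}-\sigma_{o,u})$. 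Since $\sum_{j=1}^L(\bs_j-\bs'_j)=0$, the sites $j\neq o$ together contribute exactly $\theta(f\in U_1)$ flips. Hence the red line at height $f$ changes the sign exponent by $2\theta(f\in U_1)\in\{0,2\}$.

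Geometrically, the down-right path of Figure~\ref{fig:1} crosses the level $f$ exactly once. If it does so at column $o$, then $f\in U_0$ there and no sign changes anywhere. If it does so at a vertical black segment of some $j\neq o$, then $f\in U_1$ at $o$ and there are exactly two flips. This is what the paper's ``obviously even'' amounts to; with it supplied, your proof is complete.
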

\begin{proof}
We investigate the effect of introducing the red lines as in  Figure \ref{fig:ff} 
on the three types of factors considered individually in the proof of Lemma \ref{le:mc}.
The relevant formulas are \eqref{Sz0}--\eqref{bj1}.

$\bullet$ The factor of the form $\prod(1-t^\#)$.
There is no change in the difference $\vec{\bs}' -  \vec{\bs}$,
hence such factors in \eqref{Sz0} and \eqref{Sz1} neither change.

$\bullet$  The sign factor.
Let us illustrate the effect of red lines on \eqref{alp0} and \eqref{alp1}
along the example in Figure \ref{fig:ff}.
From \eqref{alp1}, we have an extra sign for each red line 
passing through site $o$ in the vertical segment of heights in $[r_1,r_2)$ and $[r_3, r_4)$.
From \eqref{alp0}, we also have an extra sign for each crossing of a red line and a vertical black line.
Obviously, their numbers are even and the  product is $+1$.
From \eqref{sdiag}, diagonal local transitions do not contribute a sign factor even in the presence of 
red lines.

$\bullet$ The power of $t$.
The relevant quantities are \eqref{bet0} and \eqref{bet1}.
In both of them, the first term $\sum_{0 < r < s \le n} j_r a_s$ gives rise to 
$\sum_{j=1}^L \sigma_{j,[0,f)}$ for the red line at height $f$.
This includes contributions from the sites undergoing  diagonal local transitions, and 
equals $\KK_f$ by definition \eqref{Ki}.
The remaining parts in \eqref{bet1} and \eqref{bet0} coincide with 
$\varphi'_f(\bs_o, \bs'_o)$ in \eqref{phip} for site $o$ and 
$\varphi_f(\bs_j, \bs'_j)$ in \eqref{phi} for the other site $j \neq o$, respectively.
Therefore their sum yield $\Phi_f(\vec{\bs}, \vec{\bs}')$ in \eqref{wdef}.
\end{proof}

Lemma \ref{le:nmm} includes Lemma \ref{le:mc} as the special case $k=d$.

\subsection{Completion of proof of Theorem \ref{th:main1}}\label{ss:proof1}
According to \eqref{HHj2}, we first take the sum of \eqref{nmc}
over all the not necessarily minimum carriers 
${\bf a}'_1,\ldots, {\bf a}'_{k-d} \in \BB^k$ for fixed $k \in [d,  n+1]$.
This sum is translated to the sum over all the subsets $\{f_1,\ldots, f_{k-d} \} \subseteq 
\{\bar{h}_1, \ldots, \bar{h}_{n-g}\}$ in \eqref{hhbar}.
By the definition \eqref{ekb},  the result is expressed in terms of the elementary symmetric function as
\begin{align}
(-1)^{d+1}\prod_{h \in \{h_0, \ldots, h_g\}}\!\!(1 - t^{\sigma_{p(h),h}})\, t^{\ell_h}\times 
e_{k-d}(t^{\KK_{\bar{h}_1}+\Phi_{\bar{h}_1}},
\ldots, 
t^{\KK_{\bar{h}_{n-g}}+\Phi_{\bar{h}_{n-g}}}).
\end{align}
Further multiplying $(-1)^{k-1}$ and taking the sum over $k \in [d,n+1]$, we obtain
\begin{equation}
\begin{split}
&\prod_{h \in \{h_0, \ldots, h_g\}}\!\!(1 - t^{\sigma_{p(h),h}})\, t^{\ell_h}
\sum_{k=d}^{n+1}(-1)^{k-d}
e_{k-d}(t^{\KK_{\bar{h}_1}+\Phi_{\bar{h}_1}},\ldots, 
t^{\KK_{\bar{h}_{n-g}}+\Phi_{\bar{h}_{n-g}}})
\\
&= \prod_{h \in \{h_0, \ldots, h_g\}}\!\!(1 - t^{\sigma_{p(h),h}})\, t^{\ell_h}
\prod_{h \in \{\bar{h}_1, \ldots, \bar{h}_{n-g}\}}
(1 - t^{\KK_h + \Phi_h}).
\end{split}
\end{equation}
In the last step, we have used \eqref{ekg} together with the condition
$d \le g$ mentioned at the end of Remark~\ref{re:md}.
This establishes \eqref{HHj2}.
In particular, it yields zero for unwanted~$\bs'$
(cf.~Example~\ref{ex:d}) as a consequence of
Proposition~\ref{pr:zr}.
This completes the proof of Theorem~\ref{th:main1}.

\section{Zamolodchikov-Faddeev algebra}\label{sec:zf}

\subsection{Strange five-vertex model}

Consider a two-dimensional vertex model whose local edge states take values in 
$\{0,1\}$.  
Among the sixteen possible configurations around a vertex, 
we assign $t$-oscillator valued  ``Boltzmann weights'' to the following five, 
and zero to the remaining eleven.\footnote{The configurations and associated weights in~\eqref{s5V} 
are used solely for defining the matrix product operators 
\(A_\alpha(z)\) in this section, and should not be confused 
with those appearing elsewhere in the context of 
\(R\)-matrices or transfer matrices.}
\begin{equation}
\newcommand{\fvm}[4]{
\begin{tikzpicture}[scale=.35,>=latex]
\draw[->] (-1,0) node[anchor=east]{$#1$} -- (1,0) node[anchor=west]{$#2$};
\draw[->] (0,-1) node[anchor=north]{$#3$} -- (0,1) node[anchor=south]{$#4$};
\end{tikzpicture}
}%
\begin{array}{*{5}{@{\hspace{20pt}}c}}
\fvm{1}{1}{1}{1} & \fvm{0}{0}{0}{1} & \fvm{1}{1}{0}{0} & \fvm{1}{0}{0}{1} & \fvm{0}{1}{1}{1} \\
1 & 1 & \kk & \aaa^- & \aaa^+
\end{array}
\label{s5V}
\end{equation}
Here the $t$-oscillator operators $\kk$, $\aaa^+$, and $\aaa^-$ act on the Fock space 
$\FF = \bigoplus_{m \in \Z_{\ge 0}}\C |m\rangle$ as
\begin{align}\label{tom}
\kk |m \rangle = t^m |m\rangle,
\quad
\aaa^+ |m \rangle = |m+1 \rangle,
\quad
\aaa^- |m \rangle = (1 - t^m)\,|m-1 \rangle.
\end{align}
It is a representation of the $t$-oscillator algebra 
\begin{align}\label{apc}
\ok\, \ap = t \ap \ok, \quad \ok\, \am = t^{-1} \am \ok,
\quad
\ap\am = 1-\ok,\quad \am \ap = 1-t\ok.
\end{align}
The model was introduced in~\cite[Eq.~(54)]{KOS24}, 
where the indices $0$ and $1$ are interchanged compared with the present convention.
Although the vertices are strange in that the usual arrow conservation 
does not hold, 
it nevertheless admits a natural interpretation in terms of the multiline-queue 
construction of stationary states~\cite{KOS24}. See also \cite{CMW22}.
One may also regard the diagrams in~\eqref{s5V} as being equipped with an 
additional arrow perpendicular to the vertex, representing the $t$-oscillator 
acting on the Fock space.

\subsection{Quantized CTM $A_0(z), \ldots,  A_n(z)$}

We introduce the operators $A_0(z), \ldots, A_n(z)$.
For $n=1$, we set 
\begin{align}\label{an1}
A_0(z)=1, \quad A_1(z) = z.
\end{align}
For $n \ge 2$, they are defined as the partition function of the strange five-vertex model 
in a triangular region as follows:
\begin{equation}\label{ctm1}
A_{i}(z) = 
\begin{tikzpicture}[scale=0.8,>=latex,baseline=-2.5cm,rotate=270]
\draw[->, rounded corners] (3.5,-1) -- (1,-1) -- (1,-0.2)
  node[anchor=west] {$\bar{\delta}_{i, n-1}$};
\draw (3.8,-1) node {$\vdots$};
\draw[-] (4.5,-1) -- (5.8,-1);
\fill[black] (5.45,-1) circle (.1) node[anchor=west, scale=.8] {$z$};

\draw[->, rounded corners] (3.5,-2) -- (2,-2) -- (2,-0.2)
  node[anchor=west] {$\bar{\delta}_{i, n-2}$};
\draw (3.8,-2) node {$\vdots$};
\draw[-] (4.5,-2) -- (5.8,-2);
\fill[black] (5.45,-2) circle (.1) node[anchor=west, scale=.8] {$z$};

\draw[->, rounded corners] (3.5,-3) -- (3,-3) -- (3,-0.2)
  node[anchor=west] {$\bar{\delta}_{i, n-3}$};
\draw (3.8,-3) node {$\vdots$};
\draw[-] (4.5,-3) -- (5.8,-3);
\fill[black] (5.45,-3) circle (.1) node[anchor=west, scale=.8] {$z$};

\draw[->, rounded corners] (5.8,-5) -- (5,-5) -- (5,-0.2)
  node[anchor=west] {$\bar{\delta}_{i,0}$};
\fill[black] (5.45,-5) circle (.1) node[anchor=west, scale=.8] {$z$};

\draw (4,-4) node {\rotatebox{90}{$\ddots$}};
\draw (5.5,-3.8) node {$\cdots$};
\draw (3.8,-0) node {$\vdots$};
\end{tikzpicture}
\end{equation}
where $\bar{\delta}_{i,j} = 1-\delta_{i,j}$ and 
the symbol $\bullet z$ means $1$ or $z$ according as the corresponding 
edge variable is $0$ or $1$, respectively.
The diagram indicates that all edge variables are summed over $\{0,1\}$.  
The variables on the right outgoing edges are fixed as shown, 
while no boundary condition is imposed along the bottom.  
Different vertices carry independent copies of the $t$-oscillators, 
so that $A_i(z)$ acts on $\FF^{\otimes n(n-1)/2}$.  

We number the vertices in~\eqref{ctm1} from top to bottom in the rightmost column, 
then proceed similarly in the next column to the left, and so on, 
assigning labels $1,\ldots,n(n-1)/2$ in this order.  
The $t$-oscillator generators acting on the $i$th copy of the Fock space~$\FF$ 
under this numbering will be distinguished by the subscript~$i$ when $n \ge 3$;  
generators with different subscripts commute.  

The diagram \eqref{ctm1} is formally similar to \cite[Fig.~13.1(b)]{Bax83};
however, it is an {\em operator} acting along the ``third'' direction
perpendicular to the sheet.
In this sense, we refer to \eqref{ctm1} as a 
{\em (quantized) corner transfer matrix} (CTM).
It is related to the operator $X_i(z)$ introduced in~\cite[Def.~15]{KOS24} by
\begin{align}\label{kosX}
A_i(z) = z^n X_{n-i}(z^{-1}), \qquad i = 0,1,\ldots,n.
\end{align}
In particular, the definition~\eqref{an1} originates from~\cite[Eq.~(84)]{KOS24}.
The CTM $A_i(z)$ is a basic constituent of the matrix product formulas for
stationary probabilities of multispecies ASEP \cite{KOS24} and $t$-PushTASEP \cite{AK25}
in the capacity-$1$ case.
Various versions of such operators have appeared in the literature; see, for example,
\cite{CDW15,PEM09}.
The formulation adopted here is the simplest one devised to date,
requiring only a {\em colorless} (i.e., two-state) model.

\begin{example}\label{ex:akos}
The following examples are deduced from \cite[Ex.18]{KOS24} and \eqref{kosX}.
For $n=2$, the operators $A_0(z), A_1(z), A_2(z)$ are given by
\begin{align*}
\raisebox{2pt}{$A_2(z)$} & \begin{array}{@{\;=}c@{\;+\;}c} \begin{tikzpicture}[scale=.65,>=latex,rounded corners,baseline=-1cm,rotate=-90]
\draw[->] (1.7,0) node[anchor=north] {$1$} -- (0,0) -- (0,0.7) node[anchor=west] {$1$};
\draw[->] (1.7,-1) node[anchor=north] {$1$} -- (1,-1) -- (1,0.7) node[anchor=west] {$1$};
\end{tikzpicture}z^2
& \begin{tikzpicture}[scale=.65,>=latex,rounded corners,baseline=-1cm,rotate=-90]
\draw[->] (1.7,0) node[anchor=north] {$1$} -- (0,0) -- (0,0.7) node[anchor=west] {$1$};
\draw[->] (1.7,-1) node[anchor=north] {$0$} -- (1,-1) -- (1,0.7) node[anchor=west] {$1$};
\end{tikzpicture}
z
\\
z^2 & z \aaa^+,
\end{array}
&
\raisebox{2pt}{$A_1(z)$} & \begin{array}{@{\;=}c} \begin{tikzpicture}[scale=.65,>=latex,rounded corners,baseline=-1cm,rotate=-90]
\draw[->] (1.7,0) node[anchor=north] {$0$} -- (0,0) -- (0,0.7) node[anchor=west] {$0$};
\draw[->] (1.7,-1) node[anchor=north] {$1$} -- (1,-1) -- (1,0.7) node[anchor=west] {$1$};
\end{tikzpicture}
z
\\
z \kk,
\end{array}
&
\raisebox{2pt}{$A_0(z)$} & \begin{array}{@{\;=}c@{\;+\;}c} \begin{tikzpicture}[scale=.65,>=latex,rounded corners,baseline=-1cm,rotate=-90]
\draw[->] (1.7,0) node[anchor=north] {$0$} -- (0,0) -- (0,0.7) node[anchor=west] {$1$};
\draw[->] (1.7,-1) node[anchor=north] {$1$} -- (1,-1) -- (1,0.7) node[anchor=west] {$0$};
\end{tikzpicture}
z
& \begin{tikzpicture}[scale=.65,>=latex,rounded corners,baseline=-1cm,rotate=-90]
\draw[->] (1.7,0) node[anchor=north] {$0$} -- (0,0) -- (0,0.7) node[anchor=west] {$1$};
\draw[->] (1.7,-1) node[anchor=north] {$0$} -- (1,-1) -- (1,0.7) node[anchor=west] {$0$};
\end{tikzpicture}
\\
z \aaa^- & 1.
\end{array}
\end{align*}

For $n=3$, the operators $A_0(z), A_1(z), A_2(z),A_3(z)$ are given by
\begin{align*}
\raisebox{9pt}{$A_3(z)$} & \begin{array}{@{\;=}c@{\;+\;}c@{\;+\;}c@{\;+\;}c@{\;+\;}c} \begin{tikzpicture}[scale=.6,>=latex,rounded corners,baseline=-1cm,rotate=-90]
\draw[->] (2.7,0) node[anchor=north] {$1$} -- (0,0) -- (0,0.7) node[anchor=west] {$1$};
\draw[->] (2.7,-1) node[anchor=north] {$1$} -- (1,-1) -- (1,0.7) node[anchor=west] {$1$};
\draw[->] (2.7,-2) node[anchor=north] {$1$} -- (2,-2) -- (2,0.7) node[anchor=west] {$1$};
\draw (1.5,0) node[anchor=west,scale=.7] {$1$};
\draw (1.5,-1) node[anchor=west,scale=.7] {$1$};
\draw (2,-.5) node[anchor=north,scale=.7] {$1$};
\end{tikzpicture}z^3
&
\begin{tikzpicture}[scale=.6,>=latex,rounded corners,baseline=-1cm,rotate=-90]
\draw[->] (2.7,0) node[anchor=north] {$1$} -- (0,0) -- (0,0.7) node[anchor=west] {$1$};
\draw[->] (2.7,-1) node[anchor=north] {$0$} -- (1,-1) -- (1,0.7) node[anchor=west] {$1$};
\draw[->] (2.7,-2) node[anchor=north] {$1$} -- (2,-2) -- (2,0.7) node[anchor=west] {$1$};
\draw (1.5,0) node[anchor=west,scale=.7] {$1$};
\draw (1.5,-1) node[anchor=west,scale=.7] {$0$};
\draw (2,-.5) node[anchor=north,scale=.7] {$1$};
\end{tikzpicture}
z^2
&
\begin{tikzpicture}[scale=.6,>=latex,rounded corners,baseline=-1cm,rotate=-90]
\draw[->] (2.7,0) node[anchor=north] {$1$} -- (0,0) -- (0,0.7) node[anchor=west] {$1$};
\draw[->] (2.7,-1) node[anchor=north] {$0$} -- (1,-1) -- (1,0.7) node[anchor=west] {$1$};
\draw[->] (2.7,-2) node[anchor=north] {$1$} -- (2,-2) -- (2,0.7) node[anchor=west] {$1$};
\draw (1.5,0) node[anchor=west,scale=.7] {$1$};
\draw (1.5,-1) node[anchor=west,scale=.7] {$1$};
\draw (2,-.5) node[anchor=north,scale=.7] {$0$};
\end{tikzpicture}
z^2
&
\begin{tikzpicture}[scale=.6,>=latex,rounded corners,baseline=-1cm,rotate=-90]
\draw[->] (2.7,0) node[anchor=north] {$1$} -- (0,0) -- (0,0.7) node[anchor=west] {$1$};
\draw[->] (2.7,-1) node[anchor=north] {$1$} -- (1,-1) -- (1,0.7) node[anchor=west] {$1$};
\draw[->] (2.7,-2) node[anchor=north] {$0$} -- (2,-2) -- (2,0.7) node[anchor=west] {$1$};
\draw (1.5,0) node[anchor=west,scale=.7] {$1$};
\draw (1.5,-1) node[anchor=west,scale=.7] {$1$};
\draw (2,-.5) node[anchor=north,scale=.7] {$1$};
\end{tikzpicture}
z^2
&
\begin{tikzpicture}[scale=.6,>=latex,rounded corners,baseline=-1cm,rotate=-90]
\draw[->] (2.7,0) node[anchor=north] {$1$} -- (0,0) -- (0,0.7) node[anchor=west] {$1$};
\draw[->] (2.7,-1) node[anchor=north] {$0$} -- (1,-1) -- (1,0.7) node[anchor=west] {$1$};
\draw[->] (2.7,-2) node[anchor=north] {$0$} -- (2,-2) -- (2,0.7) node[anchor=west] {$1$};
\draw (1.5,0) node[anchor=west,scale=.7] {$1$};
\draw (1.5,-1) node[anchor=west,scale=.7] {$1$};
\draw (2,-.5) node[anchor=north,scale=.7] {$0$};
\end{tikzpicture}
z
\\
z^3 & z^2 \aaa_1^+ \kk_3 & z^2 \aaa_2^+ \aaa_3^- & z^2 \aaa_3^+ & z \aaa_2^+,
\end{array}
\allowdisplaybreaks \\
\raisebox{9pt}{$A_2(z)$} & \begin{array}{@{\;=}c@{\;+\;}c}  \begin{tikzpicture}[scale=.6,>=latex,rounded corners,baseline=-1cm,rotate=-90]
\draw[->] (2.7,0) node[anchor=north] {$0$} -- (0,0) -- (0,0.7) node[anchor=west] {$0$};
\draw[->] (2.7,-1) node[anchor=north] {$1$} -- (1,-1) -- (1,0.7) node[anchor=west] {$1$};
\draw[->] (2.7,-2) node[anchor=north] {$1$} -- (2,-2) -- (2,0.7) node[anchor=west] {$1$};
\draw (1.5,0) node[anchor=west,scale=.7] {$0$};
\draw (1.5,-1) node[anchor=west,scale=.7] {$1$};
\draw (2,-.5) node[anchor=north,scale=.7] {$1$};
\end{tikzpicture}
z^2
&
\begin{tikzpicture}[scale=.6,>=latex,rounded corners,baseline=-1cm,rotate=-90]
\draw[->] (2.7,0) node[anchor=north] {$0$} -- (0,0) -- (0,0.7) node[anchor=west] {$0$};
\draw[->] (2.7,-1) node[anchor=north] {$1$} -- (1,-1) -- (1,0.7) node[anchor=west] {$1$};
\draw[->] (2.7,-2) node[anchor=north] {$0$} -- (2,-2) -- (2,0.7) node[anchor=west] {$1$};
\draw (1.5,0) node[anchor=west,scale=.7] {$0$};
\draw (1.5,-1) node[anchor=west,scale=.7] {$1$};
\draw (2,-.5) node[anchor=north,scale=.7] {$1$};
\end{tikzpicture}
z
\\
z^2 \kk_1 \kk_2 & z \kk_1 \kk_2 \aaa_3^+,
\end{array}
\allowdisplaybreaks \\
\raisebox{9pt}{$A_1(z)$} & \begin{array}{@{\;=}c@{\;+\;}c@{\;+\;}c} \begin{tikzpicture}[scale=.6,>=latex,rounded corners,baseline=-1cm,rotate=-90]
\draw[->] (2.7,0) node[anchor=north] {$0$} -- (0,0) -- (0,0.7) node[anchor=west] {$1$};
\draw[->] (2.7,-1) node[anchor=north] {$1$} -- (1,-1) -- (1,0.7) node[anchor=west] {$0$};
\draw[->] (2.7,-2) node[anchor=north] {$1$} -- (2,-2) -- (2,0.7) node[anchor=west] {$1$};
\draw (1.5,0) node[anchor=west,scale=.7] {$0$};
\draw (1.5,-1) node[anchor=west,scale=.7] {$1$};
\draw (2,-.5) node[anchor=north,scale=.7] {$1$};
\end{tikzpicture}
z^2
&
\begin{tikzpicture}[scale=.6,>=latex,rounded corners,baseline=-1cm,rotate=-90]
\draw[->] (2.7,0) node[anchor=north] {$0$} -- (0,0) -- (0,0.7) node[anchor=west] {$1$};
\draw[->] (2.7,-1) node[anchor=north] {$1$} -- (1,-1) -- (1,0.7) node[anchor=west] {$0$};
\draw[->] (2.7,-2) node[anchor=north] {$0$} -- (2,-2) -- (2,0.7) node[anchor=west] {$1$};
\draw (1.5,0) node[anchor=west,scale=.7] {$0$};
\draw (1.5,-1) node[anchor=west,scale=.7] {$1$};
\draw (2,-.5) node[anchor=north,scale=.7] {$1$};
\end{tikzpicture}
z
&
\begin{tikzpicture}[scale=.6,>=latex,rounded corners,baseline=-1cm,rotate=-90]
\draw[->] (2.7,0) node[anchor=north] {$0$} -- (0,0) -- (0,0.7) node[anchor=west] {$1$};
\draw[->] (2.7,-1) node[anchor=north] {$0$} -- (1,-1) -- (1,0.7) node[anchor=west] {$0$};
\draw[->] (2.7,-2) node[anchor=north] {$1$} -- (2,-2) -- (2,0.7) node[anchor=west] {$1$};
\draw (1.5,0) node[anchor=west,scale=.7] {$0$};
\draw (1.5,-1) node[anchor=west,scale=.7] {$0$};
\draw (2,-.5) node[anchor=north,scale=.7] {$1$};
\end{tikzpicture}
z
\\
z^2 \aaa_1^- \kk_2 & z \aaa_1^- \kk_2 \aaa_3^+ & z \kk_2 \kk_3,
\end{array}
\allowdisplaybreaks \\
\raisebox{9pt}{$A_0(z)$} & \begin{array}{@{\;=}c@{\;+\;}c@{\;+\;}c@{\;+\;}c@{\;+\;}c} \begin{tikzpicture}[scale=.6,>=latex,rounded corners,baseline=-1cm,rotate=-90]
\draw[->] (2.7,0) node[anchor=north] {$0$} -- (0,0) -- (0,0.7) node[anchor=west] {$1$};
\draw[->] (2.7,-1) node[anchor=north] {$1$} -- (1,-1) -- (1,0.7) node[anchor=west] {$1$};
\draw[->] (2.7,-2) node[anchor=north] {$1$} -- (2,-2) -- (2,0.7) node[anchor=west] {$0$};
\draw (1.5,0) node[anchor=west,scale=.7] {$1$};
\draw (1.5,-1) node[anchor=west,scale=.7] {$1$};
\draw (2,-.5) node[anchor=north,scale=.7] {$1$};
\end{tikzpicture}
z^2
&
\begin{tikzpicture}[scale=.6,>=latex,rounded corners,baseline=-1cm,rotate=-90]
\draw[->] (2.7,0) node[anchor=north] {$0$} -- (0,0) -- (0,0.7) node[anchor=west] {$1$};
\draw[->] (2.7,-1) node[anchor=north] {$1$} -- (1,-1) -- (1,0.7) node[anchor=west] {$1$};
\draw[->] (2.7,-2) node[anchor=north] {$0$} -- (2,-2) -- (2,0.7) node[anchor=west] {$0$};
\draw (1.5,0) node[anchor=west,scale=.7] {$1$};
\draw (1.5,-1) node[anchor=west,scale=.7] {$1$};
\draw (2,-.5) node[anchor=north,scale=.7] {$1$};
\end{tikzpicture}
z
&
\begin{tikzpicture}[scale=.6,>=latex,rounded corners,baseline=-1cm,rotate=-90]
\draw[->] (2.7,0) node[anchor=north] {$0$} -- (0,0) -- (0,0.7) node[anchor=west] {$1$};
\draw[->] (2.7,-1) node[anchor=north] {$0$} -- (1,-1) -- (1,0.7) node[anchor=west] {$1$};
\draw[->] (2.7,-2) node[anchor=north] {$1$} -- (2,-2) -- (2,0.7) node[anchor=west] {$0$};
\draw (1.5,0) node[anchor=west,scale=.7] {$1$};
\draw (1.5,-1) node[anchor=west,scale=.7] {$1$};
\draw (2,-.5) node[anchor=north,scale=.7] {$0$};
\end{tikzpicture}
z&
\begin{tikzpicture}[scale=.6,>=latex,rounded corners,baseline=-1cm,rotate=-90]
\draw[->] (2.7,0) node[anchor=north] {$0$} -- (0,0) -- (0,0.7) node[anchor=west] {$1$};
\draw[->] (2.7,-1) node[anchor=north] {$0$} -- (1,-1) -- (1,0.7) node[anchor=west] {$1$};
\draw[->] (2.7,-2) node[anchor=north] {$1$} -- (2,-2) -- (2,0.7) node[anchor=west] {$0$};
\draw (1.5,0) node[anchor=west,scale=.7] {$1$};
\draw (1.5,-1) node[anchor=west,scale=.7] {$0$};
\draw (2,-.5) node[anchor=north,scale=.7] {$1$};
\end{tikzpicture}
z
&
\begin{tikzpicture}[scale=.6,>=latex,rounded corners,baseline=-1cm,rotate=-90]
\draw[->] (2.7,0) node[anchor=north] {$0$} -- (0,0) -- (0,0.7) node[anchor=west] {$1$};
\draw[->] (2.7,-1) node[anchor=north] {$0$} -- (1,-1) -- (1,0.7) node[anchor=west] {$1$};
\draw[->] (2.7,-2) node[anchor=north] {$0$} -- (2,-2) -- (2,0.7) node[anchor=west] {$0$};
\draw (1.5,0) node[anchor=west,scale=.7] {$1$};
\draw (1.5,-1) node[anchor=west,scale=.7] {$1$};
\draw (2,-.5) node[anchor=north,scale=.7] {$0$};
\end{tikzpicture}
\\
z^2 \aaa_2^- & z \aaa_2^- \aaa_3^+ & z \aaa_3^- & z \aaa_1^+ \aaa_2^- \kk_3 & 1.
\end{array}
\end{align*}
\end{example}

\subsection{ZF algebra with structure function $\s_{1,1}(z)$}\label{ss:zfb}

Let us introduce the normalized version 
$\s(z) = \s_{1,1}(z)= S^{1}_{\;\,1}(z)/(1-tz)$ of Example \ref{ex:s11},
whose nonzero elements are given by 
\begin{align}\label{ns}
\s(z)^{{\bf e}_i, {\bf e}_i}_{{\bf e}_i, {\bf e}_i} = 1,
\quad
\s(z)^{{\bf e}_i, {\bf e}_j}_{{\bf e}_i, {\bf e}_j} = \frac{t^{\theta(i>j)}(1-z)}{1-tz},
\quad
\s(z)^{{\bf e}_i, {\bf e}_j}_{{\bf e}_j, {\bf e}_i} = \frac{z^{\theta(i>j)}(1-t)}{1-tz}
\quad (0 \le i \neq j \le n).
\end{align}
They satisfy 
\begin{align}
\sum_{0 \le a,b \le n}\s(z)^{{\bf e}_a, {\bf e}_b}_{{\bf e}_i, {\bf e}_j}&=1
\qquad (0 \le i,j \le n),
\label{stu}
\\
\s(1)^{{\bf e}_a, {\bf e}_b}_{{\bf e}_i, {\bf e}_j} &= \delta^a_j\delta^b_i,
\label{sep}
\end{align}
where the first one is called sum-to-unity condition.
 
The most fundamental property of the CTM operators 
$A_0(z),\ldots,A_n(z)$ is that they satisfy the 
Zamolod\-chikov--Faddeev (ZF) algebra, whose structure function is given by 
the basic stochastic $R$-matrix~\eqref{ns}:
\begin{proposition}\cite[Th.\,28]{KOS24}
\label{pr:zf1}
\begin{equation}\label{zf1}
A_b(x)\,A_a(y)
= \sum_{0 \le i,j \le n}
\s\!\left(\frac{y}{x}\right)^{{\bf e}_a,{\bf e}_b}_{{\bf e}_i,{\bf e}_j}\,
A_i(y)\,A_j(x)
\qquad (0 \le a,b \le n).
\end{equation}
\end{proposition}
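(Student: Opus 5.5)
Since the statement is quoted from \cite[Th.\,28]{KOS24}, I would transport that result through the dictionary \eqref{kosX} rather than reprove it from scratch. Under $A_i(z)=z^nX_{n-i}(z^{-1})$, the substitution $x\mapsto x^{-1}$, $y\mapsto y^{-1}$ sends $y/x$ to $x/y$. Together with the index reversal $i\mapsto n-i$, the relation \eqref{zf1} becomes a relation among the $X$'s with structure function $\s(x/y)^{{\bf e}_{n-a},{\bf e}_{n-b}}_{{\bf e}_{n-i},{\bf e}_{n-j}}$. The powers $x^n y^n$ are common to both sides and cancel.

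I would then compare, entry by entry, with the $R$-matrix of \cite[Eq.~(16)]{KOS24}. Example \ref{ex:s11} already records that $S(z)^{{\bf e}_a,{\bf e}_b}_{{\bf e}_i,{\bf e}_j}=(1-tz)R(z)^{n-a,n-b}_{n-i,n-j}$. It remains to check that the order of operators and the direction of the spectral ratio in the KOS form match \eqref{zf1} after the substitution. If the ratio comes out inverted, I would use the unitarity $\s(z)\s(z^{-1})\propto 1$ (with the permutation) to flip it. Verifying this bookkeeping is the main step of the transport.

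If a self-contained proof is wanted, I would instead argue by induction on $n$ using the triangular CTM \eqref{ctm1}. The key local fact is a ``$RLL$''-type relation: the strange five-vertex weights \eqref{s5V}, valued in the $t$-oscillator algebra \eqref{apc}, satisfy a Yang--Baxter-type exchange against $\s(y/x)$. This makes two adjacent columns of the triangle with spectral parameters $x$ and $y$ intertwine. Peeling off the outermost diagonal of the triangle for $n$ reduces $A_i(z)$ to the $n-1$ CTMs composed with a single row of vertices. I would then push $\s$ through the row, using the local exchange, and through the smaller triangles, using the induction hypothesis.

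The base cases are straightforward. For $n=1$, \eqref{an1} gives commuting scalars. For $n=2$, \eqref{zf1} can be verified directly from Example \ref{ex:akos} using \eqref{apc}. A useful consistency check along the way is that \eqref{sep} makes the relation trivial at $x=y$.

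The main obstacle in this direct route is the local exchange relation itself. Because arrow conservation fails, it is not a standard $RLL$ relation: the boundary values $\bar\delta_{i,j}$ and the weights $\bullet z$ along the bottom must absorb the discrepancy. I would first verify it by brute force on the $2\times2$ local configurations, and only then assemble the induction.
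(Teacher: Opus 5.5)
Your transport route is exactly how the paper handles this statement: it gives no independent argument, only the citation of \cite[Th.~28]{KOS24} read through the dictionary \eqref{kosX} together with the identification $\s(z)^{{\bf e}_a,{\bf e}_b}_{{\bf e}_i,{\bf e}_j}=R(z)^{n-a,n-b}_{n-i,n-j}$ with the $R$-matrix of \cite[Eq.~(16)]{KOS24} implied by Example~\ref{ex:s11}, so the proposal is correct (the inductive CTM route is an optional extra whose key local exchange relation you have not actually established). One caution about your fallback: unitarity of $\s(z)$ only re-expresses the same exchange relation solved for the other product, i.e.\ with $x\leftrightarrow y$, so it cannot turn a relation with $\s(x/y)$ into one with $\s(y/x)$ at fixed operator order and index placement---these are genuinely inequivalent, since already for $n=1$ (where the components with $a\neq b$ are real identities, not consequences of the $A_i$ being commuting scalars) the $(a,b)=(0,1)$ component with $\s(x/y)$ would force $(1-t)(x^2-y^2)=0$---so an inverted ratio would signal a slip in the bookkeeping rather than something unitarity can repair.
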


\subsection{ZF algebra with structure function $\s_{k,l}(z)$}\label{ss:zfg}

Let us generalize the result of the previous subsection to the case
where the structure function is replaced by $\s_{k,l}(z)$, obtained
through the symmetric fusion of the fundamental one
$\s_{1,1}(z)$, as described in Appendix~\ref{app:rs}.
We begin by introducing the operator $A_{\mathbf i}(z)$ labeled by a 
semistandard tableau 
$
\mathbf{i} = (i_1,\ldots,i_l) \in \TT_l,
$
of row shape and length $l$, whose entries range in $\{0,\ldots,n\}$; 
see~\eqref{tl}.

Given a tableau $\mathbf i = (i_1,\ldots,i_l) \in \TT_l$, let 
$\mathcal C(\mathbf i)$ denote the set of its {\em distinct} permutations.
For example, when $l=3$ one has
\begin{equation}\label{Cex}
\begin{split}
\mathcal{C}\bigl((0,1,4)\bigr) &= 
\{(0,1,4),(0,4,1),(1,0,4),(1,4,0),(4,0,1),(4,1,0)\},\\
\mathcal{C}\bigl((2,3,3)\bigr) &= 
\{(2,3,3),(3,2,3),(3,3,2)\},\\
\mathcal{C}\bigl((5,5,5)\bigr) &= 
\{(5,5,5)\}.
\end{split}
\end{equation}

The quantized CTM $A_{\mathbf i}(z)$ associated with 
$\mathbf i \in \TT_l$ is defined by a ``fusion" of the basic CTMs as
\begin{align}\label{Abi}
A_{\mathbf i}(z)
= \sum_{{\mathbf i}' \in \mathcal{C}(\mathbf i)}
A_{i'_1}(t^{l-1}z)\,
A_{i'_2}(t^{l-2}z)\cdots
A_{i'_l}(z)
\qquad ({\mathbf i} \in \TT_l), 
\end{align}
where the sum is taken over 
${\mathbf i}'=(i'_1,\ldots, i'_l) \in \mathcal{C}({\mathbf i})$.
Formally, this is a $t$-deformed monomial symmetric polynomial in 
$A_0(z),\ldots,A_n(z)$.  It acts linearly on the same space
$\FF^{\otimes n(n-1)/2}$ as in the $l=1$ case.

\begin{example}\label{ex:As}
Consider first the case $n=1$.  
From~\eqref{an1} we obtain the scalar operators
\begin{align*}
&A_{00}(z)=A_0(tz)A_0(z)=1,\qquad
A_{01}(z)=A_0(tz)A_1(z)+A_1(tz)A_0(z)=(1+t)z,\\
&A_{11}(z)=A_1(tz)A_1(z)=tz^2,\\[2pt]
&A_{000}(z)=1,\quad
A_{001}(z)=(1+t+t^2)z,\quad
A_{011}(z)=t(1+t+t^2)z^2,\quad
A_{111}(z)=t^3z^3.
\end{align*}
In general one easily verifies that
\begin{align}\label{Asn1}
A_{\mathbf s}(z)
= z^r\, t^{r(r-1)/2}\,
\qbinom l{r}_t 
\qquad
\text{for}\quad
\mathbf s =
(\underbrace{0,\ldots,0}_{l-r},
\underbrace{1,\ldots,1}_{r})
\in \TT_l,
\end{align}
where
$\qbinom l{r}_t = \frac{(t)_l}{(t)_r (t)_{l-r}}$
with 
$(t)_r = \prod_{j=1}^r (1-t^j)$ is the $t$-binomial coefficient.
\end{example}

\begin{example}\label{ex:A22}
Next consider the case $n=2$ and $l=2$.
Applying~\eqref{Abi} to Example~\ref{ex:akos}, we obtain
\begin{align*}
A_{22}(z)
&= t^2 z^4 + t(1+t)z^3 \ap + t z^2 \ap{}^2,\\
A_{12}(z)
&= t(1+t)z^3 \ok + t(1+t)z^2 \ap \ok,\\
A_{11}(z)
&= t z^2 \ok{}^2,\\
A_{02}(z)
&= (1+t)^2 z^2 + t(1+t)z^3 \am + (1+t)z \ap - t(1+t)z^2 \ok,\\
A_{01}(z)
&= (1+t)z \ok + (1+t)z^2 \am \ok,\\
A_{00}(z)
&= 1 + (1+t)z \am + t z^2 \am{}^2.
\end{align*}
In relation to Remark \ref{re:pos} given later,
we note that $A_{02}(z)$ is also expressible without a minus sign as
$A_{02}(z) = (1+t)z^2 + t(1+t)z^2\ap \am + t(1+t)z^3 \am + (1+t)z \ap$
by using \eqref{apc}.
\end{example}

\begin{remark}\label{re:asym}
Adding relation~\eqref{zf1} to its counterpart with $a$ and $b$ interchanged
and using~\eqref{stu}, we obtain
\begin{align}
A_b(x)A_a(y)+A_a(x)A_b(y)
= A_b(y)A_a(x)+A_a(y)A_b(x)
\end{align}
for any $0\le a,b\le n$. This is regarded as invariance under the exchange of $x$ and $y$.
By repeatedly applying it to neighboring spectral parameters, we find that
for any $\mathbf i\in\TT_l$,
\begin{align}\label{Asym}
\sum_{{\mathbf i}' \in\mathcal{C}(\mathbf i)}
A_{i'_1}(z_1)A_{i'_2}(z_2)\cdots A_{i'_l}(z_l)
\quad\text{is invariant under any permutation of }z_1,\ldots,z_l.
\end{align}
Thus the spectral parameters in~\eqref{Abi} may actually be reordered arbitrarily.
The ordering chosen there is the one suitable for the proof of the following
Theorem~\ref{th:SAA}.
\end{remark}

\begin{theorem}\label{th:SAA}
The operators $A_{\mathbf i}(z)$ defined in \eqref{Abi}
satisfy the Zamolodchikov-Faddeev algebra
\begin{align}\label{saa}
A_{\mathbf b}(x) A_{\mathbf a}(y) = \sum_{{\mathbf i} \in \TT_k, {\mathbf j} \in \TT_l}
\s_{k,l}\left(\frac{y}{x}\right)^{\mathbf{a}, \mathbf{b}}_{\mathbf{i}, \mathbf{j}}
A_{\mathbf i}(y)A_{\mathbf j}(x)\qquad
(\mathbf{a} \in \TT_k, \, \mathbf{b} \in \TT_l)
\end{align}
for any $k, l \in \Z_{\ge 1}$, 
where the structure function $\s_{k,l}(z)^{\mathbf{a}, \mathbf{b}}_{\mathbf{i}, \mathbf{j}}$
is given in Appendix \ref{app:rs}.
\end{theorem}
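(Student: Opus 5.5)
The plan is to derive \eqref{saa} from the basic ZF algebra \eqref{zf1} (Proposition~\ref{pr:zf1}) by the standard fusion argument, in parallel with the fusion construction of $\s_{k,l}(z)$ from $\s_{1,1}(z)$ recalled in Appendix~\ref{app:rs}.

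First, for words (not tableaux) introduce the unsymmetrized products
\[
A_{i_1\cdots i_k}(y) := A_{i_1}(t^{k-1}y)A_{i_2}(t^{k-2}y)\cdots A_{i_k}(y),
\]
so that $A_{\mathbf a}(y)$ is the sum of $A_{a'_1\cdots a'_k}(y)$ over $\mathbf a'\in\mathcal C(\mathbf a)$. The product $A_{\mathbf b}(x)A_{\mathbf a}(y)$ is then a sum of words of length $l+k$ with spectral parameters $t^{l-1}x,\ldots,x,t^{k-1}y,\ldots,y$. Apply \eqref{zf1} repeatedly to move each $A_{a'_\beta}(t^{k-\beta}y)$ leftward past all $A_{b'_\alpha}(t^{l-\alpha}x)$; this is a sequence of $kl$ elementary exchanges. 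Each exchange carries a factor $\s_{1,1}(t^{k-\beta}y/t^{l-\alpha}x)$, and the total coefficient is the matrix element of the ordered product
\[
\prod_{\alpha,\beta}\s_{1,1}\bigl(t^{k-\beta+\alpha-l}\,y/x\bigr),
\]
taken with the usual Yang--Baxter ordering, between the word $\mathbf a'\otimes\mathbf b'$ and the new word $\mathbf i'\otimes\mathbf j'$. The result is a sum of words $A_{i'_1}(t^{k-1}y)\cdots A_{i'_k}(y)A_{j'_1}(t^{l-1}x)\cdots A_{j'_l}(x)$.

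Second, identify this coefficient with the fused $R$-matrix. Summing over $\mathbf a'\in\mathcal C(\mathbf a)$ and $\mathbf b'\in\mathcal C(\mathbf b)$ means inserting the symmetrizer (i.e.\ the embedding $V_k\otimes V_l\hookrightarrow V_1^{\otimes k}\otimes V_1^{\otimes l}$) on the input side. By the definition in Appendix~\ref{app:rs}, the product of $\s_{1,1}$'s at these geometric strings of spectral parameters, restricted to symmetric tensors, equals $\s_{k,l}(y/x)$ composed with the corresponding projection. The ordering of spectral parameters in \eqref{Abi} (decreasing powers of $t$ from left to right) is precisely the one that makes the string $t^{k-1}y,\ldots,y$ match the fusion condition, namely the point where $\s_{1,1}(t^{-1})$ is a projector onto the $t$-symmetric subspace. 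I would verify this carefully against the appendix conventions.

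Third, show that the output words regroup into $A_{\mathbf i}(y)A_{\mathbf j}(x)$. Two facts are needed:
\begin{itemize}
\item[(a)] For fixed multisets, the coefficient of each output word depends only on the sorted tableaux $\mathbf i,\mathbf j$. This follows from the symmetry \eqref{Asym} of Remark~\ref{re:asym}, or equivalently from the fact that the image of the fused operator is symmetric. Note that \eqref{Asym} follows from \eqref{zf1} together with \eqref{stu}.
\item[(b)] The symmetrized sums reproduce exactly the distinct-permutation sums defining $A_{\mathbf i}$, with no multiplicity factors. This needs the precise normalization of the symmetric basis $v_{\mathbf i}$ used in the appendix gauge.
\end{itemize}

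I expect the main obstacle to be this bookkeeping in step three: the projector at $\s_{1,1}(t^{-1})$ is $t$-symmetric rather than plainly symmetric, so the coefficients of individual permuted words are not all equal. One must show that the sums over $\mathcal C(\mathbf i)$ in \eqref{Abi} absorb these $t$-weights exactly as the appendix normalization of $\s_{k,l}$ prescribes. My first approach would be to prove the case $k=1$, general $l$, by induction on $l$, using \eqref{zf1} and \eqref{stu}. I would then obtain general $k$ by iterating in the first factor, checking the normalization against the scalar case $n=1$, where \eqref{Asn1} gives explicit $t$-binomials.
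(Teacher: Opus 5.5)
Your skeleton is the paper's: expand $A_{\mathbf b}(x)A_{\mathbf a}(y)$ into words, apply \eqref{zf1} $kl$ times, recognize the accumulated coefficient as the diagram \eqref{f32}, and regroup. The gap is in the regrouping step, which is where the real content lies.

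Your claim (a) is that the coefficient $W$ of an output word $A_{i'_1}(t^{k-1}y)\cdots A_{i'_k}(y)A_{j'_1}(t^{l-1}x)\cdots A_{j'_l}(x)$ depends only on the sorted tableaux $\mathbf i,\mathbf j$. This does not follow from \eqref{Asym}. That remark is an identity among the operators $A_i$. It says nothing about the scalars $W$, which are rational functions built from $\s_{1,1}$ alone. Without knowing that $W$ is constant on $\mathcal C(\mathbf i)\times\mathcal C(\mathbf j)$, you cannot collect terms into $A_{\mathbf i}(y)A_{\mathbf j}(x)$.

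What is needed is the $R$-matrix statement the paper invokes, Remark~\ref{re:indep}. The weight \eqref{f32}, with the unit-weight sum over $\mathcal C(\mathbf a)\times\mathcal C(\mathbf b)$ on the superscript side, is invariant under permuting the subscript words. This comes from the Yang--Baxter equation:
\begin{itemize}
\item adjacent lines in \eqref{f32} carry spectral parameters in ratio $t$, and \eqref{ns} gives $\s(t)^{{\bf e}_a,{\bf e}_b}_{{\bf e}_i,{\bf e}_j}=\s(t)^{{\bf e}_a,{\bf e}_b}_{{\bf e}_j,{\bf e}_i}$;
\item a factor $\s(t)$ attached at the superscript end is absorbed by the sum over distinct permutations, by \eqref{stu} and weight conservation;
\item moving that factor to the subscript end by Yang--Baxter, it symmetrizes the subscripts.
\end{itemize}

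Your anticipated obstacle is a misidentification, and it contradicts your own (a). The coefficients of individual permuted words are exactly equal. For example, take $n=1$, $(k,l)=(1,2)$ and $z=y/x$. In $A_{01}(x)A_0(y)$ the words $A_0(y)A_0(tx)A_1(x)$ and $A_0(y)A_1(tx)A_0(x)$ both receive $(1-z)/(1-tz)$; the latter arises as a sum of two nontrivial products. The $t$-symmetric subspace is the image of $\s(t)$, so it sits on the superscript side, where the unit sum over distinct permutations washes it out. Also, $\s(z)$ has a pole at $z=t^{-1}$; the degenerate (projector) point is $z=t$.

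Consequently (b) is automatic. Since $\s_{k,l}$ is \emph{defined} by \eqref{f32}, once (a) holds every word in $\mathcal C(\mathbf i)\times\mathcal C(\mathbf j)$ carries the coefficient $\s_{k,l}(y/x)^{\mathbf a,\mathbf b}_{\mathbf i,\mathbf j}$. Summing over all words is the same as summing over $\mathbf i\in\TT_k$, $\mathbf j\in\TT_l$ and then over $\mathcal C(\mathbf i)\times\mathcal C(\mathbf j)$. This gives $A_{\mathbf i}(y)A_{\mathbf j}(x)$ with no multiplicities and no normalization question.

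Your fallback induction on $l$ does not avoid the point either. After the inductive step you must collect $\sum c\,A_{j'}(t^{l-1}x)A_{\mathbf j}(x)$ into $A_{\mathbf j''}(x)$. That again requires $c$ to depend only on the multiset $\{j'\}\cup\mathbf j$, which is the same independence property.
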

\begin{proof}
We illustrate the proof for the case $(k,l) = (3,2)$, 
for which the structure function is depicted in \eqref{f32}.
The general case is completely analogous.
For brevity, we denote 
$\s(z)^{{\bf e}_a,{\bf e}_b}_{{\bf e}_i,{\bf e}_j} = 
\s_{1,1}(z)^{{\bf e}_a,{\bf e}_b}_{{\bf e}_i,{\bf e}_j}$, which appears in \eqref{ns}, 
\eqref{zf1} and \eqref{sdg}, 
simply by $\s(z)^{a,b}_{i,j}$.
Substituting 
\begin{align*} 
A_{\mathbf b}(x) = \sum_{{\mathbf b}' \in \mathcal{C}({\mathbf b})}A_{b'_1}(tx)A_{b'_2}(x),
\quad
A_{\mathbf a}(x) = \sum_{{\mathbf a}' \in \mathcal{C}({\mathbf a})}
A_{a'_1}(t^2y)A_{a'_2}(ty)A_{a'_3}(y)
\end{align*}
into the LHS of \eqref{saa}, we obtain
\begin{align}\label{a40}
\sum_{{\mathbf a}', {\mathbf b}'}
A_{b'_1}(tx)A_{b'_2}(x)A_{a'_1}(t^2y)A_{a'_2}(ty)A_{a'_3}(y).
\end{align}
We first move $A_{b'_2}(x)$ to the right through
$A_{a'_1}(t^2 y) A_{a'_2}(t y) A_{a'_3}(y)$ by successive use of \eqref{zf1}:
\begin{align}\label{a41}
A_{b'_2}(x)A_{a'_1}(t^2y)A_{a'_2}(ty)A_{a'_3}(y) = 
\sum_{i_1',i_2',i_3', j'_2, j''_2, j'''_2}
\s(zt^2)^{a'_1, b'_2}_{i'_1, j'_2}\s(zt)^{a'_2, j'_2}_{i'_2,j''_2}\s(z)^{a'_3, j''_2}_{i'_3, j'''_2}
A_{i'_1}(t^2y)A_{i'_2}(ty)A_{i'_3}(y)A_{j'''_2}(x),
\end{align}
where $z=y/x$, and all summation indices range over $\{0,\ldots,n\}$.
Next, we move $A_{b'_1}(t x)$ through the three operators appearing in \eqref{a41}:
\begin{align}\label{a42}
A_{b'_1}(tx)A_{i'_1}(t^2y)A_{i'_2}(ty)A_{i'_3}(y) = 
\sum_{i''_1, i''_2, i''_3, j'_1,j''_1, j'''_1}
\s(zt)^{i'_1, b'_1}_{i''_1, j'_1}\s(z)^{i'_2, j'_1}_{i''_2,j''_1}\s(z/t)^{i'_3, j''_1}_{i''_3, j'''_1}
A_{i''_1}(t^2y)A_{i''_2}(ty)A_{i''_3}(y)A_{j'''_1}(tx).
\end{align}
Combining \eqref{a41} and \eqref{a42}, the expression \eqref{a40} becomes
\begin{align}\label{a43}
\sum_{i''_1, i''_2, i''_3} A_{i''_1}(t^2y)A_{i''_2}(ty)A_{i''_3}(y)
\sum_{j'''_1, j'''_2} A_{j'''_1}(tx)A_{j'''_2}(x)
W(z)^{{\mathbf a}, {\mathbf b}}_{(i''_1, i''_2, i''_3), (j'''_1, j'''_2)},
\end{align}
where the coefficient
$W(z)^{{\mathbf a},{\mathbf b}}_{(i''_1,i''_2,i''_3),,(j'''_1,j'''_2)}$
admits the following diagrammatic representation (cf.\ \eqref{sdg}):
\begin{equation}\label{ff32}
W(z)^{{\mathbf a}, {\mathbf b}}_{(i''_1, i''_2, i''_3), (j'''_1, j'''_2)}=
\sum_{\substack{(a'_1,a'_2,a'_3) \,\in \,\mathcal{C}({\bf a}) \\ (b'_1,b'_2)\, \in \,\mathcal{C}({\bf b})}}
\;\sum_{\substack{ i'_1,i'_2,i'_3 \\ j'_1,j''_1, j'_2,j''_2}}
\begin{tikzpicture}[
  baseline={([yshift=-1.2ex]current bounding box.center)},
  x=0.46cm,y=0.46cm,>=Stealth,thick,
  line cap=round,line join=round,
  scale=1.2
]
  \draw[->] (-0.8,6) -- (1.7,6);
  \draw[->] ( 2.3,6) -- (4.8,6);
  \draw[->] (-0.8,3) -- (1.7,3);
  \draw[->] ( 2.3,3) -- (4.8,3);
  \draw[->] (-0.8,0) -- (1.7,0);
  \draw[->] ( 2.3,0) -- (4.8,0);

  \draw[->] (0.5,-1.3) -- (0.5,1.15);
  \draw[->] (0.5,1.85) -- (0.5,4.15);
  \draw[->] (0.5,4.95) -- (0.5,7.05);  %

  \draw[->] (3.5,-1.3) -- (3.5,1.15);
  \draw[->] (3.5,1.85) -- (3.5,4.15);
  \draw[->] (3.5,4.95) -- (3.5,7.05);

  \node at (2.0,6) {\small $i'_1$};
  \node at (2.0,3) {\small $i'_2$};
  \node at (2.0,0) {\small $i'_3$};

  \node at (0.5,4.50) {\small $j'_1$};
  \node at (0.5,1.50) {\small $j''_1$};
  \node at (3.5,4.50) {\small $j'_2$};
  \node at (3.5,1.50) {\small $j''_2$};

  \def\r{0.35}
  \foreach \X in {0.5,3.5}{
    \foreach \Y in {0,3,6}{
      \draw[thin] (\X,\Y) ++(180:\r) arc (180:270:\r);
    }
  }

  \node[anchor=south east] at (0.40,-1.10) {\scriptsize $z/t$};
  \node[anchor=south east] at (0.40, 2.00) {\scriptsize $z$};
  \node[anchor=south east] at (0.40, 5.00) {\scriptsize $zt$};
  \node[anchor=south east] at (3.40,-1.00) {\scriptsize $z$};
  \node[anchor=south east] at (3.40, 2.00) {\scriptsize $zt$};
  \node[anchor=south east] at (3.40, 5.00) {\scriptsize $zt^2$};

  \node[anchor=west]  at (4.8,6)    {\small $a'_1$};
  \node[anchor=west]  at (4.8,3)    {\small $a'_2$};
  \node[anchor=west]  at (4.8,0)    {\small $a'_3$};
  \node[anchor=south] at (0.5,7.05) {\small $b'_1$}; 
  \node[anchor=south] at (3.5,7.05) {\small $b'_2$};

  \node[anchor=east]  at (-0.8,6) {\small $i''_1$};
  \node[anchor=east]  at (-0.8,3) {\small $i''_2$};
  \node[anchor=east]  at (-0.8,0) {\small $i''_3$};
  \node[anchor=north] at (0.5,-1.3) {\small $j'''_1$};
  \node[anchor=north] at (3.5,-1.3) {\small $j'''_2$};
\end{tikzpicture}
\end{equation}
The right and left columns correspond to
\eqref{a41} and \eqref{a42}, respectively. From
Remark~\ref{re:indep}, the quantity
$W(z)^{{\mathbf a},{\mathbf b}}_{(i''_1,i''_2,i''_3),(j'''_1,j'''_2)}$
is invariant under permutations of $(i''_1,i''_2,i''_3)$ and likewise
under permutations of $(j'''_1,j'''_2)$. Accordingly, the sum over
$i''_1,i''_2,i''_3$ can be reorganized as
\begin{align*}
\sum_{i''_1,i''_2,i''_3 \in \{0,\ldots,n\}}
 = \sum_{0 \le i_1 \le i_2 \le i_3 \le n}
   \sum_{(i''_1,i''_2,i''_3) \in \mathcal{C}\bigl((i_1,i_2,i_3)\bigr)}
 = \sum_{{\mathbf i} \in \TT_3}
   \sum_{{\mathbf i}'' \in \mathcal{C}({\mathbf i})}.
\end{align*}
See \eqref{tl} and \eqref{Cex} for the definitions of $\TT_3$ and
$\mathcal{C}({\mathbf i})$. Treating the sum over $j'''_1,j'''_2$ similarly,
\eqref{a43} can be written as
\begin{align}\label{a44}
\sum_{{\mathbf i} \in \TT_3}\;
\sum_{{\mathbf i}'' \in \mathcal{C}({\mathbf i})}
 A_{i''_1}(t^2 y)\,A_{i''_2}(t y)\,A_{i''_3}(y)
\sum_{{\mathbf j} \in \TT_2}\;
\sum_{{\mathbf j}''' \in \mathcal{C}({\mathbf j})}
 A_{j'''_1}(t x)\,A_{j'''_2}(x)\,
 W(z)^{{\mathbf a},{\mathbf b}}_{{\mathbf i}'',{\mathbf j}'''}.
\end{align}
Comparing the diagrams in \eqref{ff32} and \eqref{f32}, and taking
Remark~\ref{re:indep} into account, we obtain
$W(z)^{{\mathbf a},{\mathbf b}}_{{\mathbf i}'',{\mathbf j}'''}
 = \s_{3,2}(z)^{{\mathbf a},{\mathbf b}}_{{\mathbf i},{\mathbf j}}$
for any ${\mathbf i}'' \in \mathcal{C}({\mathbf i})$ and
${\mathbf j}''' \in \mathcal{C}({\mathbf j})$.
Thus \eqref{a44} equals
\begin{equation}\label{a45}
\begin{split}
&\sum_{{\mathbf i} \in \TT_3,\; {\mathbf j} \in \TT_2}
 \s_{3,2}(z)^{{\mathbf a},{\mathbf b}}_{{\mathbf i},{\mathbf j}}\,
 \Bigl(\sum_{{\mathbf i}'' \in \mathcal{C}({\mathbf i})}
 A_{i''_1}(t^2 y)\,A_{i''_2}(t y)\,A_{i''_3}(y)\Bigr)
\Bigl( \sum_{{\mathbf j}''' \in \mathcal{C}({\mathbf j})}
 A_{j'''_1}(t x)\,A_{j'''_2}(x)\Bigr)
 \\
& =
\sum_{{\mathbf i} \in \TT_3,\; {\mathbf j} \in \TT_2}
 \s_{3,2}(z)^{{\mathbf a},{\mathbf b}}_{{\mathbf i},{\mathbf j}}\,
 A_{\mathbf i}(y)\,A_{\mathbf j}(x).
\end{split}
\end{equation}
\end{proof}

\begin{remark}\label{re:sym}
By applying the sum-to-unity property \eqref{stu2} of  $\s_{k,l}(z)$ to the ZF-algebra relation \eqref{saa},
Remark \ref{re:asym} is generalized to arbitrary $k$ and $l$ as follows:
\begin{align*}
\sum_{{\mathbf a} \in \TT_k, {\mathbf b} \in \TT_l}
A_{\mathbf a}(x) A_{\mathbf b}(y) = 
\sum_{{\mathbf a} \in \TT_k, {\mathbf b} \in \TT_l}
A_{\mathbf a}(y) A_{\mathbf b}(x). 
\end{align*}
\end{remark}

From \eqref{sss} and $\TT_1 = \TT^1$ (see \eqref{tk}, \eqref{tl}), 
the special case $k=1$ of Theorem \ref{th:SAA} is stated in terms of $S^1_{\;\; l}$ as 
\begin{align}\label{zf3}
\sum_{{\mathbf i} \in \TT^1, \, {\mathbf j} \in \TT_l}
S^1_{\;\; l}\left(\frac{x}{w}\right)^{{\mathbf a}, {\mathbf b}}_{{\mathbf i}, {\mathbf j}}
A_{\mathbf i}(t^{l-1}x)A_{\mathbf j}(w) = \Bigl(1-\frac{t^l x}{w}\Bigr)
A_{\mathbf b}(w)A_{\mathbf a}(t^{l-1}x)
\qquad ({\mathbf a} \in \TT^1, {\mathbf b} \in \TT_l).
\end{align}

The following lemma will be a key ingredient in the proof of
Proposition~\ref{pr:T1}.

\begin{lemma}\label{le:SA}
For ${\mathbf a}, {\mathbf i} \in \TT_1$ and ${\mathbf b}, {\mathbf j}  \in \TT_l$ satisfying the 
weight conservation 
(i.e., ${\mathbf a} + {\mathbf b} = {\mathbf i} + {\mathbf j}$ in the multiplicity representation), 
the following equality holds:
\begin{align}\label{SArel}
S^{1}_{\;\,l}(1)^{{\mathbf a},{\mathbf b}}_{{\mathbf i},{\mathbf j}}
A_{\mathbf j}(x)
=\begin{cases}
(1-t^l)
A_{{\mathbf b}\setminus {\mathbf i}}(x)A_{\mathbf a}(t^{l-1}x) 
& \text{if} \;\; {\mathbf i} \subseteq {\mathbf b},
\\
0 & \text{otherwise}.
\end{cases}
\end{align}
Here, $ {\mathbf i} \subseteq {\mathbf b}$ means the condition 
${\mathbf b}-{\mathbf i} \in \BB_{l-1}$ in the multiplicity representation, 
and ${\mathbf b}\setminus {\mathbf i} $ denotes the element 
in $\TT_{l-1}$ corresponding to 
${\mathbf b}-{\mathbf i} \in \BB_{l-1}$.
\end{lemma}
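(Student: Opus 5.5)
The plan is to make the left-hand side of \eqref{SArel} explicit using Proposition~\ref{pr:s0}, and then to reduce \eqref{SArel} to a single operator identity for the fused CTMs.

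\emph{Step 1 (evaluating $S^1_{\;\,l}(1)$).} By \eqref{Sd}, $S(1)=S(0)+\dot S(0)$. For $k=1$ write ${\mathbf a}={\bf e}_a$ and ${\mathbf i}={\bf e}_i$; then weight conservation gives ${\mathbf j}-{\mathbf b}={\bf e}_a-{\bf e}_i$. The three cases are as follows.
\begin{itemize}
\item If $i=a$, the diagonal formula \eqref{sdiag} at $z=1$ gives $t^{j_{[0,a)}}(1-t^{j_a})$.
\item If $a<i$, only $S(0)$ contributes, with $\mu=2$, $(r_1,r_2)=(a,i)$. Then $\alpha=i_{(a,i)}=0$ and $\beta=j_{[0,a)}$, so the entry is again $t^{j_{[0,a)}}(1-t^{j_a})$.
\item If $i<a$, only $\dot S(0)$ contributes, with $(r_1,r_2)=(i,a)$. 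Then $\alpha=1+i_{[i,a)}=2$ and, since ${\mathbf a}={\bf e}_a$, $\beta=j_{[0,a)}$. The product factor is $(1-t^{j_{r_2}})=(1-t^{j_a})$.
\end{itemize}
Hence in all cases
\[
S^1_{\;\,l}(1)^{{\mathbf a},{\mathbf b}}_{{\mathbf i},{\mathbf j}}=t^{j_{[0,a)}}(1-t^{j_a}).
\]
This vanishes exactly when $j_a=0$. Since ${\mathbf b}-{\mathbf i}={\mathbf j}-{\mathbf a}$, the condition $j_a\ge1$ is the same as ${\mathbf i}\subseteq{\mathbf b}$, and then ${\mathbf b}\setminus{\mathbf i}={\mathbf j}\setminus{\mathbf a}$. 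This settles the ``otherwise'' case.

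\emph{Step 2 (reduction).} It remains to show, for ${\mathbf j}\in\TT_l$ containing the letter $a$,
\begin{align}\label{keyAid}
t^{j_{[0,a)}}(1-t^{j_a})\,A_{\mathbf j}(x)=(1-t^l)\,A_{{\mathbf j}\setminus a}(x)\,A_a(t^{l-1}x).
\end{align}
This is a statement purely about the fused CTMs \eqref{Abi}. I would prove it by induction on $l$; the case $l=1$ is trivial.

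For the induction step, start from the right-hand side and expand $A_{{\mathbf j}\setminus a}(x)$ via \eqref{Abi}, using the reordering freedom of Remark~\ref{re:asym}. Then move $A_a(t^{l-1}x)$ leftwards through the factors $A_{c}(t^{m}x)$, $0\le m\le l-2$, by the basic ZF relation \eqref{zf1}. A crossing with $A_c(t^{l-2}x)$ has ratio $t$, where $\s(t)$ has the rank-one ``symmetrizer'' form
\[
\s(t)^{{\bf e}_i,{\bf e}_j}_{{\bf e}_i,{\bf e}_j}=\s(t)^{{\bf e}_i,{\bf e}_j}_{{\bf e}_j,{\bf e}_i}\cdot t^{\theta(i>j)-\theta(i>j)}\!\!\!\quad \text{(both }\propto t^{\theta(i>j)}(1-t)/(1-t^2)\text{)}.
\]
So each such exchange produces a symmetric pair weighted by $t^{\theta(\cdot)}$. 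Equivalently, one may apply \eqref{zf3} with $l$ replaced by $l-1$ at a shifted ratio. The induction hypothesis applied to the length-$(l-1)$ block then collapses the resulting sum back into $A_{\mathbf j}(x)$.

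The bookkeeping of coefficients should work as follows. Each letter $c<a$ passed over contributes a factor $t$, which accumulates to $t^{j_{[0,a)}}$. Passing equal letters $c=a$ produces the geometric sum $1+t+\cdots+t^{j_a-1}$. The ratio $(1-t^{j_a})/(1-t^l)$ then comes from comparing the $t$-multinomial normalizations of $A_{\mathbf j}$ and $A_{{\mathbf j}\setminus a}$, as already visible in the $t$-binomial formula \eqref{Asn1}.

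As a cheap check, I would first verify \eqref{keyAid} for $n=1$ using \eqref{Asn1}. For ${\mathbf j}=0^{l-r}1^{r}$ and $a=1$, the identity reads
\[
t^{l-r}(1-t^r)\,t^{r(r-1)/2}z^r\qbinom{l}{r}_t=(1-t^l)\,t^{(r-1)(r-2)/2}z^{r-1}\qbinom{l-1}{r-1}_t\,t^{l-1}z,
\]
which holds.

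\emph{Main obstacle.} The hard part is the induction step of \eqref{keyAid}. The operators do not commute, so the position of $A_a(t^{l-1}x)$ on the far right matters. The collapse of the exchanged terms into exactly $A_{\mathbf j}(x)$, with the precise prefactor $t^{j_{[0,a)}}(1-t^{j_a})/(1-t^l)$, relies on the degeneracy of $\s(z)$ at $z=t^{\pm1}$. I would first try to organize this by the same column-by-column diagrammatic argument as in the proof of Theorem~\ref{th:SAA}: take $k=1$ there, and specialize to the point where the spectral parameters of the two factors differ by the fusion shift $t^{l-1}$, so that the stochastic property \eqref{stu} and \eqref{sep} force the fusion projector to act.
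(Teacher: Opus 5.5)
Your Step~1 is correct: in all three cases $S^{1}_{\;\,l}(1)^{{\bf e}_a,{\mathbf b}}_{{\bf e}_i,{\mathbf j}}=t^{j_{[0,a)}}(1-t^{j_a})$, in agreement with \eqref{ss1l} and \eqref{sss}. This settles the vanishing case and shows that the lemma is equivalent to
\[
t^{j_{[0,a)}}(1-t^{j_a})\,A_{\mathbf j}(x)=(1-t^l)\,A_{{\mathbf j}\setminus a}(x)\,A_a(t^{l-1}x)\qquad (j_a\ge 1).
\]
However, this operator identity is the entire content of the lemma, and you do not prove it. The induction you sketch does not work as described. The matrix $\s(z)$ has the symmetrizer form $\s^{a,c}_{a,c}=\s^{a,c}_{c,a}$ only at $z=t$. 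The operator carrying the parameter $t^{l-1}x$ meets a ratio-$t$ partner only once, namely the factor at $t^{l-2}x$. Every other crossing uses $\s(t^{k})$ with $k\ge2$, whose two entries differ, so no symmetric pair arises there. You also do not say how the induction hypothesis would recombine the resulting terms into $A_{\mathbf j}(x)$. The factors $t^{j_{[0,a)}}$, $1+t+\cdots+t^{j_a-1}$ and the $t$-multinomial ratio are asserted, not derived. Your $n=1$ check involves only scalars, so it does not test this ordering problem at all.

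The missing step is short and needs no induction. For $l\ge2$, apply \eqref{zf3} with $l$ replaced by $l-1$, ${\mathbf b}={\mathbf j}\setminus a$, and $(x,w)$ replaced by $(tx,x)$:
\[
(1-t^l)\,A_{{\mathbf j}\setminus a}(x)\,A_a(t^{l-1}x)=\sum_{i:\,j_i\ge1}S^{1}_{\;\,l-1}(t)^{{\bf e}_a,\,{\mathbf j}-{\bf e}_a}_{{\bf e}_i,\,{\mathbf j}-{\bf e}_i}\,A_i(t^{l-1}x)\,A_{{\mathbf j}\setminus i}(x).
\]
Redoing your Step~1 computation at general $z$ gives $S^{1}_{\;\,l'}(z)^{{\bf e}_a,{\bf B}}_{{\bf e}_i,{\bf J}}=t^{J_{[0,a)}}(1-t^{J_a}z^{\delta_{a,i}})z^{\theta(i<a)}$. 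At $z=t$ and ${\bf J}={\mathbf j}-{\bf e}_i$, this equals $t^{j_{[0,a)}}(1-t^{j_a})$ for every $i$. Grouping \eqref{Abi} by its first letter gives $\sum_i A_i(t^{l-1}x)A_{{\mathbf j}\setminus i}(x)=A_{\mathbf j}(x)$, and the identity follows. Alternatively, at ratio $t$ the relation \eqref{zf1} reads $A_b(y)A_a(ty)=t^{\theta(a>b)-\theta(b>a)}A_a(y)A_b(ty)$; together with Remark~\ref{re:asym}, this turns your $t$-multinomial bookkeeping into a proof. For comparison, the paper never evaluates $S^1_{\;\,l}(1)$ as a scalar. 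It keeps it as the fused column \eqref{se} and aligns the bottom indices with the sum in $A_{\mathbf j}(x)$ via Remark~\ref{re:indep}. It then uses \eqref{sep} at the first vertex to force $b'_1={\mathbf i}$, and moves $A_{j'_1}(t^{l-1}x)$ to the right by $l-1$ applications of \eqref{zf1} at ratios $t,\ldots,t^{l-1}$. So the paper needs only the basic ZF relation, not its fused version \eqref{zf3}.
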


\begin{proof}
Let ${\mathbf b} = (b_1,\ldots, b_l) \in \TT_l$ and
${\mathbf j} = (j_1,\ldots, j_l) \in \TT_l$ be tableau representations.
Below we will freely identify the tableau and multiplicity representations as explained around
\eqref{iic}.
From \eqref{sss} and the expression for $\s_{1,l}(1)$
analogous to \eqref{f32}, we have
\begin{equation}\label{se}
\begin{split}
S^{1}_{\;\, l}(1)^{{\mathbf a},{\mathbf b}}_{{\mathbf i},{\mathbf j}}
&= (1-t^l)\s_{1,l}(t^{l-1})^{{\mathbf a},{\mathbf b}}_{{\mathbf i},{\mathbf j}}
\\
&=(1-t^l)\sum_{(b'_1,\ldots, b'_l) \in \mathcal{C}({\mathbf b})}
\sum_{\alpha_1,\ldots, \alpha_{l-1} \in \TT_1}
 \s(1)^{\alpha_1, b'_1}_{{\mathbf i}, j_1}
\s(t)^{\alpha_2, b'_2}_{\alpha_1, j_2} \cdots \s(t^{l-1})^{{\mathbf a}, b'_l}_{\alpha_{l-1}, j_l},
\end{split}
\end{equation}
where $\s(z)^{{\mathbf a},{\mathbf b}}_{{\mathbf i},{\mathbf j}}
=\s_{1,1}(z)^{{\mathbf a},{\mathbf b}}_{{\mathbf i},{\mathbf j}}$.
The set $\mathcal{C}({\mathbf b})$ has been defined in \eqref{Cex}.
The operator $A_{\mathbf j}(x)$ with ${\mathbf j} \in \TT_l$ 
is given, according to \eqref{Abi}, as
\begin{align}\label{aj1}
A_{\mathbf j}(x) = \sum_{(j'_1,\ldots, j'_l) \in \mathcal{C}({\mathbf j})}
A_{j'_1}(t^{l-1}x) \cdots A_{j'_l}(x).
\end{align}
From Remark \ref{re:indep}, the expression \eqref{se} is invariant under replacing $(j_1,\ldots, j_l) \in \TT_l$
by any array in $\mathcal{C}({\mathbf j})$.
Choosing it so as to coincide with $(j'_1,\ldots, j'_l) $ in \eqref{aj1},
the LHS of \eqref{SArel}, which is the product of \eqref{se} and \eqref{aj1},  is expressed as
\begin{align}\label{sa2}
(1-t^l)
\sum_{(b'_1,\ldots, b'_l) \in \mathcal{C}({\mathbf b})}
\sum_{\substack{\alpha_1,\ldots, \alpha_{l-1} \in \TT_1 \\ (j'_1, \ldots, j'_l) \in \mathcal{C}({\mathbf j})}}
 \s(1)^{\alpha_1, b'_1}_{{\mathbf i}, j'_1}
\s(t)^{\alpha_2, b'_2}_{\alpha_1, j'_2} \cdots \s(t^{l-1})^{{\mathbf a}, b'_l}_{\alpha_{l-1}, j'_l}
A_{j'_1}(t^{l-1}x) \cdots A_{j'_l}(x).
\end{align}
From \eqref{sep}, the sum is restricted to $b'_1={\mathbf i}$ and $\alpha_1 = j'_1$.
The former condition implies that \eqref{sa2} vanishes unless ${\mathbf i} \subseteq {\mathbf b}$ 
in agreement with \eqref{SArel}.
Assuming ${\mathbf i} \subseteq {\mathbf b}$ in what follows,  \eqref{sa2} is reduced to 
\begin{align*}
(1-t^l)
\sum_{({\mathbf i},b'_2,\ldots, b'_l) \in \mathcal{C}({\mathbf b})} 
\sum_{\substack{\alpha_2,\ldots, \alpha_{l-1} \in \TT_1 \\  (j'_1, \ldots, j'_l) \in \mathcal{C}({\mathbf j})}}
\s(t)^{\alpha_2, b'_2}_{j'_1, j'_2} 
\s(t^2)^{\alpha_3, b'_3}_{\alpha_2, j'_3}
\cdots \s(t^{l-1})^{{\mathbf a}, b'_l}_{\alpha_{l-1}, j'_l}
A_{j'_1}(t^{l-1}x) A_{j'_2}(t^{l-2}x) \cdots A_{j'_l}(x).
\end{align*}
The condition $({\mathbf i},b'_2,\ldots,b'_l)\in\mathcal{C}({\mathbf b})$ is
equivalent to $(b'_2,\ldots,b'_l)\in\mathcal{C}({\mathbf b}\setminus{\mathbf i})$.
In the product of $\s$ appearing here, the total incoming (subscript)
weight ${\mathbf j}+\alpha_2+\cdots + \alpha_{l-1}$ and the total outgoing (superscript) weight
${\mathbf b}-{\mathbf i}+{\mathbf a}+\alpha_2+\cdots + \alpha_{l-1}$ coincide by the assumption of the
lemma.  Under this circumstance, the summation
$(j'_1,\ldots,j'_l)\in\mathcal{C}({\mathbf j})$ may be replaced by the
independent sums $j'_1,\ldots,j'_l\in\TT_1$, since the total weight of
$(j'_1,\ldots,j'_l)$ is automatically constrained to be ${\mathbf j}$
by the weight conservation property of the individual $\s$'s.
Then, one can apply the ZF-algebra relation \eqref{zf1} successively to
take the sums over $(j'_1,j'_2)$, $(\alpha_2,j'_3)$, $(\alpha_3,j'_4)$,
\ldots, $(\alpha_{l-1},j'_l)$ in this order, thereby moving
$A_\bullet(t^{l-1}x)$ through to the right, where it eventually becomes
$A_{\mathbf a}(t^{l-1}x)$ as follows:
\begin{align*}
(1-t^l) \left( 
\sum_{(b'_2,\ldots, b'_l) \in \mathcal{C}({\mathbf b}\setminus {\mathbf i})} 
A_{b'_2}(t^{l-2}x) A_{b'_3}(t^{l-3}x)\cdots A_{b'_l}(x) 
\right)A_{\bf a}(t^{l-1}x).
\end{align*}
By the definition \eqref{Abi}, the sum in the parenthesis here is 
equal to $A_{{\mathbf b} \setminus {\mathbf i}}(x)$.
\end{proof}

\begin{example}
For $n=2$ and $l=3$, we have
\begin{equation*}
\begin{split}
&S^1_{\;\;3}(1)^{2,112}_{1,122}=t(1-t^2),\quad
A_{122}(x) = t^3(1+t+t^2)x^3(tx^2 \ok + x \ap \ok+ tx \ap \ok + \ap{}^2 \ok), 
\\
&A_{12}(x) = t(1+t)x^2(x \ok+\ap \ok),
\quad
A_2(x) =x^2+ x \ap.
\end{split}
\end{equation*}
Using \eqref{apc}, it is straightforward to verify that
$S^1_{\;\;3}(1)^{2,112}_{1,122}A_{122}(x)
=(1-t^3)A_{12}(x)A_2(t^2x)$.
\end{example}

\section{Stationary distribution}\label{sec:mp}

Here we derive our second and third main results, namely the matrix product formula
\eqref{pA} for the (unnormalized) stationary probability, and the partition function in 
Theorem \ref{th:z}.
We begin with the following elementary result.

\begin{proposition}
\label{pr:ir}
Suppose $t, x_1, \dots, x_L > 0$.
For any $n, l \in \Z_{\ge 1}$ and ${\bf m} = (m_0,\ldots, m_n)\in (\Z_{\ge 1})^{n+1}$ such that 
$m_0+\cdots + m_n = lL$,
the $n$-species capacity-$l$ inhomogeneous $t$-PushTASEP on $L$-site periodic chain 
in the sector $\mathbb{V}({\mathbf m})$ is irreducible.
\end{proposition}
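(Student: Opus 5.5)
The plan is to exhibit a family of elementary transitions with strictly positive rates, show that this family is closed under reversal, and show that it connects any two configurations of $\mathcal{S}({\bf m})$.

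\textbf{Step 1: elementary moves.} Take the simplest push-out process of Table~\ref{tab1}, with $g=1$ and $(h_0,h_1)=(0,h)$ for some $h\ge 1$. A particle $h$ leaves a site $o$ with $\sigma_{o,h}\ge 1$ and moves to a site $p\neq o$ with $\sigma_{p,0}\ge 1$. There it pushes out one particle $0$, which refills $o$. Call this move $E(o,p;h)$.

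It satisfies conditions (i) and (ii):
\begin{itemize}
\item at site $o$ one has $\bs'_o-\bs_o={\bf e}_0-{\bf e}_h$, so $\bs'_o\prec\bs_o$;
\item at site $p$ one has $\bs_p-\bs'_p={\bf e}_0-{\bf e}_h$, so $\bs_p\prec\bs'_p$;
\item all other sites are unchanged.
\end{itemize}
Moreover $\sigma'_{o,0}\ge 1$, so $\vec{\bs}'$ is admissible to $\vec{\bs}$, and Proposition~\ref{pr:zr} shows that its rate is nonzero.

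For the positivity of the rate for $t,x_j>0$, I will check \eqref{wm2} factor by factor. The factors $(1-t^{\sigma_{p(h),h}})t^{\ell_h}/(1-t^{K_h})$ are positive because $\sigma_{p(h),h}\ge1$ and $K_h\ge1$. The factors $(1-t^{\KK_h+\Phi_h})/(1-t^{K_h})$ with $h\notin\{0,h\}$ have $\KK_h\ge1$ and $K_h\ge 1$, hence are also positive. Each such ratio $(1-t^a)/(1-t^b)$ with $a,b\ge1$ has the same sign numerator and denominator for $t\neq 1$. At $t=1$ it is understood as its positive limit $a/b$, so the rate is positive for all $t>0$.

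\textbf{Step 2: symmetry.} After $E(o,p;h)$, site $p$ contains an $h$ and site $o$ contains a $0$. Therefore $E(p,o;h)$ is available and undoes the move. It follows that the directed graph of elementary moves is symmetric, and it suffices to show that this graph is connected.

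\textbf{Step 3: generating all transpositions.} A configuration in $\mathcal{S}({\bf m})$ is an assignment of the multiset of particles with multiplicities ${\bf m}$ to $L$ sites, $l$ per site. Any two such assignments are related by a sequence of swaps of two particles lying at different sites. Swaps involving a $0$ are exactly the elementary moves, so it remains to realize a swap of particles $a$ at site $i$ and $b$ at site $j\neq i$, with $a,b\ge1$. Here I use $m_0\ge 1$: some $0$ sits at a site $k$.
\begin{itemize}
\item If $k\notin\{i,j\}$, perform $E(i,k;a)$, then $E(j,i;b)$, then $E(k,j;a)$. The net effect is that $a$ and $b$ are exchanged and the $0$ returns to $k$.
\item If $k=i$, perform $E(j,i;b)$, then $E(i,j;a)$.
\item The case $k=j$ is symmetric to $k=i$.
\end{itemize}
This gives connectivity, and hence irreducibility in $\mathbb{V}({\bf m})$.

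\textbf{Main obstacle.} The only delicate point is the positivity claim in Step~1, namely that for this particular move no factor in \eqref{wm2} vanishes or becomes negative. This includes the degenerate value $t=1$, which I plan to handle by continuity as indicated above. The combinatorial connectivity in Steps~2 and~3 is routine.
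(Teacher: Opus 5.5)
Your proof is correct, but it is organized differently from the paper's. The paper gives only a sketch. It asserts that all nonzero off-diagonal rates are positive, and then places the particles species by species in decreasing order: $n$ first, then $n-1$, and so on. The implicit point is that a push chain only ever displaces strictly smaller species. So once the species above $h$ sit at their target sites, the moves used to place species $h$ never disturb them.

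You instead restrict to the single simplest admissible move $E(o,p;h)$, in which one particle exchanges with a $0$. You check directly that no factor in \eqref{wm2} can vanish for it: the only dangerous factor, $1-t^{\Phi_0}$, is absent because species $0$ moves. You then generate every cross-site transposition in at most three moves, using a $0$ as a ``blank''.

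Your route buys a fully explicit argument. Every transition you use is exhibited with its admissibility and positivity checked, and you get an explicit bound on path lengths. The paper's route is shorter and highlights the species-ordered structure of the pushing dynamics, but it leaves the ``sequence of moves'' unspecified.

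One small remark: your Step 2 is not needed. Each sequence in Step 3 consists of forward moves, and a swap is an involution. So a chain of swaps from $\vec{\bs}$ to $\vec{\bs}'$ already yields a directed path of positive-rate transitions.
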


\begin{proof}
From \eqref{Hdef} and \eqref{wm2}, the off-diagonal elements of
the Markov matrix $H_{n,l}=H_{n,l}(x_1,\ldots, x_L)$ are positive
under the assumption $t, x_1, \dots, x_L > 0$.
Thus, it suffices to show that for any pair of different configurations 
$\vec{\bs}, \vec{\bs}'  \in \mathcal{S}({\mathbf m})$
there is $m\in \Z_{\ge 1}$ such that 
$\langle \vec{\bs}'| (H_{n,l})^m |\vec{\bs}\rangle >0$.

Notice that any particle of species greater than $0$ can move from any site, bumping lower species particles.
To get $\vec{\bs}'$  from $\vec{\bs}$, we first move the particles of species $n$ 
to their desired locations by a sequence of moves. 
We then continue this way moving particles of species $n-1$, and so on.
One can also show the claim by induction on $L$.
\end{proof}

From Proposition~\ref{pr:ir}, 
it follows that the stationary distribution is unique. We now explain the matrix product formula.

\subsection{Matrix product formula}\label{ss:mpf}

Theorem~\ref{th:main1} with \eqref{Hc} transforms the problem of finding the
stationary state of the $n$-species, capacity-$l$
$t$-PushTASEP  on $L$-site periodic lattice with 
 inhomogeneity $x_1, \ldots, x_L$ 
 to the study of eigenvectors of the transfer matrices
$T^0(z|x_1,\ldots,x_L), \ldots, T^{n+1}(z|x_1,\ldots,x_L)$.
Thanks to the commutativity~\eqref{tcom},  it is further
reduced to finding the stationary eigenvector of the transfer matrix
$T^1(z|x_1,\ldots,x_L)$.
This task was done in our previous work~\cite{AK25} for the
case $l=1$ by a matrix product construction based on the basic CTMs
$A_0(z),\ldots,A_n(z)$.
In this section, we generalize it to arbitrary capacity
$l\ge1$.
We will also provide a derivation of the probability conservation property
of the Markov matrix $\mathcal{H}_{n,l}(x_1,\ldots,x_L)$ based on the transfer matrices,
which is independent of the argument given in Section~\ref{ss:pc}.

Let us introduce a state vector whose coefficients are given in the matrix product form:
\begin{subequations}
\begin{align}
&|\mathbb{P}({\mathbf m})\rangle = \sum_{(\bs_1,\ldots, \bs_L) \in \mathcal{S}({\mathbf m})}
{\mathbb{P}}(\bs_1,\ldots, \bs_L)
|\bs_1,\ldots, \bs_L\rangle \in \mathbb{V}({\mathbf m}),
\label{pvec}
\\
&{\mathbb{P}}(\bs_1,\ldots, \bs_L)
=\mathrm{Tr}\left(A_{\bs_1}(x_1)\cdots A_{\bs_L}(x_L)\right),
\label{pA}
\end{align}
\end{subequations}
where $\mathbb{V}({\bf m})$ and $\mathcal{S}({\bf m})$ are defined in 
\eqref{Vm} and \eqref{Sm}, respectively.
The  trace is taken over the Fock space $\FF^{\otimes n(n-1)/2}$,
and is nonzero and convergent under the assumption $m_0,\ldots, m_n \ge 1$
as a formal power series in $t$.\footnote{
See Remark~\ref{re:pos} below for the nonvanishing property, and
\eqref{pA3} together with the discussion following \cite[eq.~(59)]{KOS24}
for convergence in the case $l=1$.}
By the construction, 
${\mathbb{P}}(\bs_1,\ldots, \bs_L)$ is a polynomial in $x_1,\ldots, x_L$ and a rational function of $t$.

Set 
\begin{align}\label{Lam}
\Lambda^k(z|x_1,\ldots,x_L)
 &= e_{k-1}(t^{\KK_1},\ldots,t^{\KK_n})
    \prod_{j=1}^L\!\left(1-\frac{t^l z}{x_j}\right)
   + e_{k}(t^{\KK_1},\ldots,t^{\KK_n})
    \prod_{j=1}^L\!\left(1-\frac{z}{x_j}\right) \quad (0 \le k \le n+1),
\end{align}
where $e_k$ denotes the elementary symmetric polynomial~\eqref{ekb},
and $\KK_i$, defined in~\eqref{Ki}, depends on~${\mathbf m}$.
This is a Yang-Baxterization (spectral parameter dependent version) 
of the $k$'th elementary symmetric polynomial in the following sense:
\begin{subequations}
\begin{align}
\Lambda^k(z|x_1,\ldots,x_L) &= \prod_{j=1}^L \prod_{s=1}^{k-1}\left(1-\frac{t^{-s}z}{x_j}\right)^{-1}
\sum_{0 \le i_1 < \dots < i_k \le n}
\fbox{$i_1$}_{\,z} \,\fbox{$i_2$}_{\,t^{-1}z} \cdots \fbox{$i_k$}_{\,t^{-k+1}z},
\label{bx1}
\\
\fbox{$i$}_{\, z} &= t^{\KK_i}\prod_{j=1}^L\left(1-t^{l\delta_{i,0}}\frac{z}{x_j}\right)
\qquad (0 \le i \le n).
\label{bx2}
\end{align}
\end{subequations}
The formulas \eqref{Lam}-\eqref{bx2} reduce to \cite[Eq.~(6.1)--(6.2b)]{AK25} when $l=1$.

Let $\langle \Omega({\mathbf m}) |$ be the vector in the dual space of $\mathbb{V}({\mathbf m})$
defined as the sum of all basis vectors with unit coefficients:
\begin{align}\label{omg}
\langle \Omega({\mathbf m}) |
= \sum_{ (\bs_1,\ldots, \bs_L) \in \mathcal{S}({\mathbf m})}
\langle \bs_1,\ldots, \bs_L|.
\end{align}

\begin{proposition}\label{pr:T1}
The vectors $|\mathbb{P}({\mathbf m})\rangle$ and $\langle \Omega({\mathbf m}) |$
are the right and left eigenvectors of $T^1(z|x_1,\ldots,x_L)$, respectively,
with the common eigenvalue $\Lambda^1(z|x_1,\ldots,x_L)$ given by \eqref{Lam}.
Namely, they satisfy the following relations:
\begin{align}
T^1(z|x_1,\ldots, x_L)|\mathbb{P}({\mathbf m})\rangle
&= \Lambda^1(z|x_1,\ldots, x_L) |\mathbb{P}({\mathbf m})\rangle,
\label{tpL}\\
\langle \Omega({\mathbf m})|T^1(z|x_1,\ldots, x_L)| &=
\langle \Omega({\mathbf m})| \Lambda^1(z|x_1,\ldots, x_L).
\label{omt}
\end{align}
\end{proposition}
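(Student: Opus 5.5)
The plan is to treat the two relations separately: \eqref{omt} by a direct column-sum computation, and \eqref{tpL} by a telescoping argument built on the ZF algebra, as in the $l=1$ case \cite{AK25}.

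\textbf{Left eigenvector.} By \eqref{tkr}, \eqref{omt} says that for each $\vec{\bs}\in\mathcal{S}({\mathbf m})$ the column sum $\sum_{\vec{\bs}'}\langle\vec{\bs}'|T^1(z)|\vec{\bs}\rangle$ equals $\Lambda^1(z)$. Both sides are polynomials of degree at most $L$ in $z$. I would like to identify them through a stochasticity property of the normalized matrix $S^1_{\;\,l}(z)/(1-t^lz)=\s_{1,l}$, namely the sum-to-unity relation \eqref{stu2}. That route is awkward, however: the trace over the carrier space $V^1$ couples the sites, so sum-to-unity does not directly give the trace.

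I would therefore proceed more concretely and expand the column sum as a sum over carrier paths. At $z=0$ it equals $e_1(t^{\KK_0},\ldots,t^{\KK_n})$ by \eqref{tk0}. Its $z$-derivative at $0$ has already been computed implicitly: the off-diagonal column sums of $\dot T^1(0)$ come from Lemma~\ref{le:nmm} with $k=1$. That leaves only $d=1$ terms, with $f$ absent, because $k=d$ forces $d=1$.

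For general $z$, I would repeat the path-summation of Lemma~\ref{le:pc}. The carrier is a single letter. Summing over the outgoing $\bs'_j$ at a site with the carrier entering as ${\bf e}_a$ gives a local weight, and the sum over carriers wrapping around the ring then telescopes. This should leave the two terms of \eqref{Lam}:
\begin{itemize}
\item the carrier ${\bf e}_0$ staying $0$ throughout, which gives $\prod_j(1-t^lz/x_j)$;
\item carriers ${\bf e}_i$ with $i\ge1$, which give $t^{\KK_i}\prod_j(1-z/x_j)$ in total, after the $t$-powers telescope.
\end{itemize}
Comparing with \eqref{bx2} for $k=1$ confirms these are the right targets.

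\textbf{Right eigenvector.} Here I would use the ZF relation in the form \eqref{zf3} together with Lemma~\ref{le:SA}, following \cite[Sec.~6]{AK25}. I would write
\[
\langle\vec{\bs}'|T^1(z)|\mathbb{P}\rangle=\sum_{\vec{\bs},\,{\bf a}_j}\prod_j S^1_{\;\,l}(z/x_j)^{{\bf a}_{j+1},\bs'_j}_{{\bf a}_j,\bs_j}\,\mathrm{Tr}\bigl(A_{\bs_1}(x_1)\cdots A_{\bs_L}(x_L)\bigr),
\]
and introduce an auxiliary ``hat'' operator $A_{{\bf a}}(t^{l-1}z)$ attached to the carrier. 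Relation \eqref{zf3} then converts each local vertex sum $\sum_{{\bf a}_j,\bs_j} S\,A_{{\bf a}_j}A_{\bs_j}$ into $(1-t^lz/x_j)A_{\bs'_j}(x_j)A_{{\bf a}_{j+1}}(t^{l-1}z)$.

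This manipulation is not literally available, because the carrier appears only as an index and not as an operator. So I would run the standard argument with a carrier-dependent, $z$-dependent insertion, and find the eigenvalue by evaluating the resulting polynomial identity in $z$ at special points. At $z=x_j$ (more precisely $z=t^{1-l}x_j$, matching Lemma~\ref{le:SA} via $S^1_{\;\,l}(1)$), Lemma~\ref{le:SA} factorizes the vertex. Combined with the permutation-invariance from Remark~\ref{re:asym}, this lets the operator be pushed around the trace by cyclicity.

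Together with the values at $z=0$ from \eqref{tk0} and the leading coefficient in $z$, these $L$ special evaluations determine a polynomial of degree $L$. One still has to check that $\Lambda^1$ takes the same values at these points. Once both sides are known to agree there, they coincide identically.

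\textbf{Main obstacle.} The hard step is the right-eigenvector argument at the interpolation points. One must show that the sum of vertex weights times CTMs collapses, via Lemma~\ref{le:SA} and cyclicity of the trace, to $\Lambda^1(t^{1-l}x_j)\,\mathbb{P}(\vec{\bs}')$. This requires careful tracking of which carrier letters survive the condition ${\mathbf i}\subseteq{\mathbf b}$.

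My first attempt would be to handle the full $z$-dependence by an ``almost ZF'' hat relation $S\cdot A\,A=A\,A\,(\text{scalar})+\hat A\,A-A\,\hat A$, as in \cite{AK25}, with the hat operators built from the $A_{\bf i}$. The telescoping of the hat terms around the trace would then yield \eqref{tpL} directly, with the two terms of \eqref{Lam} arising from the carrier being $0$ or nonzero.
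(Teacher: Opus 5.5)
\textbf{The gap is in your treatment of \eqref{omt}.}

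The general-$z$ path summation you sketch has no mechanism behind it. Lemma~\ref{le:pc} concerns the path weights defining $H_{n,l}$, which are built from $\dot T^k(0)$, and you supply no general-$z$ analogue of it. The value and first derivative at $z=0$ do not determine a polynomial of degree $L$.

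The proposed bookkeeping is also false. By \eqref{sdiag}, the path with carrier ${\bf e}_0$ throughout contributes $\prod_j(1-t^{\sigma_{j,0}}z/x_j)$, not $\prod_j(1-t^lz/x_j)$. Likewise, the constant-${\bf e}_i$ path contributes $t^{\KK_i}\prod_j(1-t^{\sigma_{j,i}}z/x_j)$. Already for $n=l=1$, $L=2$, $\vec\bs=(0,1)$, the first is $(1-tz/x_1)(1-z/x_2)$. The two terms of \eqref{Lam} emerge only after all carrier paths are summed.

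The missing idea is that the interpolation you use for \eqref{tpL} also removes the trace coupling you found awkward.
\begin{itemize}
\item At $z=x_1$, \eqref{ss1l} and \eqref{sss} give $S^1_{\;\,l}(1)^{{\bf e}_a,{\bf b}}_{{\bf e}_i,{\bf j}}=(1-t^{j_a})t^{j_0+\cdots+j_{a-1}}$. This is independent of the incoming carrier ${\bf e}_i$. It is what \eqref{se} with \eqref{sep} expresses: the incoming carrier is absorbed into $\bs'_1$, and the rest of the vertex depends only on $\bs_1$.
\item So, as a matrix in the carrier indices, the first vertex is $\mathbf{1}c^{T}$ with $c_a=(1-t^{\sigma_{1,a}})t^{\sigma_{1,0}+\cdots+\sigma_{1,a-1}}$ and $\sum_a c_a=1-t^l$.
\item The trace over $V^1$ therefore opens into a chain. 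Applying sum-to-unity \eqref{stu2} from site $L$ back to site $2$ gives $(1-t^l)\prod_{j\ge2}(1-t^lx_1/x_j)=\Lambda^1(x_1|x_1,\ldots,x_L)$.
\end{itemize}
With $z=0$ from \eqref{tk0}, and cyclic symmetry for $z=x_2,\ldots,x_L$, this is the paper's proof of \eqref{omt}.

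\textbf{Your plan for \eqref{tpL} is the paper's argument, apart from one slip.} The argument is: Lemma~\ref{le:SA} at site $1$; successive use of \eqref{zf3} to carry $A_{\bullet}(t^{l-1}x_1)$ through sites $2,\ldots,L$, picking up factors $1-t^lx_1/x_j$; cyclicity of the trace; and \eqref{Abi} to reassemble $\sum_{{\bf a}\subseteq\bs_1}A_{\bf a}(t^{l-1}x_1)A_{\bs_1\setminus{\bf a}}(x_1)=A_{\bs_1}(x_1)$.

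The slip is your correction to ``$z=t^{1-l}x_j$''. Since $T^1(z)$ carries $S(z/x_j)$, the matrix $S^1_{\;\,l}(1)$ that Lemma~\ref{le:SA} is about appears at $z=x_j$. That node is also where $\prod_k(1-z/x_k)$ vanishes, so $\Lambda^1$ reduces to exactly the factor the computation produces.

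The nodes $z=0,x_1,\ldots,x_L$ already suffice for a polynomial of degree at most $L$. Neither the leading coefficient nor a hat-operator relation is needed.
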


\begin{proof}
From \eqref{Sd}, \eqref{tke} and \eqref{Lam}, 
both sides of \eqref{tpL} and \eqref{omt} are  polynomials in $z$ of degree at most $L$. 
Therefore it suffices to verify the equalities at the $L+1$ points
$z=0, x_1,\ldots, x_L$.
At $z=0$,  both \eqref{tpL} and \eqref{omt} follow readily from \eqref{tk0}, \eqref{Lam}  
and $e_{k-1}(t^{\KK_1},\ldots, t^{\KK_n}) + e_k(t^{\KK_1},\ldots, t^{\KK_n})
= e_k(t^{\KK_0},t^{\KK_1},\ldots, t^{\KK_n})$.

To prove \eqref{tpL} at the remaining points,  we compute 
the action of $T^1(z|x_1,\ldots, x_L)$ using \eqref{tke}:
\begin{align}
&T^1(z|x_1,\ldots, x_L) |\mathbb{P}({\mathbf m})\rangle
=\sum_{(\bs_1,\ldots, \bs_L) \in \mathcal{S}({\bf m})}
{\mathbb{P}}(z|\bs_1,\ldots, \bs_L)|\bs_1,\ldots, \bs_L\rangle,
\\
&{\mathbb{P}}(z|\bs_1,\ldots, \bs_L) 
= \sum_{\substack{{\mathbf a}_1,\ldots, {\mathbf a}_L \in \TT^1 \\ 
(\bs'_1,\ldots, \bs'_L) \in \mathcal{S}({\bf m})} }
\!\!\!S \Bigl( \frac{z}{x_1} \Bigr)^{{\mathbf a}_2,\bs_1}_{{\mathbf a}_1, \bs'_1}
S \Bigl( \frac{z}{x_2} \Bigr)^{{\mathbf a}_3,\bs_2}_{{\mathbf a}_2, \bs'_2}
\cdots 
S \Bigl( \frac{z}{x_L} \Bigr)^{{\mathbf a}_1,\bs_L}_{{\mathbf a}_L, \bs'_L}
\mathrm{Tr}(A_{\bs'_1}(x_1) \cdots A_{\bs'_L}(x_L)),
\label{ppmp}
\end{align}
where $S(z)^{{\mathbf a}, {\mathbf b}}_{{\mathbf i}, {\mathbf j}} 
= S^1_{\;\,l}(z)^{{\mathbf a}, {\mathbf b}}_{{\mathbf i}, {\mathbf j}}$. 
The summation over $(\bs'_1,\ldots, \bs'_L)$ is automatically 
restricted to $\mathcal{S}({\bf m})$ \eqref{Sm} 
by the weight conservation property of $T^1(z|x_1,\ldots, x_L)$. 
Hence, we simply write  
$\bs'_1,\ldots, \bs'_L \in \TT_l$.

We now consider the point $z=x_1$. 
In view of \eqref{Lam}, we are to show
\begin{align}\label{paim}
{\mathbb{P}}(z=x_1|\bs_1,\ldots, \bs_L) = {\mathbb{P}}(\bs_1,\ldots, \bs_L)
(1-t^l)\prod_{j=2}^L\bigl(1-\frac{t^lx_1}{x_j}\bigr).
\end{align} 
By Lemma \ref{le:SA}, the expression \eqref{ppmp} reduces to
\begin{equation}
\begin{split}
{\mathbb{P}}(x_1|\bs_1,\ldots, \bs_L) 
= (1-t^l) \!\!\! \!\!\! 
\sum_{\substack{{\mathbf a}_1,\ldots, {\mathbf a}_L \in \TT^1, \, ({\mathbf a}_1 \subseteq \bs_1)
\\ \bs'_2,\ldots, \bs'_L \in \TT_l}}
\!\!\! & S\Bigl( \frac{x_1}{x_2} \Bigr)^{{\mathbf a}_3,\bs_2}_{{\mathbf a}_2, \bs'_2}
S\Bigl( \frac{x_1}{x_3} \Bigr)^{{\mathbf a}_4,\bs_3}_{{\mathbf a}_3, \bs'_3}
\cdots 
S\Bigl( \frac{x_1}{x_L} \Bigr)^{{\mathbf a}_1,\bs_L}_{{\mathbf a}_L, \bs'_L}
\\
& \!\!\! 
\times \mathrm{Tr}(A_{\bs_1\setminus {\mathbf a}_1}(x_1) A_{{\mathbf a}_2}(t^{l-1}x_1) 
A_{\bs'_2}(x_2) \cdots A_{\bs'_L}(x_L)),
\label{ppmp2}
\end{split}
\end{equation}
where the condition ${\mathbf a}_1 \subseteq \bs_1$
is as in Lemma \ref{le:SA}.
Applying the ZF algebra relation \eqref{zf3} successively, we perform the
summations over
$({\mathbf a}_2, \bs'_2), ({\mathbf a}_3, \bs'_3), \ldots, ({\mathbf a}_l, \bs'_l)$ 
in this order, thereby moving $A_\bullet(t^{l-1}x_1)$ to the right:
\begin{equation*}
\begin{split}
{\mathbb{P}}(x_1|\bs_1,\ldots, \bs_L) 
=(1-t^l)\prod_{j=2}^L\bigl(1-\frac{t^lx_1}{x_j}\bigr)
\sum_{{\mathbf a}_1 \in \TT^1,  ({\mathbf a}_1 \subseteq \bs_1)}
\mathrm{Tr}(
A_{\bs_1\setminus {\mathbf a}_1}(x_1)A_{\bs_2}(x_2) \cdots A_{\bs_L}(x_L)A_{{\mathbf a}_1}(t^{l-1}x_1)).
\end{split}
\end{equation*}
By the cyclicity of the trace and the definition \eqref{Abi}, the sum above is rewritten as
\begin{equation*}
\begin{split}
&\sum_{{\mathbf a}_1 \in \TT^1,  ({\mathbf a}_1 \subseteq \bs_1)}\mathrm{Tr}(
A_{{\mathbf a}_1}(t^{l-1}x_1)A_{\bs_1\setminus {\mathbf a}_1}(x_1)A_{\bs_2}(x_2) \cdots A_{\bs_L}(x_L))
= \mathrm{Tr}(A_{\bs_1}(x_1)A_{\bs_2}(x_2) \cdots A_{\bs_L}(x_L)),
\end{split}
\end{equation*}
which coincides with ${\mathbb{P}}(\bs_1,\ldots, \bs_L)$ in \eqref{pA}.
This establishes \eqref{paim}.  

Next we show \eqref{omt} at $z=x_1$.
From \eqref{omg},  \eqref{tke} and \eqref{tkr},  we have
\begin{equation*}
\langle \Omega({\mathbf m})|T^1(z|x_1,\ldots, x_L)| \bs_1,\ldots, \bs_L\rangle
= 
\sum_{\substack{{\mathbf a}_1,\ldots, {\mathbf a}_L \in \TT^1 \\ 
(\bs'_1,\ldots, \bs'_L) \in \mathcal{S}({\bf m})} }
\!\!\!S \Bigl( \frac{z}{x_1} \Bigr)^{{\mathbf a}_2,\bs'_1}_{{\mathbf a}_1, \bs_1}
S \Bigl( \frac{z}{x_2} \Bigr)^{{\mathbf a}_3,\bs'_2}_{{\mathbf a}_2, \bs_2}
\cdots 
S \Bigl( \frac{z}{x_L} \Bigr)^{{\mathbf a}_1,\bs'_L}_{{\mathbf a}_L, \bs_L},
\end{equation*}
where $S(z)^{{\mathbf a}, {\mathbf b}}_{{\mathbf i}, {\mathbf j}} 
= S^1_{\;\,l}(z)^{{\mathbf a}, {\mathbf b}}_{{\mathbf i}, {\mathbf j}}$. 
Setting $z=x_1$ and applying \eqref{se} with ${\mathbf j}=(j_1,\ldots,j_l)$ replaced by 
$\bs_1=(\sigma_{1,1},\ldots,\sigma_{1,l})\in\TT_l$ for the leftmost $S$,
together with \eqref{sss} for the remaining ones,
the RHS becomes
\begin{equation}\label{rrk}
\begin{split}
(1-t^l)\prod_{j=2}^L\bigl(1-\frac{t^lx_1}{x_j}\bigr)
\sum_{\substack{{\mathbf a}_1,\ldots, {\mathbf a}_L \in \TT^1 \\ 
(\bs'_1,\ldots, \bs'_L) \in \mathcal{S}({\bf m})} }
\sum_{\substack{\alpha_1,\ldots, \alpha_{l-1} \in \TT_1 \\
(b_1,\ldots, b_l) \in \mathcal{C}(\bs'_1)}}
& \s_{1,1}(1)^{\alpha_1, b_1}_{{\mathbf a}_1,  \sigma_{1,1}}
\s_{1,1}(t)^{\alpha_2, b_2}_{\alpha_1, \sigma_{1,2}} \cdots 
\s_{1,1}(t^{l-1})^{{\mathbf a}_2, b_l}_{\alpha_{l-1}, \sigma_{1,l}}
\\
\times &\; \s_{1,l}\Bigl( \frac{t^{l-1}x_1}{x_2} \Bigr)^{{\mathbf a}_3,\bs'_2}_{{\mathbf a}_2, \bs_2}
\cdots 
\s_{1,l}\Bigl( \frac{t^{l-1}x_1}{x_L} \Bigr)^{{\mathbf a}_1,\bs'_L}_{{\mathbf a}_L, \bs_L}.
\end{split}
\end{equation}
The prefactor in the first line coincides with
$\Lambda^1(x_1\mid x_1,\ldots,x_L)$ in \eqref{Lam}.
Thus it remains to show that the summation part in \eqref{rrk} is equal to~$1$.
Note that the sums over $\bs'_1\in\TT_l$ and
$(b_1,\ldots,b_l)\in\mathcal{C}(\bs'_1)$ are equivalent to the sums over
$b_1,\ldots,b_l\in\TT_1$.
Then \eqref{sep} allows us to eliminate $\alpha_1$ and $b_1$,
thereby reducing the summations in \eqref{rrk} to
\begin{equation}\label{hrm}
\begin{split}
\sum_{\substack{{\mathbf a}_1,\ldots, {\mathbf a}_L \in \TT^1 \\ 
\bs'_2,\ldots, \bs'_L \in \mathcal{S}({\bf m})} }
\sum_{\substack{\alpha_2,\ldots, \alpha_{l-1} \in \TT_1 \\
b_2,\ldots, b_l \in \TT_1}}
\s_{1,1}(t)^{\alpha_2, b_2}_{\sigma_{1,1}, \sigma_{1,2}} \cdots 
\s_{1,1}(t^{l-1})^{{\mathbf a}_2, b_l}_{\alpha_{l-1}, \sigma_{1,l}}
\s_{1,l}\Bigl( \frac{t^{l-1}x_1}{x_2} \Bigr)^{{\mathbf a}_3,\bs'_2}_{{\mathbf a}_2, \bs_2}
\cdots 
\s_{1,l}\Bigl( \frac{t^{l-1}x_1}{x_L} \Bigr)^{{\mathbf a}_1,\bs'_L}_{{\mathbf a}_L, \bs_L}.
\end{split}
\end{equation}
Now we can apply the sum-to-unity property \eqref{stu2}, either with $(k,l)=(1,1)$ or with
$(k,l)=(1,l)$, to perform the summations over
\begin{equation*}
({\mathbf a}_1,\bs'_L),\; ({\mathbf a}_L,\bs'_{L-1}),\;
\ldots,\; ({\mathbf a}_3,\bs'_2),\; ({\mathbf a}_2,b_l),\;
(\alpha_{l-1},b_{l-1}),\; \ldots,\; (\alpha_2,b_2)
\end{equation*}
in this order from the left, thereby confirming that \eqref{hrm} is indeed equal to~$1$.

Thus far, we have proved \eqref{tpL} and \eqref{omt} at $z=x_1$.
The equalities at $z=x_2,\ldots,x_L$ follow from the case $z=x_1$
by the $\Z_L$-cyclic translational symmetry of the system.
\end{proof}

Functional relations among the commuting family of transfer matrices  
(cf.~\cite[Sec.~3,~8]{KNS11}) and the analytic Bethe ansatz (cf.~\cite{R87, KS95, FH15}),  
together with the Yang-Baxterized structure \eqref{bx1}--\eqref{bx2} 
indicate that \eqref{tpL} automatically leads to \eqref{tpk} below:
\begin{align}
T^k(z|x_1,\ldots, x_L)|\mathbb{P}({\mathbf m})\rangle
&= \Lambda^k(z|x_1,\ldots, x_L)|\mathbb{P}({\mathbf m})\rangle,
\label{tpk}
\\
\langle \Omega({\mathbf m})|T^k(z|x_1,\ldots, x_L)| &=
\langle \Omega({\mathbf m})| \Lambda^k(z|x_1,\ldots, x_L),
\label{omk}
\end{align}
where $0 \le k \le n+1$.
In fact,  the cases $k=0$ and $k=n+1$ are explicitly verified by \eqref{t0f} and \eqref{tnf}, respectively.
In general, $\Lambda^k(z|x_1,\ldots, x_L)$ is the so-called 
{\em dressed vacuum form} \cite{R87, KS95},  in which all Baxter
$Q$-functions are constant, as discussed in \cite[Sec.~4.5]{KMMO16}.
In what follows, we assume \eqref{tpk},
leaving a self-contained derivation analogous to Proposition \ref{pr:T1}
for the case $k=1$ as a future problem.

In Section \ref{ss:pf}, we will show that
$\langle \Omega({\mathbf m})|\mathbb{P}({\mathbf m})\rangle \neq 0$ for generic 
$x_1,\ldots, x_L$ and $t$.
Therefore, \eqref{omk} follows from \eqref{tpk} together with the fact that
$\langle \Omega({\mathbf m})|$ is a left eigenvector of
$T^k(z|x_1,\ldots, x_L)$.
The latter is guaranteed by \eqref{omt} and the commutativity 
\eqref{tcom}.

Let $\dot{\Lambda}^k(z)=\frac{d\Lambda^k(z)}{dz}$. 
Differentiating
\eqref{Lam} at $z=0$, we obtain
\begin{align*}
\dot{\Lambda}^k(0)
= -\!\left(
    t^l e_{k-1}(t^{\KK_1},\ldots,t^{\KK_n})
   + e_k(t^{\KK_1},\ldots,t^{\KK_n})\right)
    \sum_{j=1}^L \frac{1}{x_j} = -\, e_k(t^l,t^{\KK_1},\ldots,t^{\KK_n})
    \sum_{j=1}^L \frac{1}{x_j}.
\end{align*}
Thus, the alternating sum \eqref{Hc} with $\dot{T}^k(0)$ replaced by the
eigenvalues $\dot{\Lambda}^k(0)$ becomes
\begin{align}\label{Dm2}
D_{\mathbf m}^{-1}
 \sum_{k=0}^{n+1}(-1)^{k-1}\dot{\Lambda}^k(0)
 &= D_{\mathbf m}^{-1}
    \!\left(\sum_{j=1}^L \frac{1}{x_j}\right)
    \sum_{k=0}^{n+1}(-1)^k
     e_k(t^l,t^{\KK_1},\ldots,t^{\KK_n})
   = \sum_{j=1}^L \frac{1}{x_j},
\end{align}
where the last equality follows from \eqref{ekg}, \eqref{Ki} and \eqref{Dm}.
Consequently, the expression \eqref{Hc} for the Markov matrix 
in terms of transfer matrices can be rewritten as
\begin{align}\label{ccH}
\mathcal{H}_{n,l}(x_1,\ldots,x_L)
 = D_{\mathbf m}^{-1}
   \frac{d}{dz}
   \sum_{k=0}^{n+1}(-1)^{k-1}
   \!\left.
   \bigl(T^k(z|x_1,\ldots,x_L)
        -\Lambda^k(z|x_1,\ldots,x_L)\bigr)
   \right|_{z=0}.
\end{align}
From this and \eqref{tpk} and \eqref{omk}, 
the joint right eigenvector 
$|\mathbb{P}({\mathbf m})\rangle$ 
and the joint left eigenvector 
$\langle \Omega({\mathbf m})|$ of the transfer matrices satisfy
the stationarity condition and the probability conservation condition, respectively:
\begin{align}
\mathcal{H}_{n,l}(x_1,\ldots,x_L)
\,|\mathbb{P}({\mathbf m})\rangle &= 0,
\label{HP0}
\\
\langle \Omega({\mathbf m})| \mathcal{H}_{n,l}(x_1,\ldots,x_L) &=0.
\label{pc}
\end{align}
By the uniqueness of the stationary state in each $\mathbb{V}({\mathbf m})$,
\eqref{HP0} implies that, up to normalization, the vector
$|\mathbb{P}({\mathbf m})\rangle$ in \eqref{pvec} is the stationary state,
and that ${\mathbb{P}}(\bs_1,\ldots,\bs_L)$ in \eqref{pA} provides a
matrix product formula for the (unnormalized) stationary probability.
In view of \eqref{omg}, the property \eqref{pc} implies that
$\mathcal{H}_{n,l}(x_1,\ldots,x_n)$ is probability conserving.
This fact was proved in Section~\ref{ss:pc} for $H_{n,l}(x_1,\ldots,x_n)$ by a purely combinatorial method.
The present approach yields an alternative proof based on the integrability
of the underlying vertex model.

Let us present a few examples illustrating applications of the matrix
product formula \eqref{pA} for the stationary probability.

\begin{example}\label{ex:sl2}
For the single species case $n=1$, the operator $A_{\bs_j}(x)$ reduces to
the scalar \eqref{Asn1} for
$\bs_j=(\sigma_{j,0},\sigma_{j,1})\in\BB_l$
in the multiplicity representation.
As a consequence, the unnormalized stationary probability
\eqref{pA} factorizes into a product of single-site contributions:
\begin{align}
\mathbb{P}(\bs_1,\ldots,\bs_L)
=
\prod_{j=1}^L
x_j^{\sigma_{j,1}}\,
t^{\sigma_{j,1}(\sigma_{j,1}-1)/2}
\qbinom{l}{\sigma_{j,1}}_{t}.
\end{align}

Consider the $L=l=4$ case and the states 
$\vec{\bs} = ((0,4), (1,3), (2,2), (3,1))$ and 
$\vec{\bs}' = ((1,3), (1,3), (1,3), (3,1))$ in $\mathcal{S}({\bf m})$ 
with the particle multiplicity ${\bf m}=(6,10)$.
From
\begin{alignat*}{2}
\langle \vec{\bs}'| H_{1,4}(x_1,x_2,x_3,x_4) | \vec{\bs}\rangle &= \frac{t(1-t^2)}{(1-t^6)x_1},
&\quad
\mathbb{P}(\vec{\bs}) &= \frac{t^{10}(1-t^3)(1-t^4)^3 x_1^4x_2^3x_3^2x_4}{(1-t)^3(1-t^2)},
\\
\langle \vec{\bs}| H_{1,4}(x_1,x_2,x_3,x_4) | \vec{\bs}'\rangle &= \frac{t^4(1-t)(1-t^3)}{(1-t^4)(1-t^6)x_3},
&\quad
\mathbb{P}(\vec{\bs}') &= \frac{t^{9}(1-t^4)^4 x_1^3x_2^3x_3^3x_4}{(1-t)^4}, 
\end{alignat*}
we obtain
$\langle \vec{\bs}'| H_{1,4}(x_1,x_2,x_3,x_4) | \vec{\bs}\rangle \mathbb{P}(\vec{\bs}) 
/\bigl(\langle \vec{\bs}| H_{1,4}(x_1,x_2,x_3,x_4) | \vec{\bs}'\rangle \mathbb{P}(\vec{\bs}')\bigr)
= t^{-2} \neq 1$,
which shows that the detailed balance is not satisfied in general.
\end{example}

\begin{example}\label{ex:sst}
For $n=l=2$, $L=3$ and the particle multiplicity ${\bf m} = (2,3,1)$, the vector 
\eqref{pvec} is given by
\begin{equation*}
\begin{split}
\frac{(1-t^3)(1+t^2)}{t(1+t)}|\mathbb{P}({\mathbf m})\rangle 
&=t x_1^2 x_2^2 \left(t^2 x_1+x_1+x_3\right)|12,11,00\rangle 
+t x_1^2 x_3^2 \left(t^2 x_1+t^2 x_2+x_1\right)|12,00,11\rangle\\
&+
(t+1) x_1^2 x_2 x_3 \left(t^3 x_1+t^2 x_1+t^2 x_2+t x_1+t x_3+x_1\right)|12,01,01\rangle \\
&+ x_1^2 x_2 x_3 \left(t^4 x_2+t^3 x_2+t^3 x_3+t^2 x_2+t x_2+x_2\right)|11,02,01\rangle
\\
&+ x_1^2x_2x_3\left(t^4 x_3+t^3 x_3+t^2 x_3+t x_2+t x_3+x_3\right)|11,01,02\rangle + \mathrm{cyclic},
\end{split}
\end{equation*}
where $\,\mathrm{cyclic}\,$ means the ten additional terms obtained by the
$\Z_3$-cyclic shifts
$|\bs_1, \bs_2, \bs_3\rangle \mapsto |\bs_{i+1}, \bs_{i+2}, \bs_{i+3}\rangle$
and $x_i \mapsto x_{i+1}$ for $i=1,2$,
with all indices understood modulo $3$.
Using Example \ref{ex:A22} and \eqref{apc}, the second line, for example, 
 is derived from the following calculation:
\begin{equation}\label{p23}
\begin{split}
\mathbb{P}(12,01,01)& = \mathrm{Tr}(A_{12}(x_1) A_{01}(x_2)A_{01}(x_3))
\\
&=\mathrm{Tr}\left( t(1+t)x_1^2(x_1+\ap)\ok\, \cdot 
(1+t)x_2(1+x_2 \am) \ok \, \cdot 
(1+t)x_3(1+x_3 \am) \ok \right)
\\
&=t(1+t)^3 x_1^2x_2x_3\mathrm{Tr}\left(\ok{}^3 (x_1+\ap)(1+t^{-1}x_2 \am)(1+t^{-2}x_3 \am)\right)
\\
&= t(1+t)^3 x_1^2x_2x_3 \mathrm{Tr}\left(\ok{}^3(x_1+t^{-1}x_2 \ap \am + t^{-2}x_3 \ap \am)\right)
\\
&= t(1+t)^3 x_1^2x_2x_3  \mathrm{Tr}\left(\ok{}^3(x_1+t^{-1}x_2 + t^{-2}x_3-(t^{-1}x_2 + t^{-2}x_3)\ok)\right)
\\
&= t(1+t)^3 x_1^2x_2x_3 \left(\frac{x_1+t^{-1}x_2 + t^{-2}x_3}{1-t^3}
-\frac{t^{-1}x_2 + t^{-2}x_3}{1-t^4}\right)
\\
&= \frac{t(1+t)^2 x_1^2x_2x_3 (t^3 x_1+t^2 x_1+t^2 x_2+t x_1+t x_3+x_1)}{(1-t^3)(1+t^2)}.
\end{split}
\end{equation}
\end{example}

\begin{example}\label{ex:L6}
For $n=2$, $l=1$ and $L=6$ with the same particle multiplicity
${\mathbf m}=(2,3,1)$ as in Example~\ref{ex:sst},
the unnormalized stationary probability is computed using
the operators $A_0(x)$, $A_1(x)$, and $A_2(x)$ given in
Example~\ref{ex:akos}.
The result is as follows:
\begin{equation}\label{p1}
\begin{split}
\mathbb{P}(1,2,0,1,0,1)
&=x_1 x_2 x_4 x_6 \left(t^3 x_2+t^3 x_3+t^2 x_2+t^2 x_5+t x_2+x_2\right)/D,
\\
\mathbb{P}(1,2,0,1,1,0)
&=x_1 x_2 x_4 x_5\left(t^3 x_2+t^3 x_3+t^2 x_2+t x_2+t x_6+x_2\right)/D,
 \\
\mathbb{P}(1,2,1,0,0,1)
&=   x_1 x_2 x_3 x_6 \left(t^3 x_2+t^2 x_2+t^2x_4+t^2 x_5+t x_2+x_2\right)/D,
\\
\mathbb{P}(1,2,1,0,1,0)
&=   x_1 x_2 x_3 x_5 \left(t^3 x_2+t^2 x_2+t^2 x_4+t x_2+tx_6+x_2\right)/D,
\\
\mathbb{P}(2,1,0,1,0,1)
&=   x_1 x_2 x_4 x_6 \left(t^3 x_1+t^2 x_1+t^2 x_3+t x_1+t x_5+x_1\right)/D,
\\
\mathbb{P}(2,1,0,1,1,0)
&=   x_1 x_2 x_4 x_5\left(t^3 x_1+t^2 x_1+t^2 x_3+t x_1+x_1+x_6\right)/D,
\\
\mathbb{P}(2,1,1,0,0,1)
&=   x_1 x_2 x_3 x_6 \left(t^3 x_1+t^2 x_1+t x_1+tx_4+t x_5+x_1\right)/D,
\\
\mathbb{P}(2,1,1,0,1,0)
&=   x_1 x_2 x_3 x_5 \left(t^3 x_1+t^2 x_1+t x_1+t x_4+x_1+x_6\right)/D,
\end{split}
\end{equation}
where $D=(1+t^2)(1-t^3)$.
\end{example}
   
Let us write the matrix product formula \eqref{pA} as
\begin{align}\label{pA2}
\mathbb{P}_l
\!\left(\genfrac{}{}{0pt}{}{x_1,\ldots,x_L}{\bs_1,\ldots,\bs_L}\right)
= \mathrm{Tr}(A_{\bs_1}(x)\cdots A_{\bs_L}(x_L))\qquad (\bs_1,\ldots, \bs_L \in \TT_l),
\end{align}
which makes explicit the dependence on the capacity $l$ and the
site-wise inhomogeneities $x_1,\ldots,x_L$.
It possesses cyclic translational symmetry:
\begin{align}\label{cts}
\mathbb{P}_l
\!\left(\genfrac{}{}{0pt}{}{x_1,\ldots,x_L}{\bs_1,\ldots,\bs_L}\right)
=
\mathbb{P}_l
\!\left(\genfrac{}{}{0pt}{}{x_L, \ldots,x_{L-1}}{\bs_L,\ldots,\bs_{L-1}}\right).
\end{align}
From \eqref{Abi} and \eqref{Asym}, this expression \eqref{pA2} is reduced to
the $l=1$ case as
\begin{align}\label{pA3}
\mathbb{P}_l
\!\left(\genfrac{}{}{0pt}{}{x_1,\ldots,x_L}{\bs_1,\ldots,\bs_L}\right)
= \sum_{(\sigma_{1,1},\ldots, \sigma_{1,l}) \in \mathcal{C}(\bs_1)}
\cdots
\sum_{(\!\sigma_{L,1},\ldots, \sigma_{L,l}) \in \mathcal{C}(\bs_L)}
\mathbb{P}_1
\!\left(\genfrac{}{}{0pt}{}{\;  x_1,\;  tx_1, \ldots, t^{l-1}x_1 , 
\ldots, \; \,  x_L, \;\, tx_L, \ldots, t^{l-1}x_L}{\!\!\sigma_{1,1}, \sigma_{1,2},  \ldots, \sigma_{1,l}, \;\; \ldots, \; 
\sigma_{L,1}, \sigma_{L,2},  \ldots, \sigma_{L,l} }\right),
\end{align}
where $\mathcal{C}(\bs)$ is defined in \eqref{Cex}.
The modification of the inhomogeneity parameters in this formula
can be viewed as a plethystic substitution,
corresponding to a formal duplication of each variable $x_j$
into $l$ components with geometric weights,
summarized as $x_j \longmapsto  x_j(1-t^l)/(1-t):= (x_j,tx_j,\ldots, t^{l-1}x_j)$.

\begin{example}\label{ex:3}
An example of \eqref{pA3} with $n=2$, $l=2$ and $L=3$ is 
\begin{equation*}
\begin{split}
\mathbb{P}_2
\!\left(\genfrac{}{}{0pt}{}{x_1,x_2,x_3}{12, 01, 01}\right)
&=
\mathbb{P}_1
\!\left(\genfrac{}{}{0pt}{}{x_1, tx_1, x_2, tx_2, x_3, tx_3}{ \!1 \; , \; 2\; ,\; 0\;, \;1\;,\;0\;,\;1}\right) +
\mathbb{P}_1
\!\left(\genfrac{}{}{0pt}{}{x_1, tx_1, x_2, tx_2, x_3, tx_3 }{ \!1 \; , \; 2\; ,\; 0\;, \;1\;,\;1\;,\;0}\right) \\
&+
\mathbb{P}_1
\!\left(\genfrac{}{}{0pt}{}{x_1, tx_1, x_2, tx_2, x_3, tx_3 }{ \!1 \; , \; 2\; ,\; 1\;, \;0\;,\;0\;,\;1}\right) +
\mathbb{P}_1
\!\left(\genfrac{}{}{0pt}{}{x_1, tx_1, x_2, tx_2, x_3, tx_3 }{ \!1 \; , \; 2\; ,\; 1\;, \;0\;,\;1\;,\;0}\right) \\
&
+\mathbb{P}_1
\!\left(\genfrac{}{}{0pt}{}{x_1, tx_1, x_2, tx_2, x_3, tx_3 }{ \!2 \; , \; 1\; ,\; 0\;, \;1\;,\;0\;,\;1}\right) +
\mathbb{P}_1
\!\left(\genfrac{}{}{0pt}{}{x_1, tx_1, x_2, tx_2, x_3, tx_3 }{ \!2 \; , \; 1\; ,\; 0\;, \;1\;,\;1\;,\;0}\right) \\
&+
\mathbb{P}_1
\!\left(\genfrac{}{}{0pt}{}{x_1, tx_1, x_2, tx_2, x_3, tx_3 }{ \!2 \; , \; 1\; ,\; 1\;, \;0\;,\;0\;,\;1}\right) +
\mathbb{P}_1
\!\left(\genfrac{}{}{0pt}{}{x_1, tx_1, x_2, tx_2, x_3, tx_3 }{ \!2 \; , \; 1\; ,\; 1\;, \;0\;,\;1\;,\;0}\right).
\end{split}
\end{equation*}
The LHS is given by \eqref{p23}.
The RHS is the sum of the eight quantities
$\mathbb{P}$ in \eqref{p1}, with the parameters
$(x_1,x_2,x_3,x_4,x_5,x_6)$ set as
$(x_1,t x_1,x_2,t x_2,x_3,t x_3)$.
The equality of the two sides can be verified by direct computation.
\end{example}

\begin{remark}\label{re:pos}
The trace \eqref{pA2} 
is positive for $x_1,\ldots, x_L>0$ and $0<t<1$.
In fact, thanks to \eqref{pA3}, this claim reduces to the case $l=1$.
It then follows readily from the fact that
the CTMs $A_0(x), \ldots, A_n(x)$ in \eqref{ctm1} are linear combinations of
products of $t$-oscillator generators with positive coefficients,
and that the matrix elements of the $t$-oscillators in \eqref{tom} are positive.
\end{remark}

\subsection{Partition function}\label{ss:pf}
We now focus on 
\begin{equation}\label{zdef}
Z_{l,\mathbf m}(x_1,\ldots, x_L;t)
= \langle \Omega({\mathbf m})|\mathbb{P}({\mathbf m})\rangle
=\sum_{(\bs_1,\ldots, \bs_L) \in \mathcal{S}({\mathbf m})}
\mathbb{P}_l\!\left(\genfrac{}{}{0pt}{}{x_1,\ldots,x_L}{\bs_1,\ldots,\bs_L};t\right),
\end{equation}
where $Ll = | {\mathbf m}|=m_0+\cdots + m_n$.
This quantity is the normalization factor of the stationary probability,
often referred to as the \emph{partition function}.
Here we have made the dependence on $t$ and the capacity $l$ explicit.
Due to Remark \ref{re:pos}, the partition function $Z_{l,\mathbf m}(x_1,\ldots, x_L;t)$
is also positive for $x_1,\ldots, x_L>0$ and $0<t<1$.
From \eqref{pA2} and the exchange symmetry in Remark \ref{re:sym}, 
it is a symmetric polynomial in $x_1,\ldots, x_L$ with rational coefficients.

Our third main result in this paper is the following corollary of \eqref{pA3}:
\begin{theorem}\label{th:z}
The partition function with general capacity $l$ is reduced to the $l=1$ case as follows:
\begin{align}\label{zpr}
Z_{l,\mathbf m}(x_1,x_2, \ldots,  x_L;t)
= Z_{1,\mathbf m}\left(\frac{1-t^l}{1-t}x_1, \frac{1-t^l}{1-t}x_2, \ldots,
\frac{1-t^l}{1-t}x_L;t\right),
\end{align}
where $(1-t^l)x_j/(1-t)$ conventionally denotes the $l$-tuple of geometrically weighted 
inhomogeneity parameters $x_j, tx_j,\ldots, \ldots, t^{l-1}x_j$,
and  the RHS corresponds to a system on a lattice of length $lL$
with capacity $1$. (See also the explanation after \eqref{zpr0}.)
\end{theorem}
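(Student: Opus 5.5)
The plan is to sum the reduction formula \eqref{pA3} over all configurations in $\mathcal{S}({\mathbf m})$ and to identify the resulting double sum with the sum defining $Z_{1,\mathbf m}$ on a lattice of length $lL$. By \eqref{zdef} and \eqref{pA3},
\begin{align*}
Z_{l,\mathbf m}(x_1,\ldots,x_L;t)
=\sum_{(\bs_1,\ldots,\bs_L)\in\mathcal{S}({\mathbf m})}\;
\sum_{(\sigma_{1,1},\ldots,\sigma_{1,l})\in\mathcal{C}(\bs_1)}\cdots
\sum_{(\sigma_{L,1},\ldots,\sigma_{L,l})\in\mathcal{C}(\bs_L)}
\mathbb{P}_1\!\left(\genfrac{}{}{0pt}{}{x_1,\ldots,t^{l-1}x_1,\ldots,x_L,\ldots,t^{l-1}x_L}{\sigma_{1,1},\ldots,\sigma_{1,l},\ldots,\sigma_{L,1},\ldots,\sigma_{L,l}}\right).
\end{align*}

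The key step is a bijection. The map sending $(\bs_1,\ldots,\bs_L)$ together with a choice of distinct permutations $(\sigma_{j,1},\ldots,\sigma_{j,l})\in\mathcal{C}(\bs_j)$ to the word $(\sigma_{1,1},\ldots,\sigma_{L,l})\in\{0,\ldots,n\}^{lL}$ is a bijection onto the set of all such words containing each letter $\alpha$ exactly $m_\alpha$ times. In the forward direction, each block of $l$ letters is some arrangement of the row tableau $\bs_j$, and $\mathcal{C}(\bs_j)$ lists each distinct arrangement exactly once. Conversely, a word is cut into $L$ consecutive blocks of length $l$, and sorting each block recovers $\bs_j$. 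The multiplicity constraint defining $\mathcal{S}({\mathbf m})$ in \eqref{Sm} is exactly the condition that the whole word has multiplicity ${\mathbf m}$.

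That set of words is precisely the configuration space $\mathcal{S}({\mathbf m})$ for capacity $1$ on $lL$ sites, since $\BB_1$ is identified with the letters $\{0,\ldots,n\}$. The double sum therefore collapses to $\sum_{\vec{\sigma}}\mathbb{P}_1(\cdots)$ over that space, evaluated at the inhomogeneities $x_1,tx_1,\ldots,t^{l-1}x_L$. By \eqref{zdef} with $l=1$, this is $Z_{1,\mathbf m}$ at those parameters, which is the RHS of \eqref{zpr}.

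I expect no real obstacle; the only point needing care is that the multiplicity vector ${\mathbf m}$ is the same on both sides. It is: $m_0+\cdots+m_n=lL$ is the correct total for $lL$ sites of capacity one, and $m_\alpha\ge1$ still holds, so $Z_{1,\mathbf m}$ is well defined. Symmetry of $Z_{1,\mathbf m}$ in its variables (Remark~\ref{re:sym}) means the ordering of the $lL$ parameters does not matter, which justifies the plethystic notation.
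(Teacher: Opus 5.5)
Your proposal is correct and is exactly the paper's argument: the theorem is presented there as an immediate corollary of \eqref{pA3}, obtained by summing it over $\mathcal{S}({\mathbf m})$. The sums over configurations and over distinct rearrangements $\mathcal{C}(\bs_j)$ within each block combine, as you explain via your bijection, into the sum defining $Z_{1,\mathbf m}$ in \eqref{zdef} on $lL$ sites with the geometrically weighted parameters.
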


\begin{example}\label{ex:zz}
For $n=2$, we have
\begin{align*} 
Z_{2,(1,2,1)}(x_1,x_2;t) &= \frac{t(1+t)x_1x_2(x_1+x_2)^2}{1-t},
\\
Z_{1,(1,2,1)}(x_1,x_2,x_3,x_4;t)&= \frac{e_1(x_1,x_2,x_3,x_4)e_3(x_1,x_2,x_3,x_4)}{1-t^2},
\end{align*}
where $e_k$ denotes the elementary symmetric polynomial defined in \eqref{ekb}.
They satisfy the relation \eqref{zpr}, namely,
\begin{align*}
Z_{2,(1,2,1)}(x_1,x_2;t) = Z_{1,(1,2,1)}(x_1,tx_1, x_2, tx_2).
\end{align*}
\end{example}

In \cite{CMW22, AMW24}, the partition function $Z_{l,\mathbf m}(x_1,\ldots,x_L;t)$
of the $t$-PushTASEP with capacity $l=1$ and ${\mathbf m} = (m_0,\ldots, m_n)$ 
has been linked to Macdonald polynomials $P_\lambda(x_1,\ldots, x_L; q,t)$
at $q=1$, where $\lambda = \langle n^{m_n}, \dots, 1^{m_1}, 0^{m_0} \rangle$ in frequency notation.
Combined with the standard factorized formula for the Macdonald polynomials at $q=1$ \cite[p.\,324, (iv)]{Mac95},
this yields
\begin{subequations}
\begin{align}
Z_{1,\mathbf m}(x_1,\ldots,x_L;t) &= \Delta_{\mathbf m}(t)^{-1}\prod_{i = 1}^n 
e_{m_i + \cdots + m_n}(x_1,\ldots,x_L),
\label{ze}\\
\Delta_{\mathbf m}(t) &= \prod_{1 \le i \le j \le n-1}(1-t^{m_i+m_{i+1}+\cdots + m_j}).
\label{amt}
\end{align}
\end{subequations}
The function $\Delta_{\mathbf m}(t)$, 
for example $1, 1-t^{m_1}, (1-t^{m_1})(1-t^{m_1+m_2})(1-t^{m_2})$ 
for $n=1,2,3$, respectively, happens to coincide with the denominator of the Weyl character formula for $sl_{n}$ 
under the formal identification of $m_i$ with the $i$th simple root $(1 \le i \le n-1)$.
It follows from \eqref{zpr} and \eqref{ze} that $Z_{l,\mathbf m}(x_1,x_2, \ldots,  x_L;t)$ is a
{\em homogeneous} symmetric polynomial in $x_1, \dots, x_L$ of degree $\sum_{i=1}^n im_i$.
Theorem \ref{th:z} together with \eqref{ze},  implies the factorization
\begin{align}\label{zfac}
Z_{1, \bf m}(x_1, \dots, x_L;t) = \Delta_{\mathbf m}(t)^{-1}\prod_{i=1}^{n} 
Z_{l, (m_0 + \cdots + m_{i-1}, m_{i} + \cdots + m_n)}(x_1, \dots, x_L;t).
\end{align}
The RHS consists of the partition functions for the single-species case $n=1$,
which are readily obtained from Example \ref{ex:sl2} and the definition \eqref{zdef}.

\begin{example}\label{ex:fac}
\begin{align*}
Z_{2,(1,1,2)}(x_1,x_2;t) &= \frac{1}{1-t}Z_{2,(1,3)}(x_1,x_2;t) Z_{2,(2,2)}(x_1,x_2;t),
\\
Z_{2,(1,3)}(x_1,x_2;t) &=t(1+t)x_1x_2(x_1+x_2),
\\
Z_{2,(2,2)}(x_1,x_2;t)&=t(x_1^2+x_2^2) + (1+t)^2x_1x_2,
\\
Z_{2,(1,2,3)}(x_1,x_2,x_3;t) &= 
\frac{1}{1-t^2}Z_{2,(3,3)}(x_1,x_2,x_3;t) Z_{2,(1,5)}(x_1,x_2,x_3;t),
\\
Z_{2,(3,3)}(x_1,x_2,x_3;t) &= (1+t)\bigl((1+t^2)x_1x_2x_3+t(x_1+x_2)(x_1+x_3)(x_2+x_3)\bigr),
\\
Z_{2,(1,5)}(x_1,x_2,x_3;t) &= t^2(1+t)x_1x_2x_3(x_1x_2+x_1x_3+x_2x_3).
\end{align*}
\end{example}
 
Given a local state $\bs=(\sigma_0,\ldots, \sigma_n) \in \BB_l$ in the multiplicity representation \eqref{vl},
we write $\widehat{\bs} = (\sigma_n,\ldots, \sigma_0)$, which corresponds to the conjugation of particle species
$a \leftrightarrow n-a$.
Theorem~\ref{th:z} together with \eqref{ze}--\eqref{amt} implies the following symmetry of the partition function
under particle conjugation.
\begin{corollary}\label{cor:ph}
\begin{align}\label{ph}
Z_{l,{\mathbf m}}(x_1,\ldots, x_L;t) = 
(-1)^{n(n-1)/2}t^{\gamma}(x_1\cdots x_L)^{nl}
Z_{l, \widehat{\mathbf m}}(x_1^{-1},\ldots, x^{-1}_L;t^{-1}),
\end{align}
where $\widehat{\mathbf m}= (m_n,\ldots, m_0)$ for 
${\mathbf m}=(m_0,\ldots, m_n)$, and 
$\gamma = Lnl(l-1)/2-\sum_{0<j<n}j(n-j)m_j$.
\end{corollary}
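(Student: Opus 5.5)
The plan is to reduce everything to the explicit factorized formula. By Theorem~\ref{th:z} and \eqref{ze}--\eqref{amt},
\begin{align*}
Z_{l,\mathbf m}(x_1,\ldots,x_L;t)=\Delta_{\mathbf m}(t)^{-1}\prod_{k=1}^n e_{M_k}(X),
\qquad M_k=m_k+\cdots+m_n .
\end{align*}
Here $X$ denotes the $lL$-tuple $(t^sx_j)_{1\le j\le L,\,0\le s\le l-1}$. Since \eqref{ze} is an identity of rational functions in $t$, I will use it freely at $t^{-1}$ and for the reversed multiplicity $\widehat{\mathbf m}$, which again has all entries $\ge 1$. The point is that the substitution $x_j\mapsto x_j^{-1}$, $t\mapsto t^{-1}$ sends the tuple $X$ exactly to $X^{-1}=(t^{-s}x_j^{-1})$. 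So the right-hand side of \eqref{ph} can be computed entirely from \eqref{ze}.

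The first step handles the elementary symmetric polynomials. For $N=lL$ variables one has $e_k(X^{-1})=e_{N-k}(X)/\prod X$, where
\begin{align*}
\prod X=(x_1\cdots x_L)^l\,t^{Ll(l-1)/2}.
\end{align*}
For $\widehat{\mathbf m}$ we have $\widehat m_i+\cdots+\widehat m_n=m_0+\cdots+m_{n-i}=N-M_{n-i+1}$. As $i$ runs over $1,\ldots,n$, the index $k=n-i+1$ also runs over $1,\ldots,n$. Hence
\begin{align*}
\prod_{i=1}^n e_{\widehat m_i+\cdots+\widehat m_n}(X^{-1})=\frac{\prod_{k=1}^n e_{M_k}(X)}{(x_1\cdots x_L)^{nl}\,t^{nLl(l-1)/2}},
\end{align*}
which produces the factor $(x_1\cdots x_L)^{nl}$ and the first part $Lnl(l-1)/2$ of $\gamma$.

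The second step treats the denominator $\Delta_{\widehat{\mathbf m}}(t^{-1})$. Its factors are $1-t^{-(\widehat m_i+\cdots+\widehat m_j)}$ with $1\le i\le j\le n-1$, and $\widehat m_i+\cdots+\widehat m_j=m_{n-j}+\cdots+m_{n-i}$. The map $(i,j)\mapsto(n-j,n-i)$ permutes the intervals inside $[1,n-1]$, so the exponents are exactly those of $\Delta_{\mathbf m}$. Using $1-t^{-a}=-t^{-a}(1-t^a)$ on each of the $n(n-1)/2$ factors gives
\begin{align*}
\Delta_{\widehat{\mathbf m}}(t^{-1})=(-1)^{n(n-1)/2}\,t^{-\sum_{i\le j}(m_i+\cdots+m_j)}\,\Delta_{\mathbf m}(t).
\end{align*}
A fixed $m_p$ with $0<p<n$ lies in the interval $[i,j]$ precisely when $i\in[1,p]$ and $j\in[p,n-1]$. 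There are $p(n-p)$ such intervals, so the exponent equals $\sum_{0<p<n}p(n-p)m_p$, which is the second part of $\gamma$.

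Combining the two steps gives
\begin{align*}
Z_{l,\widehat{\mathbf m}}(x^{-1};t^{-1})=(-1)^{n(n-1)/2}\,t^{-\gamma}\,(x_1\cdots x_L)^{-nl}\,Z_{l,\mathbf m}(x;t),
\end{align*}
which is \eqref{ph} after rearranging, since $(-1)^{n(n-1)/2}$ is its own inverse. The only delicate point is the index bookkeeping: the reversal $m\mapsto\widehat{\mathbf m}$ must be matched correctly both with the complement $k\mapsto N-k$ in the $e_k$'s and with the interval reflection in $\Delta$. I will check it against Example~\ref{ex:zz}. There $n=2$, $\Delta=1$, $\gamma=L\,l(l-1)-m_1$, and $\mathbf m=(1,2,1)$ is self-conjugate, which gives a quick sanity check of the sign and the power of $t$.
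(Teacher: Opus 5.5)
Your proposal is correct and follows the paper's own route: the paper derives \eqref{ph} from Theorem~\ref{th:z} together with \eqref{ze}--\eqref{amt}, using $z_1\cdots z_N e_k(z_1^{-1},\ldots,z_N^{-1})=e_{N-k}(z_1,\ldots,z_N)$ on the $lL$-tuple $X$, and you additionally write out the index matching and the $\Delta_{\mathbf m}$ bookkeeping that the paper leaves implicit. One slip is in your sanity check: for $n=2$ one has $\Delta_{\mathbf m}(t)=1-t^{m_1}$, not $1$, and it is exactly $1-t^{-m_1}=-t^{-m_1}(1-t^{m_1})$ that supplies both the sign $(-1)^{n(n-1)/2}=-1$ and the $-m_1$ in $\gamma$ when you check Example~\ref{ex:zz}.
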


This identity is readily derived from the relation
$z_1\cdots z_{N}e_k(z^{-1}_1,\ldots, z^{-1}_{N}) = e_{N-k}(z_1,\ldots, z_N)$
with $N= lL = m_0+\cdots + m_n$ and 
$(z_1,\ldots, z_N)= (x_1,\ldots, t^{l-1}x_1, \ldots, x_L, \ldots, t^{l-1}x_L)$.

Conjecturally, a stronger symmetry holds that would imply
Corollary~\ref{cor:ph}.

\begin{conjecture}\label{conj:pp}
For $(\bs_1,\ldots, \bs_L) \in \mathcal{S}({\mathbf m})$ in \eqref{Sm},
the following equality holds:
\begin{equation}\label{pmp}
\mathbb{P}_l\!\left(\genfrac{}{}{0pt}{}{x_1,\ldots,x_L}{\bs_1,\ldots,\bs_L};t\right)
=(-1)^{n(n-1)/2}t^{\gamma}(x_1\cdots x_L)^{nl}\,
\mathbb{P}_l\!\left(\genfrac{}{}{0pt}{}{x_1^{-1},\ldots,x_L^{-1}}
{\widehat{\bs}_1,\ldots,\widehat{\bs}_L}; t^{-1}\right),
\end{equation}
where the exponent $\gamma$ 
is the same as the one in Corollary \ref{cor:ph}.
\end{conjecture}

This conjecture is curious, since, for instance in Example~\ref{ex:akos},
the CTM operators $A_2(z)$ and $A_1(z)$, which correspond to the conjugate pair particles
for $(n,l)=(3,1)$, contain\emph{different} numbers of monomials in the $t$-oscillator generators.

Let 
\begin{equation}\label{pid}
\pi_l\!\left(\genfrac{}{}{0pt}{}{x_1,\ldots,x_L}{\bs_1,\ldots,\bs_L};t\right)
=Z_{l,{\mathbf m}}(x_1,\ldots, x_L;t)^{-1}\,
\mathbb{P}_l\!\left(\genfrac{}{}{0pt}{}{x_1,\ldots,x_L}{\bs_1,\ldots,\bs_L};t\right)
\end{equation}
be the {\em normalized} stationary probability,
Corollary \ref{cor:ph} and Conjecture \ref{conj:pp} lead to 
the invariance under the particle conjugation:
\begin{equation}
\pi_l\!\left(\genfrac{}{}{0pt}{}{x_1,\ldots,x_L}{\bs_1,\ldots,\bs_L};t\right)
=
\pi_l\!\left(\genfrac{}{}{0pt}{}{x_1^{-1},\ldots,x_L^{-1}}
{\widehat{\bs}_1,\ldots,\widehat{\bs}_L}; t^{-1}\right).
\end{equation}

\begin{example}\label{ex:pp0}
Consider the case $n=L=l=2$.
Local states are expressed in the tableau representation $\TT_2$ defined in \eqref{tl},
and are written without parentheses.
For ${\mathbf m}=(2,1,1)$, we have 
\begin{align*}
\mathbb{P}_2\!\left(\genfrac{}{}{0pt}{}{x_1,x_2}{00,12};t\right)
&= \frac{t(1+t)}{1-t}x_2^2(x_1+x_2),
\quad 
\mathbb{P}_2\!\left(\genfrac{}{}{0pt}{}{x_1,x_2}{01, 02};t\right)
=\frac{1+t}{1-t}x_1x_2\bigl(tx_1+(1+t+t^2)x_2\bigr),
\\
Z_{2,(2,1,1)}(x_1,x_2;t)&= \frac{1+t}{1-t}(x_1+x_2)
\bigl((1+t^2)x_1x_2+t(x_1+x_2)^2\bigr).
\end{align*}
For $\widehat{\mathbf m} = (1,1,2)$, we have
\begin{align*}
\mathbb{P}_2\!\left(\genfrac{}{}{0pt}{}{x_1,x_2}{22,01};t\right)
&= \frac{t^2(1+t)}{1-t}x_1^3x_2(x_1+x_2),
\quad 
\mathbb{P}_2\!\left(\genfrac{}{}{0pt}{}{x_1,x_2}{12, 02};t\right)
=\frac{t(1+t)}{1-t}x_1^2x_2^2\bigl(tx_2+(1+t+t^2)x_1\bigr),
\\
Z_{2,(1,1,2)}(x_1,x_2;t)&= \frac{t(1+t)}{1-t}x_1x_2(x_1+x_2)
\bigl((1+t^2)x_1x_2+t(x_1+x_2)^2\bigr).
\end{align*}
The unnormalized probabilities $\mathbb{P}_2$ for the other states in 
$\mathcal{S}({\mathbf m})$ and $\mathcal{S}(\widehat{\mathbf m})$
are obtained by \eqref{cts}.
They satisfy the following relations in agreement with 
Corollary \ref{cor:ph} and Conjecture \ref{conj:pp}:
\begin{align*}
&\mathbb{P}_2\!\left(\genfrac{}{}{0pt}{}{x_1,x_2}{00,12};t\right) = -t^3(x_1x_2)^4\,
\mathbb{P}_2\!\left(\genfrac{}{}{0pt}{}{x_1^{-1},x_2^{-1}}{22,\;\;01\;};t^{-1}\right),
\quad
\mathbb{P}_2\!\left(\genfrac{}{}{0pt}{}{x_1,x_2}{01,02};t\right) = -t^3(x_1x_2)^4\,
\mathbb{P}_2\!\left(\genfrac{}{}{0pt}{}{x_1^{-1},x_2^{-1}}{12,\;\; 02\;};t^{-1}\right),
\\
&Z_{2,(2,1,1)}(x_1,x_2;t) = -t^3(x_1x_2)^4
Z_{2,(1,1,2)}(x_1^{-1},x_2^{-1};t^{-1}).
\end{align*}
\end{example}

\begin{example}\label{ex:pp}
Consider the case $n=L=3$ and $l=2$.
Local states are expressed in the tableau representation $\TT_2$ defined in \eqref{tl},
and are written without parentheses.
\begin{align*}
\mathbb{P}_2\!\left(\genfrac{}{}{0pt}{}{x_1,x_2,x_3}{01,22,33};t\right)
&= \frac{t^5(1-t^2)x_1(x_2x_3)^3e_2}{(1-t)(1-t^3)^2(1-t^4)}
\left(
t^4x_1^2+\frac{t(1-t^2)(1-t^4)}{(1-t)^2} x_1x_3 
+\frac{(1-t^3)(1-t^4)}{(1-t)(1-t^2)}x_3^2\right),
\\
\mathbb{P}_2\!\left(\genfrac{}{}{0pt}{}{x_1,x_2,x_3}{23,11,00};t\right)
 &= \frac{t^2(1-t^2)(x_1x_2)^2e_1}{(1-t)(1-t^3)^2(1-t^4)}
\left(
\frac{t(1-t^3)(1-t^4)}{(1-t)(1-t^2)}x_1^2+
\frac{(1-t^2)(1-t^4)}{(1-t)^2}x_1x_3+ tx_3^2\right),
\\
Z_{2,(1,1,2,2)}(x_1,x_2,x_3;t) &=
\frac{t^3e_2e_3}{(1-t)^2(1-t^3)}
\bigl(te_1^2+(1+t^2)e_2\bigr)\bigl((1+t^2)e_1e_3+t e_2^2\bigr),
\\
Z_{2,(2,2,1,1)}(x_1,x_2,x_3;t) &=
\frac{te_1}{(1-t)^2(1-t^3)}
\bigl(te_1^2+(1+t^2)e_2\bigr)\bigl((1+t^2)e_1e_3+t e_2^2\bigr),
\end{align*}
where $e_i = e_i(x_1,x_2,x_3)$.
They satisfy the following relations in agreement with 
Corollary \ref{cor:ph} and Conjecture \ref{conj:pp} with 
${\mathbf m}=(1,1,2,2)$:
\begin{align*}
\mathbb{P}_2\!\left(\genfrac{}{}{0pt}{}{x_1,x_2,x_3}{01,22,33};t\right)
&= -t^3(x_1x_2x_3)^6 \,
\mathbb{P}_2\!\left(\genfrac{}{}{0pt}{}{\, x_1^{-1},\, x_2^{-1},\, x_3^{-1}}{23,\;\,\;11,\;\;00};t^{-1}\right),
\\
Z_{2,(1,1,2,2)}(x_1,x_2,x_3;t) 
&= -t^3(x_1x_2x_3)^6 \, Z_{2,(2,2,1,1)}(x^{-1}_1,x^{-1}_2,x^{-1}_3;t^{-1}).
\end{align*}
\end{example}

Mathematically, it is natural to formally generalize the matrix product formula
$\mathrm{Tr}\left(A_{\bs_1}(x_1)\cdots A_{\bs_L}(x_L)\right)$
and the resulting partition function 
by introducing ``fugacity'' parameters $q_1,\ldots,q_N$ as
\begin{equation}
\mathrm{Tr}\left(q_1^{{\mathbf h}_1}\cdots q_N^{{\mathbf h}_N}
A_{\bs_1}(x_1)\cdots A_{\bs_L}(x_L)\right),
\end{equation}
where $N=n(n-1)/2$ is the number of independent $t$-oscillators appearing in the
CTM operators \eqref{ctm1} and \eqref{Abi}, and ${\mathbf h}_i$ denotes the number operator
acting on the $i$th copy of the Fock space \eqref{tom} by ${\mathbf h}_i|m\rangle = m|m\rangle$.

Studies related to the general capacity~$l$ case from the viewpoint of algebraic
combinatorics have been carried out in~\cite{GW20,M25}.
In those works it is observed that
symmetric fusion of $R$-matrices, or of the corresponding vertex models, manifests
itself as a plethystic substitution at the level of partition functions.
Our result \eqref{zpr} provides a concrete realization of this phenomenon
based on CTM type matrix product operators.

\appendix

\section{Stochastic $R$-matrix $\s_{k,l}(z)$ for symmetric tensor representation}\label{app:rs}

We depict the elements $\s(z)^{{\bf e}_a,{\bf e}_b}_{{\bf e}_i,{\bf e}_j}$ of the 
stochastic $R$-matrix $\s(z)=\s_{1,1}(z)$ in \eqref{ns} as follows:

\begin{equation}
\centering
\begin{tikzpicture}[baseline=(current bounding box.center),
                    scale=.42,>=Stealth,thick,
                    line cap=round,line join=round]
  \draw[->] (-1.2,0) node[anchor=east]{$i$} -- ( 1.2,0)
            node[anchor=west]{$a$};
  \draw[->] ( 0,-1.2) node[anchor=north]{$j$} -- (0, 1.2)
            node[anchor=south]{$b$};

  \draw (0,0) ++(180:0.45) arc (180:260:0.45);
  \node at ($(0,0)+(235:0.95)$) {$z$};

  \node[overlay,anchor=east] at (-2.5,0)
    {$\s(z)^{\mathbf e_a,\mathbf e_b}_{\mathbf e_i,\mathbf e_j}\, =$};
\end{tikzpicture}
\label{sdg}
\end{equation}

Applying the fusion procedure \cite{KRS81} along the symmetric tensor 
representations, one can construct a family of stochastic 
$R$-matrices $\s_{k,l}(z)$ acting on $V_k \otimes V_l$
for arbitrary $k,l \in \Z_{\ge 1}$ ($V_l$ is defined in \eqref{vl}):
\begin{subequations}
\begin{align}
\s_{k,l}(z): &\; V_k \otimes V_l \longrightarrow  V_k \otimes V_l,
\\
\s_{k,l}(z)(v_{\bf i} \otimes v_{\bf j}) 
&= \sum_{{\bf a} \in \BB_k,\; {\bf b} \in \BB_l}
\s_{k,l}(z)^{{\bf a}, {\bf b}}_{{\bf i}, {\bf j}}\,
v_{\bf a} \otimes v_{\bf b}
\qquad ({\bf i} \in \BB_k,\; {\bf j} \in \BB_l).
\end{align}
\end{subequations}
This construction is well known; hence we only present the result in a 
form adapted to our discussion in Section~\ref{sec:zf}.
We illustrate the construction of the element 
$\s_{k,l}(z)^{{\bf a}, {\bf b}}_{{\bf i}, {\bf j}}$
along the example of $(k,l)=(3,2)$ with 
\begin{align*}
{\bf a} &= (a_1,a_2,a_3) \in \TT_3, 
\qquad \;{\bf i} = (i_1,i_2,i_3)  \in \TT_3,\\
{\bf b} &= (b_1,b_2) \in \TT_2, 
\qquad \qquad {\bf j} = (j_1,j_2) \in \TT_2,
\end{align*}
where we employ the tableau representations in \eqref{tl}.

The $R$-matrix for the  symmetric fusion  $\s_{k,l}(z)^{{\bf a}, {\bf b}}_{{\bf i}, {\bf j}}$ 
is given by the sum of products of the basic $(k,l)=(1,1)$ weights 
\eqref{ns}, \eqref{sdg} as follows:
\begin{equation}\label{f32}
\s_{3,2}(z)^{{\bf a}, {\bf b}}_{{\bf i}, {\bf j}} =
\sum_{\substack{(a'_1,a'_2,a'_3) \,\in \,\mathcal{C}({\bf a}) \\ (b'_1,b'_2)\, \in \,\mathcal{C}({\bf b})}}
\;\sum_{\bullet \in \{0,\ldots, n\}}
\begin{tikzpicture}[baseline={([yshift=-1.2ex]current bounding box.center)},%
  x=0.46cm,y=0.46cm,>=Stealth,thick,line cap=round,line join=round]

  \draw[->] (-0.8,0) -- (4.8,0);
  \draw[->] (-0.8,3) -- (4.8,3);
  \draw[->] (-0.8,6) -- (4.8,6);

  \draw[->] (0.5,-1.3) -- (0.5,7.3);
  \draw[->] (3.5,-1.3) -- (3.5,7.3);

  \def\r{0.35}
  \foreach \X in {0.5,3.5}{
    \foreach \Y in {0,3,6}{
      \draw[thin] (\X,\Y) ++(180:\r) arc (180:270:\r);
    }
  }

  \node[anchor=south east] at (0.40,-1.10) {\scriptsize $z/t$};
  \node[anchor=south east] at (0.40, 2.00) {\scriptsize $z$};
  \node[anchor=south east] at (0.40, 5.00) {\scriptsize $zt$};
  \node[anchor=south east] at (3.40,-1.00) {\scriptsize $z$};
  \node[anchor=south east] at (3.40, 2.00) {\scriptsize $zt$};
  \node[anchor=south east] at (3.40, 5.00) {\scriptsize $zt^2$};

  \node[scale=1.1] at (2.0,0)   {$\bullet$};
  \node[scale=1.1] at (2.0,3)   {$\bullet$};
  \node[scale=1.1] at (2.0,6)   {$\bullet$};
  \node[scale=1.1] at (0.5,1.5) {$\bullet$};
  \node[scale=1.1] at (3.5,1.5) {$\bullet$};
  \node[scale=1.1] at (0.5,4.5) {$\bullet$};
  \node[scale=1.1] at (3.5,4.5) {$\bullet$};

  \node[anchor=west]  at (4.8,6) {\small $a'_1$};
  \node[anchor=west]  at (4.8,3) {\small $a'_2$};
  \node[anchor=west]  at (4.8,0) {\small $a'_3$};
  \node[anchor=south] at (0.5,7.3) {\small $b'_1$};
  \node[anchor=south] at (3.5,7.3) {\small $b'_2$};

  \node[anchor=east] at (-0.8,6) {\small $i_1$};
  \node[anchor=east] at (-0.8,3) {\small $i_2$};
  \node[anchor=east] at (-0.8,0) {\small $i_3$};

  \node[anchor=north] at (0.5,-1.3) {\small $j_1$};
  \node[anchor=north] at (3.5,-1.3) {\small $j_2$};
\end{tikzpicture}
\end{equation}
Here  $\mathcal{C}({\bf a})$, for example, 
is the set of distinct permutations of the tableau ${\bf a}$ as defined in \eqref{Cex}.
The sums with respect to $\bullet$ are taken independently 
over $\TT_1 \simeq \{0,\ldots,n\}$ in~\eqref{tl}. 
In general, the spectral parameter takes the value $z$ 
at the bottom right vertex and is multiplied by $t$ when going upward, 
and by $t^{-1}$ when going to the left.

In addition to the Yang-Baxter relation, the stochastic $R$-matrix satisfies the sum-to-unity property:
\begin{align}\label{stu2}
\sum_{{\bf a} \in \TT_k, {\bf b} \in \TT_l} \s_{k,l}(z)^{{\bf a}, {\bf b}}_{{\bf i}, {\bf j}} =1
\qquad ({\bf i} \in \TT_k, {\bf j} \in \TT_l).
\end{align}

Factorized formulas are known for $k=1$ and $l=1$. 
Explicitly, one has
\begin{align}
\s_{1,l}(t^{l-1}z)^{{\bf e}_a,  {\bf b}}_{{\bf e}_i, {\bf j}}
&=\frac{(1-t^{j_a}z^{\delta_{a,i}})z^{\theta(i<a)}t^{j_0+\cdots+j_{a-1}}}{1-t^lz},
\label{ss1l}
\\
\s_{k,1}(z)^{{\bf a}, {\bf e}_b}_{{\bf i}, {\bf e}_j}
&=\frac{(1-t^{i_b}z^{\delta_{b,j}})z^{\theta(j>b)}t^{i_{b+1}+\cdots + i_n}}{1-t^k z}.
\label{ssk1}
\end{align}
In the general case, an approach based on the three-dimensional integrability and the tetrahedron equation 
leads to the formula 
$
\s_{k,l}(z)^{{\bf a}, {\bf b}}_{{\bf i}, {\bf j}} 
= \mathcal{A}(q^{\,l-k}/z)^{{\bf a}, {\bf b}}_{{\bf i}, {\bf j}}
\big|_{q=t^{1/2}},
$
where the function $\mathcal{A}(z)^{{\bf a}, {\bf b}}_{{\bf i}, {\bf j}}$ for ${\bf a}, {\bf i} \in \BB_k$ and 
${\bf b}, {\bf j} \in \BB_l$ 
is defined by~\cite[eqs.~(13.47)--(13.50), (13.127)]{K22} with $(l,m)$ there replaced with $(k,l)$.

From the natural identification $V^1 \simeq V_1$ and $\BB^1 = \BB_1$ in
\eqref{vk} and \eqref{vl}, the $R$-matrices
$S^{k}_{\;\,l}(z) \in \mathrm{End}(V^k \otimes V_l)$ introduced in
Section~\ref{sec:s} and
$\s_{k,l}(z) \in \mathrm{End}(V_k \otimes V_l)$ are essentially identical
in the special case $k = 1$.  
Indeed, they are related by
\begin{align}\label{sss}
S^{1}_{\;\,l}\bigl(z\bigr)
 = \s_{1,l}\!\left(t^{\,l-1} z\right)\,(1 - t^{\,l} z),
\end{align}
which coincides with the numerator of \eqref{ss1l}.
For $l = 1$, this follows by comparing Example~\ref{ex:s11} with \eqref{ns}.
 
\begin{remark}\label{re:indep}
As a consequence of the symmetric fusion, 
the sum~\eqref{f32} is actually invariant under the replacement of 
$(i_1,i_2,i_3)$ with any $(i'_1,i'_2,i'_3) \in \mathcal{C}(\mathbf{i})$, 
and likewise of $(j_1,j_2)$ with any $(j'_1,j'_2) \in \mathcal{C}(\mathbf{j})$.
A similar invariance is valid for general $k$ and $l$.
\end{remark}

\section*{Acknowledgments}
A.~A. was partially supported by SERB Core grant CRG/2021/001592 
as well as the DST FIST program~-~2021 [TPN - 700661]. 
A.~K. thanks the organizers of the {\em Program on Classical, Quantum and
Probabilistic Integrable Systems} (March~24--May~24,~2025) at the Center of
Mathematical Sciences and Applications, Harvard University, and the 
{\em Symposium on Solvable Lattice Models \& Interacting Particle Systems}
(August~24--30,~2025) at Schloss Elmau, Germany, for their kind invitation
and warm hospitality, where part of this work was carried out.
A.~K. also thanks Amol Aggarwal, Sylvie Corteel, Leonid Petrov, Michael Wheeler,
and Lauren Williams for their interest and useful communications.
A.~K. is supported by JSPS KAKENHI No.~24K06882.


\begin{thebibliography}{99}

\bibitem[ANP23]{ANP23}
A.~Aggarwal, M.~Nicoletti, L.~Petrov,
Colored interacting particle systems on the ring: Stationary measures from Yang-Baxter equation.
Compos. Math. {\bf 161} no. 8 1855--1922 (2025).

\bibitem[AK25]{AK25}
A.~Ayyer, A.~Kuniba,
Multispecies inhomogeneous $t$-pushTASEP from antisymmetric fusion,
Electron. J. Probab. {\bf 30} article no. 190, 1--28 (2025).

\bibitem[AM23]{AM23}
A.~Ayyer, J.~B. Martin, 
The inhomogeneous multispecies PushTASEP: Dynamics and symmetry.
arXiv:2310.09740.

\bibitem[AMW24]{AMW24}
A.~Ayyer, J.~B. Martin, L.~Williams,
The inhomogeneous $t$-PushTASEP and Macdonald polynomials.
arXiv:2403.10485.

\bibitem[Bax83]{Bax83}
R.~J.~Baxter,
{\em Exactly solved models in statistical mechanics},
Academic Press, London (1983).

\bibitem[BW22]{BW22}
A.~Borodin, M.~Wheeler,
{\em Colored stochastic vertex models and their spectral theory}.
 Ast\'erisque, (437): ix+225  (2022).

\bibitem[CDW15]{CDW15}
L.~Cantini, J. ~de~Gier, M.~Wheeler,
Matrix product formula for Macdonald polynomials.
J. Phys. A: Math. Theor. {\bf 48}: 384001, 25 (2015).

\bibitem[CMW22]{CMW22}
S.~Corteel, O.~Mandelshtam, L.~Williams,
From multiline queues to Macdonald polynomials via the exclusion process.
Amer. J. Math., 144(2):395--436  (2022).

\bibitem[CP13]{CP13}
I.~Corwin, L.~Petrov,
The $q$-PushASEP: A new integrable model for traffic in 1+1 dimension.
arXiv:1308.3124.

\bibitem[FH15]{FH15}
E.~Frenkel, D.~Hernandez,
Baxter's relations and spectra of quantum integrable models,
Duke. Math. J. {\bf 164} 2407--2460 (2015).

\bibitem[GW20]{GW20}
A.~Garbali, M.~Wheeler,
Modified Macdonald polynomials and integrability,
Commun. Math. Phys. {\bf 374} 1809--1876 (2020).

\bibitem[IKT12]{IKT12} 
R.~Inoue, A.~Kuniba, T.~Takagi, 
Integrable structure of box-ball system:
crystal, Bethe ansatz, ultradiscretization and tropical geometry.
J. Phys. A: Math. Theor. {\bf 45}  073001 (2012).

\bibitem[KRS81]{KRS81}
P.~P.~Kulish, N.~Yu.~Reshetikhin, E.~K.~Sklyanin,
Yang-Baxter equation and representation theory.
Lett.  Math. Phys. {\bf 5} 393--403 (1981). 

\bibitem[K22]{K22}
A.~Kuniba,
{\em Quantum groups in three-dimensional integrability}, 
Springer, Singapore  (2022).
  
\bibitem[KMMO16]{KMMO16}
A.~Kuniba, V.~V. Mangazeev, S.~Maruyama, M.~Okado,
Stochastic $R$ matrix for $U_q(A_n^{(1)})$.
Nucl. Phys. B{\bf  913}:248--277 (2016).

\bibitem[KNS11]{KNS11}
A.~Kuniba, T.~Nakanishi, J.~Suzuki,
T-systems and Y-systems in integrable systems,
J. Phys. A: Math. Theor. {\bf 44} 103001  (2011).

\bibitem[KOS24]{KOS24}
A.~Kuniba,  M.~Okado, T.~Scrimshaw,
Strange five vertex model and multispecies ASEP in a ring,
arXiv:2408.12092.

\bibitem[KS95]{KS95}
A.~Kuniba, J.~Suzuki,
Analytic Bethe ansatz for fundamental representations of Yangians,
Commun. Math. Phys. {\bf 173} 225--264 (1995).

\bibitem[Mac95]{Mac95}
I.~G.~ Macdonald, 
{\em Symmetric functions and Hall polynomials}, second ed.
Oxford Mathematical Monographs. The Clarendon Press, Oxford University Press, New York
(1995).

\bibitem[M25]{M25}
O.~Mandelshtam,
New formulas for Macdonald polynomials via the multispecies exclusion and zero range processes,
Contemp. Math. 
{\em Macdonald Theory and Beyond: Combinatorics, Geometry, and Integrable Systems}
{\bf  815}, AMS, Providence, RI  (2025).

\bibitem[P19]{P19} L.~Petrov,
PushTASEP in inhomogeneous space.
{\em Electron. J. Probab.}, {\bf 25}: Paper No. 114, 25 (2020).

\bibitem[PEM09]{PEM09}
S.~Prolhac, M.~R. Evans, K.~Mallick,
The matrix product solution of the multispecies partially asymmetric  exclusion process.
J. Phys. A: Math. Theor. {\bf 42}: 165004, 25 (2009).

\bibitem[R87]{R87}
N.~Yu.~Reshetikhin, 
The spectrum of the transfer matrices connected with Kac-Moody algebras, 
Lett. Math. Phys. {\bf 14} 235--246 (1987).

\bibitem[STF80]{STF80}
E.~K.~Sklyanin, L.~A.~Takhatajan, L.~D.~Faddeev,
 Quantum inverse problem method~I,
 Theor.~Math.~Phys. {\bf 40}  688--706 (1980).
 
 \end{thebibliography}
\end{document}